\documentclass[a4paper,onecolumn,10pt,accepted=2021-08-01]{quantumarticle}
\pdfoutput=1
\usepackage{amssymb}
\usepackage[utf8]{inputenc}
\usepackage[english]{babel}
\usepackage[T1]{fontenc}
\usepackage{amsfonts}
\usepackage{amsmath}
\usepackage{amsthm}
\usepackage{stmaryrd}
\usepackage{physics}
\usepackage{braket}
\usepackage{dsfont}
\usepackage{bbold}
\usepackage{graphicx}
\usepackage{relsize}
\usepackage{enumerate}
\usepackage[margin=1in]{geometry}
\usepackage{enumitem}
\usepackage{mathtools}
\usepackage{bbm}
\usepackage{graphbox}
\usepackage{comment}
\usepackage{float}
\usepackage[font=small]{caption}
\usepackage{booktabs}
\usepackage[table,xcdraw]{xcolor}
\usepackage{hyperref}
\hypersetup{colorlinks = true, citecolor = red}
\renewcommand{\epsilon}{\varepsilon}
\renewcommand{\phi}{\varphi}
\newcommand{\overbar}[1]{\mkern 1.5mu\overline{\mkern-1.5mu#1\mkern-1.5mu}\mkern 1.5mu}
\newcommand{\CLDUI}{\mathsf{CLDUI}}
\newcommand{\LDUI}{\mathsf{LDUI}}
\newcommand{\LDOI}{\mathsf{LDOI}}
\newcommand{\CDUC}{\mathsf{CDUC}}
\newcommand{\DUC}{\mathsf{DUC}}
\newcommand{\DOC}{\mathsf{DOC}}
\newcommand{\EWP}{\mathsf{EWP}}
\newcommand{\PSD}{\mathsf{PSD}}
\newcommand{\PCP}{\mathsf{PCP}}
\newcommand{\TCP}{\mathsf{TCP}}
\newcommand{\SEP}{\mathsf{SEP}}
\newcommand{\MLDUI}[1]{\mathcal{M}_{#1}(\mathbb{C})^{\times 2}_{\mathbb{C}^{#1}}}
\newcommand{\MLDOI}[1]{\mathcal{M}_{#1}(\mathbb{C})^{\times 3}_{\mathbb{C}^{#1}}}
\newcommand{\M}[1]{\mathcal{M}_{#1}(\mathbb{C})}
\newcommand{\C}[1]{\mathbb{C}^{#1}}
\newcommand{\Mreal}[1]{\mathcal{M}_{#1}(\mathbb{R})}
\newcommand{\T}[1]{\mathcal{T}_{#1}(\mathbb{C})}
\newcommand{\Msa}[1]{\mathcal{M}^{sa}_{#1}(\mathbb{C})}

\newtheorem{theorem}{Theorem}[section]
\newtheorem{definition}[theorem]{Definition}
\newtheorem*{definition*}{Definition}
\newtheorem{proposition}[theorem]{Proposition}
\newtheorem{remark}[theorem]{Remark}
\newtheorem{corollary}[theorem]{Corollary}
\newtheorem{lemma}[theorem]{Lemma}

\newtheorem*{conjecture*}{Conjecture}

\theoremstyle{definition}
\newtheorem{example}{Example}[section]

\definecolor{darkgreen}{rgb}{0,0.392,0}

\begin{document}

\title{Diagonal unitary and orthogonal symmetries {\color{white}{blablabl}} in quantum theory}

\author{Satvik Singh}
\email{  satviksingh2@gmail.com}
\affiliation{Department of Physical Sciences, Indian Institute of Science Education and Research (IISER) Mohali, Punjab, India.}

\author{Ion Nechita}
\email{nechita@irsamc.ups-tlse.fr}
\affiliation{Laboratoire de Physique Th\'eorique, Universit\'e de Toulouse, CNRS, UPS, France.}

\begin{abstract}
   We analyze bipartite matrices and linear maps between matrix algebras, which are respectively, invariant and covariant, under the diagonal unitary and orthogonal groups' actions. By presenting an expansive list of examples from the literature, which includes notable entries like the Diagonal Symmetric states and the Choi-type maps, we show that this class of matrices (and maps) encompasses a wide variety of scenarios, thereby unifying their study. We examine their linear algebraic structure and investigate different notions of positivity through their convex conic manifestations. In particular, we generalize the well-known cone of completely positive matrices to that of triplewise completely positive matrices and connect it to the separability of the relevant invariant states (or the entanglement breaking property of the corresponding quantum channels). For linear maps, we provide explicit characterizations of the stated covariance in terms of their Kraus, Stinespring, and Choi representations, and systematically analyze the usual properties of positivity, decomposability, complete positivity, and the like. We also describe the invariant subspaces of these maps and use their structure to provide necessary and sufficient conditions for separability of the associated invariant states.
\end{abstract}

\maketitle

\tableofcontents

\section{Introduction}

In Quantum Theory, the set of entangled bipartite states has a very special role: the presence of genuinely quantum correlations between the two parties allow for significant advantage in various information processing and computing tasks. However, this paradigm suffers from a significant theoretical hurdle because of the computational hardness of deciding whether a bipartite (or multipartite) quantum state is separable or entangled \cite{gurvits2003classical}. To overcome the NP-hardness of this decision problem in the most general case, a plethora of entanglement (resp.~separability) criteria have been discovered and studied: there are computationally efficient methods for certifying that a given quantum state is entangled (resp.~separable). The most useful entanglement criterion is that of the \emph{positive partial transposition} (PPT) \cite{peres1996separability,horodecki1996separability}, which turns out to be exact for qubit-qubit and qubit-qutrit systems. 

Another way to tackle the separability problem is to restrict the decision problem to special classes of bipartite quantum states. Quantum states satisfying some sort of symmetry are the obvious candidates, since they are physically relevant and mathematically tractable, because of their added structure and reduced number of parameters. One canonical situation is that of bipartite quantum states invariant under the tensor product of the standard representation of the unitary group; two cases need to be considered here: 
$$\forall U \in \mathcal U(d),\qquad \rho = (U \otimes U)\rho (U \otimes U)^* \quad \text{ or } \quad \rho = (U \otimes \bar U)\rho (U \otimes \bar U)^*.$$
These families of quantum states are respectively known as the \emph{Werner} \cite{Werner1989} and the \emph{Isotropic} \cite{Horodecki1999iso} states, and correspond to mixtures of the projections on the symmetric and the anti-symmetric subspaces, respectively to mixtures of the maximally mixed state and the maximally entangled state \cite[Example 6.10]{watrous2018theory}. For these one-parameter families (two real parameters if one ignores the unit trace normalization), separability, the PPT property, as well as other relevant properties have been characterized, thanks to their simple structure. Importantly, these states correspond, through the Choi-Jamio{\l}kowski isomorphism, to covariant quantum channels, that is
$$\forall U \in \mathcal U(d),\qquad \Phi(UXU^*) = \overbar U \Phi(X) U^\top \quad \text{ or } \quad \Phi(UXU^*) = U \Phi(X) U^*.$$

In this work, we analyze the properties of quantum states and channels which are symmetric with respect to the \emph{diagonal} unitary and orthogonal groups. These classes of states are described by roughly $d^2$ real parameters, and are of intermediate complexity between the full-unitary invariant states described above and the full set of bipartite quantum states (requiring order $d^4$ parameters). We shall consider three situations: the first two correspond to the equations above, with the unitary $U$ restricted to the class of diagonal unitary matrices (diagonal matrices with arbitrary complex phases), while the third one corresponds to $U$ being restricted to diagonal orthogonal matrices (diagonal matrices with arbitrary signs). These classes of states, called respectively LDUI, CLDUI, and LDOI,  have been introduced in \cite{chruscinski2006class, johnston2019pairwise,nechita2021graphical}. We provide a detailed analysis of these matrices, from various points of views: linear algebra, convexity, positivity, separability, etc. Further specializations of these matrices have appeared on numerous occasions in the literature; we thus give a unified treatment of these classes of matrices under one umbrella, in an effort to streamline the different proof techniques used previously. 

A focal point of our efforts is the separability problem. It turns out that the classes of states we investigate are still rich enough for the separability problem to be intractable \cite{yu2016separability,tura2018separability,johnston2019pairwise}, but the situation is simpler, since the search space for separable decompositions is smaller. We describe the separability properties of these invariant states in terms of two cones of pairs and triples of matrices, called the \emph{pairwise completely positive} \cite{johnston2019pairwise} and the \emph{triplewise completely positive} \cite{nechita2021graphical} cone, respectively. Both these notions can be understood as extensions of the classical case of completely positive matrices \cite{bremner1997complexity}, which is a key notion in combinatorics and optimization \cite{berman2003completely}. We borrow and extend techniques from these fields to provide several separability and entanglement criteria for our classes of invariant states. Building upon the present work, a novel approach to detect entanglement using critical graph-theoretic techniques has been developed in \cite{Singh2020entanglement}.

In the latter half of this paper, we switch perspectives and discuss everything from the point of view of linear maps between matrix algebras, using the Choi-Jamio{\l}kowski isomorphism. Quantum channels which are covariant with respect to the action of diagonal unitary matrices have appeared in the literature under the name of ``mean unitary conjugation channels'' (MUCC) \cite{liu2015unitary,lopes2015generic}, we dub them here DUC, CDUC, DOC, in parallel with the case of bipartite states. The action of these maps on the space of $d\times d$ complex matrices is parameterized by three $d\times d$ complex matrices: $A,B$ and $C$ (with $\operatorname{diag}B=\operatorname{diag}C=0$), and has the following form: 
$$  \Phi (Z) = \operatorname{diag}(A\ket{\operatorname{diag}Z}) + B\odot Z + C\odot Z^\top. $$
We provide several equivalent characterizations of these classes of maps, based on their Choi, Kraus and Stinespring representations. We also focus on the composition properties of these maps, thus revealing intimate connections with the corresponding class of LDOI matrices. As is done for bipartite matrices, we present a survey of some important examples of maps which lie in our class, the most notable among these being the Choi map and all its proposed generalizations. Remarkably, the results obtained in this work are used in \cite{singh2020ppt2} to show that the PPT square conjecture holds for the diagonal unitary covariant maps, which contain all Choi-type maps as special cases.  

Although the presentation is self-contained, we refer the reader for the background material and the proofs of some results to our previous work \cite{nechita2021graphical} as well as to the paper of Johnston and MacLean \cite{johnston2019pairwise}, where the notions of CLDUI and PCP matrices were introduced. The paper has two main parts, one focusing on bipartite matrices, the other one on linear maps. The sections are structured as follows. In Section \ref{sec:LDUI-CLDUI-LDOI} we review the main background material on invariant states from \cite{nechita2021graphical}, recalling several key results from this paper. Section \ref{sec:LDOIexamples} contains a list of salient examples, already present in the literature, or new. Sections \ref{sec:linear-structure} and \ref{sec:convex-structure} deal with the linear, resp.~convex structure of the sets of invariant states and the different cones associated with them. In Section \ref{sec:DOC} we change gears, and focus on linear maps between matrix algebras; completely positive maps, and quantum channels in particular are discussed here. Section \ref{sec:DOCexamples} contains a list of examples of quantum channels which are of particular interest. In the next two Sections (\ref{sec:kraus} and \ref{sec:tcp}) we discuss, respectively, the special form of Kraus and Stinespring representations of covariant maps, and the structure of their invariant subspaces, which we use to present necessary and sufficient conditions for separability of the corresponding invariant states. We conclude our work with an overview of our results and some directions for future work. At several instances in this paper, we employ the diagrammatic language of boxes and strings to represent tensor equations, in order to ease proofs and make the presentation more visually intuitive. For readers who are unfamiliar with this language, Section~3 of our previous work \cite{nechita2021graphical} should suffice for a quick introduction.

\section{Local diagonal unitary and orthogonal invariant matrices} \label{sec:LDUI-CLDUI-LDOI}

In this section, we recall the basic definitions and properties of the families of local diagonal unitary/orthogonal invariant matrices, which were first introduced in \cite{chruscinski2006class}, and later studied in \cite{johnston2019pairwise, nechita2021graphical}. For more details and proofs of the results stated here, the reader should refer to our previous work \cite[Sections 6,7 and Appendix B]{nechita2021graphical}. We start by fixing some basic notation. 

We use Dirac's \emph{bra-ket} notation to denote column vectors $v \in \mathbb{C}^d$ as kets $\ket{v}$ and their dual row vectors (conjugate transposes) $v^* \in (\mathbb{C}^d)^*$ as bras $\bra{v}$; note that we do not require the kets or the bras to have unit norm. In this notation, the standard \emph{inner product} $v^*w$ on $\mathbb{C}^d$ is denoted by $\langle v | w \rangle$ and the rank one matrix $vw^*$ is denoted by the \emph{outer product} $\ketbra{v}{w}$. The standard basis in $\C{d}$ is denoted by $\{ \ket{i} : i\in [d] \}$, where $[d]\coloneqq \{1,2,\ldots ,d \}$. $\Mreal{d,d'}$ and $\M{d,d'}$ denote the sets of all $d\times d'$ real and complex matrices, respectively, with the standard basis $\{ \ketbra{i}{j} : i\in [d], j\in [d'] \}$. If $d=d'$, we write $\Mreal{d} \coloneqq \Mreal{d,d}$ and $\M{d}\coloneqq \M{d,d}$. 
 For a matrix $V\in \M{d}$, $V^*$ and $\overbar{V}$ denote its adjoint and entrywise complex conjugate, respectively. We use the Hilbert-Schmidt inner product on $\M{d}$: $\langle A,B\rangle = \operatorname{Tr}(A^*B)$. \emph{Hadamard} (or entrywise) products of vectors $\ket{v}, \ket{w} \in \mathbb{C}^d$ and matrices $V,W \in \M{d}$ are denoted, respectively, by $\ket{v \odot w}$ and $V\odot W$. \emph{Transposition} on $\M{d}$ with respect to the standard basis in $\mathbb{C}^d$ is denoted by $\top$. $\Msa{d}$ denotes the real vector space of all \emph{self-adjoint} matrices in $\M{d}$. The convex cones of \emph{entrywise non-negative} and \emph{positive semi-definite} matrices in $\M{d}$ are denoted by $\EWP_d$ and $\PSD_d$ respectively.

For a vector $\ket{v}\in \C{d}$, $\operatorname{diag}\ket{v}\in \M{d}$ is the diagonal matrix with entries equal to that of $\ket{v}$. For a matrix $V\in \M{d}$, we define two kinds of diagonal operations. The first one extracts out the diagonal part of $V$ and puts it back in matrix form: $\operatorname{diag}V\in \M{d}$. The second one does the same thing but the result is a vector: $\ket{\operatorname{diag}V}\in \C{d}$. The above definitions are collected below:
\begin{equation}
    \operatorname{diag}\ket{v} = \sum_{i=1}^d v_i \ketbra{i}{i}, \qquad \operatorname{diag}V = \sum_{i=1}^d V_{ii}\ketbra{i}{i}, \qquad \ket{\operatorname{diag}V} = \sum_{i=1}^d V_{ii}\ket{i}.
\end{equation}

Vectors $\ket{v}\otimes \ket{w}$ in the tensor product space $\mathbb{C}^d \otimes \mathbb{C}^{d'}$ are denoted as $\ket{vw}$. For bipartite matrices $X\in \M{d}\otimes \M{d'}$, the \emph{partial transposition} with respect to the first and second subsystem is denoted by $X^\text{\reflectbox{$\Gamma$}}\coloneqq (\top\otimes \operatorname{id})(X)$ and $X^{\Gamma}\coloneqq (\operatorname{id}\otimes \top) (X)$ respectively. A matrix $X\in \M{d}\otimes \M{d'}$ is said to be PPT if both $X$ and $X^\Gamma$ are positive semi-definite. By utilizing the isomorphism $\M{d}\otimes \M{d'} \simeq \M{dd'}$, we denote the sets of all self-adjoint, entrywise non-negative and positive semi-definite matrices in $\M{d}\otimes \M{d'}$ by $\Msa{dd'}$, $\EWP_{dd'}$ and $\PSD_{dd'}$ respectively.  

The groups of \emph{diagonal unitary} and \emph{diagonal orthogonal} matrices in $\mathcal{M}_d(\mathbb{C})$ will play a central role in our paper. These are denoted by $\mathcal{DU}_d$ and $\mathcal{DO}_d$ respectively. A \emph{random diagonal unitary} matrix is a random variable $U\in \mathcal{DU}_d$, having independent and identically distributed (i.i.d.) complex phases on its diagonal: $U_{kk} = e^{i\theta_k}$, with $\theta_1, \theta_2, \ldots ,\theta_d$ i.i.d uniformly in $[0,2\pi]$. Similarly, a \emph{random diagonal orthogonal} matrix is a random variable $O\in \mathcal{DO}_d$, having independent and identically distributed real signs on its diagonal: $O_{kk} = s_k$, with $s_1, \ldots ,s_d$ i.i.d uniformly in $\{ \pm 1 \}$.

With all the notation in place, we begin with the most important definition of this section.

\begin{definition} \label{def:LDUI-CLDUI-LDOI}
    A bipartite matrix $X \in \mathcal{M}_d(\mathbb{C}) \otimes \mathcal{M}_d(\mathbb{C})$ is said to be
    \begin{itemize}
        \item \emph{local diagonal unitary invariant (LDUI)} if
        $$\forall U \in \mathcal{DU}_d, \qquad (U \otimes  U) X (U^* \otimes  U^*) = X,$$
        \item \emph{conjugate local diagonal unitary invariant (CLDUI)}  if
        $$\forall U \in \mathcal{DU}_d, \qquad (U \otimes U^*) X (U^* \otimes U) = X,$$
        \item \emph{local diagonal orthogonal invariant (LDOI)} if
        $$\forall O \in \mathcal{DO}_d, \qquad (O \otimes O) X (O \otimes O) = X.$$
    \end{itemize}
\end{definition}

The vector subspaces of LDUI, CLDUI and LDOI matrices in $\mathcal{M}_d(\mathbb{C}) \otimes \mathcal{M}_d(\mathbb{C})$ are denoted, respectively, by $\LDUI_d$, $\CLDUI_d$ and $\LDOI_d$. We can now begin to set up important bijections between the newly introduced vector spaces and certain families of matrix pairs/triples defined as follows: \cite[Propositions 6.4 and 7.2]{nechita2021graphical}
\begin{align}
    \MLDUI{d} &\coloneqq \{ (A,B) \in \M{d}\times \M{d} \, \big| \, \operatorname{diag}(A)=\operatorname{diag}(B) \}, \label{eq:MLDUI}\\
    \MLDOI{d} &\coloneqq \{ (A,B,C) \in \M{d}\times \M{d}\times \M{d} \, \big| \, \operatorname{diag}(A)=\operatorname{diag}(B)=\operatorname{diag}(C)\}. \label{eq:MLDOI}
\end{align}

\begin{proposition} \label{prop:LDOI-ABC}
Let $\MLDUI{d}$ and $\MLDOI{d}$ be vector spaces defined in Eqs.~\eqref{eq:MLDUI} and \eqref{eq:MLDOI}, endowed with the usual component-wise addition and scalar multiplication. Then,
\begin{itemize}
    \item $\LDUI_d$ and $\CLDUI_d$ are isomorphic as vector spaces to $\MLDUI{d}$,
    \item $\LDOI_d$ is isomorphic as a vector space to $\MLDOI{d}$.
\end{itemize}
\end{proposition}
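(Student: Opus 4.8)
The plan is to construct, in each of the three cases, an explicit linear bijection between the matrix space of pairs/triples and the corresponding space of invariant bipartite matrices, by expanding a general matrix in the standard basis of $\M{d}\otimes\M{d}$ and reading off exactly which coefficients survive the invariance constraint.

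First I would write an arbitrary $X \in \M{d}\otimes\M{d}$ as $X = \sum_{i,j,k,l} X_{ij,kl}\, \ketbra{i}{j}\otimes\ketbra{k}{l}$ and compute how a diagonal group element acts on each basis matrix by conjugation. For $U = \operatorname{diag}\ket{u} \in \mathcal{DU}_d$ one gets $(U\otimes U)(\ketbra{i}{j}\otimes\ketbra{k}{l})(U^*\otimes U^*) = u_i\bar u_j u_k \bar u_l\, \ketbra{i}{j}\otimes\ketbra{k}{l}$, so LDUI-invariance forces $X_{ij,kl}=0$ unless the scalar $u_i\bar u_j u_k\bar u_l$ equals $1$ for every choice of phases; writing $u_m = e^{i\theta_m}$ and asking that $\theta_i - \theta_j + \theta_k - \theta_l$ vanish identically in the $\theta_m$'s, this holds precisely when the multisets $\{i,k\}$ and $\{j,l\}$ coincide. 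The same computation yields the scalar $u_i\bar u_j\bar u_k u_l$ for CLDUI (surviving terms: $\{i,l\}=\{j,k\}$) and, for $O = \operatorname{diag}\ket{s}\in\mathcal{DO}_d$, the scalar $s_is_js_ks_l$ for LDOI (surviving terms: every index occurs an even number of times among $i,j,k,l$, i.e.\ one of the three patterns $(i{=}j,k{=}l)$, $(i{=}k,j{=}l)$, $(i{=}l,j{=}k)$ holds). In the LDOI case I would note that invariance under the finite group $\mathcal{DO}_d$ already suffices: flipping a single sign $s_m$ kills every coefficient whose index multiplicity at $m$ is odd.

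Next I would package the surviving coefficients into matrices. For LDOI put $A_{ik} := X_{ii,kk}$ (all $i,k$), $B_{ik} := X_{ik,ik}$ and $C_{ik} := X_{ik,ki}$ for $i\neq k$, and define the diagonals of $B,C$ to equal that of $A$; the point is that the three surviving patterns all degenerate to one and the same basis matrix when $i=j=k=l$, so the constraint $\operatorname{diag}(A)=\operatorname{diag}(B)=\operatorname{diag}(C)$ records the diagonal data exactly once, giving $(A,B,C)\in\MLDOI{d}$. Conversely, to any such triple one associates
\begin{equation*}
    X_{A,B,C} := \sum_{i,k} A_{ik}\,\ketbra{i}{i}\otimes\ketbra{k}{k} \;+\; \sum_{i\neq k} B_{ik}\,\ketbra{i}{k}\otimes\ketbra{i}{k} \;+\; \sum_{i\neq k} C_{ik}\,\ketbra{i}{k}\otimes\ketbra{k}{i},
\end{equation*}
which is directly seen to be LDOI. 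This map is evidently linear; it is injective because the basis matrices indexed by the distinct entries of $(A,B,C)$ are pairwise distinct members of the standard basis of $\M{d}\otimes\M{d}$; and it is surjective by the previous step. Hence it is a vector-space isomorphism $\MLDOI{d}\simeq\LDOI_d$. The LDUI and CLDUI cases are identical in form: one keeps $A_{ik}=X_{ii,kk}$ together with $B_{ik}=X_{ik,ki}$ (resp.\ $B_{ik}=X_{ik,ik}$) for $i\neq k$, sets $\operatorname{diag}(B)=\operatorname{diag}(A)$ so that $(A,B)\in\MLDUI{d}$, and the corresponding reconstruction sums furnish the inverse isomorphisms $\MLDUI{d}\simeq\LDUI_d$ and $\MLDUI{d}\simeq\CLDUI_d$.

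The only genuinely delicate part is the index bookkeeping at the diagonal — making sure the degenerate patterns (the fully degenerate $i=j=k=l$, and for LDOI also the partially degenerate ones) are each counted exactly once; this is precisely what the diagonal-matching conditions in \eqref{eq:MLDUI}--\eqref{eq:MLDOI} are there to enforce. Everything else is a routine verification. Equivalently, one can phrase the whole argument through the group twirl $X\mapsto\mathbb{E}_U\,(U\otimes U)X(U^*\otimes U^*)$ and its CLDUI/LDOI analogues, which are the orthogonal projections onto the invariant subspaces and simply retain the basis matrices of trivial character; the computation above is just this projection written out explicitly.
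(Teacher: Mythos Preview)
Your proposal is correct and matches the paper's approach: the paper states exactly the explicit isomorphisms $X^{(1)},X^{(2)},X^{(3)}$ you construct (writing them both diagrammatically and coordinate-wise) and defers the verification to \cite{nechita2021graphical}, while your write-up supplies precisely those routine details. Your closing remark about the group twirl is the content of the paper's Proposition~\ref{prop:LDUI/CLDUI/LDOI-projections}, so the two presentations line up completely.
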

We take a moment to state the bijections from Proposition~\ref{prop:LDOI-ABC} more explicitly. Notice that given $A\in \M{d}$, $\widetilde{A}$ denotes the matrix with zero diagonal but with the same off-diagonal entries as $A$: $\widetilde{A} := A - \operatorname{diag}A$, or, in coordinates
\begin{equation} \label{eq:tilde}
    \widetilde{A}_{ij} = \begin{cases}
    \,\,\,\, 0 , \quad &\text{if } i = j\\
     A_{ij} , \quad &\text{otherwise }. 
\end{cases}
\end{equation}

\begin{align} 
    \bullet \quad X^{(1)} : \MLDUI{d} &\rightarrow \LDUI_d \nonumber \\
    (A,C) &\mapsto X^{(1)}_{(A,C)} \coloneqq \includegraphics[align=c]{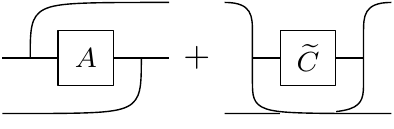} = \sum_{i,j=1}^d A_{ij} \ketbra{ij}{ij} + \sum_{1\leq i\neq j\leq d} C_{ij} \ketbra{ij}{ji}, \nonumber \\
    \bullet \quad X^{(2)} : \MLDUI{d} &\rightarrow \CLDUI_d \nonumber \\
    (A,B) &\mapsto X^{(2)}_{(A,B)} \coloneqq \includegraphics[align=c]{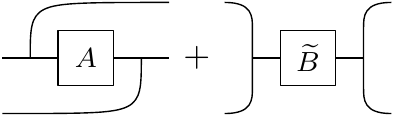} = \sum_{i,j=1}^d A_{ij} \ketbra{ij}{ij} + \sum_{1\leq i\neq j\leq d} B_{ij} \ketbra{ii}{jj} \nonumber, \\
    \bullet \quad X^{(3)} : \MLDOI{d} &\rightarrow \LDOI_d \nonumber \\
    (A,B,C) &\mapsto X^{(3)}_{(A,B,C)} \coloneqq \includegraphics[align=c]{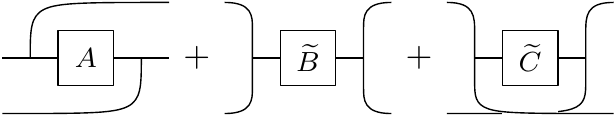} \nonumber \\ 
    &\hspace{2.05cm}= \sum_{i,j=1}^d A_{ij} \ketbra{ij}{ij} + \sum_{1\leq i\neq j\leq d} B_{ij} \ketbra{ii}{jj} + \sum_{1\leq i\neq j\leq d} C_{ij} \ketbra{ij}{ji}. \nonumber
\end{align}
Coordinate-wise, we have (note that $X(i_1i_2,j_1j_2) = \langle i_1 i_2|X|j_1 j_2 \rangle$):
\begin{align}
\label{eq:X1-coord}    X^{(1)}_{(A,C)}(i_1i_2, j_1j_2) &= \begin{cases}
    A_{i_1i_2}, &\quad \text{ if } i_1 = j_1 \text{ and } i_2 = j_2\\
    C_{i_1j_1}, &\quad \text{ if } i_1 = j_2 \text{ and } i_2 = j_1\\
    0, &\quad \text{ otherwise } 
    \end{cases}\\
\label{eq:X2-coord}    X^{(2)}_{(A,B)}(i_1i_2, j_1j_2) &= \begin{cases}
    A_{i_1i_2} &\quad \text{ if } i_1 = j_1 \text{ and } i_2 = j_2\\
    B_{i_1j_1} &\quad \text{ if } i_1 = i_2 \text{ and } j_1 = j_2\\
    0, &\quad \text{ otherwise }
    \end{cases}\\  
\label{eq:X3-coord}    X^{(3)}_{(A,B,C)}(i_1i_2, j_1j_2) &= \begin{cases}
    A_{i_1i_2} &\quad \text{ if } i_1 = j_1 \text{ and } i_2 = j_2\\
    B_{i_1j_1} &\quad \text{ if } i_1 = i_2 \text{ and } j_1 = j_2\\
    C_{i_1j_1} &\quad \text{ if } i_1 = j_2 \text{ and } i_2 = j_1\\
    0, &\quad \text{ otherwise }
    \end{cases}
\end{align}
The general matrix form of a $3\otimes 3$ LDOI matrix is exhibited below (dots represent zeros):
\begin{equation} \label{eq:LDOI-block-3}
    X^{(3)}_{(A,B,C)} =  \left(\mathcode`0=\cdot
\begin{array}{ *{3}{c} | *{3}{c} | *{3}{c} }
   A_{11} & 0 & 0 & 0 & B_{12} & 0 & 0 & 0 & B_{13} \\
   0 & A_{12} & 0 & C_{12} & 0 & 0 & 0 & 0 & 0 \\
   0 & 0 & A_{13} & 0 & 0 & 0 & C_{13} & 0 & 0 \\\hline
   0 & C_{21} & 0 & A_{21} & 0 & 0 & 0 & 0 & 0 \\
   B_{21} & 0 & 0 & 0 & A_{22} & 0 & 0 & 0 & B_{23} \\
   0 & 0 & 0 & 0 & 0 & A_{23} & 0 & C_{23} & 0 \\ \hline
   0 & 0 & C_{31} & 0 & 0 & 0 & A_{31} & 0 & 0 \\
   0 & 0 & 0 & 0 & 0 & C_{32} & 0 & A_{32} & 0 \\
   B_{31} & 0 & 0 & 0 & B_{32} & 0 & 0 & 0 & A_{33} \\
  \end{array}
\right).
\end{equation}

We refer the reader to Proposition \ref{prop:block-structure} for the block-structure of the above matrices.
The next proposition identifies orthogonal projections on the LDUI/CLDUI/LDOI subspaces with certain local diagonal unitary/orthogonal averaging operations \cite[Propositions 6.4, 7.2]{nechita2021graphical}.

\begin{proposition} \label{prop:LDUI/CLDUI/LDOI-projections}
Denote the orthogonal projections on the vector subspaces $\LDUI_d, \CLDUI_d$ and $\LDOI_d$ in $\M{d}\otimes \M{d}$ by $\operatorname{Proj}_{\LDUI}$, $\operatorname{Proj}_{\CLDUI}$ and $\operatorname{Proj}_{\LDOI}$ respectively. For an arbitrary matrix $X\in \M{d}\otimes \M{d}$, define matrices $A,B,C\in \M{d}$ entrywise as $A_{ij}=\langle ij|X|ij\rangle$, $B_{ij}=\langle ii|X|jj\rangle$, and $C_{ij}=\langle ij|X|ji\rangle$ for $i,j\in [d]$. Then,
\begin{align*}
    \operatorname{Proj}_{\LDUI}(X) &=  \mathbb{E}_U [(U\otimes U)X(U^*\otimes U^*)] = X^{(1)}_{(A,C)},  \\
    \operatorname{Proj}_{\CLDUI}(X) &= \mathbb{E}_U [(U\otimes U^*)X(U^*\otimes U)] = X^{(2)}_{(A,B)},   \\
    \operatorname{Proj}_{\LDOI}(X) &= \mathbb{E}_O [(O\otimes O)X(O\otimes O)] = X^{(3)}_{(A,B,C)}.
\end{align*}
where $U\in \mathcal{DU}_d$ (resp. $O\in \mathcal{DO}_d$) is a random diagonal unitary (resp. orthogonal) matrix, and $\mathbb{E}_U$ (resp. $\mathbb{E}_O$) denotes the expectation with respect to the distribution of $U$ (resp. $O$).   
\end{proposition}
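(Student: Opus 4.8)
The plan is to handle all three cases at once through a standard fact from representation theory: for a unitary representation of a compact group, averaging the group action over (Haar) measure yields the orthogonal projection onto the fixed-point subspace. First I would note that in each case the relevant twirling operation defines a representation of the appropriate group on $\M{d}\otimes\M{d}$: the assignments $U\mapsto\bigl(X\mapsto(U\otimes U)X(U^*\otimes U^*)\bigr)$, $U\mapsto\bigl(X\mapsto(U\otimes U^*)X(U^*\otimes U)\bigr)$, and $O\mapsto\bigl(X\mapsto(O\otimes O)X(O\otimes O)\bigr)$ are group homomorphisms, where for the conjugate case the homomorphism property relies on $\mathcal{DU}_d$ being abelian, so that $(VU\otimes V^*U^*)=(VU\otimes(VU)^*)$. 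Each such operation is a Hilbert--Schmidt isometry (conjugation by a unitary preserves $\langle\cdot,\cdot\rangle$), hence these are unitary representations, with fixed-point subspaces $\LDUI_d$, $\CLDUI_d$, $\LDOI_d$ by Definition~\ref{def:LDUI-CLDUI-LDOI}.

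Writing $T$ for the averaged map $\mathbb{E}_U[\cdots]$ (respectively $\mathbb{E}_O[\cdots]$), I would then run the usual four-line argument: $T$ is self-adjoint because $U^*$ (respectively $O$) has the same distribution as $U$ (respectively $O$), so the adjoint of the twirl, which is the twirl by the inverse, averages to $T$ again; $T$ is idempotent by the homomorphism property together with translation invariance of the averaging measure ($VU$ is again uniformly distributed); $T$ acts as the identity on the fixed-point subspace; and the range of $T$ lies inside the fixed-point subspace. A self-adjoint idempotent with these last two properties is precisely the orthogonal projection onto the fixed-point subspace, which gives the first equality on each line of the statement. (For $\mathcal{DO}_d$ this is just averaging over the finite group of $2^d$ sign matrices; for $\mathcal{DU}_d$ it is integration over a torus.)

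The remaining task is the explicit coordinate computation. Since all group elements act diagonally, $(U\otimes U)\ketbra{i_1i_2}{j_1j_2}(U^*\otimes U^*)$ equals $\ketbra{i_1i_2}{j_1j_2}$ times the unit scalar $e^{\mathrm{i}(\theta_{i_1}+\theta_{i_2}-\theta_{j_1}-\theta_{j_2})}$; averaging over i.i.d.\ uniform phases keeps the term iff the formal index sum balances, i.e.\ iff $(i_1,i_2)$ and $(j_1,j_2)$ agree as multisets, which splits into the case $i_1=j_1,\ i_2=j_2$ and the case $i_1=j_2\neq i_2=j_1$. Collecting terms of $X=\sum X(i_1i_2,j_1j_2)\ketbra{i_1i_2}{j_1j_2}$ gives $T(X)=\sum_{i,j}A_{ij}\ketbra{ij}{ij}+\sum_{i\neq j}C_{ij}\ketbra{ij}{ji}=X^{(1)}_{(A,C)}$ with $A,C$ as in the statement. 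The CLDUI case is identical with the scalar $e^{\mathrm{i}(\theta_{i_1}-\theta_{i_2}-\theta_{j_1}+\theta_{j_2})}$, which survives iff $\{i_1,j_2\}=\{i_2,j_1\}$ as multisets, yielding the diagonal terms together with the $\ketbra{ii}{jj}$ terms, i.e.\ $X^{(2)}_{(A,B)}$; the LDOI case uses the real factor $s_{i_1}s_{i_2}s_{j_1}s_{j_2}$, whose expectation over i.i.d.\ uniform $\pm1$ signs is $1$ exactly when every index value occurs an even number of times among $i_1,i_2,j_1,j_2$, producing the three surviving patterns $\ketbra{ij}{ij}$, $\ketbra{ii}{jj}$ with $i\neq j$, and $\ketbra{ij}{ji}$ with $i\neq j$, i.e.\ $X^{(3)}_{(A,B,C)}$.

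The argument is essentially routine. The two points requiring care are: checking that the conjugate twirl is genuinely a representation, which is where commutativity of the diagonal groups enters; and the book-keeping in the combinatorial step so that overlapping index patterns are not double-counted — the swapped ($C$-type) and $\ketbra{ii}{jj}$ ($B$-type) terms must exclude $i=j$, where they coincide with the diagonal $A$-type terms, which is exactly the reason the constraints $\operatorname{diag}A=\operatorname{diag}B=\operatorname{diag}C$ appear in \eqref{eq:MLDUI}--\eqref{eq:MLDOI}. The LDOI case, with three surviving patterns, is the place where a careless enumeration would go wrong, but it presents no real obstacle.
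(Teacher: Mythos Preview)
Your argument is correct. The representation-theoretic step (averaging a unitary representation of a compact group gives the orthogonal projection onto the fixed-point subspace) is standard and you have verified the hypotheses carefully; in particular your remark that the conjugate twirl is a genuine homomorphism only because $\mathcal{DU}_d$ is abelian is exactly the right observation. The coordinate computation is also correct, including the bookkeeping of overlaps at $i=j$.

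As for comparison: the paper does not actually prove this proposition here --- it is imported from \cite[Propositions 6.4, 7.2]{nechita2021graphical}, and the reader is pointed to the graphical tensor-network calculus of that reference (specifically \cite[Theorems 4.8, 5.5]{nechita2021graphical}) for computing such expectations. That approach encodes the phase/sign averaging diagrammatically and reads off the surviving index patterns from the structure of the resulting diagrams. Your route is the classical one: invoke the general projection lemma for compact-group averaging, then do the combinatorics by hand on the standard basis. The two approaches are equivalent in content; yours is entirely self-contained and requires no diagrammatic machinery, while the paper's graphical method scales more gracefully to more complicated expectations (higher tensor powers, mixed conjugations) where the index bookkeeping becomes unwieldy. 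For this particular proposition, your direct argument is arguably cleaner.
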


We refer the reader to \cite[Theorems 4.8 and 5.5]{nechita2021graphical} for a nice graphical method to compute general expectations of the form given in Proposition~\ref{prop:LDUI/CLDUI/LDOI-projections}.

\begin{remark} \label{remark:(C)LDUIsubspaceLDOI}
For $(A,B) \in \MLDUI{d}$, the above stated bijections imply that
\begin{equation}
    X^{(1)}_{(A,B)} = X^{(3)}_{(A,\operatorname{diag}A,B)}, \qquad X^{(2)}_{(A,B)} = X^{(3)}_{(A,B,\operatorname{diag}A)}, \qquad 
    [X^{(1)}_{(A,B)}]^\Gamma = X^{(2)}_{(A,B)}.
\end{equation}
Hence, $\LDUI_d$ and $\CLDUI_d$ are vector subspaces of $\LDOI_d$.
\end{remark}

Next, we introduce the notions of \emph{completely positive}, \emph{pairwise completely positive} and \emph{triplewise completely positive} matrices and link them to the separability problem for matrices in $\LDUI_d$, $\CLDUI_d$ and $\LDOI_d$ \cite[Lemmas 6.6 and 7.5]{nechita2021graphical}. Recall that a positive semi-definite bipartite matrix $X \in \mathcal{M}_d(\mathbb{C}) \otimes \mathcal{M}_d(\mathbb{C})$ is said to be \emph{separable} if there exists a family of vectors $\{ \ket{v_k}, \ket{w_k} \}_{k \in I} \subseteq \mathbb{C}^d$ for a finite index set $I$, such that Eq.~\eqref{eq:sep} holds (observe that $V$ and $W$ are matrices in $\M{d,|I|}$ with columns given by the vectors $\{ \ket{v_k} \}_{k\in I}$ and $\{ \ket{w_k} \}_{k \in I}$ respectively). A positive semi-definite $X\in \M{d}\otimes \M{d}$ which is not separable is called \emph{entangled}.
\begin{equation} \label{eq:sep}
    X = \sum_k |v_k\rangle\langle v_k| \otimes |w_k\rangle\langle w_k| = \includegraphics[align=c]{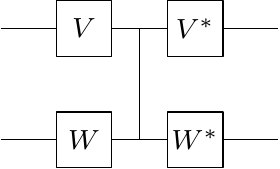}
\end{equation}
\begin{definition}[Completely positive matrices]
A matrix $A\in \Mreal{d}$ is said to be \emph{Completely Positive (CP)} if there exists an entrywise non-negative matrix $V\in \EWP_{d,d'} \subset \Mreal{d,d'}$, for arbitrary $d'\in \mathbb{N}$, such that the following decomposition holds: $A = VV^\top$.
\end{definition}

We would like to warn the reader at this point that the notion of a completely positive \emph{matrix} defined above is unrelated from the one of a completely positive \emph{map}, discussed in Section \ref{sec:DOC}.

\begin{definition}[Pairwise completely positive matrices] \label{def:pcp}
A matrix pair $(A,B) \in \MLDUI{d}$ is said to be \emph{Pairwise Completely Positive (PCP)} if there exist matrices $V,W \in \mathcal{M}_{d,d'}(\mathbb{C})$, for arbitrary $d' \in \mathbb{N}$, such that the following decomposition holds:
\begin{equation}\label{eq:def-PCP}
    A = (V \odot \overbar{V})(W \odot \overbar{W})^*, \qquad B = (V \odot W) (V \odot W)^*.
\end{equation}
\end{definition}

\begin{definition}[Triplewise completely positive matrices] \label{def:tcp}
A matrix triple $(A,B,C) \in \MLDOI{d}$ is said to be \emph{Triplewise Completely Positive (TCP)} if there exist matrices $V,W \in \mathcal{M}_{d,d'}(\mathbb{C})$, for arbitrary $d' \in \mathbb{N}$, such that the following decomposition holds:
\begin{equation}\label{eq:def-TCP}
    A = (V \odot \overbar{V})(W \odot \overbar{W})^*, \qquad B = (V \odot W) (V \odot W)^*, \qquad C = (V \odot \overbar{W})(V \odot \overbar{W})^*.
\end{equation}
\end{definition}

The next result is due to \cite[Theorem 3.4]{johnston2019pairwise}, where the authors prove that the notion of PCP matrices generalizes that of CP matrices. In our discussion of partial transpose invariant LDOI matrices in Example~\ref{eg:states-PTinv}, we will extend this generalization to TCP matrices as well.

\begin{theorem}\label{theorem:CP<=PCP}
For $A\in \Mreal{d}$, the following equivalence holds: $A$ is \emph{CP} $\iff (A,A)$ is \emph{PCP}. 
\end{theorem}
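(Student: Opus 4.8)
The plan is to prove the equivalence by directly relating a CP decomposition of $A\in\Mreal{d}$ to a PCP decomposition of the pair $(A,A)$, using real nonnegative matrices throughout. The key observation is that when $V=W$ and $V$ is a real entrywise-nonnegative matrix, the two PCP conditions in Eq.~\eqref{eq:def-PCP} collapse: $V\odot\overbar V = V\odot V$ (since $V$ is real) and $V\odot W = V\odot V$, so both $A = (V\odot V)(V\odot V)^\top$ and $B = (V\odot V)(V\odot V)^\top$ would need to hold with the \emph{same} Hadamard-squared matrix. This suggests the strategy: show that a CP decomposition $A = MM^\top$ with $M\in\EWP_{d,d'}$ gives rise to $V$ with $V\odot V = M$, namely $V = M^{\odot 1/2}$, the entrywise nonnegative square root.

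For the forward direction ($A$ CP $\Rightarrow (A,A)$ PCP): start from $A = MM^\top$ with $M$ entrywise nonnegative, set $V=W := M^{\odot 1/2}$ (entrywise nonnegative square root, well-defined since $M\geq 0$ entrywise). Then $V\odot\overbar V = V\odot V = M$ and $V\odot W = M$, so $(V\odot\overbar V)(W\odot\overbar W)^* = MM^\top = A$ and $(V\odot W)(V\odot W)^* = MM^\top = A$, giving that $(A,A)$ is PCP. One should also check the consistency constraint $\operatorname{diag}A = \operatorname{diag}A$, which is trivial, so $(A,A)\in\MLDUI{d}$ automatically.

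For the reverse direction ($(A,A)$ PCP $\Rightarrow A$ CP): this is the substantive direction and the main obstacle, since a general PCP decomposition has complex $V,W$ that need not be equal. From $A = (V\odot\overbar V)(W\odot\overbar W)^*$ and $A = (V\odot W)(V\odot W)^*$, I would examine the diagonal and off-diagonal entries. The first equation gives $A_{ij} = \sum_k |V_{ik}|^2 |W_{jk}|^2$; since this is a sum of nonnegative reals, $A$ is automatically entrywise nonnegative, and moreover $A = PQ^\top$ where $P_{ik} = |V_{ik}|^2$, $Q_{jk} = |W_{jk}|^2$ are entrywise nonnegative. But $A = PQ^\top$ alone does not make $A$ CP — CP requires a decomposition of the form $NN^\top$ with a single nonnegative matrix. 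The role of the second equation $A = (V\odot W)(V\odot W)^* $, i.e.\ $A_{ij} = \sum_k V_{ik}W_{ik}\overline{V_{jk}W_{jk}}$, is to force symmetry and the Gram structure: writing $G := V\odot W \in \M{d,d'}$, we have $A = GG^*$, so $A$ is positive semidefinite. Combining $A = GG^* = PQ^\top$ with $P,Q\geq 0$ entrywise and using $\operatorname{diag}(GG^*)_{ii} = \sum_k |V_{ik}|^2|W_{ik}|^2 = A_{ii} = (PQ^\top)_{ii} = \sum_k P_{ik}Q_{ik}$ — consistent — I would then invoke a Cauchy–Schwarz / rescaling argument: for each fixed $k$, the rank-one contributions can be balanced by replacing the $k$-th columns of $V$ and $W$ by geometric-mean-rescaled versions (replace $V_{ik}\mapsto |V_{ik}W_{ik}|^{1/2}\cdot(\text{phase})$ etc.) without changing either product, reducing to the case $|V_{ik}| = |W_{ik}|$ for all $i,k$. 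In that reduced case $P = Q$, so $A = PP^\top$ with $P$ entrywise nonnegative, which is exactly CP. I expect the delicate point to be justifying that this column-wise rescaling preserves \emph{both} defining equations simultaneously and handling columns where some entries vanish; this is where I would lean on the block/coordinate descriptions above and, if needed, cite the argument of \cite[Theorem 3.4]{johnston2019pairwise} for the technical core.
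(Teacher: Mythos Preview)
The paper itself does not prove this theorem; it merely states it and attributes the proof to \cite[Theorem 3.4]{johnston2019pairwise}. So there is no in-paper argument to compare against, and your ultimate fallback of citing Johnston--MacLean is precisely what the paper does.

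Your forward direction is correct and is the standard easy argument.

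Your reverse direction, however, has a genuine gap in the rescaling step. The entry-wise replacement $V_{ik}\mapsto |V_{ik}W_{ik}|^{1/2}\cdot(\text{phase})$ and $W_{ik}\mapsto |V_{ik}W_{ik}|^{1/2}\cdot(\text{phase})$ does preserve $V\odot W$ (and hence the second equation $A=(V\odot W)(V\odot W)^*$), but it does \emph{not} preserve the first equation: the $(i,j)$ entry of $(V\odot\bar V)(W\odot\bar W)^*$ changes from $\sum_k|V_{ik}|^2|W_{jk}|^2$ to $\sum_k|V_{ik}W_{ik}|\,|V_{jk}W_{jk}|$, and these differ whenever the ratio $|V_{ik}|/|W_{ik}|$ varies with $i$ within some column $k$. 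A column-wise \emph{scalar} rescaling $V_{\cdot k}\mapsto\lambda_k V_{\cdot k}$, $W_{\cdot k}\mapsto\lambda_k^{-1}W_{\cdot k}$ does preserve both equations, but a single scalar per column cannot force $|V_{ik}|=|W_{ik}|$ for all rows $i$ simultaneously. Thus the reduction to $P=Q$ (and hence $A=PP^\top$) does not go through as written; some additional idea---either the actual Johnston--MacLean argument, or an indirect route through separability of the Bose-symmetric state $X^{(1)}_{(A,A)}$ on the symmetric subspace---is needed to close this implication.
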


We now arrive at the crucial link between separability of LDOI matrices and the different notions of completely positive matrices introduced.

\begin{theorem} \label{theorem:LDOI-sep}
Consider a triple $(A,B,C) \in \MLDOI{d}$. Then
\begin{itemize}
    \item $X^{(1)}_{(A,B)} \in \LDUI_d$ is separable $\iff X^{(2)}_{(A,B)} \in \CLDUI_d$ is separable $\iff (A,B)$ is \emph{PCP},
    \item $X^{(3)}_{(A,B,C)} \in \LDOI_d$ is separable $\iff (A,B,C)$ is \emph{TCP}.
\end{itemize}
\end{theorem}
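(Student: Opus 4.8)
The plan is to prove both equivalences by directly translating the separability condition (Eq.~\eqref{eq:sep}) through the averaging projections of Proposition~\ref{prop:LDUI/CLDUI/LDOI-projections}. The key observation is that the cones of LDUI, CLDUI and LDOI matrices are invariant under the respective local diagonal averaging operations, and — crucially — so is the cone of \emph{separable} matrices, since conjugating a product vector $\ket{vw}$ by $U \otimes U$ (resp.\ $U \otimes U^*$, $O \otimes O$) again yields a product vector. Thus if $X^{(3)}_{(A,B,C)}$ is separable with decomposition $\sum_k \ketbra{v_k}{v_k} \otimes \ketbra{w_k}{w_k}$, we may apply $\operatorname{Proj}_{\LDOI}$ to both sides; the left side is unchanged, and the right side becomes $\sum_k \operatorname{Proj}_{\LDOI}(\ketbra{v_k}{v_k} \otimes \ketbra{w_k}{w_k})$. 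The heart of the argument is then to compute this single-term projection explicitly.

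First I would carry out this computation: for fixed vectors $\ket{v}, \ket{w} \in \mathbb{C}^d$, set $X = \ketbra{v}{v} \otimes \ketbra{w}{w}$ and read off the defining data $A_{ij} = \langle ij|X|ij\rangle = |v_i|^2 |w_j|^2$, $B_{ij} = \langle ii|X|jj\rangle = v_i \overbar{v_j} w_i \overbar{w_j}$, and $C_{ij} = \langle ij|X|ji\rangle = v_i \overbar{v_j} w_j \overbar{w_i}$. Writing these in terms of Hadamard products of the vectors stacked as columns of $V, W \in \M{d,1}$, these are precisely $A = (V \odot \overbar V)(W \odot \overbar W)^*$, $B = (V \odot W)(V \odot W)^*$, $C = (V \odot \overbar W)(V \odot \overbar W)^*$ for a \emph{single} column ($d' = 1$). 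Summing over $k$ then shows that a separable LDOI matrix yields, via this readout, a triple $(A,B,C)$ with exactly a TCP decomposition of inner dimension $d' = |I|$; and conversely, given a TCP decomposition with columns $\ket{v_k}, \ket{w_k}$, the matrix $\sum_k \ketbra{v_k}{v_k} \otimes \ketbra{w_k}{w_k}$ is manifestly separable, and by the same computation it equals $X^{(3)}_{(A,B,C)}$ (using that the readout map composed with $X^{(3)}$ is the identity on $\MLDOI{d}$, i.e.\ $X^{(3)}$ is the announced bijection). This establishes the second bullet. For the first bullet, the identical computation restricted to the CLDUI case uses $\operatorname{Proj}_{\CLDUI}$ and produces only the pair $(A,B)$ with a PCP decomposition; for LDUI one uses $\operatorname{Proj}_{\LDUI}$, giving the pair $(A, C)$ — but by Remark~\ref{remark:(C)LDUIsubspaceLDOI}, $[X^{(1)}_{(A,B)}]^\Gamma = X^{(2)}_{(A,B)}$, and partial transposition preserves separability, so separability of $X^{(1)}_{(A,B)}$ and $X^{(2)}_{(A,B)}$ are equivalent, closing the chain of "$\iff$"s.

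The main obstacle is a subtle point about the direction "separable $\Rightarrow$ PCP/TCP": applying the projection is legitimate, but one must be careful that the projected product terms $\operatorname{Proj}_{\LDOI}(\ketbra{v_k}{v_k}\otimes\ketbra{w_k}{w_k})$, while giving the right $(A,B,C)$, need not individually be separable matrices — however, this is irrelevant, because we only need the \emph{sum} to reconstruct $X^{(3)}_{(A,B,C)}$, and the decomposition data $(\ket{v_k}, \ket{w_k})_k$ is read off before projecting. Conversely, the direction "PCP/TCP $\Rightarrow$ separable" is the easy one once the single-term computation is in hand. A secondary technical point worth spelling out is that the readout map $X \mapsto (A,B,C)$ of Proposition~\ref{prop:LDUI/CLDUI/LDOI-projections} is a genuine inverse to $X^{(3)}$ on $\MLDOI{d}$, which is already contained in Proposition~\ref{prop:LDOI-ABC}; with that, the entire proof reduces to the elementary Hadamard-product identity above, which I would present once and then invoke three times.
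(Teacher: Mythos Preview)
Your approach is essentially the paper's, and the ``separable $\Rightarrow$ TCP'' direction, as well as the link between $X^{(1)}$ and $X^{(2)}$ via partial transposition, are correct. However, there is a genuine gap in the ``TCP $\Rightarrow$ separable'' direction.

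You write that, given a TCP decomposition with columns $\ket{v_k},\ket{w_k}$, the matrix $Y=\sum_k \ketbra{v_k}{v_k}\otimes\ketbra{w_k}{w_k}$ ``by the same computation \emph{equals} $X^{(3)}_{(A,B,C)}$'', invoking that ``the readout map composed with $X^{(3)}$ is the identity on $\MLDOI{d}$''. That bijection runs in the wrong direction for your purpose. The readout of $Y$ being $(A,B,C)$ only gives $\operatorname{Proj}_{\LDOI}(Y)=X^{(3)}_{(A,B,C)}$; it does not show $Y\in\LDOI_d$. In general $Y$ has nonzero entries outside the LDOI pattern---for instance $\langle 12|Y|13\rangle = \sum_k |v^{(k)}_1|^2 w^{(k)}_2\overbar{w^{(k)}_3}$ need not vanish---so $Y\neq X^{(3)}_{(A,B,C)}$. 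The correct step (and the one the paper takes) is precisely the separability-preservation property you already announced in your first paragraph: apply $\operatorname{Proj}_{\LDOI}$ to the separable $Y$ and conclude that $X^{(3)}_{(A,B,C)}=\operatorname{Proj}_{\LDOI}(Y)$ is separable, because the projection is an average of local conjugations $(O\otimes O)\,\cdot\,(O\otimes O)$.

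A minor aside: the concern in your ``main obstacle'' paragraph---that the individually projected product terms might fail to be separable---is unfounded for the same reason (each is the image of a product state under a local operation); but as you note, it is anyway irrelevant for the forward direction.
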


\begin{proof}
    We prove the equivalence for LDOI matrices, and urge the readers to mimic the same proof for LDUI/CLDUI matrices. Assume first that $X^{(3)}_{(A,B,C)}$ is separable and hence can be decomposed as in Eq.~\eqref{eq:sep}, with matrices $V,W \in \M{d,|I|}$. Using the coordinate-wise relations in Eq.~\eqref{eq:X3-coord}, it is easy to infer the TCP decomposition of $(A,B,C)$: 
    \begin{align*}
        \includegraphics[align=c]{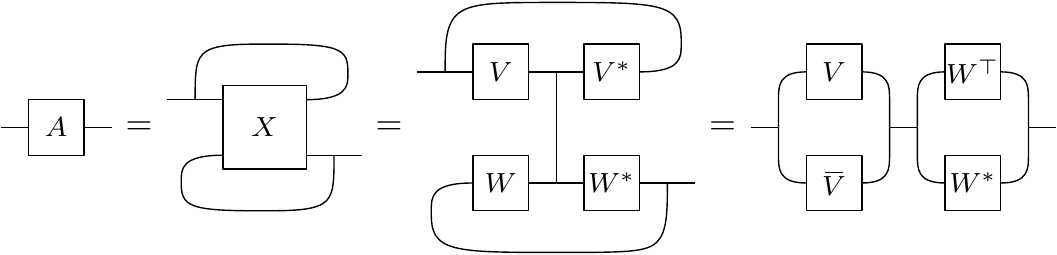} &= (V \odot \overbar{V})(W \odot \overbar{W})^*, \\[0.5cm]
        \includegraphics[align=c]{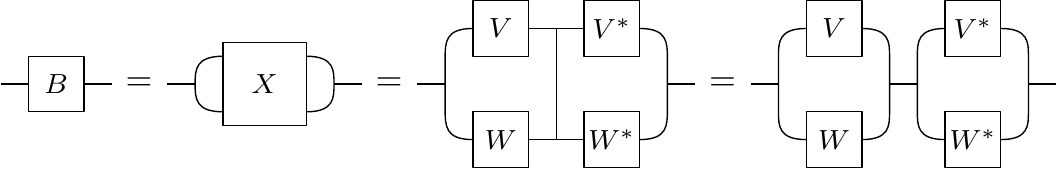} &= (V \odot W) (V \odot W)^*, \\[0.5cm]
        \includegraphics[align=c]{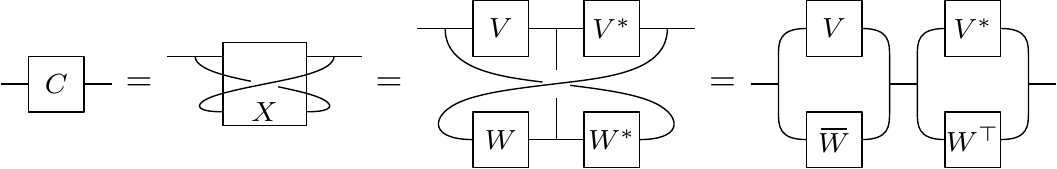} &= (V \odot \overbar{W})(V \odot \overbar{W})^*. \\[0.3cm]
    \end{align*}
Conversely, assume that $(A,B,C)$ is TCP and let $V,W\in \M{d,d'}$ form its TCP decomposition. We construct a separable matrix $X\in \M{d}\otimes \M{d}$ using the $V,W$ matrices, as was done in Eq.~\eqref{eq:sep}. It is then not too difficult to see that 
    \begin{equation*}
        X^{(3)}_{(A,B,C)} = \mathbb{E}_O [(O\otimes O)X(O\otimes O)] = \operatorname{Proj}_{\LDOI}(X),
    \end{equation*}
    and the proof concludes with the observation that since $\operatorname{Proj}_{\LDOI}$ is a (classically correlated) local operation, it preserves separability (see Proposition~\ref{prop:LDUI/CLDUI/LDOI-projections}). 
\end{proof}

Several elementary results on TCP matrix triples are presented in \cite[Appendix B]{nechita2021graphical}. Some of these results, which will be of relevance to us in later sections, are presented below. Notice that the \emph{entrywise} and \emph{trace} (or \emph{nuclear}) norms on $\mathcal{M}_d(\mathbb{C})$ are denoted by $\Vert . \Vert_1$ and $\Vert . \Vert_{\operatorname{Tr}}$ respectively:
$$\|X\|_1 := \sum_{i,j=1}^d |X_{ij}| \qquad \text{ and } \qquad \|X\|_{\operatorname{Tr}} := \Tr \sqrt{XX^*}.$$

\begin{lemma} \label{lemma:tcp-properties}
Let $(A,B,C)\in \MLDOI{d}$ be \emph{TCP}. Then
\begin{enumerate}
    \item both $(A,B)$ and $(A,C)$ are \emph{PCP},
    \item $A\in \EWP_d$ and $B, C \in \PSD_d$,
    \item $A_{ij}A_{ji} \geq \vert B_{ij} \vert^2 $ and $A_{ij}A_{ji} \geq \vert C_{ij} \vert^2 \,\, \forall i,j \in [d]$,
    \item $\Vert A \Vert_1 - \Vert A \Vert_{\operatorname{Tr}} \geq \Vert B \Vert_1 - \Vert B \Vert_{\operatorname{Tr}}$ and $\Vert A \Vert_1 - \Vert A \Vert_{\operatorname{Tr}} \geq \Vert C \Vert_1 - \Vert C \Vert_{\operatorname{Tr}}$,
    \item $\Vert A \Vert_1 - \Vert A \Vert_{\operatorname{Tr}} \geq 2\sum_{1\leq i < j \leq d} \operatorname{max}\{ \vert B_{ij} \vert, \vert C_{ij} \vert \}$.
\end{enumerate}
\end{lemma}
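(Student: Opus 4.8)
The plan is to work directly from a TCP decomposition $A = (V \odot \overbar V)(W \odot \overbar W)^*$, $B = (V \odot W)(V \odot W)^*$, $C = (V \odot \overbar W)(V \odot \overbar W)^*$ with $V, W \in \M{d,d'}$, writing $\ket{v_i}, \ket{w_i} \in \C{d'}$ for the $i$-th rows of $V, W$ (so $V \odot W$ has rows $\ket{v_i \odot w_i}$, etc.). First I would dispose of items 1--3, which are essentially bookkeeping. Item 1 is immediate by forgetting $C$ (resp.\ $B$) in the decomposition. Item 2 follows since $(V \odot \overbar V)(W \odot \overbar W)^*$ has nonnegative entries $\langle v_i \odot \overbar v_i \mid w_j \odot \overbar w_j\rangle = \sum_k |v_{ik}|^2 |w_{jk}|^2 \geq 0$, while $B$ and $C$ are manifestly Gram matrices, hence PSD. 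Item 3 is Cauchy--Schwarz: $|B_{ij}|^2 = |\langle v_i \odot w_i \mid v_j \odot w_j\rangle|^2 \leq \|v_i \odot w_i\|^2 \|v_j \odot w_j\|^2 = (\sum_k |v_{ik}|^2|w_{ik}|^2)(\sum_k |v_{jk}|^2|w_{jk}|^2)$; but this is exactly $A_{ij} A_{ji}$ since $A_{ij} = \langle v_i \odot \overbar v_i \mid w_j \odot \overbar w_j\rangle = \sum_k |v_{ik}|^2 |w_{jk}|^2$ and $A_{ji} = \sum_k |v_{jk}|^2|w_{ik}|^2$, so $A_{ij}A_{ji} = (\sum_k |v_{ik}|^2|w_{jk}|^2)(\sum_k |v_{jk}|^2|w_{ik}|^2)$ — and another application of Cauchy--Schwarz to the two inner sums $\sum_k (|v_{ik}||w_{ik}|)(|v_{jk}||w_{jk}|)$ bounds $|B_{ij}|^2$ by this product. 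The same argument with $\overbar W$ handles $C$.

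For items 4 and 5, the key structural observation is that it suffices to prove them for a \emph{rank-one} TCP triple, i.e.\ the case $d' = 1$, and then add up. Indeed, if $(A,B,C)$ is TCP with decomposition matrices having $d'$ columns, then $(A,B,C) = \sum_{k=1}^{d'} (A^{(k)}, B^{(k)}, C^{(k)})$ where $(A^{(k)}, B^{(k)}, C^{(k)})$ is the rank-one triple built from the $k$-th columns of $V$ and $W$. The quantity $\|A\|_1 - \|A\|_{\operatorname{Tr}}$ is \emph{superadditive} under such sums: $\|\cdot\|_1$ is additive on entrywise-nonnegative matrices (which $A^{(k)}$ are, being of the form $\ket{x}\bra{y}$ with $x,y \geq 0$ entrywise, here $x = |v^{(k)}|^{\odot 2}$, $y = |w^{(k)}|^{\odot 2}$), while $\|\cdot\|_{\operatorname{Tr}}$ is subadditive (triangle inequality). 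The right-hand sides in items 4 and 5 are subadditive in $(B,C)$: $\|\cdot\|_1$ subadditive, $-\|\cdot\|_{\operatorname{Tr}}$ is $\ldots$ actually I need $\|B\|_1 - \|B\|_{\operatorname{Tr}}$ subadditive, which holds because $\|B\|_1 \leq \sum_k \|B^{(k)}\|_1$ and $\|B\|_{\operatorname{Tr}} \geq \ldots$ — wait, that goes the wrong way; so the reduction requires more care, and I would instead reduce to rank one by noting that both sides are sums of the rank-one contributions plus cross terms that I must control, or — cleaner — prove the rank-one case and observe that $\|A\|_1 - \|A\|_{\operatorname{Tr}}$ dominates $\sum_k (\|A^{(k)}\|_1 - \|A^{(k)}\|_{\operatorname{Tr}})$ while the RHS of 5 is genuinely additive (it is $2\sum_{i<j}\max\{|B_{ij}|,|C_{ij}|\}$, and $\max$ is not additive)... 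The honest route: for item 5, in rank one we have $B_{ij} = v_i w_i \overbar{v_j} \overbar{w_j}$, $C_{ij} = v_i \overbar{w_i} \overbar{v_j} w_j$, so $|B_{ij}| = |C_{ij}| = |v_i||w_i||v_j||w_j|$, and $A_{ij} = |v_i|^2 |w_j|^2$. Set $a_i = |v_i|^2$, $b_i = |w_i|^2 \geq 0$; then $\|A\|_1 = (\sum_i a_i)(\sum_i b_i)$, $\|A\|_{\operatorname{Tr}} = \sqrt{(\sum a_i b_i)(\sum a_i b_i)}$... no: $A = \ket{a}\bra{b}$ so $\|A\|_{\operatorname{Tr}} = \|a\|_2 \|b\|_2$ and $\|A\|_1 = \|a\|_1 \|b\|_1$. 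Meanwhile $\max\{|B_{ij}|,|C_{ij}|\} = \sqrt{a_i b_i}\sqrt{a_j b_j}$, so $2\sum_{i<j}\sqrt{a_i b_i a_j b_j} = (\sum_i \sqrt{a_i b_i})^2 - \sum_i a_i b_i$. So item 5 in rank one reduces to $\|a\|_1\|b\|_1 - \|a\|_2\|b\|_2 \geq (\sum_i \sqrt{a_i b_i})^2 - \sum_i a_i b_i = \|\sqrt{a}\odot\sqrt{b}\|_1^2 - \|\sqrt{a}\odot\sqrt{b}\|_2^2$, which should follow from Cauchy--Schwarz-type inequalities relating these norms of $a, b, \sqrt{ab}$.

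The main obstacle, and where I expect to spend the real effort, is reconciling the rank-one reduction with the non-additivity of $\|\cdot\|_{\operatorname{Tr}}$ and of $\max\{|B_{ij}|,|C_{ij}|\}$: the clean "prove it rank-one and sum" strategy does not literally work because $\|A\|_1 - \|A\|_{\operatorname{Tr}}$ is superadditive (good for the LHS lower bound) but the RHS quantities are not additive in a way that matches. The resolution I would pursue is to prove item 4 first — there the RHS $\|B\|_1 - \|B\|_{\operatorname{Tr}}$ and $\|C\|_1 - \|C\|_{\operatorname{Tr}}$, combined with superadditivity of the LHS, means it suffices to show $\|A\|_1 - \|A\|_{\operatorname{Tr}} \geq \|B\|_1 - \|B\|_{\operatorname{Tr}}$ termwise in the rank-one pieces \emph{and} that the error in passing from $\sum_k(\|B^{(k)}\|_1 - \|B^{(k)}\|_{\operatorname{Tr}})$ to $\|B\|_1 - \|B\|_{\operatorname{Tr}}$ has the right sign, i.e.\ $\|B\|_1 - \|B\|_{\operatorname{Tr}} \leq \sum_k (\|B^{(k)}\|_1 - \|B^{(k)}\|_{\operatorname{Tr}})$; since $\|B\|_1 \leq \sum_k \|B^{(k)}\|_1$ and $\|B\|_{\operatorname{Tr}} \geq \ldots$ this still needs $\|B\|_{\operatorname{Tr}} \geq \sum_k \|B^{(k)}\|_{\operatorname{Tr}} - (\sum_k \|B^{(k)}\|_1 - \|B\|_1)$, which is not obvious. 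So in fact I would abandon the reduction for the RHS and instead: reduce only the LHS to rank one via superadditivity, getting $\|A\|_1 - \|A\|_{\operatorname{Tr}} \geq \sum_k (\|A^{(k)}\|_1 - \|A^{(k)}\|_{\operatorname{Tr}})$, then prove the rank-one inequality $\|A^{(k)}\|_1 - \|A^{(k)}\|_{\operatorname{Tr}} \geq 2\sum_{i<j}\sqrt{a_i^{(k)} b_i^{(k)} a_j^{(k)} b_j^{(k)}}$ as computed above, and finally bound $\sum_k 2\sum_{i<j}\sqrt{a_i^{(k)}b_i^{(k)}a_j^{(k)}b_j^{(k)}} \geq 2\sum_{i<j}\max\{|B_{ij}|,|C_{ij}|\}$ using that, by Cauchy--Schwarz, $\sum_k \sqrt{a_i^{(k)}b_i^{(k)}}\sqrt{a_j^{(k)}b_j^{(k)}} \geq |\sum_k v_{ik}w_{ik}\overbar{v_{jk}}\overbar{w_{jk}}| = |B_{ij}|$ (and likewise $\geq |C_{ij}|$, hence $\geq \max$). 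Item 4 then follows from item 5 since $\|B\|_1 - \|B\|_{\operatorname{Tr}} \leq \sum_{i\neq j}|B_{ij}| = 2\sum_{i<j}|B_{ij}| \leq 2\sum_{i<j}\max\{|B_{ij}|,|C_{ij}|\}$, using that $\|B\|_{\operatorname{Tr}} \geq \operatorname{Tr} B = \sum_i B_{ii} = \sum_i A_{ii}$ together with $\|B\|_1 = \sum_{ij}|B_{ij}|$ and $B_{ii} = A_{ii}$, so $\|B\|_1 - \|B\|_{\operatorname{Tr}} \leq \sum_{i\neq j}|B_{ij}|$; the same works for $C$. I would present item 5 as the main computation and derive item 4 as a corollary.
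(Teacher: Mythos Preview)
The paper does not prove this lemma in the text; it is stated with a citation to \cite[Appendix~B]{nechita2021graphical}. So there is no in-paper argument to compare against, and I evaluate your proposal on its own.

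Items 1 and 2 are fine. In item 3 there is a slip: you write that $\|v_i\odot w_i\|^2\|v_j\odot w_j\|^2$ ``is exactly $A_{ij}A_{ji}$'', but it is not --- $A_{ij}A_{ji} = (\sum_k|v_{ik}|^2|w_{jk}|^2)(\sum_k|v_{jk}|^2|w_{ik}|^2)$ has the $w$-indices swapped relative to the Cauchy--Schwarz bound you wrote down. Your subsequent ``another application of Cauchy--Schwarz'' is, however, the correct argument: from $|B_{ij}| \leq \sum_k|v_{ik}w_{ik}v_{jk}w_{jk}| = \sum_k(|v_{ik}||w_{jk}|)(|v_{jk}||w_{ik}|)$ one gets $|B_{ij}| \leq \sqrt{A_{ij}}\sqrt{A_{ji}}$ directly. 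So the idea is right but the write-up needs cleaning.

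For items 4--5 your architecture is sound --- superadditivity of $\|A\|_1-\|A\|_{\operatorname{Tr}}$ on the entrywise-nonnegative rank-one summands $A^{(k)}$, then a rank-one estimate, then the triangle inequality $\sum_k|v_{ik}w_{ik}v_{jk}w_{jk}| \geq \max\{|B_{ij}|,|C_{ij}|\}$, and finally item 4 as a corollary of item 5 via $\|B\|_{\operatorname{Tr}}\geq\operatorname{Tr}B$ --- but there is a genuine gap at the heart of it. The rank-one inequality
\[
\|a\|_1\|b\|_1 - \|a\|_2\|b\|_2 \;\geq\; 2\sum_{i<j}\sqrt{a_ia_jb_ib_j}\qquad(a,b\in\mathbb{R}^d_{\geq 0})
\]
is asserted (``should follow from Cauchy--Schwarz-type inequalities'') but never proven, and it is not a one-liner: for instance, equality holds whenever $d=2$ and $a=(s,t)$, $b=(t,s)$, so the inequality is tight and needs a real argument. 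One clean way to close the gap, consistent with the paper's set-up, is to bypass the rank-one computation altogether. By Theorem~\ref{theorem:LDOI-sep} the TCP hypothesis means $X^{(3)}_{(A,B,C)}$ is separable, hence satisfies the realignment criterion; by Propositions~\ref{prop:leg-permutations} and~\ref{prop:block-structure} the realigned matrix is $X^{(3)}_{(B,A,C)} = A\oplus\bigoplus_{i<j}M_{ij}$ where $M_{ij}$ is the $2\times 2$ block with rows $(B_{ij},C_{ij})$ and $(\overline{C_{ij}},\overline{B_{ij}})$, whose singular values one checks to be $|B_{ij}|+|C_{ij}|$ and $\bigl||B_{ij}|-|C_{ij}|\bigr|$, so that $\|M_{ij}\|_{\operatorname{Tr}} = 2\max\{|B_{ij}|,|C_{ij}|\}$. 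The realignment bound $\|A\|_{\operatorname{Tr}}+\sum_{i<j}\|M_{ij}\|_{\operatorname{Tr}}\leq\operatorname{Tr}X = \|A\|_1$ is then exactly item 5. This is the route encoded in Lemma~\ref{lemma:LDOI-psd-ppt}(3).
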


Before proceeding further, it is essential to recall some definitions. A matrix $Z\in \mathcal{M}_d(\mathbb{C})$ is said to be \emph{row} (resp. \emph{column}) \emph{diagonally dominant} if $Z_{ii} \geq \sum_{j\neq i} \vert Z_{ij} \vert$ for each $i\in [d]$ (resp. $Z_{ii} \geq \sum_{j\neq i} \vert Z_{ji} \vert$ for each $i\in [d]$). A matrix $Z \in \M{d}$ is said to be \emph{diagonally dominant} if it is both row and column diagonally dominant. The \emph{comparison matrix} $M(Z)$ of $Z$ in $\mathcal{M}_d(\mathbb C)$ is defined entrywise as follows: 
\begin{equation}
    M(Z)_{ij} = \begin{cases}
    \,\,\,\, \vert Z_{ij} \vert, \quad &\text{if } i = j\\
    -\vert Z_{ij} \vert, \quad &\text{otherwise }.
\end{cases}
\end{equation}
The next result \cite[Theorem 4.4]{johnston2019pairwise} provides easily verifiable sufficient conditions for a pair $(A,B)\in \MLDUI{d}$ to be PCP and hence yields a non-trivial test for separability of the associated matrices in $\LDUI_d$ and $\CLDUI_d$.

\begin{lemma} \label{lemma:PCP-comparison}
Consider a matrix pair $(A,B)\in \MLDUI{d}$ such that $A\in \EWP_d$, $B\in \PSD_d $ and $A_{ij}A_{ji} \geq |B_{ij}|^2 \,\, \forall i,j\in [d]$. Then,
$$ B \text{ is diagonally dominant } \implies M(B)\in \PSD_d \implies (A,B) \in \PCP_d.$$
\end{lemma}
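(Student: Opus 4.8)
The plan is to establish the two implications in Lemma~\ref{lemma:PCP-comparison} separately, the first being essentially linear algebra on diagonally dominant matrices and the second being the substantive construction of an explicit PCP decomposition.

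For the first implication, ``$B$ diagonally dominant $\implies M(B) \in \PSD_d$'': note that since $B \in \PSD_d$, its diagonal entries $B_{ii}$ are real and non-negative, so $M(B)$ is a self-adjoint matrix. Moreover, $M(B)$ is itself (row and column) diagonally dominant, since $M(B)_{ii} = B_{ii} \geq \sum_{j\neq i}|B_{ij}| = \sum_{j\neq i}|M(B)_{ij}|$, and symmetrically for columns. A (self-adjoint) diagonally dominant matrix with non-negative diagonal entries is positive semi-definite — this is the standard Gershgorin-disc argument: every Gershgorin disc is centered on the non-negative real axis at $M(B)_{ii}$ with radius at most $M(B)_{ii}$, hence contained in the closed right half-plane, so all eigenvalues are non-negative. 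This step is routine.

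For the second implication, ``$M(B) \in \PSD_d \implies (A,B)\in \PCP_d$'', I would exploit the hypothesis $A_{ij}A_{ji} \geq |B_{ij}|^2$ together with $A \in \EWP_d$ to build the PCP decomposition directly. The idea is to first reduce $A$: since we only need \emph{some} PCP pair, and enlarging $A$ entrywise while keeping the pairwise constraints should preserve PCP-ness (monotonicity of the CP-type cones under adding entrywise-nonnegative ``diagonal-like'' pieces — this is the kind of elementary fact catalogued in \cite[Appendix B]{nechita2021graphical}), it suffices to handle the case where $A_{ij} = |B_{ij}|$ for off-diagonal entries (replacing $A_{ij}$ by $\sqrt{A_{ij}A_{ji}}$, or directly by $|B_{ij}|$, only shrinks $A$), and where $A_{ii} = B_{ii}$. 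So the real task is: if $M(B) \in \PSD_d$, exhibit $V, W \in \M{d,d'}$ with $|B_{ij}| = (V\odot\overbar V)(W \odot \overbar W)^*$ off the diagonal (and matching the diagonal), $B = (V\odot W)(V\odot W)^*$. Writing $B_{ij} = |B_{ij}| e^{i\alpha_{ij}}$, one can absorb the phases into $V$ (or $W$): put $V_{ik} = e^{i\beta_{ik}} \sqrt{p_{ik}}$ for suitable phases and non-negative $p_{ik}$, and then $V \odot \overbar V$ has entries $p_{ik}$, independent of the phases, while $V \odot W$ carries the phase information. Then $M(B) \in \PSD_d$ gives a decomposition $M(B) = R R^*$ with $R \in \M{d,d'}$, and one sets $W := \overbar R$ (or $W_{ik} = \overbar{R_{ik}}$) and chooses $V$ with $V \odot \overbar V$ having all-ones-type structure adjusted by the phases of $B$ so that $(V\odot W)(V\odot W)^* = B$ while $(V\odot\overbar V)(W\odot\overbar W)^*$ reproduces $M(B)$ up to the sign flip on the off-diagonal — precisely turning $-|B_{ij}|$ into $+|B_{ij}|$, which is where the comparison matrix enters. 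Concretely: if $M(B) = RR^*$, then $(RR^*)_{ij} = -|B_{ij}|$ for $i\neq j$ and $=B_{ii}$ for $i=j$; multiplying row $i$ of $R$ by an appropriate unit scalar $\sigma_i$ changes $(RR^*)_{ij}$ to $\sigma_i\overbar{\sigma_j}(RR^*)_{ij}$, and one wants to choose $\sigma_i$ and the phases of $V$ jointly so that the resulting Hadamard products give $|B_{ij}|$ and $B_{ij}$ respectively — but this cannot in general be done with a single sign vector, so instead one takes $V$ to encode the phases $e^{i\alpha_{ij}}$ via an appropriate rank-structured factor and $W$ to encode $\sqrt{M(B)}$-type data.

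The main obstacle is reconciling the \emph{two} Hadamard-product constraints simultaneously: the off-diagonal moduli of $A$ (which we've set to $|B_{ij}|$) must come from $(V\odot\overbar V)(W\odot\overbar W)^*$, while the full matrix $B$ (with its phases) must come from $(V\odot W)(V\odot W)^*$. The cleanest route is probably to diagonalize/factor $M(B) = \sum_k \ket{r_k}\bra{r_k}$ and then, for each rank-one term $\ket{r_k}\bra{r_k}$, define a pair of columns of $V,W$ whose contribution to $(V\odot W)(V\odot W)^*$ is $\ket{b_k}\bra{b_k}$ with $b_k$ obtained from $r_k$ by reinstating the phases of $B$, so that summing over $k$ recovers $B$ off-diagonally (the phases multiply correctly as $e^{i\alpha_{ij}} = e^{i(\alpha_i - \alpha_j)}$ is \emph{not} generally available, so one must be more careful and use that $B\in\PSD_d$ forces a consistency among the $\alpha_{ij}$ only along cycles where $B$ is ``thick enough''). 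I expect the detailed bookkeeping of phases to be the crux; the positivity input $M(B)\in\PSD_d$ is exactly what guarantees the off-diagonal part can be realized with the sign-flip, and the constraint $A_{ij}A_{ji}\geq|B_{ij}|^2$ is what lets the same $V,W$ serve for $A$. I would present the construction with explicit formulas for $V$ and $W$ in terms of a factorization of $M(B)$ and the entrywise polar decomposition of $B$, then verify the three (two, for PCP) identities by a direct Hadamard-product computation, citing the monotonicity reductions from \cite[Appendix B]{nechita2021graphical} to dispose of the passage from the normalized case back to the general $A$.
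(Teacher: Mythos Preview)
Note first that the paper does not prove this lemma itself --- it is quoted as \cite[Theorem 4.4]{johnston2019pairwise} --- so there is no in-paper argument to compare against. Your first implication (via Gershgorin discs) is correct and standard.

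Your second implication has a genuine gap. The minor issue is that the reduction ``replace $A_{ij}$ by $|B_{ij}|$ (or by $\sqrt{A_{ij}A_{ji}}$) off the diagonal'' is invalid: $A_{ij}A_{ji}\geq|B_{ij}|^{2}$ does not force $A_{ij},A_{ji}\geq|B_{ij}|$ individually (take $A_{ij}=4$, $A_{ji}=1$, $|B_{ij}|=2$), and $\sqrt{A_{ij}A_{ji}}$ can exceed the smaller of the two, so neither substitution ``only shrinks $A$''. Monotonicity of the PCP cone under adding off-diagonal $\EWP$ mass with zero $B$-part only buys the reduction to $A_{ij}A_{ji}=|B_{ij}|^{2}$, not a symmetrization of $A$. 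The serious issue is that the core construction is never carried out: you propose to factor $M(B)=RR^{*}$ and dress $R$ with phases so that $(V\odot W)(V\odot W)^{*}=B$, but then correctly note that the off-diagonal phases $\theta_{ij}$ of a general $B\in\PSD_{d}$ are not of rank-one type $\theta_{ij}=\phi_{i}-\phi_{j}$, and you offer no mechanism to get around this --- ``one must be more careful'' and ``I expect the detailed bookkeeping of phases to be the crux'' is exactly where the proof is missing, not where it is done. The route that actually works avoids global phases by localizing: for $B\in\PSD_{d}$, the condition $M(B)\in\PSD_{d}$ is precisely the condition that $B$ has \emph{factor width at most two}, i.e.\ $B=D+\sum_{i<j}B^{(i,j)}$ with $D\geq0$ diagonal and each $B^{(i,j)}\in\PSD_{d}$ supported on $\{i,j\}$. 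Splitting $A$ compatibly (send all of $A_{ij},A_{ji}$ into the $\{i,j\}$ block and match diagonals) reduces the problem to $d=2$, where there is a single off-diagonal phase --- trivially of the form $\phi_{1}-\phi_{2}$ --- and an explicit PCP decomposition is elementary. Your global-factorization approach does not lead here; to complete the argument you need either this $2\times2$ reduction or a genuinely new device for handling the phases.
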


Lemma~\ref{lemma:LDOI-psd-ppt} below collects some important properties of matrices in $\LDOI_d$ \cite[Lemma 7.6]{nechita2021graphical}. Analogous results for matrices in $\LDUI_d$ and $\CLDUI_d$ can be derived with the help of the following Lemma and Remark~\ref{remark:(C)LDUIsubspaceLDOI}, see \cite[Lemmas 6.7 and 6.8]{nechita2021graphical}. Recall that if we define the \emph{realignment map} $R:\M{d}\otimes \M{d} \rightarrow \M{d}\otimes \M{d}$ as $R(\ketbra{i}{j}\otimes \ketbra{k}{l}) = \ketbra{i}{k}\otimes \ketbra{j}{l} \,\, \forall i,j\in [d]$, then all separable matrices $X\in \M{d}\otimes \M{d}$ satisfy: $||R(X)||_{\operatorname{Tr}}\leq \operatorname{Tr}X$, see \cite{Kai2002realignment, Rudolph2000realignment}. This is known as the \emph{realignment criterion} of separability. In this paper, we use the notation $X^R := R(X)$.

\begin{lemma} \label{lemma:LDOI-psd-ppt}
Consider an arbitrary $X^{(3)}_{(A,B,C)}\in \LDOI_d$. Then,
\begin{enumerate}
    \item $X^{(3)}_{(A,B,C)}\in \PSD_{d^2}\iff A \in \EWP_d$, $B \in \PSD_d$, $C\in \Msa{d}$, and $A_{ij}A_{ji} \geq \vert C_{ij} \vert^2 \,\forall i,j \in~[d]$,
    \item $X^{(3)\,\Gamma}_{(A,B,C)} \in \PSD_{d^2} \iff A\in \EWP_d$, $B\in \Msa{d}$, $C\in \PSD_d$, and $A_{ij}A_{ji} \geq \vert B_{ij} \vert^2 \,\forall i,j \in~[d]$,
    \item $X^{(3)}_{(A,B,C)}$ satisfies the realignment criterion $\iff$ condition (5) of Lemma~\ref{lemma:tcp-properties} holds.
\end{enumerate}
\end{lemma}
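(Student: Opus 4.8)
The plan is to prove each of the three equivalences by computing, for a matrix $X=X^{(3)}_{(A,B,C)}$, the eigenvalue/positivity conditions on the three types of $2\times 2$ (and $1\times 1$) blocks that appear in the block-diagonal decomposition of $X$ visible in Eq.~\eqref{eq:LDOI-block-3}. The key observation is that such an LDOI matrix decomposes (after a permutation of the standard basis of $\C{d}\otimes\C{d}$) into: (i) one $d\times d$ block indexed by the ``diagonal'' kets $\{\ket{ii}\}_{i\in[d]}$, whose matrix is exactly $B$ with its diagonal replaced by $\operatorname{diag}A = \operatorname{diag}B$ — i.e.\ simply $B$ itself, since $\operatorname{diag}A=\operatorname{diag}B$; and (ii) for each unordered pair $\{i,j\}$ with $i\neq j$, a $2\times 2$ block on the span of $\{\ket{ij},\ket{ji}\}$ of the form $\begin{pmatrix} A_{ij} & C_{ij} \\ \overbar{C_{ij}} & A_{ji}\end{pmatrix}$, using that $X$ is self-adjoint forces $C_{ji} = \overbar{C_{ij}}$ on this block (and likewise $A_{ij}, A_{ji}$ real diagonal entries of a self-adjoint block, hence $A\in\Mreal{d}$). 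This block structure is recorded in Proposition~\ref{prop:block-structure}, which I would invoke.

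For item (1): $X^{(3)}_{(A,B,C)}\in\PSD_{d^2}$ iff every block is PSD. The $B$-block being PSD gives $B\in\PSD_d$ (which forces $\operatorname{diag}B=\operatorname{diag}A$ entrywise non-negative and real). Each $2\times 2$ block $\begin{pmatrix} A_{ij} & C_{ij} \\ \overbar{C_{ij}} & A_{ji}\end{pmatrix}$ is PSD iff $A_{ij},A_{ji}\ge 0$, it is self-adjoint (so $C_{ij}=\overbar{C_{ji}}$, i.e.\ $C\in\Msa{d}$), and its determinant $A_{ij}A_{ji}-|C_{ij}|^2\ge 0$. Collecting these over all $i\neq j$, and noting the diagonal terms $A_{ii}=B_{ii}\ge 0$ are already covered, yields precisely $A\in\EWP_d$, $B\in\PSD_d$, $C\in\Msa{d}$, and $A_{ij}A_{ji}\ge|C_{ij}|^2$ for all $i,j$. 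Item (2) follows by applying item (1) to $X^{(3)\,\Gamma}_{(A,B,C)}$, since by Remark~\ref{remark:(C)LDUIsubspaceLDOI}-type reasoning (or a direct check on coordinates) partial transposition of an LDOI matrix swaps the roles of $B$ and $C$: $[X^{(3)}_{(A,B,C)}]^\Gamma = X^{(3)}_{(A,C,B)}$. For item (3): I would compute $R(X^{(3)}_{(A,B,C)})$ explicitly on basis elements using the given formula for $R$ and the coordinate expression \eqref{eq:X3-coord}; the realignment of an LDOI matrix is again block-structured, with a $\operatorname{diag}A$ block contributing $\sum_i A_{ii}=\operatorname{Tr}X$ via its trace, and $2\times2$ blocks $\begin{pmatrix} ? & B_{ij} \\ C_{ij} & ? \end{pmatrix}$ (up to the precise placement of off-diagonal entries $B_{ij}$, $C_{ji}$), so that $\|X^R\|_{\operatorname{Tr}}$ is $\sum_i A_{ii}$ plus a sum over unordered pairs of the singular-value sum of the corresponding $2\times2$ matrix, and the inequality $\|X^R\|_{\operatorname{Tr}}\le\operatorname{Tr}X$ rearranges to exactly $\|A\|_1-\|A\|_{\operatorname{Tr}}\ge 2\sum_{i<j}\max\{|B_{ij}|,|C_{ij}|\}$, which is condition (5) of Lemma~\ref{lemma:tcp-properties}. (Here one uses that the singular values of $\begin{pmatrix} 0 & b \\ c & 0\end{pmatrix}$ are $\{|b|,|c|\}$, and that $\|A\|_{\operatorname{Tr}}$ contributes the diagonal $A_{ii}$ plus, block by block, the analogous $2\times2$ singular-value sums of the $A$-blocks; the difference $\|A\|_1-\|A\|_{\operatorname{Tr}}$ then telescopes over unordered pairs.)

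The main obstacle is item (3): getting the exact form of the realigned matrix $X^R$ right, in particular tracking which entries of $B$ and $C$ land where, and then correctly identifying $\|X^R\|_{\operatorname{Tr}}$ and $\|A\|_{\operatorname{Tr}}$ block by block so that the two trace norms combine into the stated single inequality. The parity/pairing bookkeeping ($B_{ij}$ vs.\ $B_{ji}=\overbar{B_{ij}}$, and the $\max$ appearing because each unordered pair $\{i,j\}$ contributes one $2\times2$ block whose singular values are $\{|B_{ij}|,|C_{ij}|\}$ up to conjugation) is where a careless computation would go wrong; the diagrammatic calculus of \cite{nechita2021graphical} would streamline this. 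Items (1) and (2) are routine once the block structure of Proposition~\ref{prop:block-structure} is in hand.
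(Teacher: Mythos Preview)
Your approach to items (1) and (2) is correct and is exactly what the paper's block-structure decomposition (Proposition~\ref{prop:block-structure}) together with the partial-transpose symmetry $\left(X^{(3)}_{(A,B,C)}\right)^\Gamma = X^{(3)}_{(A,C,B)}$ (Proposition~\ref{prop:leg-permutations}) buys you.

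For item (3), your strategy is right but your bookkeeping is off in exactly the places you feared. The key fact you are missing is that realignment is itself one of the tensor-leg permutations of Proposition~\ref{prop:leg-permutations}: one has $\left(X^{(3)}_{(A,B,C)}\right)^R = X^{(3)}_{(B,A,C)}$, i.e.\ realignment simply swaps the roles of $A$ and $B$. Feeding this into Proposition~\ref{prop:block-structure} gives
\[
X^R = A \oplus \bigoplus_{i<j}\begin{bmatrix} B_{ij} & C_{ij} \\ C_{ji} & B_{ji}\end{bmatrix},
\]
so the ``big'' block is the full matrix $A$ (not $\operatorname{diag}A$), contributing $\|A\|_{\operatorname{Tr}}$ to the trace norm, and the $2\times 2$ blocks carry $B_{ij},\overbar{B_{ij}}$ on the diagonal (not zeros). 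A short computation shows that for $B,C\in\Msa{d}$ the singular values of $\begin{bmatrix} b & c \\ \bar c & \bar b\end{bmatrix}$ are $|b|+|c|$ and $\big||b|-|c|\big|$, whence each block has trace norm $2\max(|B_{ij}|,|C_{ij}|)$. Since $\operatorname{Tr}X = \sum_{i,j}A_{ij} = \|A\|_1$ (Proposition~\ref{prop:conditional-expectations}, and $A\in\EWP_d$ once $X$ is PSD), the realignment inequality $\|X^R\|_{\operatorname{Tr}}\le\operatorname{Tr}X$ becomes precisely
\[
\|A\|_{\operatorname{Tr}} + 2\sum_{i<j}\max\{|B_{ij}|,|C_{ij}|\} \le \|A\|_1,
\]
which is condition~(5). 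Your attempt to separately block-decompose $\|A\|_{\operatorname{Tr}}$ is unnecessary and would not work, since $A$ itself has no block structure; it sits as a single $d\times d$ block inside $X^R$.
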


In Quantum Mechanics, states $\rho$ of a physical system (called \emph{quantum states}) are modelled as bounded operators on a separable Hilbert space $\mathcal{H}$, i.e., $\rho\in \mathcal{B}(\mathcal{H})$, which are positive semi-definite and trace-class with $\operatorname{Tr}\rho=1$, see \cite[Chapter 2]{watrous2018theory}, \cite[Chapter 2]{holevo2019channels}. For multiparty systems, the Hilbert spaces are naturally required to acquire a tensor product structure, and the relevant states then lie in the space $\mathcal{B}(\mathcal{H}_1 \otimes \dotsb \otimes \mathcal{H}_n)$. In a finite dimensional bipartite setting: $\mathcal{H}_1\simeq \mathbb{C}^{d_1}, \mathcal{H}_2\simeq \mathbb{C}^{d_2}$, it is obvious that quantum states of a physical system are just positive semi-definite matrices $\rho\in \M{d_1} \otimes \M{d_2}$ with unit trace. The next lemma exploits the results from Lemma~\ref{lemma:LDOI-psd-ppt} to define the class of quantum states within the family of local diagonal orthogonal invariant matrices in $\M{d}\otimes \M{d}$.

\begin{lemma}
A matrix $X^{(3)}_{(A,B,C)}\in \LDOI_d$ is a quantum state $\iff A\in \EWP_d$, $B\in \PSD_d$ and $C\in \Msa{d}$ such that $\sum_{i,j=1}^d A_{ij} = || A ||_1 =1$ and $A_{ij}A_{ji} \geq \vert C_{ij} \vert^2 \,\, \forall i,j \in [d]$.
\end{lemma}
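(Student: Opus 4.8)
The plan is to unpack the definition of a quantum state --- a positive semi-definite matrix of unit trace --- into its two constituent requirements and dispatch each using the machinery already assembled. First I would recall that membership $X^{(3)}_{(A,B,C)}\in \PSD_{d^2}$ is characterized \emph{exactly} by item (1) of Lemma~\ref{lemma:LDOI-psd-ppt}, namely $A\in \EWP_d$, $B\in \PSD_d$, $C\in \Msa{d}$, and $A_{ij}A_{ji}\geq |C_{ij}|^2$ for all $i,j\in [d]$. This already supplies every condition in the statement except the normalization, so the only genuine task is to translate the unit-trace requirement $\operatorname{Tr}X^{(3)}_{(A,B,C)}=1$ into $\sum_{i,j}A_{ij}=\|A\|_1=1$.

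For the trace, I would simply read off the diagonal of $X^{(3)}_{(A,B,C)}$ from the coordinate description in Eq.~\eqref{eq:X3-coord}: the entry $X^{(3)}_{(A,B,C)}(i_1i_2,i_1i_2)$ falls into the first case ($i_1=j_1$ and $i_2=j_2$) and hence equals $A_{i_1i_2}$, while the $B$- and $C$-contributions never reach the diagonal since both require $i_1\neq i_2$. Summing over all $(i_1,i_2)\in [d]^2$ gives $\operatorname{Tr}X^{(3)}_{(A,B,C)}=\sum_{i,j=1}^d A_{ij}$; equivalently, one can see this from the block form in Eq.~\eqref{eq:LDOI-block-3}, or from the fact that $\operatorname{diag}A$ occupies the diagonal of the matrix.

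Finally, under the hypothesis $A\in \EWP_d$ --- which is already forced by the positivity condition above --- every entry $A_{ij}$ is non-negative, so $\sum_{i,j}A_{ij}=\sum_{i,j}|A_{ij}|=\|A\|_1$. Hence $\operatorname{Tr}X^{(3)}_{(A,B,C)}=1$ is equivalent to $\sum_{i,j}A_{ij}=\|A\|_1=1$, and combining this with Lemma~\ref{lemma:LDOI-psd-ppt}(1) yields the claimed equivalence in both directions. There is essentially no hard step here; the only points requiring a little care are that the displayed normalization is a single condition --- the equality $\sum_{i,j}A_{ij}=\|A\|_1$ being automatic once $A\in\EWP_d$ --- rather than two independent ones, and that the $B$ and $C$ blocks must correctly be seen to be traceless so that the trace depends on $A$ alone.
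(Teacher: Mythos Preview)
Your proposal is correct and matches the paper's implicit approach: the paper states this lemma without proof, merely noting that it ``exploits the results from Lemma~\ref{lemma:LDOI-psd-ppt}'', which is precisely what you do---invoke Lemma~\ref{lemma:LDOI-psd-ppt}(1) for the positive semi-definite characterization and then compute the trace from the coordinate description to obtain $\sum_{i,j}A_{ij}$. Your observation that $\sum_{i,j}A_{ij}=\|A\|_1$ is automatic once $A\in\EWP_d$ is the one small point the paper leaves entirely tacit, and you handle it cleanly.
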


\section{Important classes of LDOI matrices}\label{sec:LDOIexamples}

This section contains a plethora of examples of (classes of) local diagonal orthogonal invariant matrices. We list well-known examples of bipartite quantum states which fit into this category, many classes from the literature, as well as some new ones. The fact that so many examples of bipartite quantum states fit into this framework motivates the study of the general properties of LDOI matrices, which will form the subject matter of Sections ~\ref{sec:linear-structure} and \ref{sec:convex-structure} later. We collect all the examples in Table \ref{tbl:LDOI} which can be found at the end of this section. In what follows, $\mathbb{I}_d\in \M{d}$ is the identity matrix and $\mathbb{J}_d\in \M{d}$ is the matrix in which all entries are equal to one.

\begin{example}[\emph{Diagonal matrices}]\ \\[0.1cm] \label{eg:states-diag}
It is easy to see that a bipartite matrix $X$ lies in $\LDUI_d \cap \CLDUI_d$ if and only if it is diagonal. The associated triple $(A,B,C)\in \MLDOI{d}$ then has an arbitrary $A\in \M{d}$ and $B=C=\operatorname{diag}A$. In particular, this class contains tensor products of diagonal matrices $\operatorname{diag}\ket{y} \otimes \operatorname{diag}\ket{z}$ with $\ket{y},\ket{z} \in \mathbb C^d$, where  $A = \ketbra{y}{z}$. By fixing $\ket{y} = \ket{z} = \ket{\operatorname{diag}\mathbb{I}_d} \in \mathbb{C}^d$, we obtain the identity matrix $\mathbb{I}_d \otimes \mathbb{I}_d$ with $A=\mathbb{J}_d$. Being diagonal, these matrices are separable if and only if they are PPT if and only if they are positive semi-definite if and only if $A\in \EWP_d$.
\end{example}

\begin{example}[\emph{Unit rank and tensor product states}]\ \\[0.1cm] \label{eg:states-unitrank}
In this example, we deal with LDOI matrices which factorize either as a unit rank matrix or as a tensor product, see also Remark~\ref{cor:rank-X}.

\begin{proposition}
The \emph{LDOI} matrices of unit rank are of the form $\ketbra{Y}{Z}$, where $Y,Z$ are either
\begin{itemize}
    \item of the `diagonal' form: $\ket{Y}=\sum_{i=1}^d y_i \ket{ii}$ and $\ket{Z}=\sum_{i=1}^d z_i \ket{ii}$ in $\C{d}\otimes \C{d}$,
    \item or are supported on $\mathbb C \ket{ij} \oplus \mathbb C \ket{ji}$ for some $i \neq j$.
\end{itemize}
 In the former case, $B=\ketbra{y}{z}$ and $A=C=\operatorname{diag}B$, where $\ket{y}=\sum_{i=1}^d y_i \ket{i}$ and $\ket{z}=\sum_{i=1}^d z_i \ket{i}$ are vectors in $\C{d}$, the matrices being \emph{CLDUI}. In the latter case, the matrices are \emph{LDUI}. The only unit rank matrices which are simultaneously \emph{CLDUI} and \emph{LDUI} are of the form
\begin{equation}\label{eq:iiii}
    X = \lambda \ketbra{ij}
\end{equation}
for some non-zero $\lambda \in \mathbb C$ and $i,j \in [d]$.
\end{proposition}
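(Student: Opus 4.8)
The plan is to reduce everything to a statement about the \emph{support} of the matrix entries. Recall from Eqs.~\eqref{eq:X1-coord}--\eqref{eq:X3-coord} (equivalently, from the formula for $\operatorname{Proj}_{\LDOI}$ in Proposition~\ref{prop:LDUI/CLDUI/LDOI-projections}) that a bipartite matrix $X$ is \emph{LDOI} if and only if it vanishes at every position $(i_1i_2,j_1j_2)$ satisfying none of the three conditions $(i_1=j_1\wedge i_2=j_2)$, $(i_1=i_2\wedge j_1=j_2)$, $(i_1=j_2\wedge i_2=j_1)$; write $\mathcal{P}\subseteq[d]^2\times[d]^2$ for the resulting set of admissible positions. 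For a unit-rank matrix I would write $X=\ketbra{Y}{Z}$ with $\ket{Y}=\sum_{a,b}Y_{ab}\ket{ab}\neq 0$ and $\ket{Z}=\sum_{a,b}Z_{ab}\ket{ab}\neq 0$, and set $S_Y=\{(a,b):Y_{ab}\neq 0\}$ and $S_Z=\{(a,b):Z_{ab}\neq 0\}$. Since $X$ has a nonzero entry at $(i_1i_2,j_1j_2)$ precisely when $(i_1,i_2)\in S_Y$ and $(j_1,j_2)\in S_Z$, the matrix $X$ is \emph{LDOI} if and only if $S_Y\times S_Z\subseteq\mathcal{P}$.

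The heart of the argument is then a short combinatorial dichotomy on $S_Y$ and $S_Z$. First I would observe that if some $(a,b)\in S_Y$ has $a\neq b$, then for every $(c,d)\in S_Z$ the pair $((a,b),(c,d))$ lies in $\mathcal{P}$; since the middle alternative $a=b$ is impossible, this forces $(c,d)\in\{(a,b),(b,a)\}$, so $S_Z\subseteq\{(a,b),(b,a)\}$ and in particular every element of $S_Z$ is off-diagonal. The set $\mathcal{P}$ is invariant under exchanging the row and column index pairs --- which corresponds to replacing $X$ by $X^*$, equivalently $\ketbra{Y}{Z}$ by $\ketbra{Z}{Y}$ --- so running the same reasoning with the roles of $Y$ and $Z$ swapped, starting from any off-diagonal element of $S_Z$, yields $S_Y\subseteq\{(a,b),(b,a)\}$ as well. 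Hence either some element of $S_Y\cup S_Z$ is off-diagonal, in which case both $\ket{Y}$ and $\ket{Z}$ are supported on $\mathbb{C}\ket{ij}\oplus\mathbb{C}\ket{ji}$ for a single $i\neq j$, or else $S_Y$ and $S_Z$ consist entirely of diagonal indices, which is precisely the `diagonal' form. The converse, that both forms yield \emph{LDOI} matrices, is immediate since their supports visibly lie inside $\mathcal{P}$. I expect this adjoint-symmetry step to be the only point requiring genuine care; the rest is bookkeeping.

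Finally I would read off the triple $(A,B,C)$ in each case by directly evaluating $A_{ij}=\langle ij|X|ij\rangle$, $B_{ij}=\langle ii|X|jj\rangle$, and $C_{ij}=\langle ij|X|ji\rangle$. In the diagonal case $\ket{Y}=\sum_i y_i\ket{ii}$, $\ket{Z}=\sum_i z_i\ket{ii}$ one finds $B=\ketbra{y}{z}$ and $A=C=\operatorname{diag}B$, so by Remark~\ref{remark:(C)LDUIsubspaceLDOI} the matrix equals $X^{(2)}_{(\operatorname{diag}B,\,B)}$ and is \emph{CLDUI}; in the other case one finds $B=0=\operatorname{diag}A$, so the matrix equals $X^{(1)}_{(A,C)}$ and is \emph{LDUI}. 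For the last assertion, a unit-rank matrix lying in $\LDUI_d\cap\CLDUI_d$ must fall into one of the two cases: in the diagonal case, \emph{LDUI}-ness (i.e.\ $B=\operatorname{diag}A$ via Remark~\ref{remark:(C)LDUIsubspaceLDOI}) forces the rank-one matrix $\ketbra{y}{z}$ to be diagonal, whence $\ket{y}$ and $\ket{z}$ are proportional to a common $\ket{i}$ and $X=\lambda\ketbra{ii}{ii}$; in the off-diagonal case, \emph{CLDUI}-ness (i.e.\ $C=\operatorname{diag}A=0$) annihilates three of the four nonzero entries and leaves $X=\lambda\ketbra{ij}{ij}$. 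Conversely, each $\lambda\ketbra{ij}{ij}$ is readily checked to be both \emph{LDUI} and \emph{CLDUI}, which completes the argument.
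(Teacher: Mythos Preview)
Your proof is correct and follows essentially the same approach as the paper's: both reduce to the entrywise constraint that nonzero entries of $X=\ketbra{Y}{Z}$ must sit in the admissible pattern $\mathcal{P}$, and then run the off-diagonal dichotomy on the supports of $Y$ and $Z$. Your version is in fact slightly more careful than the paper's, since you make explicit the adjoint-symmetry step needed to pin down $S_Y\subseteq\{(i,j),(j,i)\}$ (the paper asserts this without justification), and you spell out the $\LDUI\cap\CLDUI$ case analysis in detail.
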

\begin{proof}
    Consider a unit rank LDOI matrix $X = \ketbra{Y}{Z}$, with $0\neq \ket{Y},\ket{Z} \in \C{d}\otimes \C{d}$. From the LDOI property (see Definition~\ref{def:LDUI-CLDUI-LDOI}), we see that for all sign vectors $s \in \{\pm 1\}^d$ and all $i,j,a,b\in [d]$, $s_is_j s_a s_b Y_{ij} \bar Z_{ab} = Y_{ij}\bar Z_{ab}$. This relation holds if $\ket{Y}$ and $\ket{Z}$ are of the diagonal form, in which case
    \begin{equation}\label{eq:rank-one-CLDUI}
    X = \sum_{i,j=1}^d y_i \overbar{z_j} \ketbra{ii}{jj}
    \end{equation}
    is CLDUI. The associated matrices $A,B$ and $C$ can also be shown to have the required form, see Proposition~\ref{prop:LDOI-ABC} and the bijections following it. If $\ket{Y}$ is not of the diagonal form, then $Y_{ij} \neq 0$ for some $i \neq j$. This implies that $Z_{ab} = 0$ for all $(a,b) \notin \{(i,j),(j,i)\}$, and we obtain rank one LDUI matrices of the type
    \begin{equation}\label{eq:rank-one-LDUI}
        X = \left(\alpha \ket{ij} + \beta \ket{ji} \right)\left(\gamma \bra{ij} + \delta \bra{ji} \right),
    \end{equation}
    where $\alpha, \beta, \gamma$ and $\delta$ are complex numbers. Finally, selecting those matrices from Eqs.~\eqref{eq:rank-one-CLDUI} and \eqref{eq:rank-one-LDUI} which are both LDUI and CLDUI gives us precisely the matrices in Eq.~\eqref{eq:iiii}. 
\end{proof}
\begin{remark}
    The set of LDOI matrices $X$ having a tensor product structure $X = Y \otimes Z$ can be deduced from the result above by applying the realignment operation, see the Definition used in Lemma~\ref{lemma:LDOI-psd-ppt}.
\end{remark}
\end{example}

\begin{example}[\emph{Werner and Isotropic matrices}] \cite{Werner1989, Horodecki1999iso}\ \\[0.1cm] \label{eg:states-wer-iso}
A matrix $X\in \M{d}\otimes \M{d}$ is called a Werner (resp. isotropic) matrix if it satisfies $(U\otimes U)X(U^*\otimes U^*) = X$ (resp.~$(U\otimes \overbar{U})X(U\otimes \overbar{U})^* = X$) for all unitary matrices $U\in \M{d}$. It is obvious from the definition that the Werner matrices lie in $\LDUI_d$, while the isotropic matrices lie in $\CLDUI_d$ (see Definition~\ref{def:LDUI-CLDUI-LDOI}). The structure of these matrices is known to be of the following form, for $a,b\in \mathbb{C}$:
\begin{equation}
 X^{\mathsf{wer}}_{a,b} = a (\mathbb{I}_d \otimes \mathbb{I}_d) + b \sum_{i,j=1}^d \ketbra{ij}{ji}, \qquad\qquad X^{\mathsf{iso}}_{a,b} = a (\mathbb{I}_d \otimes \mathbb{I}_d) + b \sum_{i,j=1}^d \ketbra{ii}{jj}.
\end{equation}
It is straightforward to check that $X^{\mathsf{wer}}_{a,b} = X^{(1)}_{(A,B)}$ and $X^{\mathsf{iso}}_{a,b} = X^{(2)}_{(A,B)}$, for $A = b\,\mathbb{I}_d + a\mathbb{J}_d$ and $B = a\mathbb{I}_d + b\mathbb{J}_d$. It is equally easy to see that self-adjointness forces the parameters $a,b$ to be real. Now, observe that $B\in \PSD_d$ if and only if $a\geq 0$ and $\operatorname{det}(B) = a^{d-1}(a + db) \geq 0$, i.e., $B\in \PSD_d$ if and only if $a\geq 0$ and $b\geq -a/d$. Also notice that $A_{ij}A_{ji}\geq |B_{ij}|^2 \,\, \forall i,j\in [d]$ if and only if $-a\leq b\leq a$. Combining everything together, we deduce that $X^{\mathsf{wer}}_{a,b}$ and $X^{\mathsf{iso}}_{a,b}$ are PPT if and only if $a\geq 0$ and $-a/d \leq b \leq a$. We now prove that this also suffices to guarantee separability of the concerned matrices.

\begin{proposition} \label{prop:wer-iso-sep}
Let $X^{\mathsf{wer}}_{a,b}, X^{\mathsf{iso}}_{a,b}\in \M{d}\otimes \M{d}$ be the Werner and isotropic matrices, parameterized by the pair $(a,b) \in \C{2}$. Then, the following equivalences hold:
\begin{align*}
    X^{\mathsf{wer}}_{a,b} \text{ is \emph{PPT}} \iff X^{\mathsf{wer}}_{a,b} \text{ is separable} &\iff X^{\mathsf{iso}}_{a,b} \text{ is separable} \iff X^{\mathsf{iso}}_{a,b} \text{ is \emph{PPT}} \\
    &\iff a\geq 0 \text{ and } -a/d \leq b \leq a.
\end{align*}
\end{proposition}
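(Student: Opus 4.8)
The string of equivalences has already been half-established in the preceding paragraph: the computation there shows that both $X^{\mathsf{wer}}_{a,b}$ and $X^{\mathsf{iso}}_{a,b}$ are PPT if and only if $a \geq 0$ and $-a/d \leq b \leq a$, using Lemma~\ref{lemma:LDOI-psd-ppt} applied to the associated triples (recall $X^{\mathsf{wer}}_{a,b} = X^{(1)}_{(A,B)}$ with $A = b\,\mathbb{I}_d + a\mathbb{J}_d$, $B = a\mathbb{I}_d + b\mathbb{J}_d$, and $X^{\mathsf{iso}}_{a,b} = X^{(2)}_{(A,B)}$ for the same pair, with the partial transpose swapping the roles of $B$ and $C$ per Remark~\ref{remark:(C)LDUIsubspaceLDOI}). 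Moreover, separability always implies PPT. So the only thing left to prove is one implication, say $a \geq 0$ and $-a/d \leq b \leq a$ $\implies (A,B)$ is PCP; once this is in hand, Theorem~\ref{theorem:LDOI-sep} gives separability of both $X^{\mathsf{wer}}_{a,b}$ and $X^{\mathsf{iso}}_{a,b}$ simultaneously, and the circle of equivalences closes.

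The plan is therefore to exhibit a PCP decomposition of $(A,B) = (b\,\mathbb{I}_d + a\mathbb{J}_d,\ a\,\mathbb{I}_d + b\,\mathbb{J}_d)$ under the hypothesis $a\geq 0$, $-a/d \leq b \leq a$. I would split into cases according to the sign of $b$. For $0 \leq b \leq a$, I would try an ansatz combining ``diagonal'' rank-one contributions $V=W=\ket{i}$ (which contribute to the diagonal of $A$ and to $B$) with ``flat'' contributions $V = \ket{\operatorname{diag}\mathbb{I}_d}$, $W$ ranging over phased vectors, designed to fill in $\mathbb{J}_d$-type off-diagonal entries; concretely one wants a family realizing $A = (V\odot\overbar V)(W\odot\overbar W)^*$ and $B = (V\odot W)(V\odot W)^*$ as in Eq.~\eqref{eq:def-PCP}. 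For the harder regime $-a/d \leq b < 0$, the natural route is Lemma~\ref{lemma:PCP-comparison}: one checks that $A = b\,\mathbb{I}_d + a\mathbb{J}_d \in \EWP_d$ (true since $a \geq 0$ and $a+b \geq a - a = 0$, wait — more carefully $A_{ii} = a+b \geq a - a/d \geq 0$ and $A_{ij} = a \geq 0$), that $B = a\,\mathbb{I}_d + b\,\mathbb{J}_d \in \PSD_d$ (eigenvalues $a$ and $a+db \geq 0$), and that the entrywise constraint $A_{ij}A_{ji} \geq |B_{ij}|^2$ holds (it reduces to $-a\leq b\leq a$, already noted). It then remains to verify that $M(B)$, the comparison matrix of $B$, is positive semi-definite: here $M(B) = a\,\mathbb{I}_d - |b|\,\mathbb{J}_d$, whose eigenvalues are $a$ (multiplicity $d-1$) and $a - d|b| = a + db \geq 0$, so $M(B)\in\PSD_d$, and Lemma~\ref{lemma:PCP-comparison} yields $(A,B)\in\PCP_d$. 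One should double-check that the $b\geq 0$ case is also covered by some variant of this argument (for $b \geq 0$, $M(B) = a\mathbb{I}_d - b\mathbb{J}_d$ need not be PSD when $b > a/d$, which is exactly why that regime needs the explicit ansatz rather than the comparison-matrix test).

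The main obstacle is the explicit PCP construction in the range $a/d \leq \ldots$ hmm, rather $b \in (0,a]$ where the comparison-matrix criterion of Lemma~\ref{lemma:PCP-comparison} fails: one genuinely needs to write down matrices $V, W \in \M{d,d'}$ realizing the decomposition, and getting the combinatorics of the Hadamard products to simultaneously reproduce the $\mathbb{I}_d$ and $\mathbb{J}_d$ parts of both $A$ and $B$ requires some care with normalizations and with the number of columns $d'$. An alternative that may be cleaner is to invoke known separable decompositions of Werner/isotropic states from the literature \cite{Werner1989, Horodecki1999iso} in the PPT region and then apply $\operatorname{Proj}_{\LDUI}$ (resp. $\operatorname{Proj}_{\CLDUI}$), which preserves separability by Proposition~\ref{prop:LDUI/CLDUI/LDOI-projections}; but since the point here is a self-contained treatment, I would favour the direct PCP construction and treat the sign-of-$b$ case split as the crux of the argument.
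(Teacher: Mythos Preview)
Your reduction is correct: the PPT equivalence is already established, and Theorem~\ref{theorem:LDOI-sep} reduces separability to showing $(A,B)\in\PCP_d$ on the stated interval. Your treatment of $b\leq 0$ via Lemma~\ref{lemma:PCP-comparison} is fine (indeed for $b<0$ one simply has $M(B)=B\in\PSD_d$).

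The gap is exactly where you flag it: for $b$ positive and close to $a$ you have no construction, only an ansatz you hope will work. The paper sidesteps this entirely by a convexity argument that you are missing. Since $\PCP_d$ is a convex cone, it suffices to verify membership at the \emph{two endpoints} $b=-a/d$ and $b=a$ (for fixed $a\geq 0$); every intermediate $b$ is then a convex combination. At $b=-a/d$, $B=a\mathbb{I}_d - (a/d)\mathbb{J}_d$ is diagonally dominant, so Lemma~\ref{lemma:PCP-comparison} applies directly. At $b=a$, one has $A=B=a\mathbb{I}_d + a\mathbb{J}_d$, which is manifestly completely positive via
\[
a\mathbb{I}_d + a\mathbb{J}_d = a\sum_{i=1}^d \ketbra{i}{i} + a\,\ketbra{\operatorname{diag}\mathbb{I}_d}{\operatorname{diag}\mathbb{I}_d},
\]
and Theorem~\ref{theorem:CP<=PCP} then gives $(A,A)\in\PCP_d$. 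No explicit PCP factorization in the interior is needed at all.

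So rather than a sign-of-$b$ case split with a hard constructive half, the argument is: two easy endpoint checks plus convexity.
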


\begin{proof}
The idea is to show that the associated pair $(A,B)$ is PCP in the range determined by $a$ and $b$. To this end, we first fix $a\geq 0$ and assume that $b=-a/d$. It is then evident that $B$ is diagonally dominant and hence Lemma~\ref{lemma:PCP-comparison} tells us that $(A,B)$ is PCP. On the other hand, if $b=a$, we have $$A=B=a\mathbb{I}_d + a\mathbb{J}_d = a\sum_{i=1}^d \ketbra{i}{i} + a\ketbra{\operatorname{diag}\mathbb{I}_d}{\operatorname{diag}\mathbb{I}_d}.$$ It is then clear that $A=B \text{ is CP}\implies (A,B)$ is PCP, see Theorem~\ref{theorem:CP<=PCP}. Separability on the entire interval $-a/d \leq b \leq a$ then follows from convexity, see also \cite[Example 2]{johnston2019pairwise}.
\end{proof}
\end{example}

\begin{example}[\emph{Mixtures of Dicke states or diagonal symmetric matrices}] \cite{yu2016separability, tura2018separability} \label{eg:states-dicke} 
The symmetric subspace in $\C{d}\otimes \C{d}$ is spanned by vectors of the form 
\begin{equation}
    \ket{\psi_{ij}} = \begin{cases}
     (\ket{ij} + \ket{ji})/\sqrt{2}, \quad &\text{if } i < j\\
     \,\,\ket{ii}, \quad &\text{if } i=j
\end{cases}
\end{equation}
which constitute what is called the Dicke basis for the symmetric subspace. Bipartite matrices which are diagonal in the Dicke basis are known as diagonal symmetric matrices:
\begin{equation}
    X^{\mathsf{dicke}}_{Y} = \sum_{1\leq i\leq j \leq d} Y_{ij} \ketbra{\psi_{ij}}{\psi_{ij}},
\end{equation}
where $Y\in \Mreal{d}$ is an arbitrary symmetric matrix. By defining $A = \operatorname{diag}(Y) + \widetilde{Y}/2$, it becomes clear that $X^{\mathsf{dicke}}_{Y} = X^{(1)}_{(A,A)} \in \LDUI_d$. From Lemma~\ref{lemma:LDOI-psd-ppt}, it is evident that $X^{\mathsf{dicke}}_{Y}$ is PPT if and only if $A\in \EWP_d \cap \PSD_d \coloneqq \mathsf{DNN}_d$, i.e., if and only if $A$ is \emph{doubly non-negative}. The equivalence of separability of $X^{\mathsf{dicke}}_{Y}$ and complete positivity of $A$ can be similarly obtained from Theorem~\ref{theorem:CP<=PCP}. Since the convex cones of doubly non-negative and completely positive matrices are equal in dimensions $d\leq 4$ \cite[Theorem 2.4]{berman2003completely}, we conclude that $X^{\mathsf{dicke}}_{Y}\in \M{d}\otimes \M{d}$ with $d\leq 4$ is separable if and only if it is PPT. For $d\geq 5$, every $A\in \mathsf{DNN}_d$ which is not completely positive gives rise to a PPT entangled diagonal symmetric matrix.
\end{example}

\begin{example}[\emph{Partial transpose invariant LDOI matrices}]\ \\[0.1cm] \label{eg:states-PTinv}
Using Proposition~\ref{prop:leg-permutations}, it is straightforward to infer that an LDOI matrix is invariant under partial transposition with respect to the first (resp. second) subsystem if and only if the associated triple $(A,B,C)\in \MLDOI{d}$ satisfies: $C=B^\top$ (resp. $C=B$). Moreover, by definition, these matrices are PPT if and only if they are positive semi-definite, which in turn is equivalent to the condition that $A\in \EWP_d, B\in \PSD_d$ and $A_{ij}A_{ji}\geq |B_{ij}|^2 \,\, \forall i,j\in [d]$, see Lemma~\ref{lemma:LDOI-psd-ppt}. Finally, Proposition~\ref{prop:AB-ABB} below (combined with Theorem~\ref{theorem:LDOI-sep}) shows that separability of these matrices is equivalent to the separability of the corresponding LDUI/CLDUI matrices with matrix pairs $(A,B)\in \MLDUI{d}$.

\begin{proposition} \label{prop:AB-ABB}
For $(A,B)\in \MLDUI{d}$, the following sequence of equivalences hold:
\begin{align*}
    (A,B,B^\top) \text{ is \emph{TCP}} \iff (A,B) \text{ i}&\text{s \emph{PCP}} \iff (A,B,B) \text{ is \emph{TCP}}  \\
    &\Big\Updownarrow \\
    (A,B^\top,B) \text{ is \emph{TCP}} \iff (A,B^\top) \text{ i}&\text{s \emph{PCP}} \iff  (A,B^\top,B^\top) \text{ is \emph{TCP}}.
\end{align*}
\end{proposition}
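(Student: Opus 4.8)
The plan is to establish the whole diagram by proving a small number of directed implications and then closing the loop using symmetry under $B \leftrightarrow B^\top$ (equivalently, under transposition of the $V,W$ data). The key observation is that all six notions in the diagram are defined through the same pair of matrices $V,W \in \M{d,d'}$, so each statement is an assertion about which Hadamard-product Gram matrices can be simultaneously realized. Writing $P := V\odot \overbar V$, $Q := W\odot \overbar W$, $R := V\odot W$ and $S := V\odot \overbar W$, we have $A = PQ^*$, $B = RR^*$ for PCP of $(A,B)$, whereas TCP of $(A,B,C)$ additionally demands $C = SS^*$. I would first record the trivial direction: if $(A,B)$ is PCP with data $(V,W)$, then $(A,B,\operatorname{diag}A)$ is TCP with the \emph{same} $V$ and $W' := \overbar W \odot$ (something)... more cleanly, I would instead use the standard trick that appends columns to $W$ so as to kill the off-diagonal part of $SS^*$ while leaving $PQ^*$ and $RR^*$ untouched; this is exactly the mechanism behind Remark~\ref{remark:(C)LDUIsubspaceLDOI} and the constructions in \cite{nechita2021graphical, johnston2019pairwise}. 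The cleanest formulation is: PCP of $(A,B)$ $\iff$ TCP of $(A,B,\operatorname{diag}A)$ — the forward direction by padding $W$ with suitably scaled standard-basis columns to force $S S^*$ diagonal, the backward direction by forgetting the third component (Lemma~\ref{lemma:tcp-properties}(1)).

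Next I would prove the central link $(A,B)$ PCP $\implies$ $(A,B,B)$ TCP. Given a PCP decomposition $A = (V\odot\overbar V)(W\odot\overbar W)^*$, $B = (V\odot W)(V\odot W)^*$, the natural attempt is to \emph{symmetrize} the roles of $V$ and $W$: replace the pair $(V,W)$ by the enlarged pair $(V', W')$ where $V' = [V \mid W]$ and $W' = [W \mid V]$ (horizontal concatenation into $\M{d,2d'}$). One checks that $(V'\odot \overbar{V'})(W'\odot\overbar{W'})^* = (V\odot\overbar V)(W\odot\overbar W)^* + (W\odot\overbar W)(V\odot\overbar V)^*$, which is $A + A^\top$ (using $\operatorname{diag}A$ real and the symmetry of the diagonal constraint) — not quite $A$. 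So naive concatenation overshoots; the fix is to instead take $V' = [V\mid W]$, $W' = [W \mid V]$ and then rescale so that only the "$B$-type" Gram matrices land symmetrically. I expect this bookkeeping — ensuring that the first component stays $A$ (not $A+A^\top$) while the second and third both become $B$ — to be the main obstacle, and I anticipate that the correct construction is the one used implicitly in \cite[Appendix B]{nechita2021graphical}: one does not concatenate, but rather observes directly that $S S^* = (V\odot\overbar W)(V\odot\overbar W)^*$ already equals $B$ when $W$ is replaced by $\overbar W$ throughout, and that replacing $W \mapsto \overbar W$ in a PCP decomposition of $(A,B)$ produces a PCP decomposition of $(A,B)$ again (since $(V\odot\overbar W)(\cdots)^*$ and $(V\odot W)(\cdots)^*$ are related by conjugating $W$, which leaves $A = (V\odot\overbar V)(W\odot\overbar W)^*$ invariant and sends $B \mapsto \overbar B$... and $B$ need not be real). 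This shows the replacement $W\mapsto \overbar W$ is the right move only after one also transposes, which is precisely why $B$ and $B^\top$ get interchanged — giving the vertical equivalence in the diagram for free.

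Assembling: from $(A,B)$ PCP I get $(A,B,\operatorname{diag}A)$ TCP, $(A,B,B)$ TCP, and $(A,B,B^\top)$ TCP; conversely any of these TCP statements yields PCP of $(A,B)$ by Lemma~\ref{lemma:tcp-properties}(1). Applying the same chain to the pair $(A,B^\top)$ — which is legitimate because $(A,B^\top)\in\MLDUI{d}$ whenever $(A,B)\in\MLDUI{d}$, the diagonal constraint being transposition-symmetric — produces the bottom row, and the replacement $W \mapsto \overbar W$ (together with an overall transpose of $V,W$, i.e. passing to $V^\top, W^\top \in \M{d',d}$ and reading off Gram matrices on the other side) supplies the vertical double arrow $(A,B)$ PCP $\iff$ $(A,B^\top)$ PCP. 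The remaining obstacle worth flagging: one must check that the "append basis columns to force a diagonal third component" operation does not disturb the PSD/EWP structure, but this is immediate since adding columns to $V$ or $W$ only adds PSD rank-one terms to the relevant Gram matrices, and the off-diagonal entries of $SS^*$ are killed by choosing the new columns of $W$ supported on single standard-basis vectors with magnitudes matching $\sqrt{A_{ii}}$. I would present the forward implications via these explicit column-manipulations and cite \cite[Appendix B]{nechita2021graphical} for the routine verifications.
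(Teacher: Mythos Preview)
Your proposal has a genuine gap at the crucial step: you never produce a working construction for $(A,B)$ PCP $\implies$ $(A,B,B)$ TCP. You try concatenation $V'=[V\mid W]$, $W'=[W\mid V]$ and correctly note it overshoots to $A+A^\top$; you try $W\mapsto\overbar W$ and correctly note it gives $\overbar B$ rather than $B$; you then wave at ``the construction in \cite[Appendix B]{nechita2021graphical}'' without saying what it is. The column-padding idea you describe is for forcing the third component to be $\operatorname{diag}A$, which is the wrong target here.

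The paper's actual move is a \emph{phase transfer}, and it is much simpler than anything you attempted. Given a PCP decomposition $V,W$ of $(A,B)$, define entrywise
\[
V'_{ij}=V_{ij}\operatorname{phase}(W_{ij}),\qquad W'_{ij}=|W_{ij}|.
\]
Then $V'\odot\overbar{V'}=V\odot\overbar V$ and $W'\odot\overbar{W'}=W\odot\overbar W$ (phases cancel), so $A$ is preserved; $V'\odot W'=V\odot W$, so $B$ is preserved; and since $W'$ is real, $V'\odot\overbar{W'}=V'\odot W'$, forcing the third component to equal $B$. That is the whole argument for $(A,B,B)$. For $(A,B,B^\top)$ one transfers phases the other way: $V''_{ij}=|V_{ij}|$, $W''_{ij}=W_{ij}\operatorname{phase}(V_{ij})$; now $V''\odot\overbar{W''}=\overbar{V\odot W}$, whose Gram matrix is $\overbar B=B^\top$ (using $B\in\PSD_d$). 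The vertical link then drops out: $V'',\overbar{W''}$ is a PCP decomposition of $(A,B^\top)$. Your instinct that conjugation of $W$ is involved was correct, but the missing idea is to absorb the phase of one factor into the other so that one of $V,W$ becomes real, which is what makes the second and third TCP components collapse.
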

\begin{proof}
We begin with the first row of equivalences. Since the implications pointing from the ends to the center are trivially obtained from part (1) of Lemma~\ref{lemma:tcp-properties}, we start from the center and assume that $(A,B)$ is PCP with $V,W\in \mathcal{M}_{d,d'}(\mathbb C)$ forming its PCP decomposition, see Definition~\ref{def:pcp}. Now, define matrices $V', W', V'', W''\in \M{d,d'}$ entrywise as follows: 
\begin{alignat*}{2}
V'_{ij} &= V_{ij}\operatorname{phase}(W_{ij}), \qquad\qquad &&W'_{ij} = \vert W_{ij} \vert, \\
V''_{ij} &= |V_{ij}|, \qquad\qquad  &&W''_{ij} = W_{ij}\operatorname{phase}(V_{ij}),
\end{alignat*}
where $\operatorname{phase}(V_{ij})$ and $\operatorname{phase}(W_{ij})$ are the complex phases of the entries of $V$ and $W$: $V_{ij} = \vert V_{ij} \vert \operatorname{phase}(V_{ij})$ and $W_{ij} = \vert W_{ij} \vert \operatorname{phase}(W_{ij})$. Now, observe that since $W'$ is entrywise non-negative (and hence $W' = \overbar{W'}$), $V',W'$ form a TCP decomposition of $(A,B,B)$ as in Definition~\ref{def:tcp}. Similarly, $V'',W''$ form a TCP decomposition of $(A,B,B^\top)$. This establishes all the equivalences in the first row. An identical argument does the same for the second row as well. Now, to connect the two rows, we observe that if $V,W$ form a PCP decomposition of $(A,B)$, then $V'',\overbar{W''}$ (as constructed above) form a PCP decomposition of $(A,B^\top)$.  
\end{proof}

Using Proposition~\ref{prop:AB-ABB}, the conclusion of Theorem~\ref{theorem:CP<=PCP} can be trivially extended to TCP matrices.
\begin{theorem} \label{theorem:CP<=PCP<=TCP}
For $A\in \M{d}$, the following equivalences hold:
\begin{equation*}
    A \text{ is \emph{CP}} \iff (A,A) \text{ is \emph{PCP}} \iff (A,A,A) \text{ is \emph{TCP}}.
\end{equation*}
\end{theorem}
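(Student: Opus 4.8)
The plan is to reduce Theorem~\ref{theorem:CP<=PCP<=TCP} directly to the two results already at hand: Theorem~\ref{theorem:CP<=PCP}, which gives $A$ is CP $\iff (A,A)$ is PCP, and Proposition~\ref{prop:AB-ABB}, which links PCP pairs to TCP triples. The only thing to verify is that the chain $A \text{ CP} \iff (A,A) \text{ PCP} \iff (A,A,A) \text{ TCP}$ is obtained by specializing these statements appropriately, with one small caveat about the ambient spaces (the matrices here are in $\M{d}$, not $\Mreal{d}$).

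First I would handle the middle equivalence. Proposition~\ref{prop:AB-ABB} is stated for a pair $(A,B)\in\MLDUI{d}$ and asserts, among other things, that $(A,B)$ is PCP $\iff (A,B,B)$ is TCP. To apply it with $B=A$ I must check that $(A,A)\in\MLDUI{d}$, i.e.\ that $\operatorname{diag}(A)=\operatorname{diag}(A)$, which is trivially true; and that $(A,A,A)\in\MLDOI{d}$, which needs $\operatorname{diag}(A)=\operatorname{diag}(A)=\operatorname{diag}(A)$, again trivial. So the specialization $B=A$ is legitimate and yields $(A,A)$ PCP $\iff (A,A,A)$ TCP immediately.

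Next, the left equivalence. Theorem~\ref{theorem:CP<=PCP} is stated for $A\in\Mreal{d}$, whereas here $A\in\M{d}$; so I should note that $(A,A)$ being PCP already forces $A$ to be real entrywise — indeed, by Lemma~\ref{lemma:tcp-properties}(2) (applied to the TCP triple $(A,A,A)$, or directly to the PCP pair via its defining decomposition) one gets $A\in\EWP_d$, in particular $A$ has real, nonnegative entries, so $A\in\Mreal{d}$ and Theorem~\ref{theorem:CP<=PCP} applies verbatim. Conversely, if $A$ is CP then it is real by definition of CP matrices, so there is no issue in the other direction. Chaining the two equivalences then gives the full statement.

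There is essentially no obstacle here — the theorem is a corollary, and the author flags it as such ("can be trivially extended"). The only point requiring a line of care is the domain mismatch $\M{d}$ versus $\Mreal{d}$ in invoking Theorem~\ref{theorem:CP<=PCP}, which is resolved by observing that PCP-ness of $(A,A)$ already implies $A\in\EWP_d\subseteq\Mreal{d}$. Everything else is a direct appeal to Proposition~\ref{prop:AB-ABB} with $B=A$.
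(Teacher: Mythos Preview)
Your proposal is correct and follows exactly the paper's approach: the theorem is stated as a direct corollary of Theorem~\ref{theorem:CP<=PCP} combined with Proposition~\ref{prop:AB-ABB} (specialized to $B=A$). Your extra remark reconciling the domains $\M{d}$ versus $\Mreal{d}$ via $A\in\EWP_d$ is a valid clarification that the paper leaves implicit.
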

\end{example}

\begin{example}[\emph{LDOI matrices with} $A=\mathbb{J}_d$]\ \\[0.1cm] \label{eg:states-A=J}
 In Example~\ref{eg:states-diag}, it was shown that $\mathbb{I}_d \otimes \mathbb{I}_d \in \LDOI_d$ with $A=\mathbb{J}_d$ and $B=C=\operatorname{diag}A=\mathbb{I}_d$. In this example, we investigate the general class of matrices in $\LDOI_d$ with $(A,B,C)\in \MLDOI{d}$ such that $A=\mathbb{J}_d$. For the moment, let us restrict ourselves to matrices in $\LDUI_d$ and $\CLDUI_d$. Then, a quick application of Lemma~\ref{lemma:LDOI-psd-ppt} is adequate to deduce that such matrices are positive semi-definite if and only if they are PPT if and only if the associated matrix $B$ is a \emph{correlation} matrix.
 \begin{definition}\label{def:corr}
     $B\in\M{d}$ is said to be a \emph{correlation} matrix if $B\in \PSD_d$ and $\operatorname{diag}B=\mathbb{I}_d$
\end{definition}
 
We collect all $d\times d$ correlation matrices in the set $\mathsf{Corr}_d \coloneqq \{ Z\in \PSD_d \, : \, \operatorname{diag}Z = \mathbb{I}_d \}$. We now show that for LDUI/CLDUI matrices with $A=\mathbb{J}_d$, the PPT propetry is equivalent to separability.
\begin{proposition} \label{prop:A=J-LDUI/CLDUI-sep}
For $(\mathbb{J}_d,B)\in \MLDUI{d}$, the following are equivalent for $i=1,2$:
\begin{equation}
    B\in \mathsf{Corr}_d \iff  X^{(i)}_{(\mathbb{J}_d,B)} \text{ is PPT} \iff X^{(i)}_{(\mathbb{J}_d,B)} \text{ is separable}.
\end{equation}
\end{proposition}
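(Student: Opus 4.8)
The plan is to prove the chain of equivalences by establishing the implications
$$B \in \mathsf{Corr}_d \implies (\mathbb{J}_d, B) \text{ is PCP} \implies X^{(i)}_{(\mathbb{J}_d,B)} \text{ is separable} \implies X^{(i)}_{(\mathbb{J}_d,B)} \text{ is PPT} \implies B \in \mathsf{Corr}_d.$$
The last two implications are essentially free: separability implies PPT in general, and the implication ``PPT $\Rightarrow B \in \mathsf{Corr}_d$'' is exactly the content of the remark preceding Definition~\ref{def:corr} (a direct application of Lemma~\ref{lemma:LDOI-psd-ppt}, using that $A = \mathbb{J}_d$ forces $A_{ij}A_{ji} = 1 \geq |B_{ij}|^2$ automatically once $B$ is a correlation matrix, and conversely positive semidefiniteness plus $\operatorname{diag}B = \operatorname{diag}A$ on the diagonal pins down $\operatorname{diag}B = \mathbb{I}_d$). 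The middle implication ``PCP $\Rightarrow$ separable'' is immediate from Theorem~\ref{theorem:LDOI-sep}. So the whole proof reduces to the first implication: if $B$ is a correlation matrix, then $(\mathbb{J}_d, B)$ is PCP.

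For that first implication I would argue as follows. Write a Cholesky-type (or Gram) decomposition of the correlation matrix: since $B \in \PSD_d$ there exist vectors $\ket{w_1}, \ldots, \ket{w_d} \in \mathbb{C}^{d'}$ with $B_{ij} = \langle w_i | w_j \rangle$, and $\operatorname{diag}B = \mathbb{I}_d$ means each $\ket{w_i}$ is a unit vector. Collect these as the rows of a matrix, i.e., define $W \in \M{d,d'}$ by $W_{ik} = (w_i)_k$, and take $V \in \M{d,d'}$ to be the all-ones matrix $\mathbb{J}_{d,d'}$ (or, if one wants square matrices, pad appropriately). Then I need to check the PCP equations of Definition~\ref{def:pcp}: first, $(V \odot \overbar{V})(W \odot \overbar{W})^* = A = \mathbb{J}_d$. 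Since $V \odot \overbar{V} = \mathbb{J}_{d,d'}$ and $(W \odot \overbar W)^*$ has $(k,j)$ entry $\overline{W_{jk}} W_{jk} = |W_{jk}|^2 = |(w_j)_k|^2$, the $(i,j)$ entry of the product is $\sum_k |(w_j)_k|^2 = \|w_j\|^2 = 1$, so indeed we get $\mathbb{J}_d$. Second, $(V \odot W)(V \odot W)^* = B$: since $V = \mathbb{J}_{d,d'}$, we have $V \odot W = W$, and $WW^* $ has $(i,j)$ entry $\sum_k W_{ik}\overline{W_{jk}} = \sum_k (w_i)_k \overline{(w_j)_k} = \langle w_j | w_i \rangle$. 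Here one must be careful with conjugation conventions: with the convention $B_{ij} = \langle w_i | w_j\rangle$ and the PCP equation demanding $(V\odot W)(V\odot W)^*$, the natural fix is to instead take $W_{ik} = \overline{(w_i)_k}$ (i.e. use $\overbar W$ for the Gram vectors), which leaves the first equation unchanged because $|W_{ik}|^2$ is conjugation-insensitive, and makes the second equation read $B_{ij} = \langle w_i|w_j\rangle$ correctly. Also observe that the diagonal constraint $\operatorname{diag}(\mathbb{J}_d) = \operatorname{diag}(B)$ needed for $(\mathbb{J}_d, B) \in \MLDUI{d}$ is exactly the correlation condition $\operatorname{diag}B = \mathbb{I}_d$, so the pair is legitimately in the domain.

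The main obstacle, such as it is, is purely bookkeeping: getting the complex-conjugation conventions in the PCP equations to line up with the Gram-decomposition conventions, and making sure the ``arbitrary $d'$'' freedom in Definition~\ref{def:pcp} is used consistently (one can always take $d' = d$ via the Cholesky factor, or even embed into square matrices). There is no deep inequality or positivity subtlety here — the point is simply that $A = \mathbb{J}_d$ washes out all phase information in the $V \odot \overbar V$ factor, so the only real constraint left is that $B$ be Gram, i.e. positive semidefinite, with unit diagonal. Once the first implication is in hand, I would assemble the four-step cycle above, noting that the argument is uniform in $i = 1, 2$ because of Theorem~\ref{theorem:LDOI-sep} (which treats $\LDUI_d$ and $\CLDUI_d$ on the same footing through the pair $(A,B)$) and because the PPT characterization in Lemma~\ref{lemma:LDOI-psd-ppt} applies to both via Remark~\ref{remark:(C)LDUIsubspaceLDOI}.
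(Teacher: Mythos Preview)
Your proposal is correct and follows essentially the same approach as the paper: the key step in both is to show that $B\in\mathsf{Corr}_d$ implies $(\mathbb{J}_d,B)$ is PCP by writing $B=WW^*$ and taking $V=\mathbb{J}_{d,r}$, then verifying the PCP equations directly. The paper sidesteps your conjugation bookkeeping by starting from $B=WW^*$ rather than a Gram-vector formulation, but the content is identical.
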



\begin{proof}
    The first equivalence is a straightforward consequence of Lemma~\ref{lemma:LDOI-psd-ppt} (combined with Remark~\ref{remark:(C)LDUIsubspaceLDOI}): since $A = \mathbb J_d$, the condition $1 \geq |B_{ij}|^2$ corresponds to the $i,j$-minor of $B$ being non-negative. In the final equivalence, the reverse implication is trivial to obtain. To establish the forward implication, we need to show that $B\in \mathsf{Corr}_d \implies (\mathbb{J}_d, B)$ is PCP (see Theorem~\ref{theorem:LDOI-sep}). Hence, assume that $B\in \mathsf{Corr}_d$ and consider a decomposition of the form $B = WW^*$, where $W\in \M{d,r}$ with $r\geq \operatorname{rank}(B)$. Now, by choosing $V=\mathbb{J}_{d,r}$ (the $d\times r$ matrix with all entries equal to one), we see that $V,W$ form a PCP decomposition of $(\mathbb{J}_d, B)$:
    \begin{align*}
        [(V\odot \overbar{V})(W\odot \overbar{W})^*]_{ij} &= \sum_{k=1}^r |V_{ik}|^2 |W_{jk}|^2 = \sum_{k=1}^r |W_{jk}|^2 = B_{jj} = 1, \\
        [(V\odot W)(V\odot W)^*]_{ij} &= [WW^*]_{ij} = B_{ij},
    \end{align*}
    and the proof is complete.
\end{proof}
 
 It seems wise to pause here for a moment to collect some useful facts about the set of correlation matrices $\mathsf{Corr}_d$ in $\M{d}$:
 \begin{itemize}
     \item By definition, $Z\in \mathsf{Corr}_d \iff Z\in \PSD_d$ and $\operatorname{diag}Z = \mathbb{I}_d$.
     \item $\mathsf{Corr}_d$ is a compact convex set, with the rank one matrices $\ketbra{z}{z}$ being the obvious extreme points, for $\ket{z}\in \mathbb{T}^d \coloneqq \{\ket{y} \in \mathbb{C}^d : |y_i| = 1 \,\, \forall i\}$.
     \item For $d\leq 3$, it can be shown that the rank one matrices are the only extreme correlation matrices, and  hence $\mathsf{Corr}_d = \operatorname{conv} \{ \ketbra{z}{z} : \ket{z}\in \mathbb{T}^d \}$. However, for $d\geq 4$, other higher rank extreme points exist, and the preceding conclusion becomes false, see \cite{Christensen1979corr, Loewy1980corr, Grone1990corr, Li1994corr}.
 \end{itemize}
\end{example}

Now, if we consider the general LDOI matrices with $A=\mathbb{J}_d$, we can quickly infer that as before, these matrices are PPT if and only if the associated matrices $B,C \in \mathsf{Corr}_d$. However, we do not know whether this also suffices to guarantee separability. We do have some partial results in this direction:
\begin{itemize}
    \item $B\in \mathsf{Corr}_d \iff X^{(3)}_{(\mathbb{J}_d, B,B)}$ is separable, see Proposition~\ref{prop:AB-ABB}.
    \item $B,C \in \operatorname{conv}\{\ketbra{z}{z} : \ket{z}\in \mathbb{T}^d \} \implies X^{(3)}_{(\mathbb{J}_d, B,C)}$ is separable. Let us quickly prove this. Assume $B=\ketbra{b}{b}$ and $C=\ketbra{c}{c}$ are arbitrary rank one correlations, where $\ket{b},\ket{c}\in \mathbb{T}^d$. Then, we can easily construct vectors $\ket{v},\ket{w}\in \mathbb{T}^d$ such that $\ket{v\odot w}=\ket{b}$ and $\ket{v\odot \overbar{w}}=\ket{c}$. Hence, we have $\mathbb{J}_d = \ketbra{v\odot \overbar{v}}{w\odot \overbar{w}}$, $B=\ketbra{v\odot w}{v\odot w}$ and $C=\ketbra{v\odot \overbar{w}}{v\odot \overbar{w}}$, implying that $(\mathbb{J}_d, B,C)$ is TCP. The desired result then follows from convexity.
\end{itemize}

\begin{example}[\emph{Canonical NPT states}]\ \\[0.1cm] \label{eg:states-CanNPT}
Extraction of maximally entangled states from several copies of a given bipartite quantum state through the use of local operations and classical communication (LOCC) forms a central task in numerous quantum communication and cryptographic protocols. In the asymptotic limit, if a state $\rho\in \M{d_1}\otimes \M{d_2}$ allows for a non-zero rate of extraction (defined as the ratio of the extracted number of maximally entangled states to the number of input states), it is said to be \emph{distillable}. Non-distillable entangled bipartite states are called \emph{bound entangled}.  It is well known that distillable states $\rho$ are negative under partial transposition (NPT), i.e., $(\operatorname{id}\otimes \top)\rho$ is not positive semi-definite. However, it is not known whether every NPT state is distillable or not. Put differently, the existence of an NPT bound entangled state is uncertain (see the excellent review articles \cite{Horodecki2001distillation, Lewenstein2000distillation} for a more precise formulation of these concepts). In \cite{Shor2000NPT}, the authors show that every NPT state $\rho\in \M{d_1}\otimes \M{d_2}$ is LOCC-transformable to an NPT state in the following family of states in $\M{d}\otimes \M{d}$ ($d\leq \operatorname{min}(d_1,d_2)$), where for $i\neq j$, $\ket{\psi^\pm_{ij}} = (\ket{ij} \pm \ket{ji})/\sqrt{2}$:
\begin{equation}
    \rho_{a,b,c} = a\sum_{i=1}^d \ketbra{ii}{ii} + b\sum_{1\leq i<j \leq d} \ketbra{\psi^-_{ij}}{\psi^-_{ij}} + c\sum_{1\leq i<j \leq d} \ketbra{\psi^+_{ij}}{\psi^+_{ij}}
\end{equation}
and $a,b,c$ are real parameters. It is not too hard to discern that the states $\rho_{a,b,c}\in \M{d}\otimes \M{d}$ are LDUI, with the associated matrices $A,B \in \Mreal{d}$ defined as follows:
\begin{equation}
    A_{ij} = \begin{cases}
    a,  & \text{if } i=j \\
    (c+b)/2,  & \text{otherwise},
\end{cases} \quad B_{ij} = \begin{cases}
    a,& \text{if } i=j \\
    (c-b)/2,  & \text{otherwise} .
\end{cases}
\end{equation}
Hence, to show that every NPT bipartite state is distillable, it suffices to prove distillability for an arbitrary NPT LDUI state in the above family. The parameter ranges within which the above states are PPT/NPT can easily be calculated using Lemma~\ref{lemma:LDOI-psd-ppt}, see also \cite[Figure 2]{Shor2000NPT}.
\end{example}

\begin{example}[\emph{PPT entangled edge states}]\ \\[0.1cm] \label{eg:states-edge}
A PPT entangled matrix $X\in \M{d_1}\otimes \M{d_2}$ is said to be an \emph{edge state} if there are no product vectors $\ket{xy}\in \operatorname{range}(X)$ such that $\ket{\overbar{x}y}\in \operatorname{range}(X^\Gamma)$. These states defy the range criterion of separability in a very extreme fashion \cite{choi1982positive,Lewenstein2000edge, Lewenstein2001edge}. In recent years, there has been a great deal of interest in characterizing edge states $X$ based on their types, which are nothing but pairs of numbers $(p,q)$ such that $\operatorname{rank}(X)=p$ and $\operatorname{rank}(X^\Gamma)=q$. In particular, for the low dimensional $3\otimes 3$ system, all possible types have been identified, and examples for each type have been constructed (see \cite{Kiem2011edge} and references therein for a review of all the examples). Noticeably, the authors in \cite{Kye2012edge} show that except for the $(4,4)$ type, all other types of $3\otimes 3$ edge states can be generated by matrices of the form:
\begin{equation} \label{eq:LDOI-edge-3}
    X =  \left(\mathcode`0=\cdot
\begin{array}{ *{3}{c} | *{3}{c} | *{3}{c} }
   e^{i\theta} + e^{-i\theta} & 0 & 0 & 0 & -e^{i\theta} & 0 & 0 & 0 & -e^{-i\theta} \\
   0 & 1/b & 0 & \langle\eta|\xi\rangle & 0 & 0 & 0 & 0 & 0 \\
   0 & 0 & b & 0 & 0 & 0 & \langle\zeta|\xi\rangle & 0 & 0 \\\hline
   0 & \langle\xi|\eta\rangle & 0 & b & 0 & 0 & 0 & 0 & 0 \\
   -e^{-i\theta} & 0 & 0 & 0 & e^{i\theta} + e^{-i\theta} & 0 & 0 & 0 & -e^{i\theta} \\
   0 & 0 & 0 & 0 & 0 & 1/b & 0 & \langle\zeta|\eta\rangle & 0 \\ \hline
   0 & 0 & \langle\xi|\zeta\rangle & 0 & 0 & 0 & 1/b & 0 & 0 \\
   0 & 0 & 0 & 0 & 0 & \langle\eta|\zeta\rangle & 0 & b & 0 \\
   -e^{i\theta} & 0 & 0 & 0 & -e^{-i\theta} & 0 & 0 & 0 & e^{i\theta} + e^{-i\theta} \\
  \end{array}
\right)
\end{equation}
where $b>0, \,\, -\pi/3 < \theta < \pi/3 \,\, (\theta\neq 0)$ and
$\ket{\eta}, \ket{\zeta}, \ket{\xi}\in \C{3}$ are vectors such that
$$\langle\eta|\eta\rangle = \langle\zeta|\zeta\rangle = \langle\xi|\xi\rangle = e^{i\theta}+e^{-i\theta},$$
$$ |\langle\eta|\xi\rangle| \leq 1, \quad |\langle\xi|\zeta\rangle| \leq 1, \quad |\langle\zeta|\eta\rangle| \leq 1. $$
The zero pattern of these states immediately reveal that they are all $3\otimes 3$ LDOI matrices. The entries of the associated $A,B$ and $C$ matrices can be read off from the appropriate diagonal/off-diagonal elements of $X$.
\end{example}

We summarize all the examples discussed in this section in Table \ref{tbl:LDOI}.

\begin{table}[htb]
\begin{tabular}{@{}|l|l|l|l|l|l|@{}}
\toprule
  \cellcolor[HTML]{DCD0F4}\textbf{Ex.} & \cellcolor[HTML]{DCD0F4}\textbf{Name}                                                          & \cellcolor[HTML]{DCD0F4}\textbf{Defining Characteristic}                                                                                                & \cellcolor[HTML]{DCD0F4}\begin{tabular}[c]{@{}l@{}}\textbf{Ambient} \\ \textbf{Space}\end{tabular} & \cellcolor[HTML]{DCD0F4}\begin{tabular}[c]{@{}l@{}}\textbf{Associated} \\ \textbf{$(A,B,C)$}\end{tabular}                                                                                                                         & \cellcolor[HTML]{DCD0F4}\textbf{References} \\ \midrule
\ref{eg:states-diag} & Diagonal                                                                              & Diagonal matrices in $\M{d^2}$                                                                                                                & \begin{tabular}[c]{@{}l@{}}$\LDUI_d\,\, \cap$ \\ $\CLDUI_d$\end{tabular}               & \begin{tabular}[c]{@{}l@{}}$A\in \M{d}$\\ $B=\operatorname{diag}(A)$\end{tabular}                                                                             & ---                                \\ \midrule
\cellcolor[HTML]{DCD0F4}\ref{eg:states-unitrank} & \cellcolor[HTML]{DCD0F4}Unit rank                                                                               &\cellcolor[HTML]{DCD0F4}\begin{tabular}[c]{@{}l@{}}Unit rank LDOI matrices  \\$X=\ketbra{Y}{Z}$; $\ket{Y},\ket{Z}\in \C{d}\otimes \C{d}$ \end{tabular}                                                             &\cellcolor[HTML]{DCD0F4}$\LDOI_d$ &\cellcolor[HTML]{DCD0F4}\begin{tabular}[c]{@{}l@{}} see \\Example~\ref{eg:states-unitrank} \end{tabular}                                                                        & \cellcolor[HTML]{DCD0F4}---                                \\ \midrule
\ref{eg:states-unitrank} &  \begin{tabular}[c]{@{}l@{}}Tensor \\ product \end{tabular}                                                                              & \begin{tabular}[c]{@{}l@{}}Tensor product LDOI matrices \\ $X=Y\otimes Z$; $Y,Z\in \M{d}$ \end{tabular}                                                                                                                & $\LDOI_d$ & \begin{tabular}[c]{@{}l@{}} see \\Example~\ref{eg:states-unitrank} \end{tabular}                                                                        & ---                                \\ \midrule
\cellcolor[HTML]{DCD0F4}\ref{eg:states-wer-iso} & \cellcolor[HTML]{DCD0F4}Werner                                                        & \cellcolor[HTML]{DCD0F4}\begin{tabular}[c]{@{}l@{}}$(U\otimes U)X(U^*\otimes U^*) = X$ for\\ all unitary matrices $U\in \M{d}$\end{tabular}   & \cellcolor[HTML]{DCD0F4}$\LDUI_d$                                                & \cellcolor[HTML]{DCD0F4}\begin{tabular}[c]{@{}l@{}}$A = b\,\mathbb{I}_d + a\mathbb{J}_d$ \\ $C = a\mathbb{I}_d + b\mathbb{J}_d$\\ $a,b \in \C{ }$\end{tabular} & \cite{Werner1989}   \cellcolor[HTML]{DCD0F4}           \\ \midrule
\ref{eg:states-wer-iso} & Isotropic                                                                             & \begin{tabular}[c]{@{}l@{}}$(U\otimes U^*)X(U^*\otimes U) = X$ for \\ all unitary matrices $U\in \M{d}$\end{tabular}                          & $\CLDUI_d$                                                                       & \begin{tabular}[c]{@{}l@{}}$A = b\,\mathbb{I}_d + a\mathbb{J}_d$\\ $B = a\mathbb{I}_d + b\mathbb{J}_d$\\ $a,b \in \C{ }$\end{tabular}                          &      \cite{Horodecki1999iso}                              \\ \midrule
\cellcolor[HTML]{DCD0F4}\ref{eg:states-dicke} & \cellcolor[HTML]{DCD0F4}\begin{tabular}[c]{@{}l@{}}Diagonal \\ symmetric\end{tabular} & \cellcolor[HTML]{DCD0F4}\begin{tabular}[c]{@{}l@{}}Diagonal in the Dicke basis of \\ the $\C{d}\otimes \C{d}$ symmetric subspace\end{tabular} & \cellcolor[HTML]{DCD0F4}$\LDUI_d$                                                & \cellcolor[HTML]{DCD0F4}\begin{tabular}[c]{@{}l@{}}$A\in \M{d}$\\ $C=A$\end{tabular}                                                                            & \begin{tabular}[c]{@{}l@{}} \cite{tura2018separability}, \\ \cite{yu2016separability} \end{tabular} \cellcolor[HTML]{DCD0F4}           \\ \midrule
\ref{eg:states-PTinv} & PT invariant                                                                          & \begin{tabular}[c]{@{}l@{}}LDOI matrices invariant under \\ partial transposition\end{tabular}                                                & $\LDOI_d$                                                                        & \begin{tabular}[c]{@{}l@{}l@{}}$A,B\in \M{d}$ \\ $C=B^\top$ or \\ $C=B$\end{tabular}                                                                                   & ---                                \\ \midrule
\cellcolor[HTML]{DCD0F4}\ref{eg:states-A=J} & \cellcolor[HTML]{DCD0F4}LDOI A=J                                                      & \cellcolor[HTML]{DCD0F4}LDOI matrices with $A=\mathbb{J}_d$                                                                                  & \cellcolor[HTML]{DCD0F4}$\LDOI_d$                                                & \cellcolor[HTML]{DCD0F4}\begin{tabular}[c]{@{}l@{}}$A=\mathbb{J}_d$\\ $B,C\in \M{d}$\end{tabular}                                                               & \cellcolor[HTML]{DCD0F4}---        \\ \midrule
\ref{eg:states-CanNPT} & \begin{tabular}[c]{@{}l@{}}Canonical \\ NPT\end{tabular}                                                                        & \begin{tabular}[c]{@{}l@{}}Every NPT state is LOCC-\\ transformable to a state in this \\ class\end{tabular}                                  & $\LDUI_d$                                                                        &  \begin{tabular}[c]{@{}l@{}} see \\Example~\ref{eg:states-CanNPT}\end{tabular}                                                                                                                         &   \cite{Shor2000NPT}                                \\ \midrule
\cellcolor[HTML]{DCD0F4}\ref{eg:states-edge} & \cellcolor[HTML]{DCD0F4}$3\otimes 3$ edge                                             & \cellcolor[HTML]{DCD0F4}\begin{tabular}[c]{@{}l@{}}$3\otimes 3$ edge states of all types \\ except $(4,4)$\end{tabular}                       & \cellcolor[HTML]{DCD0F4}$\LDOI_d$                                                & \cellcolor[HTML]{DCD0F4}\begin{tabular}[c]{@{}l@{}} see \\Example~\ref{eg:states-edge} \end{tabular}                                                                                                & \cite{Kye2012edge} \cellcolor[HTML]{DCD0F4}           \\ \bottomrule
\end{tabular}
\medskip
\caption{Important classes of bipartite matrices in $\LDOI_d$. For more details, see the appropriate examples.}
\label{tbl:LDOI}
\end{table}

\section{Linear structure of LDOI matrices}\label{sec:linear-structure}

We discuss in this section the linear structure of the sets of bipartite matrices with the invariance properties discussed in Section \ref{sec:LDUI-CLDUI-LDOI}, focusing on different notions of symmetry. We shall make no reference to any positivity or separability notions, this being the topic of the next section. 

It was shown in \cite[Sections 6-7]{nechita2021graphical} that the sets $\CLDUI_d$, $\LDUI_d$, $\LDOI_d$ are $\mathbb C$-vector spaces with dimensions
\begin{align*}
    \dim_{\mathbb C}  \CLDUI_d = \dim_{\mathbb C}  \LDUI_d &= 2d^2-d,\\
    \dim_{\mathbb C}  \LDOI_d &= 3d^2-2d.
\end{align*}
This fact is a consequence of a simple dimension counting for the triples $(A,B,C)$ using the different diagonal restrictions. For example, in the $\CLDUI_d$ case, the elements are parametrized by pairs $(A,B)$ with $A,B$ being $d \times d$ complex matrices (having, in total, $2d^2$ complex parameters), with the restriction that $\operatorname{diag}(A) = \operatorname{diag}(B)$ (fixing $d$ complex parameters), resulting in a total of $2d^2-d$ complex parameters, see Eq.~\eqref{eq:MLDUI} and Proposition \ref{prop:LDOI-ABC}. It is also trivial to see from their definition that $\CLDUI_d$ and $\LDUI_d$ are vector subspaces of $\LDOI_d$ (see Remark~\ref{remark:(C)LDUIsubspaceLDOI}). Moreover, $\CLDUI_d \cap \LDUI_d = \operatorname{diag}(\mathcal M_{d^2}(\mathbb C))$ is precisely the subspace of diagonal matrices in $\M{d}\otimes \M{d}$, see also Example \ref{eg:states-diag}. We represent the general position of these vector spaces in Figure \ref{fig:linear-sets}.
\begin{figure}[H]
    \centering
    \includegraphics{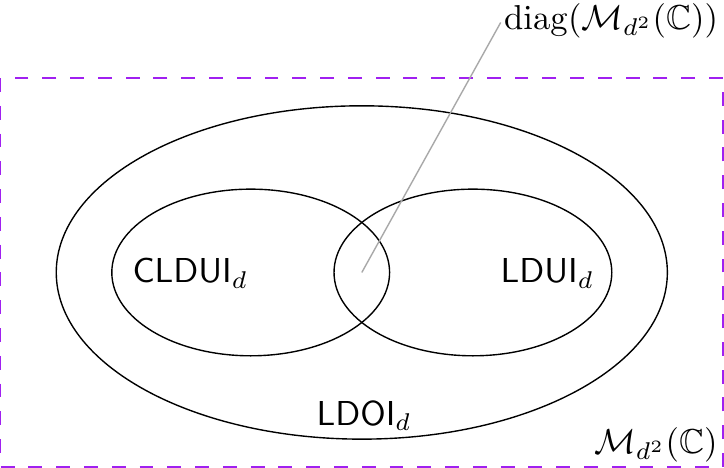}
    \caption{The relative position of the $\mathbb C$-vector spaces $\CLDUI_d$, $\LDUI_d$, $\LDOI_d$.}
    \label{fig:linear-sets}
\end{figure}   

As elements of $\mathcal M_{d^2}(\mathbb C)$, the LDUI/CLDUI/LDOI matrices have interesting block structures, which can be derived by using the explicit form of the isomorphisms stated after Proposition~\ref{prop:LDOI-ABC}, see also \cite[Eq.~(41)]{nechita2021graphical}.
\begin{proposition}\label{prop:block-structure}
For all triples $(A,B,C) \in \MLDOI{d}$, we have the following decompositions:
\begin{align*}
    X^{(1)}_{(A,C)} &= \left(\bigoplus_i [A_{ii}] \right) \oplus \left( \bigoplus_{i < j} \begin{bmatrix} A_{ij} & C_{ij} \\ C_{ji} & A_{ji} \end{bmatrix} \right),\\
    X^{(2)}_{(A,B)} &= B \oplus \bigoplus_{i \neq j}[A_{ij}],\\
    X^{(3)}_{(A,B,C)} &= B \oplus \left( \bigoplus_{i < j} \begin{bmatrix} A_{ij} & C_{ij} \\ C_{ji} & A_{ji} \end{bmatrix} \right).
\end{align*}
\end{proposition}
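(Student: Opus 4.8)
The plan is to read off the claimed block decompositions directly from the coordinate-wise formulas in Eqs.~\eqref{eq:X1-coord}, \eqref{eq:X2-coord}, \eqref{eq:X3-coord}, by exhibiting an orthogonal direct-sum decomposition of $\C{d}\otimes \C{d}$ which is preserved by each of the matrices $X^{(1)}_{(A,C)}$, $X^{(2)}_{(A,B)}$, $X^{(3)}_{(A,B,C)}$, and then identifying the action on each summand. Concretely, I would partition the standard basis $\{\ket{i_1 i_2} : i_1,i_2 \in [d]\}$ according to the unordered pair $\{i_1,i_2\}$: the ``diagonal'' vectors $\ket{ii}$ for $i \in [d]$, and for each pair $i<j$ the two-dimensional space $\mathcal H_{ij} := \mathbb C\ket{ij} \oplus \mathbb C\ket{ji}$. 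The key observation, immediate from the three coordinate formulas, is that a nonzero matrix element $X(i_1 i_2, j_1 j_2)$ forces $\{i_1,i_2\} = \{j_1,j_2\}$ in all three cases (the conditions ``$i_1=j_1,i_2=j_2$'', ``$i_1=i_2,j_1=j_2$'', ``$i_1=j_2,i_2=j_1$'' each imply this), so each of these matrices is block-diagonal with respect to the decomposition $\C{d^2} = \left(\bigoplus_i \mathbb C\ket{ii}\right) \oplus \left(\bigoplus_{i<j} \mathcal H_{ij}\right)$.

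Next I would compute each block. For $X^{(1)}_{(A,C)}$: on $\mathbb C\ket{ii}$ only the first case of Eq.~\eqref{eq:X1-coord} contributes (with $i_1=i_2=j_1=j_2=i$), giving the scalar $A_{ii}$; on $\mathcal H_{ij}$ with $i<j$, the basis-ordered matrix in $\{\ket{ij},\ket{ji}\}$ has diagonal entries $X(ij,ij)=A_{ij}$ and $X(ji,ji)=A_{ji}$ (first case) and off-diagonal entries $X(ij,ji)=C_{ij}$, $X(ji,ij)=C_{ji}$ (second case, since $i\neq j$), yielding the $2\times 2$ block $\begin{bmatrix} A_{ij} & C_{ij} \\ C_{ji} & A_{ji}\end{bmatrix}$. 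For $X^{(3)}_{(A,B,C)}$, Eq.~\eqref{eq:X3-coord} gives the same $2\times 2$ blocks on each $\mathcal H_{ij}$, $i<j$; on the span of the diagonal vectors $\{\ket{ii}\}_{i\in[d]}$, the relevant cases are $X(ii,ii)=A_{ii}=B_{ii}$ (first case, using $\operatorname{diag}A=\operatorname{diag}B$) and $X(ii,jj)=B_{ij}$ for $i\neq j$ (second case), so this block is exactly the matrix $B$. Concatenating gives $X^{(3)}_{(A,B,C)} = B \oplus \bigoplus_{i<j}\begin{bmatrix} A_{ij} & C_{ij}\\ C_{ji} & A_{ji}\end{bmatrix}$. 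Finally, $X^{(2)}_{(A,B)}$ is handled identically using Eq.~\eqref{eq:X2-coord}: the diagonal block is again $B$, while on each two-dimensional $\mathcal H_{ij}$ with $i<j$ the only surviving entries are the diagonal ones $X(ij,ij)=A_{ij}$ and $X(ji,ji)=A_{ji}$ (the second case requires $i_1=i_2$, which fails on $\mathcal H_{ij}$), so these blocks are the $1\times 1$ scalars $A_{ij}$, $A_{ji}$; collecting them over all ordered pairs $i\neq j$ gives $\bigoplus_{i\neq j}[A_{ij}]$, hence $X^{(2)}_{(A,B)} = B \oplus \bigoplus_{i\neq j}[A_{ij}]$.

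Since the computation is entirely mechanical, I do not expect a genuine obstacle; the only points demanding a little care are bookkeeping ones. First, one must consistently order each $\mathcal H_{ij}$ by the pair $(\ket{ij},\ket{ji})$ with $i<j$, so that the off-diagonal entries of the $2\times 2$ block land in the stated positions (swapping the order transposes the block, which matters because $C_{ij}\neq C_{ji}$ in general). Second, for the diagonal block equalling $B$ one genuinely needs the defining constraint $\operatorname{diag}(A)=\operatorname{diag}(B)=\operatorname{diag}(C)$ from \eqref{eq:MLDOI} to reconcile the ``$A_{ii}$'' coming from the first case with the ``$B_{ii}$'' entry of $B$; this is where the compatibility hypothesis on the triple is used. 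Alternatively, and perhaps more cleanly, one can quote \cite[Eq.~(41)]{nechita2021graphical} and the isomorphisms displayed after Proposition~\ref{prop:LDOI-ABC}, rephrasing the argument as: the maps $X^{(1)},X^{(2)},X^{(3)}$ are defined by summing rank-one terms $\ketbra{ij}{ij}$, $\ketbra{ii}{jj}$, $\ketbra{ij}{ji}$, each of which is supported inside one of the subspaces $\mathbb C\ket{ii}$ or $\mathcal H_{ij}$, and then read off the blocks as above.
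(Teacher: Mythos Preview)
Your approach is exactly what the paper intends (it does not give a proof, merely pointing to the coordinate formulas in Eqs.~\eqref{eq:X1-coord}--\eqref{eq:X3-coord} and to \cite[Eq.~(41)]{nechita2021graphical}), and your block computations are correct.

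There is, however, a slip in your ``key observation''. You assert that a nonzero entry $X(i_1i_2,j_1j_2)$ forces $\{i_1,i_2\}=\{j_1,j_2\}$ in all three cases, claiming that the condition ``$i_1=i_2$ and $j_1=j_2$'' implies this. It does not: for $X^{(2)}$ and $X^{(3)}$ one has $X(ii,jj)=B_{ij}$ with $i\neq j$, yet $\{i,i\}\neq\{j,j\}$. Consequently, $X^{(2)}_{(A,B)}$ and $X^{(3)}_{(A,B,C)}$ are \emph{not} block-diagonal for the fine decomposition $\bigl(\bigoplus_i \mathbb C\ket{ii}\bigr)\oplus\bigl(\bigoplus_{i<j}\mathcal H_{ij}\bigr)$ that you wrote down; the correct invariant decomposition for these two matrices keeps the diagonal vectors together as a single $d$-dimensional block $\operatorname{span}\{\ket{ii}:i\in[d]\}$. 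You do in fact compute the blocks this way (correctly obtaining $B$ on that subspace), so the error is only in the stated justification, not in the outcome. For $X^{(1)}_{(A,C)}$ your finer decomposition is genuinely invariant, since the second case there is ``$i_1=j_2,\,i_2=j_1$'', which does force equality of the unordered pairs. Fixing this one sentence, the proof is complete.
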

\begin{corollary}\label{cor:rank-X}
The rank of the invariant matrices $X^{(1,2,3)}$ above are as follows:
\begin{align*}
    \operatorname{rank} X^{(1)}_{(A,C)} &= |\{i \in [d] \, : \, A_{ii} \neq 0\}| + \sum_{i<j} \operatorname{rank} \begin{bmatrix} A_{ij} & C_{ij}, \\ C_{ji} & A_{ji} \end{bmatrix} \\
    \operatorname{rank} X^{(2)}_{(A,B)} &= \operatorname{rank} B  + |\{ (i,j) \in [d]^2 \, : \, i \neq j \text{ and } A_{ij} \neq 0\}|,\\
    \operatorname{rank} X^{(3)}_{(A,B,C)} &= \operatorname{rank} B + \sum_{i<j} \operatorname{rank} \begin{bmatrix} A_{ij} & C_{ij} \\ C_{ji} & A_{ji} \end{bmatrix}.
\end{align*}
\end{corollary}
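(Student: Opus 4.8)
The plan is to derive the rank formulas directly from the block decompositions in Proposition~\ref{prop:block-structure}. The key observation is that if a matrix $M$ is presented as an orthogonal direct sum $M = \bigoplus_k M_k$ of square blocks acting on mutually orthogonal subspaces, then $\operatorname{rank} M = \sum_k \operatorname{rank} M_k$. This is elementary: the range of $M$ is the orthogonal direct sum of the ranges of the $M_k$, so the dimensions add. Hence each of the three rank identities follows by applying this to the corresponding decomposition from Proposition~\ref{prop:block-structure} and then simplifying the rank of each individual block.

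First I would treat $X^{(1)}_{(A,C)} = \left(\bigoplus_i [A_{ii}]\right) \oplus \left(\bigoplus_{i<j} \left[\begin{smallmatrix} A_{ij} & C_{ij} \\ C_{ji} & A_{ji}\end{smallmatrix}\right]\right)$. The $1\times 1$ block $[A_{ii}]$ has rank $1$ if $A_{ii}\neq 0$ and rank $0$ otherwise, so the contribution of the diagonal blocks is exactly $|\{i\in[d] : A_{ii}\neq 0\}|$. Adding the ranks of the $2\times 2$ off-diagonal blocks gives the stated formula. For $X^{(2)}_{(A,B)} = B \oplus \bigoplus_{i\neq j}[A_{ij}]$, the same principle gives $\operatorname{rank} B$ from the first summand, while each $1\times 1$ block $[A_{ij}]$ (for $i\neq j$) contributes $1$ precisely when $A_{ij}\neq 0$, yielding $|\{(i,j)\in[d]^2 : i\neq j,\ A_{ij}\neq 0\}|$. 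For $X^{(3)}_{(A,B,C)} = B \oplus \left(\bigoplus_{i<j}\left[\begin{smallmatrix} A_{ij} & C_{ij} \\ C_{ji} & A_{ji}\end{smallmatrix}\right]\right)$ one simply combines $\operatorname{rank} B$ with the sum of the $2\times 2$ block ranks.

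There is essentially no obstacle here; the corollary is a direct and routine consequence of Proposition~\ref{prop:block-structure}. The only point requiring a word of care is making sure the direct sums in Proposition~\ref{prop:block-structure} are genuinely orthogonal direct sums as subspaces of $\C{d}\otimes\C{d}$ — that is, that the index sets labelling the blocks (the singletons $\{i\}$, the pairs $\{ii\}$ for $B$, and the pairs $\{ij, ji\}$ for $i<j$) partition the standard basis of $\C{d}\otimes\C{d}$. This is exactly what Eqs.~\eqref{eq:X1-coord}, \eqref{eq:X2-coord}, \eqref{eq:X3-coord} (equivalently, the visible block pattern in Eq.~\eqref{eq:LDOI-block-3}) encode, so it can be taken for granted. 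Thus the proof amounts to invoking rank-additivity over orthogonal direct summands and reading off the rank of each scalar or $2\times 2$ block.
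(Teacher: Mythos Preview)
Your proposal is correct and matches the paper's approach exactly: the paper treats this corollary as immediate from the block decompositions of Proposition~\ref{prop:block-structure} (indeed it does not even write out a proof), and your argument simply spells out the rank-additivity-over-direct-sums reasoning that makes it immediate.
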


We now consider various symmetries of the vector spaces $\LDUI_d, \CLDUI_d$ and $\LDOI_d$, obtained by permuting the tensor legs (i.e.~the wires in the graphical representation corresponding to the different tensor factors) of the relevant bipartite matrices. 

\begin{proposition}\label{prop:leg-permutations}
The vector subspaces $\CLDUI_d, \LDUI_d, \LDOI_d \subseteq \mathcal M_{d^2}(\mathbb C)$ are invariant under pairwise leg permutations: 
\begin{alignat*}{2}
F X^{(3)}_{(A,B,C)} F &= X^{(3)}_{(A^\top,B,C^\top)}, \qquad\qquad \left(X^{(3)}_{(A,B,C)}\right)^\top &&= X^{(3)}_{(A,B^\top,C^\top)}, \\
\left(X^{(3)}_{(A,B,C)}\right)^\times &= X^{(3)}_{(A^\top,B^\top,C)}, \qquad\qquad
\left(X^{(3)}_{(A,B,C)}\right)^R &&= X^{(3)}_{(B,A,C)},
\end{alignat*}
where $F$ is the flip operator ($F \ket{ab} = \ket{ba}$), $\times$ permutes the legs of a 4-tensor diagonally and $R$ is the realignment operation. Moreover, the space $\LDOI_d$ has the following additional symmetries:
\begin{alignat*}{2}
\left(X^{(3)}_{(A,B,C)}\right)^\Gamma &= X^{(3)}_{(A,C,B)}, \qquad\qquad
FX^{(3)}_{(A,B,C)} &&= X^{(3)}_{(C^\top,B,A^\top)}, \\
\left(X^{(3)}_{(A,B,C)}\right)^\text{\reflectbox{$\Gamma$}}&= X^{(3)}_{(A,C^\top,B^\top)}, \quad\qquad X^{(3)}_{(A,B,C)}F &&= X^{(3)}_{(C,B,A)}.
\end{alignat*} 
\end{proposition}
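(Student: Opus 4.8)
The plan is to verify each of the eight identities by a direct coordinate computation, using the explicit coordinate formula \eqref{eq:X3-coord} for $X^{(3)}_{(A,B,C)}$ together with the coordinate description of each leg-permuting operation. The key observation is that $X^{(3)}_{(A,B,C)}$ is supported only on the three ``patterns'' of index quadruples $(i_1i_2,j_1j_2)$: the diagonal pattern $i_1=j_1,\ i_2=j_2$ (carrying $A$), the $B$-pattern $i_1=i_2,\ j_1=j_2$, and the $C$-pattern $i_1=j_2,\ i_2=j_1$ (the off-diagonal parts of these overlap only on the zero set). Each of the operations $F(\cdot)$, $(\cdot)^\top$, $(\cdot)^\times$, $(\cdot)^R$, $(\cdot)^\Gamma$, $(\cdot)^{\text{\reflectbox{$\Gamma$}}}$, left/right multiplication by $F$, simply relabels the four indices $(i_1,i_2,j_1,j_2)$ by a permutation and possibly conjugates; since the set of three patterns is permuted among itself by any such index relabeling (possibly up to a transpose on $A$, $B$ or $C$), one reads off directly which new triple $(A',B',C')$ results. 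So the proof is essentially a case-by-case bookkeeping exercise, and it also establishes invariance of the subspace $\LDOI_d$ as a byproduct, since in every case the image is again of the form $X^{(3)}_{(A',B',C')}$.

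Concretely, I would first record the index action of each operation. For instance, $F X F$ sends the matrix entry at $(i_1i_2,j_1j_2)$ to the entry of $X$ at $(i_2i_1,j_2j_1)$; transposition sends it to the entry at $(j_1j_2,i_1i_2)$; the realignment $R$ sends $\ketbra{i_1}{j_1}\otimes\ketbra{i_2}{j_2}$ to $\ketbra{i_1}{i_2}\otimes\ketbra{j_1}{j_2}$, i.e.\ it swaps the roles of $j_1$ and $i_2$; partial transpose $\Gamma$ swaps $i_2\leftrightarrow j_2$; partial transpose $\text{\reflectbox{$\Gamma$}}$ swaps $i_1\leftrightarrow j_1$; the operation $\times$ (diagonal leg swap on the 4-tensor) swaps the pair $(i_1,i_2)\leftrightarrow(j_1,j_2)$ appropriately; and single-sided multiplication $FX$ or $XF$ swaps the two bra (resp.\ ket) indices only. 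Then, for each operation, I plug into \eqref{eq:X3-coord}: e.g.\ for $(\cdot)^\Gamma$ the diagonal pattern $i_1=j_1, i_2=j_2$ is preserved (giving $A$ again), the $B$-pattern $i_1=i_2, j_1=j_2$ becomes the condition $i_1=j_2, j_1=i_2$, i.e.\ the $C$-pattern, and the old $C$-pattern becomes the new $B$-pattern; hence $(X^{(3)}_{(A,B,C)})^\Gamma = X^{(3)}_{(A,C,B)}$. Keeping careful track of whether the surviving $A$, $B$, or $C$ entry appears as $X_{i_1i_2}$ versus $X_{j_1 i_1}$ etc.\ tells me whether a transpose is picked up; the off-diagonal restriction built into the $B$ and $C$ sums guarantees the patterns only ever collide on the already-zero locus, so no ambiguity arises.

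Since all operations are $\mathbb{C}$-linear, it suffices to check the identities on the spanning matrices $\ketbra{ij}{ij}$, $\ketbra{ii}{jj}$ ($i\neq j$), $\ketbra{ij}{ji}$ ($i\neq j$) of $\LDOI_d$, which makes the computation even lighter. For the three operations in the first display that apply to all of $\CLDUI_d,\LDUI_d,\LDOI_d$, I would additionally note the compatibility with Remark~\ref{remark:(C)LDUIsubspaceLDOI}: restricting to $B=\operatorname{diag}A$ (LDUI) or $C=\operatorname{diag}A$ (CLDUI) one checks the formulas specialize correctly (e.g.\ $X^{(1)}_{(A,C)}$ has $B=\operatorname{diag}A$, and $A\mapsto A^\top$ fixes $\operatorname{diag}A$, so $FX^{(1)}_{(A,C)}F = X^{(1)}_{(A^\top,C^\top)}$ stays LDUI), which is why those operations preserve the smaller subspaces too.

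I do not expect a genuine obstacle here; the content is entirely in the bookkeeping, and the main ``hard part'' is purely organizational: fixing once and for all an unambiguous convention for how each named operation acts on the four tensor indices (especially $\times$ and $R$, and the left/right multiplications by $F$, which are easy to mix up), and then being careful about the transpose decorations. A clean way to present this without grinding through all eight cases is to display one representative computation in detail (say $\Gamma$, or $R$ since it underlies the others), draw the corresponding string diagram, and state that the remaining cases follow by the same index-chasing argument or by composing the ones already proven (e.g.\ $\text{\reflectbox{$\Gamma$}} = \top \circ \Gamma \circ \top$ and $F(\cdot) = F(\cdot)F \cdot F$-type relations), which I would use to reduce the eight identities to two or three independent verifications.
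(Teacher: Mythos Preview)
Your proposal is correct and matches the paper's approach: the paper likewise states that the identities are ``easily verified using algebra,'' works out one representative case (the $F(\cdot)F$ conjugation) in the graphical tensor-network language, leaves the remaining cases to the reader, and invokes Remark~\ref{remark:(C)LDUIsubspaceLDOI} to handle the $\CLDUI_d$ and $\LDUI_d$ subspaces. Your coordinate bookkeeping is exactly the algebraic verification the paper alludes to, and your suggestion to display one case in detail and reduce the rest via compositions mirrors the paper's presentation.
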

\begin{proof}
    All the above relations are easily verified using algebra, but expressing them in the graphical language of tensor networks is more insightful. We leave the detailed proofs to the reader, and provide a graphical proof for the flip $F X^{(3)}_{(A,B,C)} F$ operation in Figure \ref{fig:F-X-F}. Finally, Remark~\ref{remark:(C)LDUIsubspaceLDOI} can be used to derive analogous results for the $\CLDUI_d$ and $\LDUI_d$ subspaces.
\begin{figure}[H]
    \centering
    \includegraphics{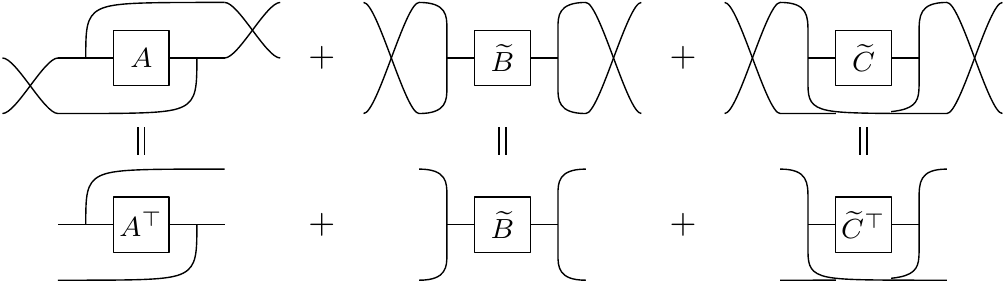}
    \caption{The conjugation of the LDOI matrix $X^{(3)}_{(A,B,C)}$ by the flip operator. The two-sided action of the flip operator on each of the matrices $A, \widetilde{B}, \widetilde{C}$ amounts to transposing the first and the last. \qedhere}
    \label{fig:F-X-F}
\end{figure}    
\end{proof}

\begin{remark}
There are more tensor-leg permutations that leave invariant the $\LDOI_d$ matrices, by permuting the $A,B,C$ matrices and adding transposes to them. The $4!=24$ possible leg permutations correspond to $3! \times 2^3 / 2$ such operations on the ordered triples $(A,B,C)$, where the $3!$ factor corresponds to permuting the matrices, $2^3$ corresponds to the choice of adding or not a transpose on each letter, and the division by 2 corresponds to the requirement that the number of transposes should be even. We have presented in Proposition \ref{prop:leg-permutations} the most relevant ones for the quantum information theory, see Proposition \ref{prop:leg-permutations-map} for the corresponding linear map symmetries. 
\end{remark}

The considerations above allow us to characterize \emph{self-adjoint} LDOI matrices as follows. 

\begin{proposition}
$X^{(3)}_{(A,B,C)} \in \LDOI_d$ is self-adjoint $\iff A\in \Mreal{d}$ and $B,C\in \Msa{d}$. In particular, the real vector spaces of self-adjoint invariant matrices have the following dimensions: 
    \begin{align*}
    \dim_{\mathbb R}  \CLDUI_d^{sa} = \dim_{\mathbb R}  \LDUI_d^{sa} &= 2d^2-d,\\
    \dim_{\mathbb R}  \LDOI_d^{sa} &= 3d^2-2d.
\end{align*}
\end{proposition}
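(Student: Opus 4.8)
The plan is to reduce the statement to a pointwise condition on the entries of the triple $(A,B,C)$ by using the block decomposition already established in Proposition~\ref{prop:block-structure}. Recall that
\[
X^{(3)}_{(A,B,C)} = B \oplus \left( \bigoplus_{i<j} \begin{bmatrix} A_{ij} & C_{ij} \\ C_{ji} & A_{ji} \end{bmatrix} \right),
\]
together with the scalars $A_{ii} = B_{ii} = C_{ii}$ sitting inside the block $B$. Since self-adjointness of a block-diagonal matrix is equivalent to self-adjointness of each block, $X^{(3)}_{(A,B,C)}$ is self-adjoint if and only if $B = B^*$ and, for every $i<j$, the $2\times 2$ block $\begin{bmatrix} A_{ij} & C_{ij} \\ C_{ji} & A_{ji} \end{bmatrix}$ is self-adjoint. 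The latter condition unpacks to $A_{ij} = \overbar{A_{ij}}$ and $A_{ji} = \overbar{A_{ji}}$ (so $A_{ij}, A_{ji} \in \mathbb{R}$) together with $C_{ji} = \overbar{C_{ij}}$, which is precisely $\widetilde{C} \in \Msa{d}$; combined with $B \in \Msa{d}$ (which forces the diagonal entries of $B$, and hence of $A$ and $C$, to be real) this gives exactly $A \in \Mreal{d}$, $B \in \Msa{d}$, $C \in \Msa{d}$. Alternatively, and perhaps more cleanly for the write-up, one can invoke Proposition~\ref{prop:leg-permutations}: since $\left(X^{(3)}_{(A,B,C)}\right)^* = \overbar{\left(X^{(3)}_{(A,B,C)}\right)^\top} = \overbar{X^{(3)}_{(A,B^\top,C^\top)}} = X^{(3)}_{(\overbar{A},\overbar{B^\top},\overbar{C^\top})} = X^{(3)}_{(\overbar{A}, B^*, C^*)}$, and the map $(A,B,C)\mapsto X^{(3)}_{(A,B,C)}$ is a bijection (Proposition~\ref{prop:LDOI-ABC}), self-adjointness is equivalent to $\overbar{A}=A$, $B^*=B$, $C^*=C$.

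For the dimension count, I would simply count real parameters in the constrained triple. A self-adjoint $B \in \Msa{d}$ contributes $d^2$ real parameters ($d$ real diagonal entries and $\binom{d}{2}$ complex off-diagonal entries); similarly $C \in \Msa{d}$ would contribute $d^2$ but its diagonal is pinned to that of $B$, costing $d$ real parameters, leaving $d^2 - d$; and $A \in \Mreal{d}$ contributes $d^2$ real parameters but its diagonal is also pinned to that of $B$, leaving $d^2 - d$. Summing, $d^2 + (d^2-d) + (d^2-d) = 3d^2 - 2d$, matching the complex dimension of $\LDOI_d$ as it must (self-adjointness is a real-structure condition that halves neither the dimension nor changes it, since $\LDOI_d^{sa}$ is a real form of the complex space $\LDOI_d$). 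The $\CLDUI_d^{sa}$ and $\LDUI_d^{sa}$ cases follow identically, either by the same direct count applied to pairs $(A,B)$ with $B^* = B$ and $A$ real (giving $d^2 + (d^2 - d) = 2d^2 - d$), or by restricting the $\LDOI$ result via Remark~\ref{remark:(C)LDUIsubspaceLDOI}: $X^{(1)}_{(A,B)} = X^{(3)}_{(A,\operatorname{diag}A,B)}$ is self-adjoint iff $A$ real and $B$ self-adjoint, and likewise for $X^{(2)}$.

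There is essentially no hard step here; the only thing to be careful about is bookkeeping the shared-diagonal constraints correctly so that entries are not counted twice, and making sure the ``$\Msa{d}$ forces real diagonal'' observation is used to conclude $A \in \Mreal{d}$ rather than merely $\widetilde{A} \in \Mreal{d}$. I would present the parameter count as a short enumerated list of the contributions of $A$, $B$, $C$ and then add them.

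\begin{proof}
By Proposition~\ref{prop:leg-permutations} (or by direct computation) we have $\left(X^{(3)}_{(A,B,C)}\right)^* = X^{(3)}_{(\overbar{A},B^*,C^*)}$. Since the parametrization $(A,B,C) \mapsto X^{(3)}_{(A,B,C)}$ is a vector space isomorphism by Proposition~\ref{prop:LDOI-ABC}, the matrix $X^{(3)}_{(A,B,C)}$ is self-adjoint if and only if $(\overbar{A},B^*,C^*) = (A,B,C)$, i.e.\ if and only if $A \in \Mreal{d}$, $B \in \Msa{d}$ and $C \in \Msa{d}$. (Note $B \in \Msa{d}$ already forces $\operatorname{diag}A = \operatorname{diag}B = \operatorname{diag}C$ to be real, so the constraint $A \in \Mreal{d}$ is equivalent to $\widetilde{A} \in \Mreal{d}$.) The same argument, using $\left(X^{(1)}_{(A,B)}\right)^* = X^{(1)}_{(\overbar{A},B^*)}$ and $\left(X^{(2)}_{(A,B)}\right)^* = X^{(2)}_{(\overbar{A},B^*)}$, shows that $X^{(1)}_{(A,B)}$ (resp.\ $X^{(2)}_{(A,B)}$) is self-adjoint if and only if $A \in \Mreal{d}$ and $B \in \Msa{d}$.

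For the dimensions, we count real parameters. For $\LDOI_d^{sa}$: a self-adjoint $B$ has $d$ real diagonal and $\binom{d}{2}$ complex off-diagonal entries, contributing $d + 2\binom{d}{2} = d^2$ real parameters; a real $A$ with diagonal fixed equal to that of $B$ contributes its $d^2 - d$ off-diagonal real entries; a self-adjoint $C$ with diagonal fixed equal to that of $B$ contributes $2\binom{d}{2} = d^2 - d$ real parameters. In total $d^2 + (d^2 - d) + (d^2 - d) = 3d^2 - 2d$. For $\LDUI_d^{sa}$ and $\CLDUI_d^{sa}$, the same count with only the pair $(A,B)$ (i.e.\ dropping the $C$-contribution) gives $d^2 + (d^2 - d) = 2d^2 - d$.
\end{proof}
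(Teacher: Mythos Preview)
Your proof is correct and follows essentially the same route as the paper: compute $\left(X^{(3)}_{(A,B,C)}\right)^* = X^{(3)}_{(\overbar{A},B^*,C^*)}$ via Proposition~\ref{prop:leg-permutations}, then invoke the bijection of Proposition~\ref{prop:LDOI-ABC}. Your explicit parameter count for the dimensions is a welcome addition, as the paper's proof leaves that step implicit.
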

\begin{proof}
    The main assertion follows from Proposition~\ref{prop:leg-permutations}:
    $$\left(X^{(3)}_{(A,B,C)}\right)^* = \overline{\left(X^{(3)}_{(A,B,C)}\right)^\top} = \overline{X^{(3)}_{(A,B^\top,C^\top)}} = X^{(3)}_{(\bar A, B^*,C^*)}. $$
\end{proof}

\begin{proposition}
A matrix $X=X^{(3)}_{(A,B,C)} \in \LDOI_d$ is \emph{symmetric} (i.e.~$X=FXF$) if and only if
$$A=A^\top \quad \text{ and } \quad C = C^\top.$$
Moreover, $X$ is \emph{Bose-symmetric} (i.e.~$X = P_s X P_s$, where $P_s$ is the orthogonal projection on the symmetric subspace of $\mathbb C^d \otimes \mathbb C^d$) if and only if
$$A=C=A^\top=C^\top.$$
\end{proposition}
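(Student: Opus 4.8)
The plan is to derive both characterizations from the leg-permutation identities of Proposition~\ref{prop:leg-permutations} together with the isomorphism of Proposition~\ref{prop:LDOI-ABC}, which guarantees that $X^{(3)}_{(A,B,C)} = X^{(3)}_{(A',B',C')}$ forces $(A,B,C) = (A',B',C')$. Thus every matrix equation involving $X$ translates into a system of equations on the triples.

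For the symmetric case, I would start from the first identity $F X^{(3)}_{(A,B,C)} F = X^{(3)}_{(A^\top, B, C^\top)}$. The condition $X = FXF$ then reads $X^{(3)}_{(A,B,C)} = X^{(3)}_{(A^\top, B, C^\top)}$, and by injectivity of the parametrization this is equivalent to $A = A^\top$ and $C = C^\top$ (the $B$-component is automatically equal). This is immediate and requires no real work.

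For the Bose-symmetric case, the cleanest route is to recall that $P_s = \tfrac{1}{2}(\mathbb{I} + F)$, so $P_s X P_s = \tfrac14 (X + FX + XF + FXF)$. Using the four relevant identities — $FXF = X^{(3)}_{(A^\top,B,C^\top)}$, $FX^{(3)}_{(A,B,C)} = X^{(3)}_{(C^\top,B,A^\top)}$, and $X^{(3)}_{(A,B,C)}F = X^{(3)}_{(C,B,A)}$ from Proposition~\ref{prop:leg-permutations} — one computes the $A,B,C$ components of $P_s X P_s$ as averages of the corresponding components of the four terms. Setting $X = P_s X P_s$ and again invoking injectivity gives a linear system; e.g. the equation in the $A$-slot becomes $A = \tfrac14(A + C^\top + C + A^\top)$, and similarly for the $C$-slot, while the $B$-slot is trivially satisfied since $B$ sits untouched in all four terms. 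I expect this system to collapse exactly to $A = A^\top = C = C^\top$: one implication is clear since if $A = C = A^\top = C^\top$ then each of the four terms equals $X$; for the converse one must check that $A = \tfrac14(A+C+A^\top+C^\top)$ together with the analogous $C$-equation forces all four matrices to coincide. Adding and subtracting the two scalar-type equations, or looking at symmetric and antisymmetric parts, should close this quickly.

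The only genuine subtlety — and the step I would be most careful about — is correctly reading off the $A,B,C$ components of a sum like $X + FX + XF + FXF$ as an element of $\LDOI_d$: one must make sure the four summands genuinely all lie in $\LDOI_d$ (they do, since $\LDOI_d$ is a vector space closed under the leg permutations of Proposition~\ref{prop:leg-permutations}) and that the componentwise description of the sum is obtained simply by adding the respective triples, which is legitimate precisely because $X \mapsto (A,B,C)$ is linear. Once that bookkeeping is set up, everything else is routine linear algebra on $d\times d$ matrices. An alternative, perhaps even shorter, argument is to note that $X$ is Bose-symmetric iff $X = FXF$ \emph{and} $X = FX$ (since $P_s X P_s = X \iff FX = XF = X$ for self-adjoint-type arguments), then combine the symmetric condition $A = A^\top,\ C = C^\top$ with $FX = X$, i.e. $X^{(3)}_{(C^\top,B,A^\top)} = X^{(3)}_{(A,B,C)}$, giving $A = C^\top$ and $C = A^\top$; together these yield $A = C = A^\top = C^\top$ directly, which I would present as the streamlined version.
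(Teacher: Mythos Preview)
Your proposal is correct and follows essentially the same approach as the paper: for the symmetric part you use $FX^{(3)}_{(A,B,C)}F = X^{(3)}_{(A^\top,B,C^\top)}$ and injectivity, and for the Bose-symmetric part you expand $P_sXP_s = \tfrac14(X+FX+XF+FXF)$ via Proposition~\ref{prop:leg-permutations} to obtain exactly the equations $4A = A+C^\top+C+A^\top$ and $4C = C+A^\top+A+C^\top$ that the paper writes down. The system does collapse as you hoped (the two right-hand sides coincide, so $A=C$, whence $4A = 2A+2A^\top$ gives $A=A^\top$), so your hesitation there is unwarranted.

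Your ``streamlined version'' at the end is a genuine and slightly cleaner alternative: the equivalence $P_sXP_s = X \iff FX = X = XF$ holds for \emph{any} $X$ (multiply $P_sXP_s = X$ on either side by $P_a = I-P_s$), so no self-adjointness is needed despite your parenthetical remark. This route bypasses the averaging computation entirely and reads off $A=C^\top$, $C=A^\top$ directly from $FX = X$, which combined with the symmetric condition $A=A^\top$, $C=C^\top$ gives the result in one line. The paper does not take this shortcut.
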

\begin{proof}
    For the first claim, use Proposition~\ref{prop:leg-permutations}. For the second claim, use $P_s = (I+F)/2$ and Proposition \ref{prop:leg-permutations} to get
    \begin{align*}
        4A &= A + C^\top + C + A^\top,\\
        4C &= C + A^\top + A + C^\top,
    \end{align*}
    from which the conclusion immediately follows. 
\end{proof}

We now show that the $\CLDUI_d$, $\LDUI_d$, and $\LDOI_d$ vector spaces are stable under a modified notion of direct sum, that we introduce next. 

\begin{definition} \label{def:biparty_directsum}
    Given two bipartite matrices $X_i \in \mathcal M_{d_i^2}(\mathbb C)$, $i=1,2$, we define their \emph{bipartite direct sum} $X_1 \boxplus X_2 \in \mathcal M_{(d_1+d_2)^2}(\mathbb C)$ by 
    $$[X_1 \boxplus X_2]((i_1,i_2), (j_1,j_2)) = \begin{cases}
    X_1((i_1,i_2), (j_1,j_2)) & \quad \text{ if } i_1,i_2,j_1,j_2 \leq d_1\\
    X_2((i_1-d_1,i_2-d_1), (j_1-d_1,j_2-d_1)) & \quad \text{ if } i_1,i_2,j_1,j_2 > d_1\\
    0 & \quad \text{ otherwise.}\end{cases}$$
\end{definition}

With this definition in hand, we have the following result, showing that the vector spaces of invariant bipartite matrices are stable under bipartite direct sum. 

\begin{proposition} \label{prop:directsum_triple}
Given matrices $A_i,B_i,C_i \in \mathcal M_{d_i}(\mathbb C)$, $i=1,2$, the following relations hold: 
\begin{align*}
    X^{(1)}_{(A_1, B_1)} \boxplus X^{(1)}_{(A_2, B_2)} &= X^{(1)}_{(A_1 \oplus A_2, B_1 \oplus B_2)},\\
    X^{(2)}_{(A_1, B_1)} \boxplus X^{(2)}_{(A_2, B_2)} &= X^{(2)}_{(A_1 \oplus A_2, B_1 \oplus B_2)},\\
    X^{(3)}_{(A_1, B_1, C_1)} \boxplus X^{(2)}_{(A_2, B_2, C_2)} &= X^{(3)}_{(A_1 \oplus A_2, B_1 \oplus B_2, C_1 \oplus C_2)}.
\end{align*}
\end{proposition}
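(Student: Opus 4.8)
The plan is to verify each of the three identities by comparing matrix entries on both sides, using the coordinate formulas in Eqs.~\eqref{eq:X1-coord}--\eqref{eq:X3-coord} together with the entrywise definition of the bipartite direct sum (Definition~\ref{def:biparty_directsum}). I will treat the $X^{(3)}$ case in detail, since the $X^{(1)}$ and $X^{(2)}$ cases are proved by the same argument (and in fact follow from the $X^{(3)}$ case via Remark~\ref{remark:(C)LDUIsubspaceLDOI}, upon noting that $(A_1\oplus A_2, \operatorname{diag}(A_1\oplus A_2), B_1\oplus B_2) = (A_1\oplus A_2, \operatorname{diag}A_1 \oplus \operatorname{diag}A_2, B_1\oplus B_2)$, etc.).

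The core computation is a case split on the index quadruple $((i_1,i_2),(j_1,j_2)) \in [d_1+d_2]^4$. If all four indices are $\leq d_1$, then $[X_1 \boxplus X_2]$ equals the corresponding entry of $X^{(3)}_{(A_1,B_1,C_1)}$ on $[d_1]^4$; meanwhile $(A_1\oplus A_2)_{i_1 i_2} = (A_1)_{i_1 i_2}$ and similarly for $B_1\oplus B_2$ and $C_1\oplus C_2$ whenever the indices lie in $[d_1]$, so the right-hand side of the claimed identity, evaluated by Eq.~\eqref{eq:X3-coord}, reproduces exactly the entry of $X^{(3)}_{(A_1,B_1,C_1)}$. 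The symmetric statement holds when all four indices exceed $d_1$ (after the shift by $d_1$). The only remaining point is the "mixed" case, where the quadruple contains at least one index $\leq d_1$ and at least one index $> d_1$: here the bipartite direct sum gives $0$ by definition, and I must check that Eq.~\eqref{eq:X3-coord} also returns $0$ for the triple $(A_1\oplus A_2, B_1\oplus B_2, C_1\oplus C_2)$. This is where the structure of the direct sum of matrices is used: each of the three nonzero patterns in Eq.~\eqref{eq:X3-coord} ($i_1=j_1, i_2=j_2$; or $i_1=i_2, j_1=j_2$; or $i_1=j_2, i_2=j_1$) forces, in the relevant coordinate, an entry of $A_1\oplus A_2$, $B_1\oplus B_2$ or $C_1\oplus C_2$ with one index $\leq d_1$ and the other $> d_1$ — which vanishes because $A_1\oplus A_2$ is block-diagonal. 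One checks each of the three patterns in turn: e.g. in the pattern $i_1=j_1,\ i_2=j_2$, the entry is $(A_1\oplus A_2)_{i_1 i_2}$, and since the quadruple is mixed while $i_1=j_1$ and $i_2=j_2$, we must have $i_1,i_2$ on opposite sides of $d_1$, so the entry is $0$; the other two patterns are analogous. Hence both sides are $0$ in the mixed case, completing the entrywise check.

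The argument contains no real obstacle — it is a bookkeeping verification — and the only mild subtlety worth flagging is that the correct "mixed-case" conclusion relies precisely on the fact that $X^{(3)}_{(A,B,C)}$ is supported on the three index patterns of Eq.~\eqref{eq:X3-coord} and on nothing else; one should resist the temptation to check only the "diagonal" ($A$) part and overlook the $B$- and $C$-patterns. A diagrammatic restatement is also available: the bipartite direct sum corresponds to taking the direct sum on each of the two tensor legs of the $4$-tensor, and the map $(A,B,C)\mapsto X^{(3)}_{(A,B,C)}$ is built by wiring the matrices $A,\widetilde B,\widetilde C$ along those legs (cf.~Figure~\ref{fig:F-X-F}), so compatibility of $\boxplus$ with $\oplus$ on $A,B,C$ is immediate from compatibility of leg-wise direct sum with matrix direct sum; but the entrywise proof above is already short enough that I would simply present it and remark that the $X^{(1)},X^{(2)}$ identities follow identically.
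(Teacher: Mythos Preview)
Your proposal is correct and follows exactly the approach indicated in the paper, which simply remarks that the result follows directly from the coordinate-wise expressions in Eqs.~\eqref{eq:X1-coord}--\eqref{eq:X3-coord}. You have supplied considerably more detail than the paper's one-line proof, but the underlying argument---an entrywise verification splitting into the ``all $\leq d_1$'', ``all $> d_1$'', and ``mixed'' cases---is precisely what the paper intends.
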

\begin{proof}
    The result directly follows from the explicit (coordinate-wise) expressions of the bijections $X^{(i)}$ for $i=1,2,3$, see Equations \eqref{eq:X1-coord}, \eqref{eq:X2-coord} and \eqref{eq:X3-coord}. 
\end{proof}

\begin{remark}
The fact that the vector spaces of diagonal unitary-invariant matrices are stable under the notion of direct sum defined above is a consequence of the fact that diagonal unitary (resp.~orthogonal) matrices of size $d_1+d_2$ are direct sums of diagonal unitary (resp.~orthogonal) matrices of sizes $d_1$ and $d_2$. The same does not hold for \emph{tensor products}, so we do not have a similar stability property with respect to tensor products. 
\end{remark}

For $A\in \M{d}$ and some index set $I\subset [d]$, $A[I]\in \M{|I|}$ denotes the principal submatrix of $A$ with rows and columns indexed by the indices present in $I$. In the same vein, for an index set $I\subset [d]$ and a triple $(A,B,C)\in \MLDOI{d}$, we define the principal subtriple of $(A,B,C)$ associated with the index set $I$ to be the triple $(A[I],B[I],C[I])\in \MLDOI{|I|}$. Our next result discusses the effect of this action of taking principal subtriples in the $\LDOI_d$ space.

\begin{proposition}\label{prop:principal_triple}
For a triple $(A,B,C)\in \MLDOI{d}$ and an index set $I\subset [d]$, define the projector $P=\sum_{i\in I} \ketbra{i}{i} \in \Msa{d}$. Then, the following relation holds:
\begin{equation*}
    (P\otimes P)X^{(3)}_{(A,B,C)}(P\otimes P) = X^{(3)}_{(A[I],B[I],C[I])},
\end{equation*}
where the locally projected $(P\otimes P)X^{(3)}_{(A,B,C)}(P\otimes P)$ is understood to lie in $\M{|I|}\otimes \M{|I|}$. 
\end{proposition}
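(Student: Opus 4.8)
The plan is to verify the claimed identity directly in coordinates, using the explicit coordinate-wise description of the bijection $X^{(3)}$ given in Eq.~\eqref{eq:X3-coord}. Writing $P = \sum_{i \in I} \ketbra{i}{i}$, the matrix $(P \otimes P) X (P \otimes P)$ simply keeps those matrix entries $X(i_1 i_2, j_1 j_2)$ with all four indices $i_1, i_2, j_1, j_2 \in I$ and zeroes out the rest. So the first step is to identify $\M{|I|} \otimes \M{|I|}$ with the block of $\M{d} \otimes \M{d}$ supported on the indices in $I$, via the obvious relabelling of $I$ with $[|I|]$; under this identification, $(P \otimes P) X (P \otimes P)$ corresponds to the bipartite matrix $X'$ with $X'(i_1 i_2, j_1 j_2) = X(i_1 i_2, j_1 j_2)$ for $i_1, i_2, j_1, j_2 \in I$.

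The second step is to read off, from Eq.~\eqref{eq:X3-coord}, which entries of $X = X^{(3)}_{(A,B,C)}$ survive the compression. The three non-zero patterns are: the ``diagonal'' entries $X(i_1 i_2, i_1 i_2) = A_{i_1 i_2}$; the ``$B$-type'' entries $X(ii, jj) = B_{ij}$; and the ``$C$-type'' entries $X(ij, ji) = C_{ij}$ for $i \neq j$. Restricting all free indices to lie in $I$ leaves exactly the entries $A_{i_1 i_2}$ with $i_1, i_2 \in I$, the entries $B_{ij}$ with $i,j \in I$, and the entries $C_{ij}$ with $i \neq j \in I$ — i.e.\ precisely the entries of the principal submatrices $A[I]$, $B[I]$, $C[I]$. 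Comparing with Eq.~\eqref{eq:X3-coord} applied to the triple $(A[I], B[I], C[I]) \in \MLDOI{|I|}$ (which is a valid member of that space since the diagonal-agreement condition $\operatorname{diag}A = \operatorname{diag}B = \operatorname{diag}C$ passes to principal submatrices), we see that $X' = X^{(3)}_{(A[I], B[I], C[I])}$, which is the claim.

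Alternatively — and this is the more conceptual route — one can note that $(P \otimes P) X (P \otimes P)$ is manifestly LDOI on $\M{|I|} \otimes \M{|I|}$: for $O \in \mathcal{DO}_{|I|}$, extend $O$ by the identity on the complementary indices to get $\tilde O \in \mathcal{DO}_d$ with $\tilde O P = P \tilde O = O$ (under the identification), and then $(O \otimes O)(P\otimes P)X(P\otimes P)(O\otimes O) = (P \otimes P)(\tilde O \otimes \tilde O) X (\tilde O \otimes \tilde O)(P \otimes P) = (P\otimes P)X(P\otimes P)$. Hence by Proposition~\ref{prop:LDOI-ABC} it equals $X^{(3)}_{(A',B',C')}$ for some triple $(A',B',C') \in \MLDOI{|I|}$, and one identifies $A', B', C'$ by extracting the relevant entries exactly as in Proposition~\ref{prop:LDUI/CLDUI/LDOI-projections}: $A'_{ij} = \langle ij | (P\otimes P)X(P\otimes P) | ij \rangle = A_{ij}$, and similarly $B'_{ij} = B_{ij}$, $C'_{ij} = C_{ij}$ for $i,j \in I$.

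There is no real obstacle here; the only thing requiring a moment of care is the bookkeeping in the identification of $\M{|I|}\otimes\M{|I|}$ with the corner of $\M{d}\otimes\M{d}$ cut out by $P \otimes P$ — in particular making sure that this is the ``local'' tensor-factor-wise corner $(P\otimes P)(\cdot)(P\otimes P)$ rather than some other $|I|^2$-dimensional subspace, and that the relabelling is applied consistently to rows and columns of $A$, $B$, $C$. Once that is set up, the verification is a direct term-by-term comparison of Eq.~\eqref{eq:X3-coord} for $(A,B,C)$ and for $(A[I],B[I],C[I])$, so I would present it tersely, perhaps only spelling out the $C$-type entries (where the $i \neq j$ condition interacts with the restriction $i,j \in I$) and leaving the rest to the reader.
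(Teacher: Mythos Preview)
Your proposal is correct and takes essentially the same approach as the paper, which simply states that the result ``follows trivially from the explicit form of the bijections'' stated after Proposition~\ref{prop:LDOI-ABC}. Your coordinate-wise verification using Eq.~\eqref{eq:X3-coord} is exactly what that one-line proof is pointing to, just spelled out in more detail; the alternative conceptual route via LDOI invariance is a nice addition but not needed.
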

\begin{proof}
    Follows trivially from the explicit form of the bijections, stated after Proposition~\ref{prop:LDOI-ABC}.
\end{proof}

We will see in the next section that the cones of PCP and TCP matrices in $\MLDOI{d}$ are stable under the discussed operations of taking direct sums and principal subtriples. 
We now conclude this section by discussing the partial actions on $\LDOI_d$ of the two conditional expectations on the diagonal and scalar matrices defined below:
\begin{alignat*}{2}
\operatorname{diag} : \mathcal M_d(\mathbb C) &\to \mathcal M_d(\mathbb C) \qquad\qquad\qquad \operatorname{trace} : \mathcal M_d(\mathbb C) &&\to \mathcal M_d(\mathbb C)  \\
Z\,\, &\mapsto \sum_{i=1}^dZ_{ii}\ketbra{i}{i}, \qquad\qquad\qquad\qquad\quad Z &&\mapsto \frac{\operatorname{Tr} Z}{d}I_d.
\end{alignat*}
    
\begin{proposition}\label{prop:conditional-expectations}
    For $(A,B,C)\in \MLDOI{d}$, define the diagonal matrices $A^{\mathsf{row}}$ and $A^{\mathsf{col}}$ in $\M{d}$ with entries equal to the row and column sums of $A$ respectively: $A^{\mathsf{row}}_{ii}=\sum_{j=1}^d A_{ij}$ and $A^{\mathsf{col}}_{ii}=\sum_{j=1}^d A_{ji} \,\, \forall i\in [d]$. Then, the following relations hold:
    \begin{align*}
        [\operatorname{id} \otimes \operatorname{Tr}] X^{(3)}_{(A,B,C)} &= A^{\mathsf{row}},\\
        [\operatorname{Tr} \otimes \operatorname{id}]X^{(3)}_{(A,B,C)} &= A^{\mathsf{col}}.
    \end{align*}
    The action of the partial conditional expectations $\operatorname{diag}$ and $\operatorname{trace}$ are given respectively by
    \begin{alignat*}{2}
        [\operatorname{id} \otimes \operatorname{diag}]X^{(3)}_{(A,B,C)} &= X^{(3)}_{(A, \operatorname{diag}A, \operatorname{diag}A)}, \qquad\qquad  [\operatorname{id} \otimes \operatorname{trace}]X^{(3)}_{(A,B,C)} &&= \frac 1 d  X^{(3)}_{(A^{\mathsf{row}},A^{\mathsf{row}},A^{\mathsf{row}})},  \\
        [\operatorname{diag} \otimes \operatorname{id}]X^{(3)}_{(A,B,C)} &= X^{(3)}_{(A, \operatorname{diag}A, \operatorname{diag}A)}, \qquad\qquad [\operatorname{trace} \otimes \operatorname{id}]X^{(3)}_{(A,B,C)} &&= \frac 1 d X^{(3)}_{(A^{\mathsf{col}},A^{\mathsf{col}},A^{\mathsf{col}})}.
    \end{alignat*}
    We have thus (see also \cite[Lemmas 6.9 and 7.7]{nechita2021graphical})
    $$\operatorname{Tr} X^{(3)}_{(A,B,C)} = \operatorname{Tr}A^{\mathsf{row}} = \operatorname{Tr}A^{\mathsf{col}} = \sum_{i,j=1}^d A_{ij}.$$
\end{proposition}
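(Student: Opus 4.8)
The plan is to work directly from the coordinate-wise description of $X^{(3)}_{(A,B,C)}$ given in Eq.~\eqref{eq:X3-coord}, since every quantity to be computed is a partial trace or a local conditional expectation, hence determined entrywise. First I would handle the partial traces. Writing $[\operatorname{id} \otimes \operatorname{Tr}] X^{(3)}_{(A,B,C)} = \sum_{i_1,j_1} \big(\sum_{k} X^{(3)}_{(A,B,C)}(i_1 k, j_1 k)\big) \ketbra{i_1}{j_1}$, I would scan Eq.~\eqref{eq:X3-coord} to see which of the three nonzero cases can occur when the second indices agree: the ``$A$'' case requires $i_1=j_1$ and contributes $\sum_k A_{i_1 k}$; the ``$B$'' case requires $i_1=k=j_1$ and contributes only a diagonal term already counted among the $A$-terms when $i_1=j_1$ (here one must be slightly careful with the convention that the $B$-pattern is used for $i_1\neq j_1$ only in the off-diagonal sum, so the genuine contribution to a fixed diagonal entry is exactly $\sum_k A_{i_1k}$); the ``$C$'' case requires $i_1 = k$ and $k = j_1$, again forcing $i_1=j_1$ and then $i_1=k$, contributing $C_{i_1 i_1} = A_{i_1 i_1}$, which is already the $k=i_1$ summand of $\sum_k A_{i_1 k}$. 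So the off-diagonal entries vanish and the $i_1$-th diagonal entry is $\sum_k A_{i_1 k} = A^{\mathsf{row}}_{i_1 i_1}$; thus $[\operatorname{id}\otimes\operatorname{Tr}]X^{(3)}_{(A,B,C)} = A^{\mathsf{row}}$. The computation of $[\operatorname{Tr}\otimes\operatorname{id}]X^{(3)}_{(A,B,C)}$ is identical with the roles of the first and second tensor legs swapped, giving $A^{\mathsf{col}}$; alternatively one may invoke the transpose symmetry $\big(X^{(3)}_{(A,B,C)}\big)^\top = X^{(3)}_{(A,B^\top,C^\top)}$ from Proposition~\ref{prop:leg-permutations} together with the fact that full transposition swaps the two partial traces.

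Next I would treat the partial $\operatorname{diag}$ conditional expectations. The map $\operatorname{id}\otimes\operatorname{diag}$ keeps only those matrix units $\ketbra{i_1 i_2}{j_1 j_2}$ with $i_2 = j_2$. Intersecting this with the three cases of Eq.~\eqref{eq:X3-coord}: the $A$-case survives in full (it already has $i_2=j_2$), the $B$-case survives only when $i_1 = i_2 = j_1 = j_2$, i.e.\ on the diagonal, contributing $\operatorname{diag}A$ in the $B$-slot, and the $C$-case survives only when $i_1 = j_2 = i_2 = j_1$, again on the diagonal, contributing $\operatorname{diag}A$ in the $C$-slot. Reading off the surviving triple via the bijection gives $(A, \operatorname{diag}A, \operatorname{diag}A)$, which is exactly $X^{(3)}_{(A,\operatorname{diag}A,\operatorname{diag}A)}$; this also equals $X^{(1)}_{(A,\operatorname{diag}A)} = X^{(2)}_{(A,\operatorname{diag}A)}$ by Remark~\ref{remark:(C)LDUIsubspaceLDOI}, consistent with the output being a diagonal ``pinching'' on one leg. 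The statement for $\operatorname{diag}\otimes\operatorname{id}$ follows symmetrically. For the $\operatorname{trace}$ version, I would factor $\operatorname{trace} = \frac{1}{d}\,\mathbb I_d\cdot(\text{something})$ more carefully: note $[\operatorname{id}\otimes\operatorname{trace}]X = [\operatorname{id}\otimes (\tfrac1d I_d \operatorname{Tr})]X$, and computing entrywise, $[\operatorname{id}\otimes\operatorname{trace}]X^{(3)}_{(A,B,C)}(i_1 i_2, j_1 j_2) = \tfrac1d \delta_{i_2 j_2}\sum_k X^{(3)}_{(A,B,C)}(i_1 k, j_1 k) = \tfrac1d \delta_{i_2 j_2} A^{\mathsf{row}}_{i_1 j_1}$ by the partial-trace computation already done. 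Since $A^{\mathsf{row}}$ is diagonal, this equals $\tfrac1d (A^{\mathsf{row}})\otimes I_d$, and I would then check that $M \otimes I_d = X^{(3)}_{(M,M,M)}$ for any diagonal $M$ — indeed for diagonal $M$ we have $M\otimes I_d = \sum_{i,k} M_{ii}\ketbra{ik}{ik}$, which has $A$-matrix with $A_{ik}=M_{ii}$ for all $k$ and $B=C=\operatorname{diag}A$; but one must reconcile this with the claimed $X^{(3)}_{(A^{\mathsf{row}},A^{\mathsf{row}},A^{\mathsf{row}})}$, noting that when $M = A^{\mathsf{row}}$ is already diagonal, $\operatorname{diag}(A^{\mathsf{row}}) = A^{\mathsf{row}}$, so the $B,C$ slots are indeed $A^{\mathsf{row}}$, and the off-diagonal structure matches because $A^{\mathsf{row}}\otimes I_d$ has no off-diagonal $\ketbra{ij}{ji}$ terms while the diagonal of its $A$-matrix is $A^{\mathsf{row}}$ on the diagonal and $0$ off-diagonal — wait, this needs the $A$-matrix of $M\otimes I_d$, which is $A_{ik}=M_{ii}$, \emph{not} diagonal in general. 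So the correct statement is that $[\operatorname{id}\otimes\operatorname{trace}]X^{(3)}_{(A,B,C)}$ has associated $A$-matrix with $(i,j)$ entry $\tfrac1d \delta_{ij}A^{\mathsf{row}}_{ii}$ composed with a factor of the all-ones pattern on the second leg — I would recompute and confirm that the clean closed form is $\tfrac1d X^{(3)}_{(A^{\mathsf{row}}, A^{\mathsf{row}}, A^{\mathsf{row}})}$ where here $A^{\mathsf{row}}$ plays the role of a diagonal matrix whose tensor with $I_d$ has all row-entries equal; the bookkeeping is the only delicate point.

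Finally, the trace identity is immediate: $\operatorname{Tr} X^{(3)}_{(A,B,C)} = \operatorname{Tr}\big([\operatorname{id}\otimes\operatorname{Tr}]X^{(3)}_{(A,B,C)}\big) = \operatorname{Tr} A^{\mathsf{row}} = \sum_i \sum_j A_{ij} = \sum_{i,j} A_{ij}$, and likewise $\operatorname{Tr} A^{\mathsf{col}} = \sum_{i,j} A_{ji} = \sum_{i,j}A_{ij}$, giving all three equalities. The main obstacle is not conceptual but notational: one has to be scrupulous about the convention in Eq.~\eqref{eq:X3-coord} that the $B$- and $C$-patterns are restricted to $i\neq j$ in the defining sums, so that when an index coincidence forced by a partial trace or a pinching pushes a $B$- or $C$-type term onto the diagonal, it must be reinterpreted as (or absorbed into) an $A$-type term via $B_{ii}=C_{ii}=A_{ii}$. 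Carrying this through consistently — and correctly identifying the resulting triple, especially the $\operatorname{diag}$ of $A$ appearing in the output — is where care is required; everything else is a direct substitution. A graphical proof in the box-and-string calculus, contracting the relevant wire with a cup/cap (for $\operatorname{Tr}$) or with a ``delta spider'' (for $\operatorname{diag}$), would make each of these statements visually transparent and is an alternative I would present alongside the coordinate computation.
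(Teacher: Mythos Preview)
Your approach is correct and is essentially the same as the paper's: the paper's proof says only that ``these are simple computations which can be easily obtained using the graphical calculus for tensors'' and illustrates the full-trace case with a diagram, while you carry out the same simple computations in coordinates from Eq.~\eqref{eq:X3-coord}. Your treatment of the two partial traces, the two partial $\operatorname{diag}$ expectations, and the full trace is clean and correct; your remark that the graphical calculus would make everything transparent matches the paper's own suggestion.

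The one place where your write-up becomes muddled is the $\operatorname{trace}$ conditional-expectation case, and here your instinct is actually right rather than wrong. You correctly compute $[\operatorname{id}\otimes\operatorname{trace}]X^{(3)}_{(A,B,C)} = \tfrac{1}{d}\,A^{\mathsf{row}}\otimes \mathbb I_d$, and you correctly observe that the $A$-matrix of $M\otimes \mathbb I_d$ for diagonal $M$ has $(i,k)$ entry $M_{ii}$, i.e.\ constant rows, \emph{not} the diagonal matrix $M$. This means that, with $A^{\mathsf{row}}$ defined as a diagonal matrix, the identity $A^{\mathsf{row}}\otimes \mathbb I_d = X^{(3)}_{(A^{\mathsf{row}},A^{\mathsf{row}},A^{\mathsf{row}})}$ fails (the right-hand side is $\sum_i A^{\mathsf{row}}_{ii}\ketbra{ii}{ii}$, supported only on the $\ket{ii}$ blocks). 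So the stated formula $\tfrac{1}{d}X^{(3)}_{(A^{\mathsf{row}},A^{\mathsf{row}},A^{\mathsf{row}})}$ in the proposition appears to be a misprint; you should not talk yourself into ``confirming'' it. The honest conclusion of your computation is $\tfrac{1}{d}A^{\mathsf{row}}\otimes \mathbb I_d$ (equivalently, $\tfrac{1}{d}X^{(3)}_{(A',\operatorname{diag}A',\operatorname{diag}A')}$ with $A'_{ik}=A^{\mathsf{row}}_{ii}$), and similarly for the other tensor leg. Since the paper's proof gives no detail on this point, there is nothing further to compare; simply record your (correct) computation and flag the discrepancy.
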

\begin{proof}
    These are simple computations which can be easily obtained using the graphical calculus for tensors, see e.g.~Figure \ref{fig:Tr-X} for the trace formula.
\begin{figure}[htbp]
    \centering
    \includegraphics{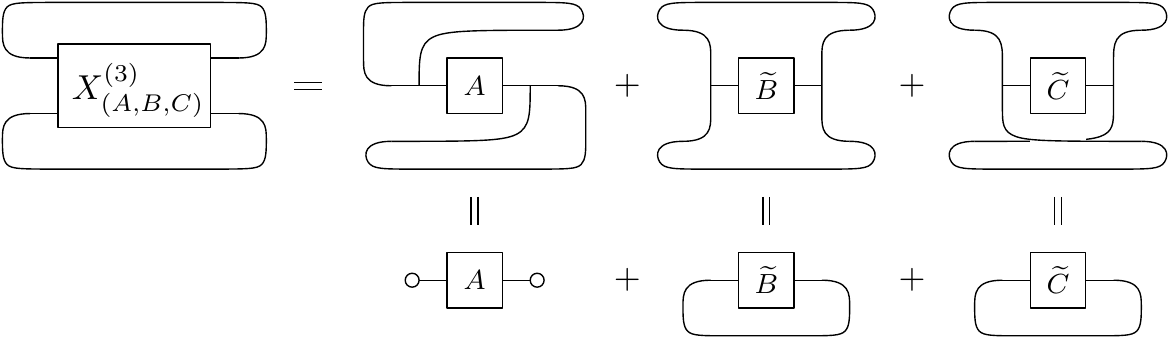}
    \caption{ The trace of an LDOI matrix. In the second row of diagrams, the first term corresponds to $\sum_{i,j} A_{ij}$, while the second and the third ones are zero, since $\tilde B$ and $\tilde C$ have zeros on the diagonal.}
    \label{fig:Tr-X}
\end{figure}
\end{proof}

\section{Convex structure of LDOI matrices}\label{sec:convex-structure}
We study in this section the properties of the three sets of local diagonal invariant matrices from the point of view of convex geometry; more precisely, we shall investigate the convex cone of positive semidefinite LDUI/CLDUI/LDOI matrices, as well as other notions of positivity related to quantum entanglement. We start with some basic facts about convex cones. 

\begin{definition}
    Let $V$ be a real vector space. A \emph{convex cone} $\mathcal C$ is a subset of $V$ having the following two properties: 
    \begin{itemize}
        \item if $x \in \mathcal C$ and $\lambda \in \mathbb R_+ = [0, \infty)$, then $\lambda x \in \mathcal C$.
        \item if $x,y \in \mathcal C$, then $x+y \in \mathcal C$.
    \end{itemize}
    In particular, $0 \in \mathcal C$. The cone $\mathcal C$ is said to be \emph{pointed} if $\mathcal C \cap (-\mathcal C) = \{0\}$; in other words, $\mathcal C$ is pointed if it does not contain any line. 
    
    Given a cone $\mathcal C$, we define its \emph{dual cone} by
    $$\mathcal C^* \:= \{\alpha \in V^* \, : \, \langle \alpha, x \rangle \geq 0, \, \forall x \in \mathcal C\} \subseteq V^*,$$
    where $V^*$ is the vector space dual to $V$. 
    
    For a vector $v \neq 0$, a half-line $\mathbb R_+ v \subseteq \mathcal C$ is called an \emph{extremal ray} of $\mathcal C$ (we write $\mathbb R_+ v \in \operatorname{ext} \mathcal C$) if
    $$v = x+y \text { for } x,y \in \mathcal C \implies x,y \in \mathbb R_+ v.$$
\end{definition}

\begin{example}
The main examples we shall be concerned with in this work are the cone of \emph{entrywise non-negative matrices}
$$\EWP_d = \{A \in \mathcal M_d(\mathbb R) \, : \, A_{ij} \geq 0, \, \forall i,j \in [d]\}$$
and the cone of \emph{positive semidefinite matrices}
$$\PSD_d = \{B \in \Msa{d} \, : \, \langle x| B |x \rangle \geq 0, \, \forall \ket{x} \in \mathbb C^d\}.$$
Importantly, the two cones $\EWP_d \subseteq \mathcal M_d(\mathbb R)$ and $\PSD_d \subseteq \mathcal M_d^{sa}(\mathbb C)$ are self-dual. Their extremal rays are as follows: 
$$\operatorname{ext} \EWP_d = \{\mathbb R_+ \ketbra{i}{j}\}_{i,j\in [d]} \quad \text{ and } \quad \operatorname{ext} \PSD_d = \{\mathbb R_+ \ketbra{x}{x}\}_{\ket{x} \in \mathbb C^d, \, \ket{x} \neq 0}.$$
\end{example}

Let $V$ and $W$ be vector spaces and consider convex cones $\mathcal B \subseteq V$ and $\mathcal C \subseteq W$. We define the following two operations on convex cones:
\begin{align}
    \text{Cartesian product:}\quad &\mathcal B \times \mathcal C = \{(b,c) \, : b \in \mathcal B, \, c \in \mathcal C\} \subseteq V \oplus W,\\
    \text{Direct sum:}\quad &\mathcal B \oplus \mathcal C = \operatorname{conv}\left(\{(b,0) \, : b \in \mathcal B\} \cup \{(0,c)\, : \, c \in \mathcal C\}\right) \subseteq V \oplus W.
\end{align}
For the corresponding construction for convex sets, see \cite{bremner1997complexity} or \cite[Section 3.1]{bluhm2020compatibility}. These constructions are dual to each other, see the references above:
$$(\mathcal B \times \mathcal C)^* = \mathcal B^* \oplus \mathcal C^* \quad \text{ and } \quad (\mathcal B \oplus \mathcal C)^* = \mathcal B^* \times \mathcal C^*.$$

We have the following result, giving the extremal rays of a Cartesian product of cones in terms of the extremal rays of the factors. By duality, a similar result could be given for the facets of a direct sum of convex cones.
\begin{proposition}
For two pointed cones $\mathcal B, \mathcal C$, we have 
$$\operatorname{ext}(\mathcal B \times \mathcal C) = \left(\operatorname{ext}(\mathcal B), 0 \right) \sqcup \left(0,\operatorname{ext}(\mathcal C) \right).$$
\end{proposition}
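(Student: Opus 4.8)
The plan is to prove the two inclusions separately, using the definition of extremal ray together with the fact that both $\mathcal B$ and $\mathcal C$ are pointed. First I would establish the easy inclusion $\supseteq$: suppose $\mathbb R_+ b \in \operatorname{ext}(\mathcal B)$ with $b \neq 0$, and I want to show that $\mathbb R_+ (b,0) \in \operatorname{ext}(\mathcal B \times \mathcal C)$. If $(b,0) = (x_1,x_2) + (y_1,y_2)$ with $(x_1,x_2), (y_1,y_2) \in \mathcal B \times \mathcal C$, then looking at the second coordinate gives $x_2 + y_2 = 0$ with $x_2, y_2 \in \mathcal C$; since $\mathcal C$ is pointed (so $\mathcal C \cap (-\mathcal C) = \{0\}$), this forces $x_2 = y_2 = 0$. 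The first coordinate then reads $b = x_1 + y_1$ with $x_1, y_1 \in \mathcal B$, and extremality of $\mathbb R_+ b$ gives $x_1, y_1 \in \mathbb R_+ b$, hence $(x_1,x_2), (y_1,y_2) \in \mathbb R_+ (b,0)$. The symmetric argument handles rays of the form $\mathbb R_+(0,c)$.

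For the reverse inclusion $\subseteq$, I would take an arbitrary extremal ray $\mathbb R_+(b,c) \in \operatorname{ext}(\mathcal B \times \mathcal C)$ and show that either $b = 0$ or $c = 0$, and moreover that the nonzero component generates an extremal ray of the corresponding factor. The key observation is the decomposition $(b,c) = (b,0) + (0,c)$, which is a sum of two elements of $\mathcal B \times \mathcal C$; extremality forces $(b,0), (0,c) \in \mathbb R_+(b,c)$. If both $b$ and $c$ were nonzero, then $(b,0) = \lambda(b,c)$ for some $\lambda > 0$ would force $\lambda c = 0$, a contradiction; so exactly one of them vanishes. Say $c = 0$ and $b \neq 0$. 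To see $\mathbb R_+ b \in \operatorname{ext}(\mathcal B)$: if $b = x + y$ with $x,y \in \mathcal B$, then $(b,0) = (x,0) + (y,0)$ in $\mathcal B \times \mathcal C$, so extremality of $\mathbb R_+(b,0)$ gives $(x,0), (y,0) \in \mathbb R_+(b,0)$, i.e.~$x, y \in \mathbb R_+ b$. This shows $\mathbb R_+(b,0) \in (\operatorname{ext}(\mathcal B), 0)$, and the $b = 0$ case lands in $(0, \operatorname{ext}(\mathcal C))$. Finally, the union is disjoint because a ray in the intersection would have to be $\mathbb R_+(0,0) = \{0\}$, which is excluded from extremal rays by the requirement $v \neq 0$.

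The argument is entirely elementary; the only subtlety — and the one place the hypotheses genuinely enter — is the use of pointedness to conclude $x_2 + y_2 = 0 \implies x_2 = y_2 = 0$ in the first inclusion. Without pointedness of the factors this fails: e.g.~if $\mathcal C$ is a full line $\mathbb R v$, then $(b, v) = (b, v) + (0, 0)$ is not the only decomposition, since one can also write $(b,0) + (0,v)$ with $(0,v)$ nonzero and not on the ray $\mathbb R_+(b,0)$, breaking extremality. So I would make sure to invoke the pointedness assumption explicitly at that step. I expect no real obstacle here; the proof is a short bookkeeping exercise once the decomposition $(b,c) = (b,0)+(0,c)$ and the pointedness of the factors are in hand.
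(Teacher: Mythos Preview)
Your proof is correct and follows the same overall two-inclusion structure as the paper. The ``$\supseteq$'' direction is essentially identical to the paper's argument, invoking pointedness of $\mathcal C$ to kill the second coordinates.

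For ``$\subseteq$'' you use a slightly different and arguably cleaner decomposition: you split $(b,c) = (b,0) + (0,c)$ directly, whereas the paper writes $(b,c)$ as the midpoint of $\bigl(\tfrac{b}{2},\tfrac{3c}{2}\bigr)$ and $\bigl(\tfrac{3b}{2},\tfrac{c}{2}\bigr)$ and argues that collinearity of these with $(b,c)$ forces one coordinate to vanish. Both arguments land in the same place; your version avoids the auxiliary coefficients and reaches the contradiction more transparently. The paper, on the other hand, does not spell out the extremality of the surviving coordinate in its factor (it says this ``follows easily from the hypothesis''), while you do --- so your writeup is in fact slightly more complete on that point.
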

\begin{proof}
    We prove the equality of the two sets by showing separately the two inclusions. For ``$\subseteq$'', consider an extremal ray $\mathbb R_+(b,c)$ of $\mathcal B \times \mathcal C$. We first show that one of $b$ or $c$ must be null. Using $b \in \mathcal B$ and $c \in \mathcal C$, we also have 
    \begin{equation}\label{eq:half-b-c}
        \left( \frac{b}{2}, \frac{3c}{2}\right), \left( \frac{3b}{2}, \frac{c}{2}\right) \in \mathcal B \times \mathcal C.
    \end{equation}
    From $$(b,c) = \frac 1 2 \left[\left( \frac{b}{2}, \frac{3c}{2}\right) + \left( \frac{3b}{2}, \frac{c}{2}\right)\right],$$
    we deduce that the two vectors in \eqref{eq:half-b-c} must be collinear to $(b,c)$, and thus at least one of $b,c$ must be zero. Assuming, say, $c=0$, the extremality of $\mathbb R_+b$ inside $\mathcal B$ follows easily from the hypothesis. 
    
    For the reverse inclusion, let $\mathbb R_+b$ be an extremal ray in $\mathcal B$, and assume
    $$(b,0) = \frac 1 2 \left[ (b',c') + (b'',c'')\right].$$
    Since $\mathcal C$ is pointed, $c'=c''= 0$. Since $b$ is on an extremal ray in $\mathcal B$, it must be collinear to $b',b''$, proving the inclusion and finishing the proof.
\end{proof}

\bigskip

Let us now discuss the cones of positive semidefinite local diagonal invariant matrices, and their convex geometry. First, define the three convex cones of interest as sections of the positive semidefinite cone by the corresponding hyperplanes:
\begin{align*}
    \LDUI_d^+ &:= \PSD_{d^2} \cap \LDUI_d,\\
    \CLDUI_d^+ &:= \PSD_{d^2} \cap \CLDUI_d,\\
    \LDOI_d^+ &:= \PSD_{d^2} \cap \LDOI_d.
\end{align*}

The elements of these cones were characterized in Lemma \ref{lemma:LDOI-psd-ppt} (see also \cite[Lemma 7.6]{nechita2021graphical} or \cite[Theorem 5.2]{johnston2019pairwise} for the CLDUI case); this also follows from the following more general result, which gives the spectrum of an arbitrary local diagonal invariant matrix. 
\begin{proposition}\label{prop:spectra}
For any triple of matrices $(A,B,C) \in \mathcal M_{d}(\mathbb C)^{\times 3}_{\mathbb C^d}$, the spectra of the matrices $X^{(1,2,3)}$ are given by
\begin{align*}
    \operatorname{spec} X^{(1)}_{(A,C)} &= \{A_{ii}\}_{i \in d} \cup \bigcup_{i<j} \operatorname{spec} \begin{bmatrix} A_{ij} & C_{ij}, \\ C_{ji} & A_{ji} \end{bmatrix}\\
    \operatorname{spec} X^{(2)}_{(A,B)} &= \operatorname{spec} B \cup \{A_{ij}\}_{i \neq j},\\
    \operatorname{spec} X^{(3)}_{(A,B,C)} &= \operatorname{spec} B \cup \bigcup_{i<j} \operatorname{spec} \begin{bmatrix} A_{ij} & C_{ij} \\ C_{ji} & A_{ji} \end{bmatrix}.
\end{align*}
\end{proposition}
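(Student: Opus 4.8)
The plan is to read the spectra directly off the block decompositions already recorded in Proposition~\ref{prop:block-structure}. The only general fact needed is that for square matrices $M_1,\dots,M_k$ the direct sum $\bigoplus_\ell M_\ell$ is block-diagonal in a suitable ordered basis, so its spectrum (as a multiset, hence certainly as a set) equals $\bigcup_\ell \operatorname{spec} M_\ell$; and that the spectrum is unchanged under conjugation by the permutation matrix implementing the reordering of the standard basis of $\mathbb C^d \otimes \mathbb C^d$ used in Proposition~\ref{prop:block-structure}. Applying this to the three identities of that proposition yields the three claimed formulas immediately.

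Alternatively, and more self-containedly, I would argue via invariant subspaces. Using the explicit expansion $X^{(3)}_{(A,B,C)} = \sum_{i,j} A_{ij}\ketbra{ij}{ij} + \sum_{i\neq j} B_{ij}\ketbra{ii}{jj} + \sum_{i\neq j} C_{ij}\ketbra{ij}{ji}$, one checks that $\operatorname{span}\{\ket{ii} : i \in [d]\}$ is invariant, with the restriction acting as $B$ under $\ket{ii} \leftrightarrow \ket{i}$ (here one uses $\operatorname{diag}A = \operatorname{diag}B$), and that for each pair $i<j$ the two-dimensional subspace $\operatorname{span}\{\ket{ij},\ket{ji}\}$ is invariant, with restriction a $2\times 2$ matrix having diagonal entries $A_{ij}, A_{ji}$ and off-diagonal entries $C_{ij}, C_{ji}$; its spectrum equals that of $\begin{bmatrix} A_{ij} & C_{ij} \\ C_{ji} & A_{ji}\end{bmatrix}$, since transposition preserves eigenvalues and so the exact placement of $C_{ij}$ versus $C_{ji}$ is irrelevant. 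These subspaces are mutually orthogonal and span $\mathbb C^d \otimes \mathbb C^d$, which gives the formula for $X^{(3)}$. For $X^{(2)}_{(A,B)}$ the same analysis applies, except each off-diagonal line $\mathbb C\ket{ij}$ ($i\neq j$) is now itself invariant with eigenvalue $A_{ij}$; for $X^{(1)}_{(A,C)}$ the diagonal subspace splits further into the invariant lines $\mathbb C\ket{ii}$ with eigenvalue $A_{ii}$. One may also simply deduce the $X^{(1)}$ and $X^{(2)}$ cases from the $X^{(3)}$ case using Remark~\ref{remark:(C)LDUIsubspaceLDOI}.

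There is essentially no real obstacle: the statement is a direct corollary of the block structure. The only points requiring (minimal) care are verifying that the invariant-subspace decomposition is complete and orthogonal, and observing that the $2\times 2$ block is determined only up to transposition — which does not affect its spectrum.
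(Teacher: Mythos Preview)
Your proposal is correct and matches the paper's own proof exactly: the paper simply says the formulas follow immediately from the block structure in Proposition~\ref{prop:block-structure}. Your alternative invariant-subspace argument is just a self-contained rederivation of that block decomposition, so there is no substantive difference.
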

\begin{proof}
    The formulas follow immediately from the block structure of the matrices $X^{(1,2,3)}$, see Proposition \ref{prop:block-structure}. 
\end{proof}

\begin{theorem}
The extremal rays of the cones $\LDUI_d^+, \CLDUI_d^+,\LDOI_d^+ \subseteq \mathcal M_{d^2}^{sa}(\mathbb C)$ are as follows (we abuse notation below, by giving a vector representative $v$ for each extremal ray $\mathbb R_+v$): 
\begin{align*}
    \operatorname{ext}\LDUI_d^+ &= \{\ketbra{i}{i} \otimes \ketbra{i}{i}\}_{i\in [d]} \sqcup \bigsqcup_{1 \leq i < j \leq d} \{\ketbra{x_{ij}}{x_{ij}}\}_{\ket{x_{ij}} \in \mathbb C\ket{ij} \oplus \mathbb C \ket{ji}, \, \ket{x_{ij}} \neq 0},\\
    \operatorname{ext}\CLDUI_d^+ &= \{\ketbra{i}{i} \otimes \ketbra{j}{j}\}_{i \neq j \in [d]} \sqcup \{\ketbra{\operatorname{diag}(x)}{\operatorname{diag}(x)}\}_{\ket{x} \in \mathbb C^d, \, \ket{x} \neq 0},\\
    \operatorname{ext}\LDOI_d^+ &=\{\ketbra{\operatorname{diag}(x)}{\operatorname{diag}(x)}\}_{\ket{x} \in \mathbb C^d} \sqcup \bigsqcup_{1 \leq i < j \leq d} \{\ketbra{x_{ij}}{x_{ij}}\}_{\ket{x_{ij}} \in \mathbb C\ket{ij} \oplus \mathbb C \ket{ji}, \ket{x}_{ij} \neq 0}.
\end{align*}
where, for $\ket{x}\in \C{d}$, $\ket{\operatorname{diag}(x)}=\sum_{i=1}^d x_i \ket{ii} \in \C{d}\otimes \C{d}$.
\end{theorem}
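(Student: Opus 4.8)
The plan is to exploit the block decomposition from Proposition~\ref{prop:block-structure} together with the spectral formulas from Proposition~\ref{prop:spectra}, which reduce the study of each cone $\LDUI_d^+$, $\CLDUI_d^+$, $\LDOI_d^+$ to a direct sum (in the sense of cones) of much simpler building blocks. Concretely, for $\LDOI_d$ the isomorphism $X^{(3)}_{(A,B,C)} = B \oplus \bigoplus_{i<j}\begin{bmatrix} A_{ij} & C_{ij} \\ C_{ji} & A_{ji}\end{bmatrix}$ is a linear bijection onto a subspace of block-diagonal matrices; intersecting with $\PSD_{d^2}$ therefore yields the cone of triples $(A,B,C)$ with $B \in \PSD_d$, each $2\times 2$ block positive semidefinite, and all off-block entries zero, i.e.\ the $2\times 2$ structure forces $A$ to be entrywise nonnegative off the diagonal and $C$ self-adjoint with $A_{ij}A_{ji} \geq |C_{ij}|^2$, as in Lemma~\ref{lemma:LDOI-psd-ppt}. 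The key observation is that this cone is precisely a direct sum of cones: one $\PSD_d$-summand corresponding to the $B$-block living on span$\{\ket{ii}\}$, and one copy of $\PSD_2$ (realized on span$\{\ket{ij},\ket{ji}\}$) for each pair $i<j$.

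First I would make precise the claim that if a pointed convex cone $\mathcal K \subseteq V$ splits as $\mathcal K = \bigoplus_{\alpha} \mathcal K_\alpha$ with respect to a direct sum decomposition $V = \bigoplus_\alpha V_\alpha$ of the ambient space (meaning $\mathcal K = \{\sum_\alpha x_\alpha : x_\alpha \in \mathcal K_\alpha\}$ and the $V_\alpha$ are linearly independent), then $\operatorname{ext}\mathcal K = \bigsqcup_\alpha \operatorname{ext}\mathcal K_\alpha$. This is the ``dual'' statement alluded to after the Cartesian-product proposition above, and its proof is short: if $v = \sum_\alpha v_\alpha$ lies on an extremal ray, writing each $v_\alpha$ as a sum of two elements of $\mathcal K_\alpha$ (e.g.\ by the rescaling trick $v_\alpha = \tfrac12(\tfrac12 v_\alpha + \tfrac32 w) + \tfrac12(\tfrac32 v_\alpha - \tfrac12 w)$ for a suitable $w$, or more simply by using $v_\beta$ itself against the other coordinates) and using linear independence of the $V_\alpha$ forces all but one $v_\alpha$ to vanish, and then extremality descends to that summand; conversely an extremal ray of some $\mathcal K_\alpha$ remains extremal in the sum because any decomposition in $\mathcal K$ must, by linear independence, be supported in $V_\alpha$.

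Next I would apply this to each of the three cones. For $\LDOI_d^+$: the block $B$-summand is a copy of $\PSD_d$ sitting inside the subspace $\operatorname{span}\{\ket{ii}\}_{i}$ of $\C{d}\otimes\C{d}$; its extremal rays are the rank-one matrices $\ketbra{\operatorname{diag}(x)}{\operatorname{diag}(x)}$ for $\ket{x}\in\C{d}$, since $\ket{\operatorname{diag}(x)} = \sum_i x_i \ket{ii}$ is the generic vector of that subspace. Each pair-$(i,j)$ summand is a copy of $\PSD_2$ realized on $\mathbb{C}\ket{ij}\oplus\mathbb{C}\ket{ji}$, whose extremal rays are the rank-one projections $\ketbra{x_{ij}}{x_{ij}}$ with $\ket{x_{ij}} \in \mathbb{C}\ket{ij}\oplus\mathbb{C}\ket{ji}$ nonzero. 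Taking the disjoint union gives exactly the stated list for $\operatorname{ext}\LDOI_d^+$. For $\CLDUI_d^+$, Proposition~\ref{prop:block-structure} gives $X^{(2)}_{(A,B)} = B \oplus \bigoplus_{i\neq j}[A_{ij}]$, so the summands are one $\PSD_d$ (again on $\operatorname{span}\{\ket{ii}\}$, giving the $\ketbra{\operatorname{diag}(x)}{\operatorname{diag}(x)}$ rays) together with one copy of $\mathbb{R}_+ = \PSD_1$ for each ordered pair $i\neq j$, whose extremal generator is $\ketbra{ii}{jj}$-type, i.e.\ $\ketbra{i}{i}\otimes\ketbra{j}{j}$. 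For $\LDUI_d^+$, $X^{(1)}_{(A,C)} = \bigoplus_i [A_{ii}] \oplus \bigoplus_{i<j}\begin{bmatrix}A_{ij}&C_{ij}\\ C_{ji}&A_{ji}\end{bmatrix}$ gives $d$ copies of $\PSD_1$ (generators $\ketbra{ii}{ii}$) and one $\PSD_2$ per pair $i<j$ (generators $\ketbra{x_{ij}}{x_{ij}}$), matching the stated list.

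I expect the main obstacle to be bookkeeping rather than conceptual: one must verify carefully that the isomorphisms $X^{(1,2,3)}$ are genuine \emph{orthogonal} direct-sum decompositions of the relevant subspaces of $\mathcal{M}_{d^2}^{sa}(\mathbb{C})$ and that the image of the positive-semidefinite cone really is the \emph{cone} direct sum of the block cones (not merely contained in it) — this is where the block-diagonal structure and the fact that positive semidefiniteness of a block-diagonal matrix is equivalent to positive semidefiniteness of each block are both needed. One subtlety worth flagging explicitly: in the $\LDOI_d^+$ and $\LDUI_d^+$ cases the $2\times 2$ blocks with index $(i,j)$ involve $A_{ij}, A_{ji}, C_{ij}, C_{ji}$, and one should check that the constraint $\operatorname{diag}(A)=\operatorname{diag}(B)=\operatorname{diag}(C)$ does not couple distinct blocks (it only touches the diagonal entries, which live in the $B$-summand for $X^{(2)}, X^{(3)}$ and in the scalar $[A_{ii}]$ summands for $X^{(1)}$), so the decomposition into independent summands is legitimate. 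Once this is in place, the extremal-ray computation is immediate from the known extremal rays of $\PSD_k$ for $k=1,2,d$ and the direct-sum lemma above.
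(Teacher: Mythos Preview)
Your proposal is correct and follows essentially the same approach as the paper: decompose each cone via the block structure of Proposition~\ref{prop:block-structure} into a product of copies of $\PSD_1$, $\PSD_2$, and $\PSD_d$ sitting on orthogonal subspaces, prove that extremal rays of such a product are the disjoint union of the extremal rays of the factors, and then read off the well-known extremal rays of each $\PSD_k$. The only terminological difference is that the paper phrases the decomposition as a \emph{Cartesian product} of cones (and proves the corresponding extremal-ray proposition just before the theorem), whereas you phrase it as a \emph{direct sum} in orthogonal summands; for pointed cones containing $0$ these two constructions coincide, so the arguments are the same.
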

\begin{proof}
The idea of the proof is to decompose the cones as Cartesian products of simpler cones, following the decompositions from Propositions \ref{prop:block-structure} and \ref{prop:spectra}. Let us start with the case of $\LDUI_d^+$. In terms of the matrices $(A,C)$, we have: 
$$\MLDUI{d} \supset \{ (A,C) \, : \, X^{(1)}_{(A,C)} \in \LDUI^+_d\} = \mathbb R^d_+ \times \bigtimes_{1\leq i<j \leq d}\PSD_2^{(i,j)},$$
where 
$$\PSD_2^{(i,j)} := \left\{A_{ij}, A_{ji} \in \mathbb R, C_{ij} \in \mathbb C, C_{ji} = \bar C_{ij} \, : \,  \begin{bmatrix} A_{ij} & C_{ij} \\ C_{ji} & A_{ji}\end{bmatrix} \in \PSD_2 \right\}.$$
We can thus identify an element of the cone $\LDUI^+_d$ with a vector of $1 + \binom{d}{2}$ coordinates, following the decomposition above. Using this block structure, the extremal rays of $\LDUI^+_d$ are the union of the sets of extremal rays of the different factors: 
$$\{(\ketbra{i}{i},0, \ldots, 0)\}_{i\in [d]} \sqcup \bigsqcup_{1 \leq i < j \leq d} \{(0,0,\ldots, 0,\ketbra{x_{ij}}{x_{ij}}, 0, \ldots, 0)\}_{\ket{x_{ij}} \in \mathbb C^2}.$$
Let us now write these elements in the usual form, using the $X^{(1)}$ isomorphism:
\begin{align*}
    (\ketbra{i}{i},0, \ldots, 0) \quad&\leadsto\quad A = C = \ketbra{i}{i} \quad\leadsto\quad X^{(1)}_{(A,C)} = \ketbra{i}{i} \otimes \ketbra{i}{i}\\
    (0,0,\ldots, 0,\ketbra{x_{ij}}{x_{ij}}, 0, \ldots, 0) \quad&\leadsto\quad  \begin{bmatrix} A_{ij} & C_{ij} \\ C_{ji} & A_{ji}\end{bmatrix} = \ketbra{x_{ij}}{x_{ij}}  \quad\leadsto\quad X^{(1)}_{(A,C)} = \ketbra{x_{ij}}{x_{ij}} \text{ on } \mathbb C\ket{ij} \oplus \mathbb C \ket{ji}.
\end{align*}

For $\CLDUI_d^+$, we start from the decomposition given in Proposition \ref{prop:block-structure} 
\begin{equation}\label{eq:decomposition-CLDUI+}
\MLDUI{d} \supset \{ (A,B) \, : \, X^{(2)}_{(A,B)} \in \CLDUI^+_d\} = \EWP^0_d \times \PSD_d,
\end{equation}
where $\EWP^0_d$ is the cone of entrywise positive matrices with zero diagonal; we have
$$\EWP^0_d = \bigtimes_{i,j \in [d], \, i \neq j} \mathbb R_+.$$
An extremal ray coming from the $i \neq j$ factor $\mathbb R_+$ above corresponds to 
$$A = \ketbra{i}{j} \text{ and } B=0 \quad\leadsto\quad X^{(2)}_{(A,B)} = \ketbra{i}{i} \otimes \ketbra{j}{j}.$$
An extremal ray $\ketbra{x}{x}$ for $\ket{x} \neq 0$ corresponding to the factor $\PSD_d$ in \eqref{eq:decomposition-CLDUI+} gives
$$B = \ketbra{x}{x} \text{ and } A=\operatorname{diag} B \quad\leadsto\quad X^{(2)}_{(A,B)} = \ketbra{\operatorname{diag}(x)}{\operatorname{diag}(x)},$$
where, for a vector $\ket{x} \in \mathbb C^d$, 
$$\ket{\operatorname{diag}(x)} = \sum_{i=1}^d x_i \ket{ii}.$$

Finally, for $\LDOI_d^+$, we have the decomposition
$$\MLDOI{d} \supset \{ (A,B,C) \, : \, X^{(3)}_{(A,B,C)} \in \LDOI^+_d\}= \PSD_d \times \bigtimes_{1\leq i<j \leq d}\PSD_2^{(i,j)},$$
We leave the details of this case to the reader.
\end{proof}
\begin{remark}
The extremal rays of the cones $\LDUI_d^+, \CLDUI_d^+,\LDOI_d^+$ consist only of unit rank matrices, see Corollary \ref{cor:rank-X}. These are precisely the (positive multiples of) rank one projections in the respective vector spaces $\LDUI_d, \CLDUI_d,\LDOI_d$. In general, linear sections of cones can have additional extremal rays. 
\end{remark}

\bigskip 

We now shift the discussion to the case of PCP and TCP matrices, see Definitions \ref{def:pcp} and \ref{def:tcp}. Several elementary properties of these matrices can be found in \cite[Sections 3,4]{johnston2019pairwise} and \cite[Appendix B]{nechita2021graphical} respectively. In particular, it has been shown that PCP (resp.~TCP) matrices form convex cones:
\begin{align*}
    \PCP_d &:= \{ (A,B) \in \MLDUI{d} \, : \, (A,B) \text{ is pairwise completely positive}\}, \\ 
    & \qquad\qquad \subseteq \{(A,B)\in \MLDUI{d} : A\in \Mreal{d} \text{ and } B\in \Msa{d} \} \\
    \TCP_d &:= \{ (A,B,C) \in \MLDOI{d} \, : \, (A,B,C) \text{ is triplewise completely positive}\}. \\
    & \qquad\qquad \subseteq \{(A,B)\in \MLDOI{d} : A\in \Mreal{d} \text{ and } B,C\in \Msa{d} \}
\end{align*}
We now show that these cones are also closed.
\begin{proposition}
The cones $\PCP_d, \TCP_d$ are closed. 
\end{proposition}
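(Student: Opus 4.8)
The plan is to show that $\PCP_d$ and $\TCP_d$ are closed by exhibiting each as the image of a compact set under a continuous map, which is automatically closed. I will treat the TCP case in detail; the PCP case is entirely analogous (simply drop the $C$-component). The first step is to bound the dimension $d'$ appearing in the defining decomposition. A TCP triple $(A,B,C)$ with decomposition matrices $V,W\in\M{d,d'}$ corresponds, via Theorem~\ref{theorem:LDOI-sep}, to a separable matrix $X=\sum_k \ketbra{v_k}{v_k}\otimes\ketbra{w_k}{w_k}\in\M{d}\otimes\M{d}$ with $\operatorname{Proj}_{\LDOI}(X) = X^{(3)}_{(A,B,C)}$; equivalently, each of $A,B,C$ is a sum of at most $d'$ rank-one terms of the form $(\ket{v\odot\bar v})(\bra{w\odot\bar w})$, $(\ket{v\odot w})(\bra{v\odot w})$, $(\ket{v\odot\bar w})(\bra{v\odot\bar w})$. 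Since $A,B,C$ live in a real vector space of dimension at most $3d^2$, a Carathéodory-type argument on the cone of such rank-one triples shows that any TCP triple admits a decomposition with $d'\le N$ for some fixed $N=N(d)$ (e.g.\ $N = 3d^2+1$ suffices). Hence $\TCP_d = \{(A_V,B_V,C_V) : V,W\in\M{d,N}\}$ where $(A_V,B_V,C_V)$ denotes the triple built from columns of $V,W$ as in Eq.~\eqref{eq:def-TCP}.

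The second step is a normalization/compactness reduction. The map $(V,W)\mapsto(A,B,C)$ is continuous and homogeneous of degree $4$ jointly, but $\M{d,N}\times\M{d,N}$ is not compact. To fix this, note that replacing the $k$-th columns $(\ket{v_k},\ket{w_k})$ by $(\lambda\ket{v_k},\lambda^{-1}\ket{w_k})$ for $\lambda>0$ leaves $B$ and $C$ unchanged and also leaves $A$ unchanged (each monomial has two $v$-type and two $w$-type factors), so we may rescale each column pair to have $\|v_k\| = \|w_k\|$. Then $\operatorname{Tr}(A^{\mathsf{row}}) = \sum_{i,j}A_{ij} = \operatorname{Tr} X^{(3)}_{(A,B,C)} = \sum_k \|v_k\|^2\|w_k\|^2 = \sum_k \|v_k\|^4$ by Proposition~\ref{prop:conditional-expectations}. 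Therefore, to prove closedness it suffices to show that, for each $t\ge 0$, the set $\TCP_d^{(t)} := \{(A,B,C)\in\TCP_d : \operatorname{Tr}(A^{\mathsf{row}}) = t\}$ is closed, since a limit of TCP triples has bounded trace (trace is continuous and linear) and hence lies in $\bigcup_{t\le T}\TCP_d^{(t)}$ for some $T$; and a finite... more carefully, $\TCP_d = \bigcup_{t\ge 0}\TCP_d^{(t)}$ with $\TCP_d^{(t)} = t\cdot\TCP_d^{(1)}\cup\{0\text{ if }t=0\}$, so it is enough to show $\TCP_d^{(1)}$ is compact, and then $\TCP_d$ is closed as the image of $[0,\infty)\times\TCP_d^{(1)}$ under the continuous scaling map (together with a standard check that a sequence in $\TCP_d$ converging to a limit has its scaling parameters bounded, which follows from trace continuity).

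The third step is to verify compactness of $\TCP_d^{(1)}$. With the normalization $\|v_k\| = \|w_k\|$ and $\sum_k\|v_k\|^4 = 1$, each $\|v_k\|^4\le 1$, so $\|v_k\| = \|w_k\|\le 1$; thus the matrices $V,W$ range over the compact set $K := \{(V,W)\in\M{d,N}^{\times 2} : \|v_k\| = \|w_k\|\le 1\ \forall k\}$ (after possibly padding with zero columns). The map $\Psi : K\to\MLDOI{d}$, $\Psi(V,W) := (A_V,B_V,C_V)$, is continuous, so $\Psi(K)$ is compact, hence closed; and $\TCP_d^{(1)} = \Psi(K)\cap\{\operatorname{Tr}(A^{\mathsf{row}}) = 1\}$ is a closed subset of a compact set, hence compact. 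Finally, $\TCP_d = \{0\}\cup\bigcup_{\lambda>0}\lambda\cdot\TCP_d^{(1)}$, and closedness follows: if $(A_n,B_n,C_n)\to(A,B,C)$ with each term in $\TCP_d$, then $\lambda_n := \operatorname{Tr}(A_n^{\mathsf{row}})\to\operatorname{Tr}(A^{\mathsf{row}}) =: \lambda$; if $\lambda>0$ then $\lambda_n^{-1}(A_n,B_n,C_n)\in\TCP_d^{(1)}$ converges to $\lambda^{-1}(A,B,C)$, which lies in the compact (hence closed) set $\TCP_d^{(1)}$, so $(A,B,C)\in\TCP_d$; if $\lambda = 0$ then $A = 0$, and by part (3) of Lemma~\ref{lemma:tcp-properties} applied to $(A_n,B_n,C_n)$ we get $|B_{n,ij}|^2\le A_{n,ij}A_{n,ji}\to 0$ and similarly for $C_n$, so $B = C = 0$ and $(A,B,C) = 0\in\TCP_d$.

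The main obstacle is the first step: establishing the uniform bound $N(d)$ on the number of terms needed in a TCP decomposition. This is a Carathéodory argument, but one must be slightly careful because the "atoms" $(\ket{v\odot\bar v}\bra{w\odot\bar w},\,\ket{v\odot w}\bra{v\odot w},\,\ket{v\odot\bar w}\bra{v\odot\bar w})$ do not individually lie in a pointed cone (the first component $\ket{v\odot\bar v}\bra{w\odot\bar w}$ need not be positive semidefinite, only entrywise real-representable after summing). The clean way around this is to apply Carathéodory in the real vector space $\MLDOI{d}\cong\mathbb{R}^{3d^2-2d}$ to the convex cone generated by the atom set $\mathcal{A} := \{(A_v,B_v,C_v) : v,w\in\mathbb{C}^d\}$: $\TCP_d$ is by definition the conic hull of $\mathcal{A}$, and by Carathéodory for cones every element of $\operatorname{cone}(\mathcal{A})$ is a nonnegative combination of at most $\dim\MLDOI{d} = 3d^2-2d$ atoms, giving $N(d) = 3d^2-2d$. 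Once this bound is in hand, the rest is routine compactness bookkeeping.
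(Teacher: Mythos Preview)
Your argument is correct, but it is a genuinely different route from the paper's. The paper's proof is a two-line reduction: it invokes the known fact (cited from Watrous) that the cone $\SEP_d$ of separable bipartite matrices is closed, and then observes that $\TCP_d = [X^{(3)}]^{-1}(\SEP_d \cap \LDOI_d)$ is the preimage of a closed set under a linear (hence continuous) bijection, so it is closed; similarly for $\PCP_d$. Your approach instead reproves this closedness from scratch for the invariant cones, via a Carath\'eodory bound on the number of terms in a TCP decomposition followed by a normalization-and-compactness argument and a direct sequential check (including the degenerate trace-zero case handled via Lemma~\ref{lemma:tcp-properties}(3)). This is exactly the shape of the standard proof that $\SEP_d$ itself is closed, specialized to the LDOI slice. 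What the paper's route buys is brevity and modularity: once separability-closedness is on the shelf, the invariant versions follow for free via the isomorphisms $X^{(i)}$. What your route buys is self-containment: you never invoke an external closedness result, and along the way you obtain the useful byproduct that any TCP triple admits a decomposition with at most $\dim_{\mathbb R}\LDOI_d^{sa} = 3d^2 - 2d$ terms. Your Step~2 exposition is a bit tangled (the aside about ``$\TCP_d$ is closed as the image of $[0,\infty)\times\TCP_d^{(1)}$'' is not literally valid as stated---continuous images of closed sets need not be closed), but the sequential argument you give at the end of Step~3 is correct and does all the work; you could safely delete the false start in Step~2 and go directly to that.
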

\begin{proof}
    The proof is a simple consequence of the fact that the cone of separable bipartite matrices $\SEP_d := \operatorname{conv}\{ X \otimes Y \, : X,Y \in \PSD_d\} \subset \Msa{d^2}$ is closed \cite[Propositions 6.3 and 6.8]{watrous2018theory}. This is because from Theorem \ref{theorem:LDOI-sep}, we have:
    \begin{align*}
        \PCP_d &= \left[X^{(1)}\right]^{-1}(\SEP_d \cap \LDUI_d)\\
        &= \left[X^{(2)}\right]^{-1}(\SEP_d \cap \CLDUI_d),\\
        \TCP_d &= \left[X^{(3)}\right]^{-1}(\SEP_d \cap \LDOI_d),
    \end{align*}
where the continuous maps $X^{(\cdot)}$ have been defined in Section \ref{sec:LDUI-CLDUI-LDOI}. 
\end{proof}

We now examine the stability of the $\PCP_d$ and $\TCP_d$ cones under the action of taking direct sums and principal subtriples (refer to Section~\ref{sec:linear-structure} for the relevant definitions).
\begin{proposition}\label{prop:stable_principal}
For a triple $(A,B,C)\in \MLDOI{d}$, the following implication holds for all index sets $I\subset [d]$:
$(A,B,C)\in \TCP_d \implies (A[I],B[I],C[I])\in \TCP_{|I|}$.
\end{proposition}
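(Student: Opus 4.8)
The plan is to prove the implication directly from Definition~\ref{def:tcp}, by restricting a TCP decomposition of $(A,B,C)$ to the rows indexed by $I$. Suppose $(A,B,C)\in \TCP_d$ and let $V,W\in \M{d,d'}$ be matrices realizing its TCP decomposition, i.e.\ satisfying Eq.~\eqref{eq:def-TCP}. For an index set $I\subset [d]$, write $V_I,W_I\in \M{|I|,d'}$ for the matrices obtained from $V$ and $W$ by keeping only the rows with index in $I$ (and all columns). I claim that $(V_I,W_I)$ is a TCP decomposition of the principal subtriple $(A[I],B[I],C[I])$, which immediately yields $(A[I],B[I],C[I])\in \TCP_{|I|}$.

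The verification is routine: the Hadamard product and ordinary matrix multiplication both act independently on the rows of their left factor, so passing to the row-restriction commutes with all the operations appearing in Eq.~\eqref{eq:def-TCP}. Concretely, for $i,j\in I$,
\[
\big[(V_I\odot \overbar{V_I})(W_I\odot \overbar{W_I})^*\big]_{ij} = \sum_{k=1}^{d'} |V_{ik}|^2|W_{jk}|^2 = A_{ij} = (A[I])_{ij},
\]
and in exactly the same way $\big[(V_I\odot W_I)(V_I\odot W_I)^*\big]_{ij}=B_{ij}=(B[I])_{ij}$ and $\big[(V_I\odot \overbar{W_I})(V_I\odot \overbar{W_I})^*\big]_{ij}=C_{ij}=(C[I])_{ij}$. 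This establishes the claim. The analogous statement for $\PCP_d$ (and, with the two diagonal blocks in place of a row-restriction, the stability of both cones under bipartite direct sums) follows by the same mechanism.

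A more conceptual alternative proof runs through separability. By Theorem~\ref{theorem:LDOI-sep}, $(A,B,C)\in \TCP_d$ is equivalent to $X^{(3)}_{(A,B,C)}$ being separable; Proposition~\ref{prop:principal_triple} gives $(P\otimes P)X^{(3)}_{(A,B,C)}(P\otimes P)=X^{(3)}_{(A[I],B[I],C[I])}$ for $P=\sum_{i\in I}\ketbra{i}{i}$; and conjugation by $P\otimes P$ sends each product term $\ketbra{v}{v}\otimes \ketbra{w}{w}$ to $\ketbra{Pv}{Pv}\otimes \ketbra{Pw}{Pw}$, hence preserves separability, so $X^{(3)}_{(A[I],B[I],C[I])}$ is separable and a second application of Theorem~\ref{theorem:LDOI-sep} finishes the argument. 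Either way there is no serious obstacle; the only point worth noting is that the inner dimension $d'$ of the decomposition is unchanged by the restriction, so no closure or limiting argument is needed.
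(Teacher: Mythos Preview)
Your direct argument is correct and is exactly the paper's approach: restrict the rows of the matrices $V,W$ in a TCP decomposition to the index set $I$ and observe that the resulting $V_I,W_I$ decompose the principal subtriple. The paper leaves the entrywise verification as an exercise, whereas you spell it out and also sketch the separability-based alternative via Proposition~\ref{prop:principal_triple}; both are fine, with the first matching the paper's proof.
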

\begin{proof}
    Let $V,W\in \M{d,d'}$ form a TCP decomposition of $(A,B,C)$. Construct the $|I|\times d'$ matrices $V'$ and $W'$ by selecting the rows indexed by $I\subset [d]$ from $V$ and $W$, respectively. Then, it is an easy exercise to verify that $V',W'\in \M{|I|,d'}$ form a TCP decomposition of $(A[I],B[I],C[I])$.
\end{proof}

\begin{proposition}\label{prop:stable_directsum}
For triples $(A_1,B_1,C_1)\in \MLDOI{d_1}$ and $(A_2,B_2,C_2)\in \MLDOI{d_2}$, we have $(A_1\oplus A_2,B_1\oplus B_2,C_1\oplus C_2) \in \TCP_{d_1 + d_2} \iff  (A_1,B_1,C_1)\in \TCP_{d_1}$ and $(A_2,B_2,C_2)\in \TCP_{d_2}$.
\end{proposition}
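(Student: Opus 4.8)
The plan is to treat the two implications separately: ``$\Rightarrow$'' will be an immediate consequence of the principal-subtriple stability, while ``$\Leftarrow$'' reduces to a block-diagonal construction of TCP decompositions.

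For the forward implication, I would invoke Proposition~\ref{prop:stable_principal} with the two index sets $I_1 = \{1,\dots,d_1\}$ and $I_2 = \{d_1+1,\dots,d_1+d_2\}$. Since $(A_1\oplus A_2)[I_1] = A_1$, $(B_1\oplus B_2)[I_1] = B_1$, $(C_1\oplus C_2)[I_1] = C_1$ (and analogously for $I_2$), membership of $(A_1\oplus A_2, B_1\oplus B_2, C_1\oplus C_2)$ in $\TCP_{d_1+d_2}$ forces both $(A_1,B_1,C_1)\in \TCP_{d_1}$ and $(A_2,B_2,C_2)\in \TCP_{d_2}$.

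For the reverse implication, I would take TCP decompositions $V_i, W_i \in \M{d_i, d_i'}$ of $(A_i,B_i,C_i)$ (Definition~\ref{def:tcp}) and form the block-diagonal matrices $V := V_1 \oplus V_2$ and $W := W_1 \oplus W_2$ in $\M{d_1+d_2,\,d_1'+d_2'}$. The key point is that the entrywise (Hadamard) product of two block-diagonal matrices sharing the same block pattern is again block-diagonal and is computed blockwise, so $V\odot\overbar V = (V_1\odot\overbar{V_1})\oplus(V_2\odot\overbar{V_2})$, and likewise for $W\odot\overbar W$, $V\odot W$, and $V\odot\overbar W$. Since the ordinary matrix product of block-diagonal matrices is also computed blockwise, the three expressions $(V\odot\overbar V)(W\odot\overbar W)^*$, $(V\odot W)(V\odot W)^*$, $(V\odot\overbar W)(V\odot\overbar W)^*$ split into their $d_1$- and $d_2$-blocks, evaluating to $A_1\oplus A_2$, $B_1\oplus B_2$, $C_1\oplus C_2$ respectively by the assumed decompositions of $(A_i,B_i,C_i)$. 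Hence $(V,W)$ is a TCP decomposition of the direct sum.

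I do not expect a genuine obstacle; the only subtlety is to keep the two inner dimensions $d_1'$ and $d_2'$ distinct when assembling $V$ and $W$ (no padding to a common value is needed or desirable). As an alternative route for ``$\Leftarrow$'' one could combine Proposition~\ref{prop:directsum_triple} with Theorem~\ref{theorem:LDOI-sep} and the fact that the bipartite direct sum of two separable bipartite matrices is separable, but the explicit construction above is both shorter and more transparent.
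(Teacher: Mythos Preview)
Your proposal is correct and follows essentially the same approach as the paper: the forward implication via Proposition~\ref{prop:stable_principal} applied to the two obvious index sets, and the reverse implication by taking block-diagonal direct sums $V_1\oplus V_2$, $W_1\oplus W_2$ of the individual TCP decompositions. Your write-up is in fact more detailed than the paper's, which simply asserts that these direct sums work without spelling out the blockwise behavior of the Hadamard and matrix products.
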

\begin{proof}
    The forward implication follows from Proposition~\ref{prop:stable_principal}. For the reverse implication, we observe that if $V_1,W_1\in \M{d_1,d'}$ and $V_2,W_2\in \M{d_2,d''}$ form TCP decompositions of $(A_1,B_1,C_1)$ and $(A_2,B_2,C_2)$, respectively (see Definition~\ref{def:tcp}), then the direct sums $V_1\oplus V_2, W_1\oplus W_2 \in \M{d_1+d_2, d'+d''}$ form a TCP decomposition of $(A_1\oplus A_2,B_1\oplus B_2,C_1\oplus C_2)$.
\end{proof}

Let us now provide two sufficient conditions for membership inside the TCP cone, which come from sufficient separability criteria proven in \cite{gurvits2002largest} and \cite{nechita2018separability} respectively. 

\begin{lemma}
Consider a triple $(A,B,C)\in \MLDOI{d}$ such that $X^{(3)}_{(A,B,C)} \in \LDOI^+_d$ and, moreover, 
$$\sum_{i,j} A_{ij}^2 + \sum_{i \neq j} |B_{ij}|^2 + \sum_{i \neq j} |C_{ij}|^2 \leq \frac{1}{d^2-1} \left( \sum_{i,j} A_{ij} \right)^2,$$
where $i,j \in [d]$. Then, $(A,B,C) \in \TCP_d$, i.e.~$X^{(3)}_{(A,B,C)}$ is separable.
\end{lemma}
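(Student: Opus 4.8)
The plan is to invoke the quantitative separability criterion of Gurvits and Barnum \cite{gurvits2002largest}, which states that any bipartite state $\rho \in \M{d}\otimes \M{d}$ (normalized, $\operatorname{Tr}\rho = 1$) satisfying $\|\rho - \mathbb{I}_{d^2}/d^2\|_{HS}^2 \leq 1/(d^2(d^2-1))$ is separable; equivalently, for an unnormalized positive semidefinite $X$ with $t := \operatorname{Tr} X$, the condition $\|X - (t/d^2)\mathbb{I}_{d^2}\|_{HS}^2 \leq t^2/(d^2(d^2-1))$ suffices for separability. So first I would verify $X^{(3)}_{(A,B,C)} \in \PSD_{d^2}$ (given by hypothesis) and compute $t = \operatorname{Tr} X^{(3)}_{(A,B,C)} = \sum_{i,j} A_{ij}$ using Proposition~\ref{prop:conditional-expectations}.

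The key computation is then to express $\|X^{(3)}_{(A,B,C)}\|_{HS}^2$ in terms of the matrices $A, B, C$. Using the coordinate description in Eq.~\eqref{eq:X3-coord} (or the block structure in Proposition~\ref{prop:block-structure}), the squared Hilbert–Schmidt norm decomposes cleanly: the diagonal entries $X(i_1i_2,i_1i_2) = A_{i_1i_2}$ contribute $\sum_{i,j}|A_{ij}|^2 = \sum_{i,j}A_{ij}^2$ (the entries are real since $X$ is self-adjoint), the entries $X(ii,jj) = B_{ij}$ for $i\neq j$ contribute $\sum_{i\neq j}|B_{ij}|^2$, and the entries $X(ij,ji) = C_{ij}$ for $i\neq j$ contribute $\sum_{i\neq j}|C_{ij}|^2$. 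These three index sets are disjoint (and disjoint from all the zero entries), so
\[
\|X^{(3)}_{(A,B,C)}\|_{HS}^2 = \sum_{i,j} A_{ij}^2 + \sum_{i\neq j}|B_{ij}|^2 + \sum_{i\neq j}|C_{ij}|^2.
\]
Next, since $\operatorname{Tr}(X^{(3)}_{(A,B,C)} \cdot \mathbb{I}_{d^2}/d^2) = t/d^2$, the distance to the maximally mixed direction satisfies $\|X - (t/d^2)\mathbb{I}_{d^2}\|_{HS}^2 = \|X\|_{HS}^2 - t^2/d^2$.

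Plugging in, the Gurvits–Barnum condition $\|X\|_{HS}^2 - t^2/d^2 \leq t^2/(d^2(d^2-1))$ rearranges to $\|X\|_{HS}^2 \leq t^2/d^2 \cdot (1 + 1/(d^2-1)) = t^2/(d^2-1)$, i.e.
\[
\sum_{i,j} A_{ij}^2 + \sum_{i\neq j}|B_{ij}|^2 + \sum_{i\neq j}|C_{ij}|^2 \leq \frac{1}{d^2-1}\left(\sum_{i,j}A_{ij}\right)^2,
\]
which is exactly the hypothesis. (Note that since $\operatorname{diag}(B) = \operatorname{diag}(C) = \operatorname{diag}(A)$, one may freely replace $\sum_{i\neq j}$ by $\sum_{i,j}$ for $B$ and $C$ by adjusting $A$-terms; but the hypothesis as written with $\sum_{i\neq j}$ is the tighter and correct form matching $\|X\|_{HS}^2$.) Hence $X^{(3)}_{(A,B,C)}$ is separable, and by Theorem~\ref{theorem:LDOI-sep} this means $(A,B,C) \in \TCP_d$.

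I do not anticipate a genuine obstacle here; the only points requiring care are (i) citing the Gurvits–Barnum ball criterion in the correct normalization for unnormalized $X$, and (ii) making sure the three families of nonzero matrix entries of $X^{(3)}_{(A,B,C)}$ are genuinely disjoint so that the Hilbert–Schmidt norm splits as a sum with no cross terms — this is immediate from inspecting which index quadruples $(i_1i_2,j_1j_2)$ fall into which case of Eq.~\eqref{eq:X3-coord}. The analogous statement for $\PCP_d$ (with $C$ absent) would follow from the same argument applied to $X^{(2)}_{(A,B)}$, but that is not needed here.
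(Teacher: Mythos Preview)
Your proof is correct and follows essentially the same approach as the paper: both invoke the Gurvits--Barnum separability ball criterion from \cite{gurvits2002largest}, compute $\operatorname{Tr}X^{(3)}_{(A,B,C)} = \sum_{i,j}A_{ij}$ and $\operatorname{Tr}\big(X^{(3)}_{(A,B,C)}\big)^2 = \sum_{i,j}A_{ij}^2 + \sum_{i\neq j}|B_{ij}|^2 + \sum_{i\neq j}|C_{ij}|^2$, and observe that the hypothesis is precisely the resulting inequality. The only cosmetic difference is that the paper states the criterion directly in the form $\operatorname{Tr}(X^2)\leq (\operatorname{Tr}X)^2/(d^2-1)$, whereas you route through the equivalent distance-to-maximally-mixed formulation before rearranging.
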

\begin{proof}
First, note that $X^{(3)}_{(A,B,C)} \in \LDOI^+_d$ implies that the matrix $A$ is real. The result follows from the separability criterion in \cite[Corollary 2]{gurvits2002largest}, which states that any positive semidefinite matrix $X \in \M{d} \otimes \M{d}$ such that 
    $$\Tr(X^2) \leq \frac{(\Tr X)^2}{d^2-1}$$
    is separable. Use $ \Tr X^{(3)}_{(A,B,C)} = \sum_{i,j} A_{ij}$ and $\Tr \left( X^{(3)}_{(A,B,C)} \right)^2 = \|A\|_{\operatorname{Fro}}^2 + \|\widetilde B\|_{\operatorname{Fro}}^2 + \|\widetilde C\|_{\operatorname{Fro}}^2$, where $||.||_{\operatorname{Fro}}$ denotes the \emph{Frobenius} norm on $\M{d}$.
\end{proof}
\begin{remark}
In the case where $X^{(3)}_{(A,B,C)}$ is a quantum state (i.e.~$\sum_{i,
j} A_{ij} =1$), the condition in the statement above is that $X^{(3)}_{(A,B,C)}$ belongs to the largest separable euclidean ball centered in the maximally mixed state $(\mathbb{I}_d \otimes \mathbb{I}_d) / d^2$, which has radius $1/\sqrt{d^2(d^2-1)}$.
\end{remark}

Recall from Proposition \ref{prop:conditional-expectations} that, given a matrix $A \in \M{d}$, we define the diagonal matrices 
$$A^{\mathsf{row}}_{ii} := \sum_{j=1}^d A_{ij} \qquad \text{ and } \qquad A^{\mathsf{col}}_{jj} := \sum_{i=1}^d A_{ij}.$$

\begin{lemma}
Consider a triple $(A,B,C)\in \MLDOI{d}$ such that $X^{(3)}_{(A,B,C)} \in \LDOI^+_d$ and
$$X^{(3)}_{(d+1)(A,B,C) - (A^\square,A^\square,A^\square)} \in \LDOI^+_d,$$
for either $\square = \mathsf{row}$ or $\square = \mathsf{col}$. Then, $(A,B,C) \in \TCP_d$, i.e.~$X^{(3)}_{(A,B,C)}$ is separable.
\end{lemma}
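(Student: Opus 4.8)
## Proof proposal

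The plan is to invoke the separability criterion of Nechita--collaborators from \cite{nechita2018separability}, which states (roughly) that a positive semidefinite bipartite matrix $X \in \M{d}\otimes\M{d}$ is separable provided that both $X$ and the ``dephased/rescaled'' matrix $(d+1)X - \operatorname{Tr}_1(X)\otimes I_d/d \cdot(\text{something})$ remain positive, i.e.~whenever $X$ lies sufficiently close to the appropriate reduced-state structure. More precisely, the relevant result asserts that if $X \succeq 0$ and $(d+1)X - I_d \otimes X_2 \succeq 0$ (with $X_2 = \operatorname{Tr}_1 X$), then $X$ is separable; and symmetrically with the roles of the two tensor legs exchanged. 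So the first step is to translate the hypothesis $X^{(3)}_{(d+1)(A,B,C) - (A^\square,A^\square,A^\square)} \in \LDOI_d^+$ into exactly the positivity condition required by that criterion.

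The key computation is to identify the reduced states. By Proposition~\ref{prop:conditional-expectations}, $[\operatorname{id}\otimes\operatorname{Tr}]X^{(3)}_{(A,B,C)} = A^{\mathsf{row}}$ and $[\operatorname{Tr}\otimes\operatorname{id}]X^{(3)}_{(A,B,C)} = A^{\mathsf{col}}$, both diagonal matrices. Now I would observe that for a diagonal matrix $D$, the product $I_d \otimes D$ (resp.~$D \otimes I_d$), viewed inside $\M{d}\otimes\M{d}$, is itself LDOI: indeed $I_d\otimes D$ is diagonal, hence by Example~\ref{eg:states-diag} it equals $X^{(3)}_{(A',B',C')}$ with $B'=C'=\operatorname{diag}(A')$; a quick check of the coordinate formula \eqref{eq:X3-coord} shows that taking $D = A^{\mathsf{row}}$ gives the triple $(A^{\mathsf{row}}, A^{\mathsf{row}}, A^{\mathsf{row}})$ (the row-sum diagonal matrix sits on the ``$B$'' diagonal blocks, and the off-diagonal contributions vanish because $D$ is diagonal). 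Hence
$$(d+1)X^{(3)}_{(A,B,C)} - I_d \otimes \big([\operatorname{Tr}\otimes\operatorname{id}]X^{(3)}_{(A,B,C)}\big) = X^{(3)}_{(d+1)(A,B,C) - (A^{\mathsf{col}},A^{\mathsf{col}},A^{\mathsf{col}})},$$
and similarly with $\square = \mathsf{row}$ on the other leg. Therefore the hypothesis of the lemma is precisely the hypothesis of the separability criterion from \cite{nechita2018separability}, applied to one of the two tensor factors.

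Granting the criterion, the conclusion is immediate: $X^{(3)}_{(A,B,C)}$ is separable, and then by Theorem~\ref{theorem:LDOI-sep} this is equivalent to $(A,B,C)\in\TCP_d$. One should also note that $X^{(3)}_{(A,B,C)}\in\LDOI_d^+$ forces $A$ to be real and $B,C$ self-adjoint (Lemma~\ref{lemma:LDOI-psd-ppt}), so everything lives in the correct real cones throughout.

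The main obstacle I expect is purely bookkeeping: making sure the normalization constants and the identification of $A^\square$ with the correct reduced state match the exact form of the criterion quoted from \cite{nechita2018separability} — in particular whether the criterion is stated with $I_d \otimes \operatorname{Tr}_1 X$ or with $\frac{1}{d}(\operatorname{Tr} X) I_d \otimes \cdots$, and correspondingly whether the factor is $(d+1)$ or $(d^2-1)$ or similar. Once the criterion is cited in the precise form ``$X\succeq 0$ and $(d+1)X - I_d\otimes X_2 \succeq 0$ $\implies$ $X$ separable'' (and its mirror with $X_1\otimes I_d$), the rest is the short LDOI dictionary translation above. A secondary, very minor point is to double-check via \eqref{eq:X3-coord} that $I_d\otimes D$ really does produce the triple $(D,D,D)$ and not, say, $(D, D, \operatorname{diag}D)$ — but since $D$ is diagonal all three coincide, so there is no genuine difficulty.
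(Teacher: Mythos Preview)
Your approach is essentially identical to the paper's own proof: both invoke the separability criterion from \cite[Proposition~11]{nechita2018separability} (stated exactly as $(d+1)X \geq \mathbb I_d \otimes [\operatorname{Tr}\otimes\operatorname{id}](X)$ or $(d+1)X \geq [\operatorname{id}\otimes\operatorname{Tr}](X)\otimes \mathbb I_d$) and then use Proposition~\ref{prop:conditional-expectations} to translate the partial-trace hypothesis into the LDOI triple language. Your bookkeeping worries are well-placed but resolve exactly as you anticipate, and the final appeal to Theorem~\ref{theorem:LDOI-sep} for the TCP equivalence is implicit in the paper as well.
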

\begin{proof}
    The result follows from the separability criterion in \cite[Proposition 11]{nechita2018separability}, which states that any positive semidefinite matrix $X \in \M{d} \otimes \M{d}$ such that either
    $$ (d+1)X \geq \mathbb I_d \otimes [\Tr \otimes \operatorname{id}](X) \qquad \text{or} \qquad (d+1)X \geq [\operatorname{id} \otimes \Tr](X) \otimes \mathbb I_d $$
    is separable. Use Proposition \ref{prop:conditional-expectations} to write the partial traces of LDOI matrices in terms of $A^{\mathsf{row}, \mathsf{col}  }$. 
\end{proof}
\begin{remark}
In order for the condition in the lemma above to hold, the matrix $A$ has to be row-(resp.~column-)balanced. For example, in the $\square = \mathsf{row}$ case, one needs that every element must not be much smaller that the average of its row:
$$\forall i,j\in[d]: \qquad A_{ij} \geq \frac{d}{d+1} \left( \frac 1 d \sum_{k=1}^d A_{ik} \right).$$
\end{remark}

\bigskip

We now proceed towards characterizing the extreme rays of the $\PCP_d$ and $\TCP_d$ cones, which will precisely translate into a characterization of the extreme rays of the cone of separable LDOI matrices through the $X^{(i)}$ isomorphisms (for $i=1,2,3$) from page 5. It is pertinent to point out here that although the extreme rays of the full separable cone $\SEP_d \subset \Msa{d^2}$ are already well-known (these are the rank one projectors onto product vectors), the restriction of the domain to the intersection $\LDOI_d \cap \SEP_d \subset \LDOI_d^{sa}$ might create new extremal rays which are not extremal in the full $\SEP_d$ cone (see also Remark~\ref{remark:extremal-separable} in this regard). Theorem~\ref{theorem:extremal-PCPTCP} below resolves these technicalities by providing a complete characterization of the extremal rays in $\PCP_d$ and $\TCP_d$.

\begin{theorem} \label{theorem:extremal-PCPTCP}
The extremal rays of the $\PCP_d$ cone are obtained from Eq~\eqref{eq:def-PCP} with matrices $V,W\in \M{d,1}$. In other words, the extremal rays are of the form ( $\ket{v},\ket{w} \in \mathbb C^d \setminus \{0\}$)
\begin{equation*}\label{eq:extremal-PCP}
    A = \ketbra{v \odot \bar v}{w \odot \bar w}, \qquad B = \ketbra{v \odot w}{v \odot w}.
\end{equation*}
A similar result holds for the $\TCP_d$ cone: the extremal rays are given by ( $\ket{v},\ket{w} \in \mathbb C^d \setminus \{0\}$)
$$A = \ketbra{v \odot \bar v}{w \odot \bar w}, \qquad B = \ketbra{v \odot w}{v \odot w}, \qquad C = \ketbra{v \odot \bar w}{v \odot \bar w}.$$
\end{theorem}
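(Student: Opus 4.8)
The plan is to prove both statements simultaneously, since the $\TCP_d$ argument specializes to $\PCP_d$ by dropping the matrix $C$ (or equivalently, by Theorem~\ref{theorem:LDOI-sep}, by working with separable LDOI/LDUI matrices). First I would observe that any element of $\TCP_d$ with decomposition $V,W \in \M{d,d'}$ can be written as a sum of $d'$ rank-one contributions: writing $\ket{v_k}, \ket{w_k}$ for the columns of $V,W$, we have
\begin{align*}
    A &= \sum_{k=1}^{d'} \ketbra{v_k \odot \bar v_k}{w_k \odot \bar w_k}, \\
    B &= \sum_{k=1}^{d'} \ketbra{v_k \odot w_k}{v_k \odot w_k}, \\
    C &= \sum_{k=1}^{d'} \ketbra{v_k \odot \bar w_k}{v_k \odot \bar w_k},
\end{align*}
and each summand $(\ketbra{v_k \odot \bar v_k}{w_k \odot \bar w_k}, \ketbra{v_k \odot w_k}{v_k \odot w_k}, \ketbra{v_k \odot \bar w_k}{v_k \odot \bar w_k})$ is itself TCP (with $d'=1$). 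This shows every ray of $\TCP_d$ is a sum of the claimed rays, so the claimed rays \emph{generate} the cone; it remains to show each such ray is genuinely \emph{extremal}, i.e.~indecomposable within $\TCP_d$.

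For extremality, suppose $(A,B,C) = (A_1,B_1,C_1) + (A_2,B_2,C_2)$ with both summands in $\TCP_d$, where $(A,B,C)$ is the rank-one triple built from fixed nonzero $\ket{v},\ket{w}$. The key structural input is that $B = \ketbra{v\odot w}{v \odot w}$ has rank one and is positive semidefinite, and by part (2) of Lemma~\ref{lemma:tcp-properties} both $B_1$ and $B_2$ are positive semidefinite; hence $B_1 = \lambda \ketbra{v \odot w}{v \odot w}$ and $B_2 = (1-\lambda)\ketbra{v\odot w}{v \odot w}$ for some $\lambda \in [0,1]$. The plan is then to propagate this rigidity to $A$ and $C$. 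Using the inequalities $A_{1,ij}A_{1,ji} \geq |B_{1,ij}|^2$ from part (3) of Lemma~\ref{lemma:tcp-properties} together with entrywise nonnegativity of $A_1, A_2$ and the fact that $A = \ketbra{v\odot \bar v}{w \odot \bar w}$ has a rigid support structure (its $(i,j)$ entry is $|v_i|^2 \overline{|w_j|^2}$, nonzero exactly when $v_i \neq 0$ and $w_j \neq 0$), one forces the support of $A_1$ to sit inside that of $A$, and then the saturation of the inequality $A_{ij}A_{ji} = |B_{ij}|^2$ for the original triple forces $A_{1,ij}A_{1,ji} = |B_{1,ij}|^2 = \lambda^2 |B_{ij}|^2$ as well, pinning down $A_1 = \lambda A$ up to the freedom in splitting a product into two factors — which one resolves using that $A_1 + A_2 = A$ entrywise with all entries nonnegative. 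An identical argument with $C$ in place of $A$ (using $A_{1,ij}A_{1,ji} \geq |C_{1,ij}|^2$) gives $C_1 = \lambda C$. Thus $(A_1,B_1,C_1) = \lambda(A,B,C)$, establishing extremality.

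The main obstacle I anticipate is the step pinning down $A_1 = \lambda A$ from the constraints $A_{1,ij}A_{1,ji} = \lambda^2 |B_{ij}|^2$, $A_{2,ij}A_{2,ji} = (1-\lambda)^2|B_{ij}|^2$, $A_{1,ij} + A_{2,ij} = A_{ij}$, all with nonnegative entries: on the diagonal this is immediate since $A_{ii} = |v_i|^4$ and $|B_{ii}| = |v_i|^2|w_i|^2$ wait — one must be careful that the diagonal constraint $\operatorname{diag}A = \operatorname{diag}B = \operatorname{diag}C$ ties the three matrices together, and the off-diagonal entries require an AM–GM-type argument showing that if $a_1 a_1' = \lambda^2 b$, $a_2 a_2' = (1-\lambda)^2 b$, $a_1+a_2 = a$, $a_1'+a_2' = a'$, and $a a' = b$, then necessarily $a_1 = \lambda a$, $a_1' = \lambda a'$. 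This is a short but slightly delicate two-variable optimization/equality-case analysis (essentially the equality case of Cauchy–Schwarz), and it is the one place where the proof is not purely formal. For the $\PCP_d$ case the same reasoning applies verbatim, omitting all statements about $C$, and I would remark that this recovers, via Theorem~\ref{theorem:LDOI-sep}, a description of the extremal rays of $\SEP_d \cap \LDOI_d$ that genuinely differs from the extremal rays of $\SEP_d$ itself (cf.~Remark~\ref{remark:extremal-separable}).
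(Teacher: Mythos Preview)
Your approach is essentially the same as the paper's: first decompose an arbitrary TCP element into its column contributions to show the rank-one triples generate the cone, then prove extremality by using the rank-one structure of $B$ to force $B_1=\lambda B$, and finally an AM--GM/equality-case argument on the entries $A_{ij}A_{ji}$ to propagate the scalar $\lambda$ to $A$ (and $C$). The paper carries out the extremality step by first refining an arbitrary decomposition into a sum of rank-one terms and then running the AM--GM saturation argument across all pairs of summands, whereas you work directly with two TCP summands; but this is only a cosmetic difference, and you have correctly isolated the one genuinely delicate point (the equality case forcing $A_1=\lambda A$).
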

\begin{proof}
    We prove the case of PCP extremal rays, leaving the similar discussion of TCP matrices to the reader. Let us first show that extremal rays $\mathbb R_+(A,B) \in \operatorname{ext} \PCP_d$ are of the form \eqref{eq:extremal-PCP}. Assume $A,B \neq 0$ are written as in \eqref{eq:def-PCP} with $V,W \in \mathcal M_{d,k}(\mathbb C)$. Let $\ket{v^{(1)}}, \ldots, \ket{v^{(k)}} \in \mathbb C^d$ (resp.~$\ket{w^{(1)}}, \ldots, \ket{w^{(k)}}$) be the columns of $V$ (resp.~$W$). A simple computation shows that 
    $$A = \sum_{t=1}^k A^{(t)} \qquad \text{ and } \qquad B = \sum_{t=1}^k B^{(t)},$$
    where $A^{(t)},B^{(t)}$ are constructed as in \eqref{eq:extremal-PCP} from $\ket{v^{(t)}},\ket{w^{(t)}}$. Since $(A,B)$ is extremal, we must have $\mathbb R_+(A,B) = \mathbb R_+(A^{(t)},B^{(t)})$ for all $t \in [k]$, proving the claim.
    
    Consider now arbitrary $\ket{v}, \ket{w}\in \mathbb C^d \setminus \{0\}$, the corresponding $A,B$ from \eqref{eq:extremal-PCP}, and let us show that $\mathbb R_+(A,B)$ is extremal. Assume that $(A,B) = \sum_{t=1}^k (A^{(t)},B^{(t)})$ for some pairs $(A^{(t)},B^{(t)})$ as in \eqref{eq:extremal-PCP}, defined via vectors $\ket{v^{(t)}},\ket{w^{(t)}} \in \mathbb C^d$, $t \in [k]$. Since
    $$B = \ketbra{v \odot w}{v \odot w} = \sum_{t=1}^k B^{(t)}$$
    and the matrix on the left hand side has unit rank (hence it is on an extremal ray in the cone of positive semidefinite matrices in $\M{d}$), there must exist scalars $\beta_t \in \mathbb C$ such that 
    $$\ket{v^{(t)} \odot w^{(t)}} = \beta_t \ket{v \odot w},$$
    which in turn is equivalent to the relation $B^{(t)} = |\beta_t|^2 B$ (hence $\sum_{t=1}^k |\beta_t|^2 = 1$). In order to conclude, we show next that the same colinearity relation holds for the $A$ matrices: $A^{(t)} = |\beta_t|^2 A$. First, note that the PCP pairs $(A^{(t)}, B^{(t)})$ satisfy the PPT condition $A^{(t)}_{ij}A^{(t)}_{ji} \geq |B^{(t)}_{ij}|^2$
    with equality, since both the LHS and the RHS above are equal to $|v^{(t)}_i|^2|v^{(t)}_j|^2|w^{(t)}_i|^2|w^{(t)}_j|^2$. Next, we compute 
    \begin{align*}
        A_{ij}A_{ji} &= |v_i|^2|v_j|^2|w_i|^2|w_j|^2\\
        &= \sum_{s,t=1}^k A^{(s)}_{ij} A^{(t)}_{ji} = \sum_{s,t=1}^k |v^{(s)}_i|^2|w^{(s)}_j|^2 \cdot |v^{(t)}_j|^2|w^{(t)}_i|^2\\
        &= \sum_{s=1}^k |v^{(s)}_i|^2|w^{(s)}_j|^2  |v^{(s)}_j|^2|w^{(s)}_i|^2 + \sum_{ s < t } \left( |v^{(s)}_i|^2|w^{(s)}_j|^2 |v^{(t)}_j|^2|w^{(t)}_i|^2  + |v^{(t)}_i|^2|w^{(t)}_j|^2 |v^{(s)}_j|^2|w^{(s)}_i|^2 \right).\\
    \end{align*}
   A simple application of the arithmetic-geometric mean inequality then yields:
    \begin{align*}
        A_{ij}A_{ji} &\geq |v_i|^2|v_j|^2|w_i|^2|w_j|^2 \left\{ \sum_{s=1}^k |\beta_s|^4 + \sum_{s < t} 2\sqrt{ |v^{(s)}_i|^2|w^{(s)}_j|^2 |v^{(t)}_j|^2|w^{(t)}_i|^2  \cdot |v^{(t)}_i|^2|w^{(t)}_j|^2 |v^{(s)}_j|^2|w^{(s)}_i|^2} \right\} \\
        &= |v_i|^2|v_j|^2|w_i|^2|w_j|^2 \left\{\sum_{s=1}^k |\beta_s|^4 + \sum_{1 \leq s < t \leq k} 2 |\beta_s|^2|\beta_t|^2 \right\} \\
        &= |v_i|^2|v_j|^2|w_i|^2|w_j|^2 \left\{ \sum_{s=1}^k |\beta_s|^2 \right\}^2 = A_{ij}A_{ji}.
        \end{align*}
Since the AM-GM inequality above is saturated, we get
$$\forall s,t \in [k], \, \forall i,j \in [d]: \qquad |v^{(s)}_i|^2|w^{(s)}_j|^2 |v^{(t)}_j|^2|w^{(t)}_i|^2 = |v^{(t)}_i|^2|w^{(t)}_j|^2 |v^{(s)}_j|^2|w^{(s)}_i|^2.$$
    In other words, we have $A^{(s)}_{ij}A^{(t)}_{ji} = A^{(t)}_{ij}A^{(s)}_{ji} \quad \forall s,t\in [k], \, \forall i,j\in [d]$. Multiplying this equation with $A^{(s)}_{ij}A^{(t)}_{ij}$ and using $A^{(t)}_{ij}A^{(t)}_{ji} = |B^{(t)}_{ij}|^2 = |\beta_t|^4 |B_{ij}|^2$, we obtain (note that we can take square roots since the entries of the matrices $A^{(\cdot)}$ are non-negative)
    $$\forall s,t \in [k], \, \forall i,j \in [d]: \qquad |\beta_t|^2 A^{(s)}_{ij} = |\beta_s|^2 A^{(t)}_{ij}.$$
    We now have
    $$\forall s\in [k]: \qquad |\beta_s|^2 A_{ij} = \sum_{t=1}^k |\beta_s|^2 A^{(t)}_{ij} = \sum_{t=1}^k |\beta_t|^2 A^{(s)}_{ij} = A^{(s)}_{ij},$$
    and the proof is complete. 
\end{proof}

\begin{remark}
The extremal rays of the cone of completely positive matrices (i.e.~PCP pairs of the form $(A,A)$) are well-known \cite[Section 2.2]{berman2003completely}. They are of the form 
$$A = \ketbra{v \odot \bar v}{v \odot \bar v},$$
for some vector $\ket{v} \in \mathbb C^d$. The situation is thus similar to the one described above. 
\end{remark}

\begin{remark}
Similar computations show that the extremal rays of the cone of TCP matrices of the form $(A,B,B)$ are those of the form 
\begin{equation}\label{eq:extremal-TCP-ABB}
    A = \ketbra{v \odot \bar v}{w \odot \bar w}, \qquad B = \ketbra{v \odot w}{v \odot w},
\end{equation}
for vectors $\ket{v},\ket{w}\in \C{d}$.
\end{remark}

\begin{remark} \label{remark:extremal-separable}
It is well known that the convex cone of separable matrices in $\Msa{d^2}$ has extreme rays spanned by tensor products of rank-one projections: 
$$ \operatorname{ext} \SEP_d = \{\mathbb R_+(|v\rangle\langle v| \otimes |w\rangle\langle w|)\}_{0\neq \ket{v},\ket{w} \in \mathbb{C}^d}.$$
Let $(A,B,C)\in \MLDOI{d}$ be obtained from the vectors $\ket{v},\ket{w} \neq 0$ as in Proposition~\ref{theorem:extremal-PCPTCP}, i.e., $(A,B,C)$ is on an extremal ray in $\TCP_d$. Then, from Theorem~\ref{theorem:extremal-PCPTCP}, it is easy to deduce that
$$   \operatorname{Proj}_{\LDOI}(|v\rangle\langle v| \otimes |w\rangle\langle w|) = X^{(3)}_{(A,B,C)}.$$
In other words, the local averaging operation with respect to the random diagonal orthogonal matrices establishes a one-one correspondence between extreme rays of the cone of separable matrices in $\Msa{d^2}$ and $\LDOI_d^{sa}$.
\end{remark}

\begin{proposition}
Let $0 \neq \ket{v},\ket{w} \in \mathbb C^d$ be two non-zero vectors. The ranks of the extremal invariant separable matrices from Theorem \ref{theorem:extremal-PCPTCP} are as follows. Writing
$$A = \ketbra{v \odot \bar v}{w \odot \bar w}, \qquad B =  \ketbra{v \odot w}{v \odot w}, \qquad C = \ketbra{v \odot \bar w}{v \odot \bar w},$$
we have
\begin{align*}
    \operatorname{rank} X^{(1)}_{(A,C)} &=  \sigma(v)\sigma(w) - \left(\frac{\sigma(v \odot w)^2 - \sigma(v \odot w)}{2}\right), \\   
    \operatorname{rank} X^{(2)}_{(A,B)} &= \sigma(v)\sigma(w) - (\sigma(v \odot w) - \mathbbm{1}_{\sigma(v \odot w) > 0}), \\   
    \operatorname{rank} X^{(3)}_{(A,B,C)} &= \sigma(v)\sigma(w) - \left(\frac{\sigma(v \odot w)^2 + \sigma(v \odot w)}{2} - \mathbbm{1}_{\sigma(v \odot w) > 0}\right) ,
\end{align*}
where $\sigma(z)$ is the size of the support of the vector $\ket z \in \mathbb C^d$: 
$$\sigma(z) := |\{i \in [d] \, : \, z_i \neq 0\}|.$$
In particular, the ranks of extremal separable invariant states can be as high as 
\begin{align*}
    \operatorname{rank} X^{(1)}_{(A,C)} &\leq  \frac{d(d+1)}{2}, \\
    \operatorname{rank} X^{(2)}_{(A,B)} &\leq d^2-d+1, \\
    \operatorname{rank} X^{(3)}_{(A,B,C)} &\leq \frac{d(d-1)}{2} + 1,
\end{align*}
with the extremal values being attained for fully supported vectors $\ket v, \ket w$.
\end{proposition}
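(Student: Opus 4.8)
The plan is to read everything off Corollary~\ref{cor:rank-X} (equivalently, the block decomposition of Proposition~\ref{prop:block-structure}): each of the three ranks is a sum of $\operatorname{rank} B$ (or $|\{i:A_{ii}\neq 0\}|$), an off‑diagonal count of nonzero entries of $A$, and a sum of ranks of the $2\times 2$ blocks $\left[\begin{smallmatrix} A_{ij} & C_{ij} \\ C_{ji} & A_{ji}\end{smallmatrix}\right]$ over $i<j$. So the first step is to write out the entries of the extremal triple: from $A=\ketbra{v\odot\bar v}{w\odot\bar w}$, $B=\ketbra{v\odot w}{v\odot w}$, $C=\ketbra{v\odot\bar w}{v\odot\bar w}$ one gets $A_{ij}=|v_i|^2|w_j|^2$, $B_{ij}=v_iw_i\overline{v_jw_j}$, and $C_{ij}=v_i\bar w_i\overline{v_j\bar w_j}$. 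Throughout I would write $S_v,S_w\subseteq[d]$ for the supports of $\ket v,\ket w$ and set $m:=\sigma(v\odot w)=|S_v\cap S_w|$.

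Next I would dispose of the easy pieces. The matrices $B$ and $C$ have rank at most one, and rank exactly one iff $v\odot w\neq 0$ (note $\operatorname{supp}(v\odot\bar w)=S_v\cap S_w$ as well), so $\operatorname{rank} B=\mathbbm 1_{m>0}$. Since $A_{ii}\neq 0\iff i\in S_v\cap S_w$, we have $|\{i:A_{ii}\neq 0\}|=m$; and since $A_{ij}\neq 0\iff i\in S_v$ and $j\in S_w$, the number of ordered off‑diagonal pairs with $A_{ij}\neq 0$ equals $\sigma(v)\sigma(w)-m$, which is exactly the off‑diagonal count appearing in the $X^{(2)}$ rank formula.

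For the $2\times 2$ blocks, a direct computation gives $\det\left[\begin{smallmatrix} A_{ij} & C_{ij} \\ C_{ji} & A_{ji}\end{smallmatrix}\right]=|v_i|^2|v_j|^2|w_i|^2|w_j|^2-|C_{ij}|^2=0$ (this is the equality case of the PPT inequality noted in the proof of Theorem~\ref{theorem:extremal-PCPTCP}), so each block has rank $0$ or $1$, and the block is nonzero iff $A_{ij}\neq 0$ or $A_{ji}\neq 0$, because $C_{ij}\neq 0$ already forces $A_{ij}\neq 0$. To count the pairs $\{i<j\}$ with nonzero block, observe that each such unordered pair accounts for $0$, $1$, or $2$ of the $\sigma(v)\sigma(w)-m$ ordered pairs with $A\neq 0$, the value $2$ occurring exactly when $i,j\in S_v\cap S_w$, i.e.\ for $\binom m2$ pairs; hence $\sum_{i<j}\operatorname{rank}\!\left[\begin{smallmatrix} A_{ij} & C_{ij} \\ C_{ji} & A_{ji}\end{smallmatrix}\right]=\bigl(\sigma(v)\sigma(w)-m\bigr)-\binom m2=\sigma(v)\sigma(w)-\tfrac{m^2+m}{2}$. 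Substituting $m$, $\mathbbm 1_{m>0}$, and this last quantity into the three lines of Corollary~\ref{cor:rank-X} and simplifying produces exactly the three displayed rank formulas (using $m+\binom m2=\tfrac{m^2+m}{2}$).

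Finally, for the ``in particular'' bounds I would note that each formula has the shape $\sigma(v)\sigma(w)-g(m)$ with $g$ a fixed nondecreasing function satisfying $g(0)=g(1)=0$, while $m$ ranges over $[\max(0,\sigma(v)+\sigma(w)-d),\,\min(\sigma(v),\sigma(w))]$; plugging in the smallest admissible $m$ and using $\sigma(v),\sigma(w)\le d$ together with $xy\le (x+y)^2/4$, an elementary case split on whether $\sigma(v)+\sigma(w)>d$ shows the maximum is attained at $\sigma(v)=\sigma(w)=m=d$, giving $d(d+1)/2$, $d^2-d+1$, and $d(d-1)/2+1$ respectively, with attainment witnessed directly by fully supported $\ket v,\ket w$. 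I expect the only mildly delicate point to be the double‑counting argument for the $2\times 2$ blocks and keeping the indicator $\mathbbm 1_{m>0}$ straight where it interacts with $g$; the rest is routine bookkeeping.
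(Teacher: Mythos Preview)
Your proposal is correct and follows essentially the same route as the paper: both compute the three ranks by plugging the explicit entries $A_{ij}=|v_i|^2|w_j|^2$, $B_{ij}=v_iw_i\overline{v_jw_j}$, $C_{ij}=v_i\bar w_i\overline{v_j}w_j$ into the formulas of Corollary~\ref{cor:rank-X}, observe that each $2\times 2$ block is singular, and then count the nonzero blocks. Your double-counting argument (ordered pairs minus the $\binom{m}{2}$ pairs counted twice) is a slight repackaging of the paper's inclusion--exclusion computation $\sum_{i<j}\mathbbm 1_{v_iw_j\neq 0\text{ or }v_jw_i\neq 0}=\sigma(v)\sigma(w)-\sigma(v\odot w)-\tfrac{1}{2}(\sigma(v\odot w)^2-\sigma(v\odot w))$, and for the upper bounds the paper parameterizes by $x=m$, $a=\sigma(v)-x$, $b=\sigma(w)-x$ under $x+a+b\leq d$ rather than your case split on $\sigma(v)+\sigma(w)\lessgtr d$; both optimizations are elementary and arrive at $\sigma(v)=\sigma(w)=m=d$.
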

\begin{proof}
The idea of the proof is to express the formulas from Corollary \ref{cor:rank-X} in terms of the size of the supports of the vectors $v,w,v \odot w$. Let us start with the case of $X^{(1)}_{(A,C)}$. On one hand, we have 
$$|\{i \in [d] \, : \, A_{ii} \neq 0\}| = |\{i \in [d] \, : \, v_iw_i \neq 0\}| = \sigma(v \odot w).$$
On the other hand, note that the matrix 
$$\begin{bmatrix} A_{ij} & C_{ij} \\ C_{ji} & A_{ji} \end{bmatrix} = \begin{bmatrix} |v_i|^2|w_j|^2 & v_i \overline{w_i} \overline{v_j}w_j \\ v_j \overline{w_j} \overline{v_i}w_i & |v_j|^2|w_i|^2 \end{bmatrix}$$
is singular, so its rank is either zero or one, depending on whether at least one of $v_iw_j$ or $v_jw_i$ is non-zero. We have thus
\begin{align*}
    \sum_{i<j} \operatorname{rank}\begin{bmatrix} A_{ij} & C_{ij} \\ C_{ji} & A_{ji} \end{bmatrix} &= \sum_{i<j} \mathbbm{1}_{v_iw_j \neq 0 \text{ or } v_jw_i \neq 0}\\
    &= \frac 1 2 \sum_{i \neq j} \left(\mathbbm{1}_{v_iw_j \neq 0} + \mathbbm{1}_{v_jw_i \neq 0} - \mathbbm{1}_{v_iw_j \neq 0 \text{ and } v_jw_i \neq 0}\right)\\
    &= \sum_{i \neq j} \mathbbm{1}_{v_iw_j \neq 0} - \frac 1 2 \sum_{i \neq j} \mathbbm{1}_{v_iw_jv_jw_i \neq 0}\\
    &= \sum_{i,j} \mathbbm{1}_{v_i \neq 0} \mathbbm{1}_{w_j \neq 0}
    - \sum_i \mathbbm{1}_{v_iw_i \neq 0} - \frac 1 2 \left(\sum_{i , j} \mathbbm{1}_{v_iw_i \neq 0}\mathbbm{1}_{v_jw_j \neq 0} - \sum_i \mathbbm{1}_{v_iw_i \neq 0} \right)\\
    &= \sigma(v)\sigma(w) - \sigma(v \odot w) - \frac 1 2 \left(\sigma(v \odot w)^2 - \sigma(v \odot w)\right).
\end{align*}
Putting everything together, we obtain the result from the statement. 

For the case of $ X^{(2)}_{(A,B)}$, note that we have, using again Corollary \ref{cor:rank-X},
\begin{align*}
    \operatorname{rank} X^{(2)}_{(A,B)} &= \operatorname{rank} \ketbra{v \odot w}{v \odot w} + |\{i \neq j \, : \, |v_i|^2 |w_j|^2 \neq 0\}|\\
    &= \mathbbm{1}_{\sigma(v \odot w) > 0} +[ \sigma(v)\sigma(w) - \sigma(v \odot w)],
\end{align*}
proving the claim.

Finally, the case of $X^{(3)}$ is left to the reader, as it can be easily deduced from the first two. Regarding the maximal values of the ranks, the claims can be proven by a careful analysis of the constrained optimization problems. We give below the proof in the CLDUI case, leaving the two others to the reader. Let us write $\sigma(v \odot w) = x$, $\sigma(v) = x+a$, $\sigma(w) = x+b$ for non-negative integers $x,a,b$. Given a triple $x,a,b$, there exist $\ket v, \ket w$ as above if and only if $x+a+b \leq d$, which is the only constraint we need to consider. With these new variables, we have
$$\sigma(v)\sigma(w) - \sigma(v \odot w) = x^2 + x(a+b-1) + ab,$$
which is a non-decreasing function in the integer $x \in [0,d]$. Hence, its maximum is attained at $x = d-(a+b)$. Using this value of $x$, the claimed upper bound (containing now the indicator function) reads $ab - (a+b)(d-1) \leq 0$, which can easily be checked to be true, equality being attained at $a=b=0$ corresponding to fully supported vectors $\ket v, \ket w$.
\end{proof}

To illustrate the results above, let us consider the extremal matrices $X^{(1,2,3)}$ corresponding to the choice $\ket v = \ket w = \ket{\operatorname{diag}\mathbb{I}_d}$ (the all-ones vector). We have in this case $A = B = C = \mathbb J_d$. So,
\begin{align*}
    X^{(1)}_{(A,C)} &= \mathbb I + F - P_{eq}= (\mathbb I - P_{eq}) + P_s - P_a = 2P_s - P_{eq},\\
    X^{(2)}_{(A,B)} &= \mathbb I + d P_\omega - P_{eq} = (\mathbb I - P_{eq}) + dP_\omega,\\
    X^{(3)}_{(A,B,C)} &= \mathbb I + d P_\omega + F - 2P_{eq} = 2(P_s - P_{eq}) + dP_\omega,
\end{align*}
where $F = P_s - P_a$ is the flip operator, $P_{s,a}$ are, respectively, the projections on the symmetric and the anti-symmetric subspace of $\mathbb C^d \otimes \mathbb C^d$, $P_\omega$ is the projection on the maximally entangled state, and $P_{eq}$ is the rank-$d$ projection
$$P_{eq} = \sum_{i=1}^d \ketbra{ii}{ii}.$$
Note that we have $P_\omega \leq P_{eq} \leq P_s \leq \mathbb I$ in the lattice of projections. 

\bigskip

Having described the convex structure of positive semidefinite and separable LDUI/CLDUI/LDOI matrices, we now move on to other convex cones relevant to quantum information theory. In the absence of invariance, there are five proper closed convex cones of $\Msa{d^2}$ which play crucial roles: 
\begin{align*}
    \SEP_d &= \{X \in \Msa{d^2} \, : \, X = \sum_{i=1}^r A_i \otimes B_i, \quad A_i,B_i \in \PSD_d \text{ and } r \in \mathbb{N} \},\\
    \mathsf{PPT}_d &= \{X \in \Msa{d^2} \, : \, X, X^\Gamma \in \PSD_{d^2}\},\\
    \PSD_{d^2} &= \{X \in \Msa{d^2} \, : \, \langle z| X |z \rangle \geq 0 \quad \forall z\in \mathbb C^{d^2}\},\\
    \mathsf{DEC}_d &= \{X \in \Msa{d^2} \, : \, X = A + B^\Gamma, \quad A,B \in \PSD_{d^2}\},\\
    \mathsf{BP}_d &= \{X \in \Msa{d^2} \, : \, \langle x \otimes y| X |x \otimes y \rangle \geq 0 \quad \forall x,y\in \mathbb C^d\}.    
\end{align*}
\begin{figure}[htbp]
    \centering
    \includegraphics{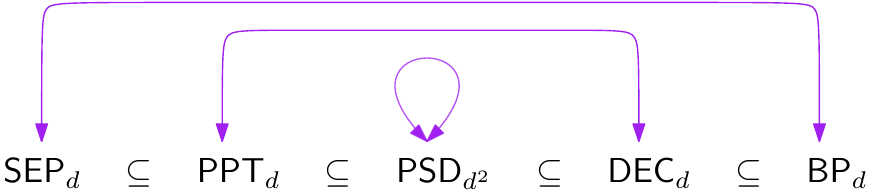}
    \caption{ Inclusion and duality (represented by arrows) relations for cones of bipartite matrices.}
    \label{fig:cones-PSD}
\end{figure}

These five cones, called respectively separable, positive partial transpose, positive semidefinite, decomposable and block-positive, satisfy the inclusion and duality relations from Figure \ref{fig:cones-PSD}. Observe that duality here is understood in the sense of convex cones. For example,
$$\SEP^*_d \coloneqq \{Y\in \Msa{d^2} \, : \, \langle X, Y \rangle = \operatorname{Tr}(X^* Y) \geq 0 \quad \forall X\in \SEP_d \} = \mathsf{BP}_d.$$
In other words, $X\in \Msa{d^2}$ is separable if and only if $\operatorname{Tr}(XY)\geq 0, \, \forall \, Y\in \mathsf{BP}_d$. Equivalently, $X$ is entangled if and only if there exits a block-positive $Y\in \mathsf{BP}_d$ such that $\operatorname{Tr}(XY)<0$. For $X\in \LDOI_d$, since the following equation holds for all $Y\in \M{d}\otimes \M{d}$: 
\begin{equation}\label{eq:proj-duality}
\operatorname{Tr}(XY) = \operatorname{Tr}[\operatorname{Proj}_{\LDOI}(X)\cdot Y] = \operatorname{Tr}[X \cdot \operatorname{Proj}_{\LDOI}(Y)],
\end{equation}
it is evident that $X$ is separable if and only if $\operatorname{Tr}(XY)\geq 0 \,\, \forall \, Y\in \mathsf{BP}_d \cap \LDOI_d$. 

We are thus led to the following definitions: 
\begin{align*}
    \LDOI_d^{\mathsf{SEP}} &= \LDOI_d \cap \SEP_d = \{ X^{(3)}_{(A,B,C)} \, : \, (A,B,C) \in \TCP_d \},\\
    \LDOI_d^{\mathsf{PPT}} &= \LDOI_d \cap \mathsf{PPT}_d\\
    &= \{ X^{(3)}_{(A,B,C)} \, : \, A \in \EWP_d, \, B,C \in \PSD_d, \, A_{ij}A_{ji} \geq \max(|B_{ij}|^2,|C_{ij}|^2) \, \forall i,j \in [d]\},\\
    \LDOI_d^+ &= \LDOI_d \cap \PSD_{d^2}\\
    &= \{ X^{(3)}_{(A,B,C)} \, : \, A \in \EWP_d, \, B \in \PSD_d, \, A_{ij}A_{ji} \geq |C_{ij}|^2 \, \forall i,j \in [d]\}= \left( \LDOI_d^+ \right)^*,
    \\
    \LDOI_d^{\mathsf{DEC}} &= \LDOI_d \cap \mathsf{DEC}_d = \left( \LDOI_d^{\mathsf{PPT}} \right)^*,\\
    \LDOI_d^{\mathsf{BP}} &= \LDOI_d \cap \mathsf{BP}_d = \left( \LDOI_d^{\mathsf{SEP}} \right)^*.
\end{align*}
\begin{figure}[H]
    \centering
    \includegraphics{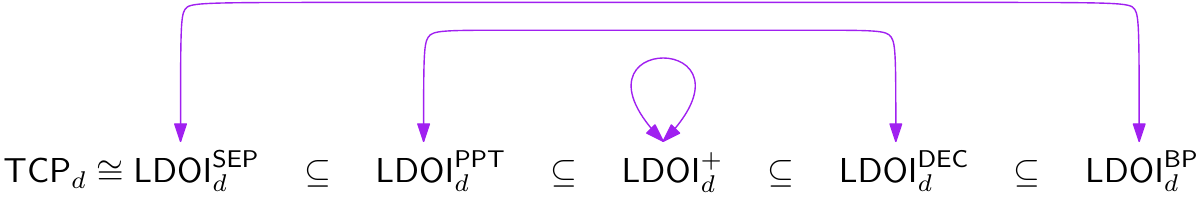}
    \caption{ Inclusion and duality (represented by arrows) relations for LDOI cones.}
    \label{fig:cones-LDOI}
\end{figure}
It is worthwhile to stress here that the above duality relations hold when the respective cones are understood as subsets of the corresponding vector space of self-adjoint LDOI matrices, that is $\LDOI_d^{sa}$. Similar inclusions and dualities hold for the LDUI / CLDUI cones, see Figure \ref{fig:cones-C-LDUI}. Note however that the positive semi-definite cones $\LDUI^+_d$ and $\CLDUI^+_d$ are \emph{not} isomorphic. 
\begin{figure}[htbp]
    \centering
    \includegraphics{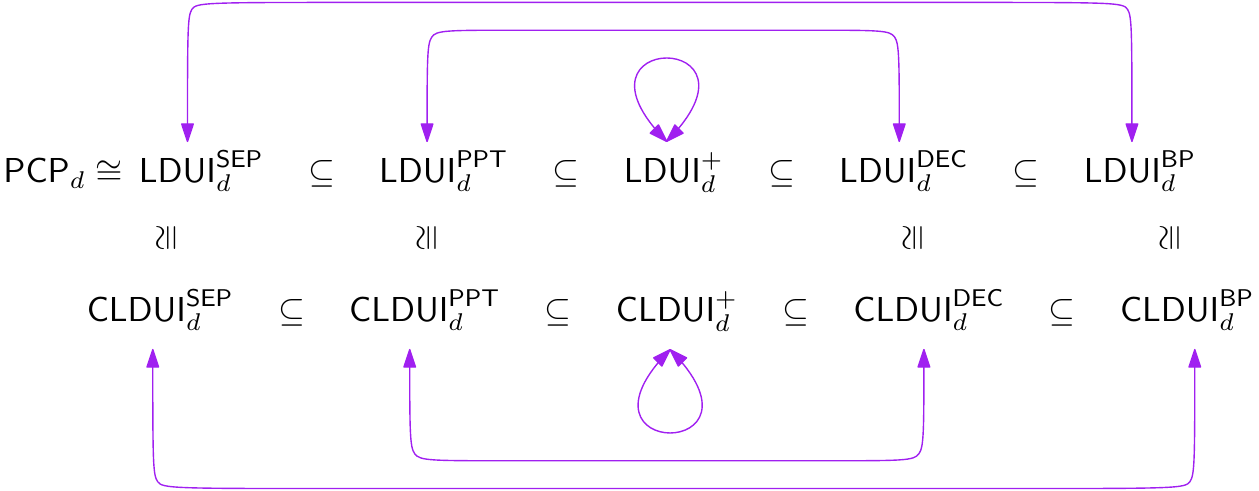}
    \caption{ Inclusion and duality (represented by arrows) relations for LDUI and CLDUI cones.}
    \label{fig:cones-C-LDUI}
\end{figure}

\section{Diagonal unitary and orthogonal covariant maps}\label{sec:DOC}
We denote the set of all linear maps $\Phi:\mathcal{M}_d(\mathbb{C})\rightarrow \mathcal{M}_d(\mathbb{C})$ by $\mathcal{T}_d(\mathbb{C})$. The Choi-Jamio{\l}kowski isomorphism identifies each map $\Phi \in \mathcal{T}_d(\mathbb{C})$ with a bipartite matrix $J(\Phi) \in \mathcal{M}_d(\mathbb{C})\otimes \mathcal{M}_d(\mathbb{C})$ (also called the Choi matrix of $\Phi$). In this section, we will use this isomorphism to study special families of covariant maps in $\mathcal{T}_d(\mathbb{C})$ by linking them with the families of local diagonal unitary/orthogonal invariant bipartite matrices from Section \ref{sec:LDUI-CLDUI-LDOI}. Before we begin, it is only fair to familiarize the readers with the basic theory of linear maps between matrix algebras. For a more detailed analysis, we refer the reader to \cite[Chapter 2]{watrous2018theory}, \cite[Chapters 2, 3]{bhatia2015positive}.

Consider a map $\Phi \in \mathcal{T}_d(\mathbb{C})$. $\Phi$ is called \emph{unital} if $\Phi (\mathbb{I}_d) = \mathbb{I}_d$, where $\mathbb{I}_d$ is the identity matrix in $\mathcal{M}_d(\mathbb{C})$. $\Phi$ is called \emph{trace preserving} if $\operatorname{Tr}(\Phi(Z))=\operatorname{Tr}(Z)$ for all $Z\in \mathcal{M}_d(\mathbb{C})$.  $\Phi$ is called \emph{hermiticity preserving} if it maps self-adjoint matrices to self-adjoint matrices in $\mathcal{M}_d(\mathbb{C})$. $\Phi$ is called \emph{positive} if $\Phi(Z)\in \PSD_d$ whenever $Z\in \PSD_d$. $\Phi$ is called \emph{completely positive} if the map $\operatorname{id} \otimes \, \Phi : \mathcal{M}_n(\mathbb{C}) \otimes \mathcal{M}_d(\mathbb{C}) \rightarrow \mathcal{M}_n(\mathbb{C}) \otimes \mathcal{M}_d(\mathbb{C})$ is positive for all $n\in \mathbb{N}$ (here $\operatorname{id}$ is the identity map in $\mathcal{T}_n(\mathbb{C})$). $\Phi$ is called \emph{completely copositive} if $\Phi \circ \top$ in $\mathcal{T}_d(\mathbb{C})$ is completely positive, where $\mathsf{T}$ is the transpose map in $\mathcal{T}_d(\mathbb{C})$. $\Phi$ is called \emph{decomposable} if it can be expressed as a sum of a completely positive and a completely copositive map and \emph{non-decomposable} otherwise. $\Phi$ is called PPT if it is both completely positive and completely copositive. $\Phi$ is called \emph{entanglement breaking} if $(\operatorname{id} \otimes \Phi) (X)$ is separable for all positive semi-definite $X\in \mathcal{M}_n(\mathbb{C}) \otimes \mathcal{M}_d(\mathbb{C})$. The \emph{dual} map $\Phi^*$ is defined as the unique adjoint of $\Phi$ with respect to the Hilbert-Schmidt inner product on $\M{d}$. With all the definitions in place, we now state the  Choi-Jamio{\l}kowski isomorphism in its full glory.

\begin{lemma}[Choi-Jamio{\l}kowski Isomorphism]  \cite{dePillis1967linear,Jamiokowski1972iso, Choi1975iso} \label{lemma:CJiso}
Define the linear bijection $J:\mathcal{T}_d(\mathbb{C}) \rightarrow \mathcal{M}_d(\mathbb{C}) \otimes \mathcal{M}_d(\mathbb{C})$ as $J(\Phi) = \sum_{i,j=1}^d \Phi(\vert i \rangle \langle j \vert) \otimes \vert i \rangle \langle j \vert$. Then, $\Phi\in \T{d}$ is
\begin{enumerate}
    \item hermiticity preserving if and only if $J(\Phi)$ is self-adjoint,
    \item positive if and only if $J(\Phi)$ is block positive, i.e., $\langle x \otimes y \vert J(\Phi) \vert x \otimes y \rangle \geq 0 \, \, \forall \ket{x}, \ket{y} \in \mathbb{C}^d$,
    \item completely positive if and only if $J(\Phi)$ is positive semi-definite,
    \item completely copositive if and only if $J(\Phi)^\Gamma$ is positive semi-definite,
    \item decomposable if and only if $\operatorname{Tr}(J(\Phi)X) \geq 0$ for all PPT matrices $X\in \mathcal{M}_d(\mathbb{C})\otimes \mathcal{M}_d(\mathbb{C})$,
    \item entanglement breaking if and only if $J(\Phi)$ is separable.
\end{enumerate}
\end{lemma}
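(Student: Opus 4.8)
The plan is to establish the bijection first and then the six equivalences in order, reducing items (2), (4), (5), (6) to the Choi positivity characterization (3) together with elementary pairings. First I would record that $J$ is invertible: since $\dim_{\mathbb C}\T{d} = d^4 = \dim_{\mathbb C}(\M{d}\otimes\M{d})$ and $J$ simply arranges the $d^2$ blocks $\Phi(\ketbra{i}{j})$ into a $d^2\times d^2$ array, it is a linear bijection, with inverse $\Phi(Z) = [\operatorname{id}\otimes\operatorname{Tr}]\big((\mathbb{I}_d\otimes Z^\top)J(\Phi)\big)$. It is convenient to set $\ket{\Omega} = \sum_{i=1}^d\ket{ii}$, so that $J(\Phi) = (\operatorname{id}_d\otimes\Phi)(\ketbra{\Omega}{\Omega})$. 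For (1), a direct computation gives $J(\Phi)^* = \sum_{i,j}\Phi(\ketbra{j}{i})^*\otimes\ketbra{i}{j}$; equating the coefficients of the linearly independent matrix units $\ketbra{i}{j}$ in the second tensor leg, $J(\Phi)$ is self-adjoint iff $\Phi((\ketbra{i}{j})^*) = \Phi(\ketbra{i}{j})^*$ for all $i,j$, which by linearity is hermiticity preservation.

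Item (3) is the heart of the proof. If $\Phi$ is completely positive, then $\operatorname{id}_d\otimes\Phi$ is positive, so it maps $\ketbra{\Omega}{\Omega}\in\PSD_{d^2}$ to $J(\Phi)\in\PSD_{d^2}$. Conversely, if $J(\Phi)\in\PSD_{d^2}$, write $J(\Phi) = \sum_k\ketbra{\kappa_k}{\kappa_k}$ and let $K_k\in\M{d}$ be the matrix whose reshaping is $\ket{\kappa_k}$, i.e., $\ket{\kappa_k} = (K_k\otimes\mathbb{I}_d)\ket{\Omega}$. Then the map $\Psi(Z) := \sum_k K_kZK_k^*$ has $J(\Psi) = \sum_k\big[(K_k\otimes\mathbb{I}_d)\ket{\Omega}\big]\big[(K_k\otimes\mathbb{I}_d)\ket{\Omega}\big]^* = \sum_k\ketbra{\kappa_k}{\kappa_k} = J(\Phi)$, hence $\Phi = \Psi$ by injectivity of $J$; and $\Psi$ is completely positive since for any $n$ and any $X\in\PSD_{nd}$ one has $(\operatorname{id}_n\otimes\Psi)(X) = \sum_k(\mathbb{I}_n\otimes K_k)X(\mathbb{I}_n\otimes K_k)^*\in\PSD_{nd}$.

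For (2), the identity $\langle x\otimes y|J(\Phi)|x\otimes y\rangle = \langle x|\Phi(\ketbra{\bar y}{\bar y})|x\rangle$ (obtained by expanding $\ketbra{\bar y}{\bar y} = \sum_{i,j}\bar y_i y_j\ketbra{i}{j}$) shows that $J(\Phi)$ is block positive iff $\langle x|\Phi(\ketbra{z}{z})|x\rangle\geq 0$ for all $\ket{x},\ket{z}$; since positivity of $\Phi$ already forces hermiticity preservation (the PSD cone spans $\Msa{d}$) and rank-one projections span $\PSD_d$, this is exactly positivity of $\Phi$. For (4), relabelling indices gives $J(\Phi)^\Gamma = J(\Phi\circ\top)$, so by (3) applied to $\Phi\circ\top$, $\Phi$ is completely copositive iff $J(\Phi)^\Gamma\in\PSD_{d^2}$.

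Items (5) and (6) then follow by combining the above with convexity. From (3), (4) and linearity of $J$, a map is decomposable iff $J(\Phi)\in\mathsf{DEC}_d$; since $\mathsf{DEC}_d$ is a closed convex cone with dual $\mathsf{DEC}_d^* = \mathsf{PPT}_d$ (recorded in Section~\ref{sec:convex-structure}), the bipolar theorem gives $J(\Phi)\in\mathsf{DEC}_d\iff\operatorname{Tr}(J(\Phi)X)\geq 0$ for all $X\in\mathsf{PPT}_d$. For (6): if $\Phi$ is entanglement breaking then $J(\Phi) = (\operatorname{id}_d\otimes\Phi)(\ketbra{\Omega}{\Omega})$ is separable by definition; conversely, refining a separable decomposition of $J(\Phi)$ into rank-one product terms $J(\Phi) = \sum_k\ketbra{p_k}{p_k}\otimes\ketbra{q_k}{q_k}$, the inverse formula yields the measure-and-prepare form $\Phi(Z) = \sum_k\langle\bar q_k|Z|\bar q_k\rangle\,\ketbra{p_k}{p_k}$, and for any positive $X\in\M{n}\otimes\M{d}$, $(\operatorname{id}_n\otimes\Phi)(X) = \sum_k\big[(\mathbb{I}_n\otimes\bra{\bar q_k})X(\mathbb{I}_n\otimes\ket{\bar q_k})\big]\otimes\ketbra{p_k}{p_k}$ is a sum of tensor products of positive semidefinite matrices, hence separable. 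The main obstacle is the converse direction of (3): turning positivity of the Choi matrix into a valid Kraus representation of $\Phi$ and checking that it reproduces $\Phi$; once this is done, (2) and (4) are short computations and (5), (6) are formal consequences of the cone dualities already established in the paper.
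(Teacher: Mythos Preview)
The paper does not supply its own proof of this lemma: it is stated as a classical result with citations to de Pillis, Jamio{\l}kowski, Choi, St{\o}rmer, and Horodecki et al., and the text immediately following the statement only attributes parts (5) and (6) to the relevant references. Your proof is the standard textbook argument and is essentially correct, so there is nothing substantive to compare against.

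One small slip worth fixing: with the paper's convention $J(\Phi) = \sum_{i,j}\Phi(\ketbra{i}{j})\otimes\ketbra{i}{j}$, the map $\Phi$ acts on the \emph{first} tensor leg, so $J(\Phi) = (\Phi\otimes\operatorname{id}_d)(\ketbra{\Omega}{\Omega})$, not $(\operatorname{id}_d\otimes\Phi)(\ketbra{\Omega}{\Omega})$. This does not affect any of your computations for (1)--(5), since those are symmetric in the two legs or already use the correct inverse formula. It does, however, make the forward direction of (6) slightly indirect: ``$\Phi$ entanglement breaking $\implies (\operatorname{id}\otimes\Phi)(X)$ separable'' does not literally apply to $J(\Phi) = (\Phi\otimes\operatorname{id})(\ketbra{\Omega}{\Omega})$. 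The fix is trivial (use the flip, or note that entanglement breaking is left--right symmetric), but you should state it.
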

In Lemma~\ref{lemma:CJiso}, part~$(5)$ appears in \cite{stormer1982}, while part~$(6)$ is due to \cite{horodecki2003entanglement}; see also \cite{girard2021convex} for a unified approach to the above presented results. 

The action of a map $\Phi$ and its adjoint $\Phi^*$ on $\mathcal{M}_d(\mathbb{C})$ can be retrieved from the Choi matrix $J(\Phi)$, as is depicted through the following equations:
\begin{align}
    \Phi(Z) = (\operatorname{id} \otimes \operatorname{Tr})[(\mathbb{I}_d \otimes Z^\top)J(\Phi)] &= \includegraphics[align=c]{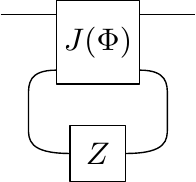} \label{eq:phi-action} \\[0.2cm]
    \Phi^*(Z) = (\operatorname{Tr} \otimes \operatorname{id})[J(\Phi)^\Gamma (Z\otimes \mathbb{I}_d)] &= \includegraphics[align=c]{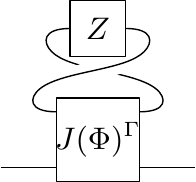}
\end{align}

This enables us to transform the unital and trace preserving property of $\Phi$ into partial trace conditions on its Choi matrix $J(\Phi)$, which forms the subject of the next Lemma. 

\begin{lemma}\label{lemma:unital-Tr-map}
For $\Phi \in \mathcal{T}_d(\mathbb{C})$, the following statements are equivalent:
\begin{enumerate}
    \item $\Phi$ is unital (resp.~trace preserving),
    \item $\Phi^*$ is trace preserving (resp.~unital),
    \item $(\operatorname{id} \otimes \operatorname{Tr})
[J(\Phi)]=\mathbb{I}_d$ (resp.~$\, (\operatorname{Tr} \otimes \operatorname{id})[J(\Phi)]=\mathbb{I}_d$).
\end{enumerate}
\end{lemma}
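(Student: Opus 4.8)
The plan is to prove the two chains of equivalences $(1)\Leftrightarrow(3)\Leftrightarrow(2)$ simultaneously in the ``unital'' reading and in the ``trace preserving'' reading, working directly with the Choi matrix $J(\Phi) = \sum_{i,j=1}^d \Phi(\ketbra{i}{j})\otimes\ketbra{i}{j}$.

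The first step is to compute the two partial traces of $J(\Phi)$ straight from the definition. Since $\operatorname{Tr}\ketbra{i}{j} = \delta_{ij}$, one obtains
$$(\operatorname{id}\otimes\operatorname{Tr})[J(\Phi)] = \sum_{i=1}^d \Phi(\ketbra{i}{i}) = \Phi(\mathbb{I}_d),$$
which yields $(1)\Leftrightarrow(3)$ in the unital reading outright. For the other reading,
$$(\operatorname{Tr}\otimes\operatorname{id})[J(\Phi)] = \sum_{i,j=1}^d \operatorname{Tr}(\Phi(\ketbra{i}{j}))\,\ketbra{i}{j},$$
and comparing this entry by entry with $\mathbb{I}_d = \sum_{i,j=1}^d \operatorname{Tr}(\ketbra{i}{j})\,\ketbra{i}{j}$ shows it equals $\mathbb{I}_d$ if and only if $\operatorname{Tr}(\Phi(\ketbra{i}{j})) = \operatorname{Tr}(\ketbra{i}{j})$ for all $i,j$; as $\{\ketbra{i}{j}\}_{i,j}$ is a basis of $\mathcal{M}_d(\mathbb{C})$ and both sides are linear in the argument, this is the same as $\operatorname{Tr}\circ\Phi = \operatorname{Tr}$, i.e.\ $\Phi$ being trace preserving. (Both identities can alternatively be read off from Eq.~\eqref{eq:phi-action} by setting $Z = \mathbb{I}_d$.)

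The second step is to obtain $(1)\Leftrightarrow(2)$ from the defining property of the adjoint, $\operatorname{Tr}(A^*\Phi(B)) = \operatorname{Tr}(\Phi^*(A)^*B)$ for all $A,B\in\mathcal{M}_d(\mathbb{C})$. Taking $A = \mathbb{I}_d$ gives $\operatorname{Tr}(\Phi(B)) = \operatorname{Tr}(\Phi^*(\mathbb{I}_d)^*B)$ for all $B$, so by nondegeneracy of the trace pairing $\Phi$ is trace preserving iff $\Phi^*(\mathbb{I}_d)^* = \mathbb{I}_d$ iff $\Phi^*$ is unital. Taking instead $B = \mathbb{I}_d$ gives $\operatorname{Tr}(A^*\Phi(\mathbb{I}_d)) = \overline{\operatorname{Tr}(\Phi^*(A))}$ for all $A$, so $\Phi$ is unital iff $\operatorname{Tr}(\Phi^*(A)) = \operatorname{Tr}(A)$ for all $A$ iff $\Phi^*$ is trace preserving. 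Combined with the first step, this closes both chains of equivalences.

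The argument is essentially bookkeeping, so there is no serious obstacle; the only mild subtleties are the complex conjugations introduced by the sesquilinear Hilbert--Schmidt inner product in the adjoint step, which cancel harmlessly because $\mathbb{I}_d$ is self-adjoint and a scalar vanishes exactly when its conjugate does, and the (routine) use of linearity to pass from agreement of $\operatorname{Tr}\circ\Phi$ and $\operatorname{Tr}$ on the elementary matrices $\ketbra{i}{j}$ to agreement on all of $\mathcal{M}_d(\mathbb{C})$.
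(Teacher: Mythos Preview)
Your proof is correct. The paper states this lemma without proof (treating it as a standard fact about the Choi--Jamio{\l}kowski isomorphism), so there is nothing to compare against; your direct computation of the two partial traces of $J(\Phi)$ together with the adjoint argument is exactly the standard route, and you handle the minor sesquilinearity bookkeeping cleanly.
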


In Quantum Mechanics, physically allowed operations (called \emph{quantum channels}) on quantum states are completely positive and trace preserving linear maps between the spaces of bounded operators on separable Hilbert spaces: $\Lambda : \mathcal{B}(\mathcal{H}) \rightarrow \mathcal{B}(\mathcal{H'})$, see \cite{Holevo2012channels, holevo2019channels}. While positivity and trace preserving property is expected to ensure that $\Lambda$ maps quantum states in $\mathcal{B}(\mathcal{H})$ to quantum states in $\mathcal{B}(\mathcal{H'})$, complete positivity stems from the physical restriction that a local quantum operation on an arbitrary multiparty system must also be positive. In a finite dimensional setting: $\mathcal{H}\simeq \mathbb{C}^d, \mathcal{H'}\simeq \mathbb{C}^{d'}$, quantum channels are precisely those linear maps $\Phi: \M{d} \rightarrow \M{d'}$ which are completely positive and trace preserving. Entanglement breaking maps represent noisy physical operations, so much so that a local partial action of such a map on a bipartite physical system destroys all entanglement --- no matter how strong --- present in the input state.

Positive but not completely positive maps, while not physically realizable, are important nevertheless, due to their crucial role in detecting entanglement of bipartite matrices. Using the duality relations in Figure~\ref{fig:cones-PSD} and the Choi-Jamio{\l}kowski isomorphism, it can be shown that a positive semi-definite matrix $X\in \M{d}\otimes \M{d}$ is separable if and only if 
\begin{equation} \label{eq:pos-sep}
    [\Phi \otimes \operatorname{id}](X) \in \PSD_{d^2}
\end{equation}
for all positive maps $\Phi\in \T{d}$. In other words, for every entangled $X\in \M{d}\otimes \M{d}$, there is a positive map $\Phi\in \T{d}$ such that $[\Phi\otimes \operatorname{id}](X) \notin \PSD_{d^2}$, which is said to ``detect'' the entanglement in $X$. Moreover, if $X$ is PPT entangled, then such a $\Phi$ must be non-decomposable. Obviously, Eq.~\eqref{eq:pos-sep} will hold for all positive semi-definite $X$ and completely positive $\Phi$. Hence, the important class of maps --- from the perspective of Entanglement Theory --- is the class of positive but not completely positive (also non-decomposable, if one wishes to study PPT entanglement) maps in $\T{d}$. See \cite[Section VI.B.2]{horodecki2009quantum} or \cite[Section 4]{Chruciski2014positive} and references therein for a much more thorough analysis of the role of positive maps in Entanglement Theory. 

This marks the end of our brief digression on the theory of linear maps between matrix algebras. We are now fully prepared to study different families of covariant maps in $\mathcal{T}_d(\mathbb{C})$. 

\begin{definition}\label{def:DUC-CDUC-DOC}
A linear map $\Phi \in \mathcal{T}_d(\mathbb{C})$ is said to be
\begin{itemize}
    \item \emph{Diagonal Unitary Covariant (DUC)} if 
    $$ \forall Z \in \M{d} \text{ and } U \in \mathcal{DU}_d: \qquad \Phi(UZU^*) = U^*\Phi(Z)U,$$
    \item \emph{Conjugate Diagonal Unitary Covariant (CDUC)} if 
    $$ \forall Z \in \M{d} \text{ and } U \in \mathcal{DU}_d: \qquad \Phi(UZU^*) = U\Phi(Z)U^*,$$
    \item \emph{Diagonal Orthogonal Covariant (DOC)} if 
    $$ \forall Z \in \M{d} \text{ and } O \in \mathcal{DO}_d: \qquad \Phi(OZO) = O\Phi(Z)O.$$
\end{itemize}

\end{definition}

The DUC and CDUC maps were introduced in \cite{liu2015unitary,lopes2015generic}, where they were dubbed \emph{mean unitary conjugation channels}; we use here a different name to mirror the case of invariant bipartite matrices. We will denote the sets of DUC, CDUC and DOC maps in $\T{d}$ by $\DUC_d$, $\CDUC_d$ and $\DOC_d$ respectively. Using the Choi-Jamio{\l}kowski isomorphism, we can immediately start to construct links between the diagonal unitary/orthogonal covariant maps in $\T{d}$ and the local diagonal unitary/orthogonal invariant matrices in $\M{d}\otimes \M{d}$. The following result is a pivotal step in this direction. 

\begin{theorem} \label{theorem:DUC/CDUC/DOC-LDUI/CLDUI/LDOI}
Consider a linear map $\Phi$ in $\mathcal{T}_d(\mathbb{C})$. Then, the following equivalences hold:
\begin{itemize}
    \item $\Phi\in \DUC_d$ if and only if the Choi matrix $J(\Phi)\in \LDUI_d$,
    \item $\Phi\in \CDUC_d$ if and only if the Choi matrix $J(\Phi)\in \CLDUI_d$,
    \item $\Phi\in \DOC_d$ if and only if the Choi matrix $J(\Phi)\in \LDOI_d$.
\end{itemize}
\end{theorem}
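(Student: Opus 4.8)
The plan is to exploit the explicit form of the Choi--Jamio{\l}kowski isomorphism together with the fact that $J$ is a linear \emph{bijection}, so that each covariance identity for $\Phi$ translates, upon applying $J$, into an identity between bipartite matrices which we then directly recognize as one of the invariance conditions in Definition~\ref{def:LDUI-CLDUI-LDOI}. All three cases are proved the same way, so I will describe the DUC/LDUI case in detail and indicate the (minor) changes for the other two. Since $J$ is a bijection, both implications will come at once: there is no need to treat ``$\Rightarrow$'' and ``$\Leftarrow$'' separately, and no positivity or trace-preservation hypothesis is used.

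First I would record the elementary computation underlying everything: for a diagonal unitary $U=\operatorname{diag}(u_1,\dots,u_d)\in\mathcal{DU}_d$ (so $|u_k|=1$ and $\bar u_k=u_k^{-1}$) one has $U\ketbra{i}{j}U^*=u_i\bar u_j\,\ketbra{i}{j}$, hence, by linearity of $\Phi$, $\Phi(U\ketbra{i}{j}U^*)=u_i\bar u_j\,\Phi(\ketbra{i}{j})$. Using $J(\Phi)=\sum_{i,j}\Phi(\ketbra{i}{j})\otimes\ketbra{i}{j}$ one expands
\[
(U\otimes U)\,J(\Phi)\,(U^*\otimes U^*)=\sum_{i,j=1}^d u_i\bar u_j\;U\Phi(\ketbra{i}{j})U^*\otimes\ketbra{i}{j}.
\]
Since the matrix units $\{\ketbra{i}{j}\}_{i,j}$ form a basis of the second tensor factor, the equality $(U\otimes U)J(\Phi)(U^*\otimes U^*)=J(\Phi)$ holds for a fixed $U$ if and only if $u_i\bar u_j\,U\Phi(\ketbra{i}{j})U^*=\Phi(\ketbra{i}{j})$ for all $i,j$; and $u_i\bar u_j\,U\Phi(\ketbra{i}{j})U^*=U\Phi(u_i\bar u_j\ketbra{i}{j})U^*=U\Phi(U\ketbra{i}{j}U^*)U^*$. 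Thus $J(\Phi)\in\LDUI_d$ iff $U\Phi(UZU^*)U^*=\Phi(Z)$ for every matrix unit $Z$ (hence, by linearity, for every $Z\in\M d$) and every $U\in\mathcal{DU}_d$, which after conjugating both sides by $U$ is exactly the DUC condition $\Phi(UZU^*)=U^*\Phi(Z)U$.

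For CDUC one repeats the computation with $(U\otimes U^*)J(\Phi)(U^*\otimes U)=\sum_{i,j}\bar u_iu_j\,U\Phi(\ketbra{i}{j})U^*\otimes\ketbra{i}{j}$, noting $\bar u_iu_j\,\ketbra{i}{j}=U^*\ketbra{i}{j}U$, so that $J(\Phi)\in\CLDUI_d$ is equivalent to $U\Phi(U^*ZU)U^*=\Phi(Z)$ for all $Z$ and all $U$, i.e.\ (conjugating by $U$, and using that $\mathcal{DU}_d$ is closed under $U\mapsto U^*$) to the CDUC condition. For DOC one uses $O=\operatorname{diag}(o_1,\dots,o_d)\in\mathcal{DO}_d$ with $o_k\in\{\pm1\}$ and $O^2=\mathbb I_d$, so $O\ketbra{i}{j}O=o_io_j\,\ketbra{i}{j}$ and $(O\otimes O)J(\Phi)(O\otimes O)=\sum_{i,j}o_io_j\,O\Phi(\ketbra{i}{j})O\otimes\ketbra{i}{j}$; then $J(\Phi)\in\LDOI_d$ is equivalent to $O\Phi(OZO)O=\Phi(Z)$ for all $Z,O$, i.e.\ to $\Phi(OZO)=O\Phi(Z)O$.

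The only delicate point is the bookkeeping of the phase factors: the conjugations on the two tensor legs must be matched so that, after moving $U^*\cdots U$ (or $U\cdots U^*$) past $\ketbra{i}{j}$, the resulting scalar $u_i\bar u_j$ (resp.\ $\bar u_iu_j$, resp.\ $o_io_j$) is precisely the one needed to turn $U\Phi(\ketbra{i}{j})U^*$ into $U\Phi(U^{\pm}\ketbra{i}{j}U^{\mp})U^*$; swapping the exponents would exchange LDUI and CLDUI. Alternatively, one can package the three computations into a single intertwining identity, $J\big(V\Phi(VZV^*)V^*\big)=(V\otimes V^\top)\,J(\Phi)\,(V^*\otimes\overbar V)$ valid for any $V\in\M d$, and then specialize $V$ to a diagonal unitary or diagonal orthogonal matrix, using $V^\top=V$ and $\overbar V=V^*$ in that case; this route trades the coefficient comparison for one more generic Choi-matrix computation.
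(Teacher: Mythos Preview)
Your proof is correct and follows essentially the same approach as the paper: both translate the covariance condition on $\Phi$ into the corresponding invariance condition on $J(\Phi)$ via the Choi--Jamio{\l}kowski isomorphism. The paper packages the computation in diagrammatic tensor-network notation (sliding $U$ and $U^*$ boxes along wires and using that diagonal matrices equal their transposes), whereas you carry out the same argument explicitly in coordinates by expanding against matrix units $\ketbra{i}{j}$; your alternative intertwining identity $J(\operatorname{Ad}_V\circ\Phi\circ\operatorname{Ad}_V)=(V\otimes V^\top)J(\Phi)(V^*\otimes\overbar V)$ is exactly what the paper's diagrams encode.
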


\begin{proof}
    Consider an arbitrary $Z\in \mathcal{M}_d(\mathbb{C})$ and $U\in \mathcal{DU}_d$. It is clear that $\Phi\in \DUC_d$ if and only if $\Phi(Z)=\Phi(U^*UZU^*U)=U\Phi(UZU^*)U^*$. Graphically, this condition is equivalent to the diagram given in Figure~\ref{fig:DUC}. It is evident then that $\Phi\in \DUC_d$ if and only if $(U\otimes U) J(\Phi) (U^* \otimes U^*) = J(\Phi)$, i.e., if and only if $J(\Phi) \in \LDUI_d$; this identity can be seen by sliding the $U$ and the $U^*$ boxes in the diagram, and then ignoring the transpositions, since the matrices are diagonal. Similarly, the results for the CDUC and DOC maps can be easily shown to be true, see Figures~\ref{fig:CDUC} and \ref{fig:DOC} below.
    
    \begin{figure}[H]
        \centering
        \includegraphics{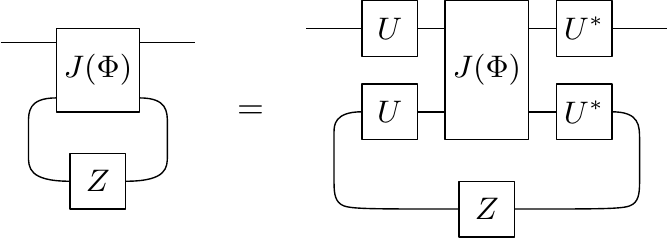}
        \caption{ The DUC condition for $\Phi\in \T{d}$ expressed in terms of $J(\Phi)$.}
        \label{fig:DUC}
    \end{figure}
    \begin{figure}[H]
        \centering
        \includegraphics{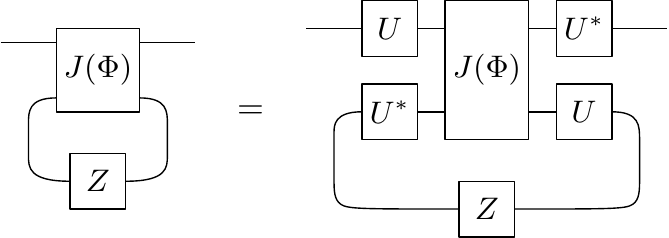}
        \caption{ The CDUC condition for $\Phi\in \mathcal{T}_d(\mathbb{C})$ expressed in terms of $J(\Phi)$.}
        \label{fig:CDUC}
    \end{figure}
    \begin{figure}[H]
        \centering
        \includegraphics{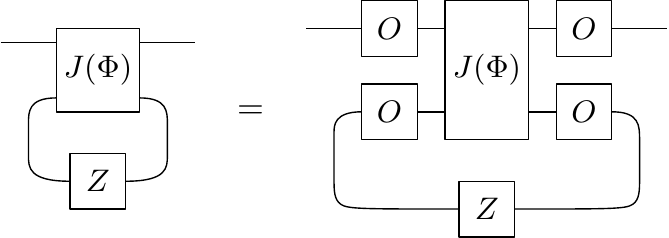}
        \caption{The DOC condition for $\Phi\in \mathcal{T}_d(\mathbb{C})$ expressed in terms of $J(\Phi)$. }
        \label{fig:DOC}
    \end{figure}
\end{proof}

With the help of Theorem~\ref{theorem:DUC/CDUC/DOC-LDUI/CLDUI/LDOI}, the task of extending the isomorphisms from Proposition~\ref{prop:LDOI-ABC} to the vector spaces of diagonal unitary/orthogonal covariant maps in $\T{d}$ becomes effortless:
\begin{align}
     \Phi^{(1)} \coloneqq J^{-1}\circ X^{(1)} :\MLDUI{d} &\rightarrow \DUC_d \nonumber \\
     (A,C) &\mapsto \Phi^{(1)}_{(A,C)}, \label{eq:DUCiso}  \\
     \Phi^{(2)} \coloneqq J^{-1}\circ X^{(2)} :\MLDUI{d} &\rightarrow \CDUC_d \nonumber \\
     (A,B) &\mapsto \Phi^{(2)}_{(A,B)},  \label{eq:CDUCiso}\\
     \Phi^{(3)} \coloneqq J^{-1}\circ X^{(3)} :\MLDOI{d} &\rightarrow \DOC_d \nonumber \\
     (A,B,C) &\mapsto \Phi^{(3)}_{(A,B,C)}. \label{eq:DOCiso}
\end{align}
We collect the explicit actions of $\Phi^{(1)}_{(A,B)}, \Phi^{(2)}_{(A,B)}$ and $\Phi^{(3)}_{(A,B,C)}$ on $\M{d}$ in the following equations:
\begin{equation}
    \forall Z\in \M{d}: \qquad \Phi^{(1)}_{(A,C)}(Z) = \includegraphics[align=c]{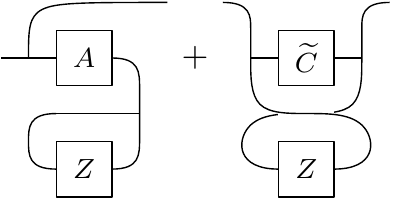} = \operatorname{diag}(A\ket{\operatorname{diag}Z}) + \widetilde{C}\odot Z^\top, \label{eq:DUC-action}
\end{equation}

\begin{align}
    \forall Z\in \M{d}: \qquad \Phi^{(2)}_{(A,B)}(Z) &= \includegraphics[align=c]{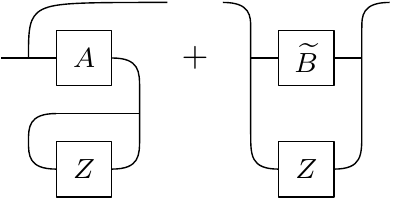} = \operatorname{diag}(A\ket{\operatorname{diag}Z}) + \widetilde{B}\odot Z, \label{eq:CDUC-action} \\
    \forall Z\in \M{d}: \quad\,\, \Phi^{(3)}_{(A,B,C)}(Z) &= \includegraphics[align=c]{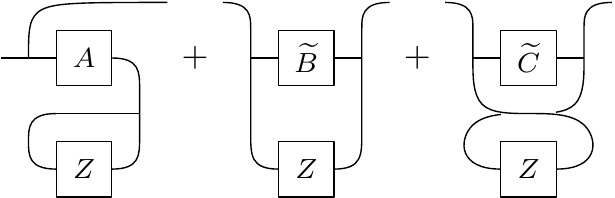} \nonumber \\[0.2cm] &=\operatorname{diag}(A\ket{\operatorname{diag}Z}) + \widetilde{B}\odot Z + \widetilde{C}\odot Z^\top. \label{eq:DOC-action}
\end{align}
Recall that $\widetilde{B} = B-\operatorname{diag}B$ and $\widetilde{C}=C-\operatorname{diag}C$. Let us take a moment to discuss the action of DOC maps on $\M{d}$ in some detail. In the quantum setting, where $\Phi^{(3)}_{(A,B,C)}$ is a quantum channel and $Z=\rho$ is a quantum state, the first term in Eq.~\eqref{eq:DOC-action} can be interpreted as a ``classical'' quantum operation. It takes in the classical probability distribution $\ket{\operatorname{diag}\rho}\in \mathbb{R}^d_+$ defined by the diagonal entries of the state $\rho$ and returns $A\ket{\operatorname{diag}\rho}$, which is again guaranteed to be a probability distribution since $\Phi^{(3)}_{(A,B,C)}$ is completely positive and trace preserving. The last two terms in the sum combine the action of the well known class of Schur multiplier maps (see Example~\ref{eg:maps-schur}) with the transposition map in $\T{d}$.

\begin{remark}\label{remark:choi(DOC)}
From the above discussed isomorphisms, it should be clear that $\DUC_d, \CDUC_d\subset \DOC_d$ are vector subspaces such that $\forall (A,B,C) \in \MLDOI{d}$ ($X^{(i)}$ are defined in Section \ref{sec:LDUI-CLDUI-LDOI}):
\begin{equation}
    J(\Phi^{(1)}_{(A,B)}) = X^{(1)}_{(A,B)}, \qquad J(\Phi^{(2)}_{(A,B)}) = X^{(2)}_{(A,B)}, \qquad J(\Phi^{(3)}_{(A,B,C)}) = X^{(3)}_{(A,B,C)}.
\end{equation}
\end{remark}

We now begin to study the properties of positivity and decomposability for maps in $\DOC_d$.
\begin{theorem}\label{theorem:DOC-pos-decomp}
For a linear map $\Phi^{(3)}_{(A,B,C)}\in \DOC_d$, the following equivalences hold:
\begin{itemize}
    \item $\Phi^{(3)}_{(A,B,C)}$ is positive $\iff$ for all extremal \emph{TCP} triples $(D,E,F)\in \MLDOI{d}$,  $$\operatorname{Tr}(AD^\top + \widetilde{B}\widetilde{E} + \widetilde{C}\widetilde{F}) \geq 0,$$ 
    \item $\Phi^{(3)}_{(A,B,C)}$ is decomposable $\iff$ for all $(D,E,F)\in \MLDOI{d}$ such that $X^{(3)}_{(D,E,F)}$ is \emph{PPT}, $$\operatorname{Tr}(AD^\top + \widetilde{B}\widetilde{E} + \widetilde{C}\widetilde{F}) \geq 0.$$
\end{itemize}
\end{theorem}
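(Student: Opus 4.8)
The plan is to treat both equivalences in parallel, reducing each to a bilinear pairing of $X^{(3)}_{(A,B,C)}$ against a test matrix, via the Choi--Jamio{\l}kowski isomorphism and the cone dualities of Figure~\ref{fig:cones-PSD}. By Remark~\ref{remark:choi(DOC)}, $J(\Phi^{(3)}_{(A,B,C)}) = X^{(3)}_{(A,B,C)}$, so Lemma~\ref{lemma:CJiso} gives: $\Phi^{(3)}_{(A,B,C)}$ is positive $\iff X^{(3)}_{(A,B,C)}\in \mathsf{BP}_d = \SEP_d^\ast$, i.e.~$\operatorname{Tr}(X^{(3)}_{(A,B,C)}Y)\geq 0$ for every separable $Y$; and $\Phi^{(3)}_{(A,B,C)}$ is decomposable $\iff \operatorname{Tr}(X^{(3)}_{(A,B,C)}X)\geq 0$ for every PPT matrix $X\in\M{d}\otimes\M{d}$. (In either case the condition forces $J(\Phi^{(3)}_{(A,B,C)})$ to be self-adjoint, i.e.~$A\in\Mreal{d}$ and $B,C\in\Msa{d}$, so all the traces appearing below are real.)

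Next I would restrict the test matrix to the $\LDOI_d$ subspace. Since $X^{(3)}_{(A,B,C)}\in\LDOI_d$, Eq.~\eqref{eq:proj-duality} yields $\operatorname{Tr}(X^{(3)}_{(A,B,C)}Y) = \operatorname{Tr}\big(X^{(3)}_{(A,B,C)}\operatorname{Proj}_{\LDOI}(Y)\big)$. The map $\operatorname{Proj}_{\LDOI}(Y)=\mathbb{E}_O[(O\otimes O)Y(O\otimes O)]$ carries $\SEP_d$ into $\SEP_d\cap\LDOI_d$ (it is a classically correlated local operation, as in the proof of Theorem~\ref{theorem:LDOI-sep}) and, because $O^\top=O$ for $O\in\mathcal{DO}_d$, conjugation by $O\otimes O$ commutes with partial transposition, so $\operatorname{Proj}_{\LDOI}$ also carries $\mathsf{PPT}_d$ into $\mathsf{PPT}_d\cap\LDOI_d$; conversely such matrices are fixed by $\operatorname{Proj}_{\LDOI}$. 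Using Theorem~\ref{theorem:LDOI-sep} to identify $\SEP_d\cap\LDOI_d$ with the image of $\TCP_d$, we obtain: $\Phi^{(3)}_{(A,B,C)}$ is positive $\iff \operatorname{Tr}\big(X^{(3)}_{(A,B,C)}X^{(3)}_{(D,E,F)}\big)\geq 0$ for all $(D,E,F)\in\TCP_d$, and $\Phi^{(3)}_{(A,B,C)}$ is decomposable $\iff$ the same inequality holds for all $(D,E,F)\in\MLDOI{d}$ with $X^{(3)}_{(D,E,F)}$ PPT. In the positive case, every TCP triple is by Definition~\ref{def:tcp} a finite sum of rank-one TCP triples (split $V,W$ column-wise), so $\TCP_d$ is the conic hull of its extremal rays (Theorem~\ref{theorem:extremal-PCPTCP}) and it suffices to test the linear functional $(D,E,F)\mapsto \operatorname{Tr}\big(X^{(3)}_{(A,B,C)}X^{(3)}_{(D,E,F)}\big)$ on the extremal TCP triples.

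It then remains to evaluate this pairing for arbitrary triples and show it equals $\operatorname{Tr}(AD^\top+\widetilde{B}\widetilde{E}+\widetilde{C}\widetilde{F})$. For this I would use the block decomposition of Proposition~\ref{prop:block-structure}: in the basis adapted to $\bigoplus_i\mathbb{C}\ket{ii}\,\oplus\,\bigoplus_{i<j}(\mathbb{C}\ket{ij}\oplus\mathbb{C}\ket{ji})$, both $X^{(3)}_{(A,B,C)}$ and $X^{(3)}_{(D,E,F)}$ are block diagonal, so their product is block diagonal with a $BE$ block on $\bigoplus_i\mathbb{C}\ket{ii}$ and $2\times 2$ blocks $\left[\begin{smallmatrix}A_{ij}&C_{ij}\\ C_{ji}&A_{ji}\end{smallmatrix}\right]\left[\begin{smallmatrix}D_{ij}&F_{ij}\\ F_{ji}&D_{ji}\end{smallmatrix}\right]$ for $i<j$. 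Summing the traces of these blocks gives $\operatorname{Tr}(BE)+\sum_{i\neq j}A_{ij}D_{ij}+\sum_{i\neq j}C_{ij}F_{ji}$; now writing $\operatorname{Tr}(BE)=\sum_i A_{ii}D_{ii}+\operatorname{Tr}(\widetilde{B}\widetilde{E})$ (using $\operatorname{diag}B=\operatorname{diag}A$, $\operatorname{diag}E=\operatorname{diag}D$), $\sum_{i\neq j}A_{ij}D_{ij}=\operatorname{Tr}(AD^\top)-\sum_i A_{ii}D_{ii}$, and $\sum_{i\neq j}C_{ij}F_{ji}=\operatorname{Tr}(\widetilde{C}\widetilde{F})$, the diagonal contributions cancel and the claimed identity follows. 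Substituting into the two equivalences above completes the proof.

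The routine-but-delicate point is this last computation: one must keep the diagonal parts of $B,C,E,F$ straight so that the $\sum_i A_{ii}D_{ii}$ terms cancel and one is genuinely left with $\widetilde{B},\widetilde{C},\widetilde{F}$ rather than $B,C,F$; besides that, the only step needing a short justification is that $\operatorname{Proj}_{\LDOI}$ preserves the PPT property, which comes from $O^\top=O$ for diagonal orthogonal $O$. Everything else is an assembly of results already available in the text.
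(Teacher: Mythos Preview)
Your proposal is correct and follows essentially the same approach as the paper: reduce positivity/decomposability of $\Phi^{(3)}_{(A,B,C)}$ to a trace pairing of $X^{(3)}_{(A,B,C)}$ against separable (resp.\ PPT) test matrices via Lemma~\ref{lemma:CJiso}, use the self-adjointness of $\operatorname{Proj}_{\LDOI}$ (Eq.~\eqref{eq:proj-duality}) to restrict the test matrices to $\LDOI_d$, and then identify these with (extremal) TCP triples. The only cosmetic differences are that the paper computes $\operatorname{Tr}(X^{(3)}_{(A,B,C)}X^{(3)}_{(D,E,F)})$ diagrammatically rather than via the block decomposition of Proposition~\ref{prop:block-structure}, and reaches the extremal TCP triples directly by projecting rank-one product states (Remark~\ref{remark:extremal-separable}) rather than first passing through the full separable cone; your handling of the decomposable case (noting $O^\top=O$ to see that $\operatorname{Proj}_{\LDOI}$ preserves PPT) is exactly what the paper leaves to the reader.
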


\begin{proof}
    We obtain the characterization of positive maps in $\DOC_d$, leaving a near-identical discussion on decomposability to the reader. To begin with, we use the form of LDOI matrices from the discussion following Proposition~\ref{prop:LDOI-ABC} to compute the expression $\operatorname{Tr}[X^{(3)}_{(A,B,C)}X^{(3)}_{(D,E,F)}]$ for arbitrary $X^{(3)}_{(A,B,C)}, X^{(3)}_{(D,E,F)}\in \LDOI_d$:
    \begin{align*}
        \operatorname{Tr}[X^{(3)}_{(A,B,C)}X^{(3)}_{(D,E,F)}] &= \includegraphics[align=c, scale=1.2]{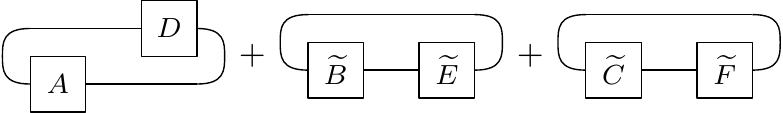} \\
        &= \operatorname{Tr}(AD^\top + \widetilde{B}\widetilde{E} + \widetilde{C}\widetilde{F}).
    \end{align*}
    Now, from Lemma~\ref{lemma:CJiso}, we know that $\Phi^{(3)}_{(A,B,C)}$ is positive if and only if $$\forall \ket{\zeta}, \ket{\eta}\in \mathbb{C}^d: \qquad \operatorname{Tr}[X^{(3)}_{(A,B,C)} \ketbra{\zeta\otimes\eta}] \geq 0.$$
    But, 
    \begin{align*}
        \operatorname{Tr}[X^{(3)}_{(A,B,C)} \ketbra{\zeta\otimes\eta}] &= \operatorname{Tr}[\operatorname{Proj}_{\LDOI}(X^{(3)}_{(A,B,C)}) \ketbra{\zeta\otimes\eta}] \\
        &= \operatorname{Tr}[X^{(3)}_{(A,B,C)} \operatorname{Proj}_{\LDOI}(\ketbra{\zeta\otimes\eta})]
    \end{align*}
    and Remark~\ref{remark:extremal-separable} tells us that all the extremal TCP rays are of the form $\operatorname{Proj}_{\LDOI}(\ketbra{\zeta\otimes\eta})$, for some $\ket{\zeta}, \ket{\eta}\in \mathbb{C}^d \setminus \{0\}$. This completes the proof.
\end{proof}

Although elegant, Theorem~\ref{theorem:DOC-pos-decomp} will seldom be of practical use, as the stated conditions are too hard to check in practice. Drawing motivation from \cite{Chruscinski2018choi}, we try to remedy this situation in the next couple of results. We first derive some easily verifiable constraints on matrix triples $(A,B,C)$ which are necessary to ensure that the corresponding maps in $\DOC_d$ are positive, see Proposition~\ref{prop:necessary-DOCpos}. Then, we present a set of sufficient conditions on triples $(A,B,C)$ which guarantee that the associated maps in $\DOC_d$ are both positive and decomposable, see Proposition~\ref{prop:sufficient-DOCpos}. 

\begin{proposition} \label{prop:necessary-DOCpos}
Let $(A,B,C)\in \MLDOI{d}$ be such that $\Phi^{(3)}_{(A,B,C)} \in \DOC_d$ is positive. Then,
\begin{itemize}
    \item $A$ is entrywise non-negative and $B,C$ are self-adjoint,
    \item $\sqrt{A_{ii}A_{jj}} + \sqrt{A_{ij}A_{ji}} - |B_{ij}| - |C_{ij}| \geq 0$ for all $i\neq j$.
\end{itemize}
\end{proposition}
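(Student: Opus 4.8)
The plan is to test positivity of $\Phi^{(3)}_{(A,B,C)}$ against two families of input states: the diagonal rank-one states $\ketbra{i}{i}$ and the rank-one states supported on the two-dimensional subspace $\mathbb{C}\ket{i}\oplus\mathbb{C}\ket{j}$ for $i\neq j$. Since positivity means $\Phi^{(3)}_{(A,B,C)}(P)\in\PSD_d$ for every rank-one projector $P=\ketbra{x}{x}$, these two families already produce all the stated constraints.

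First I would record, from Eq.~\eqref{eq:DOC-action}, that $\Phi^{(3)}_{(A,B,C)}(Z)=\operatorname{diag}(A\ket{\operatorname{diag}Z})+\widetilde{B}\odot Z+\widetilde{C}\odot Z^\top$. Feeding in $Z=\ketbra{i}{i}$ gives $\Phi^{(3)}_{(A,B,C)}(\ketbra{i}{i})=\operatorname{diag}(A\ket{i})=\sum_k A_{ki}\ketbra{k}{k}$, which is positive semi-definite iff $A_{ki}\ge 0$ for all $k$; ranging over $i$ yields $A\in\EWP_d$. Hermiticity preservation of a positive map (equivalently, self-adjointness of the Choi matrix $X^{(3)}_{(A,B,C)}\in\LDOI_d$, which is block-positive, together with the self-adjointness characterization already stated in the excerpt) forces $A\in\Mreal{d}$ and $B,C\in\Msa{d}$; this is the first bullet.

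For the second bullet, fix $i\neq j$ and take $\ket{x}=\alpha\ket{i}+\beta\ket{j}$ with $\alpha,\beta\in\mathbb{C}$, so $Z=\ketbra{x}{x}$ has $Z_{ii}=|\alpha|^2$, $Z_{jj}=|\beta|^2$, $Z_{ij}=\alpha\bar\beta$, $Z_{ji}=\bar\alpha\beta$, and all other entries zero. Then $\Phi^{(3)}_{(A,B,C)}(Z)$ has a block structure: the diagonal part $\operatorname{diag}(A\ket{\operatorname{diag}Z})$ contributes $\sum_k(A_{ki}|\alpha|^2+A_{kj}|\beta|^2)\ketbra{k}{k}$, while $\widetilde{B}\odot Z+\widetilde{C}\odot Z^\top$ only populates the $(i,j)$ and $(j,i)$ entries, giving $B_{ij}\alpha\bar\beta+C_{ij}\bar\alpha\beta$ in position $(i,j)$ and its conjugate in position $(j,i)$. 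Positive semi-definiteness of this matrix reduces (all off-block entries being on the diagonal and non-negative) to positivity of the $2\times 2$ principal submatrix on $\{i,j\}$:
\begin{equation*}
\begin{bmatrix} A_{ii}|\alpha|^2+A_{ij}|\beta|^2 & B_{ij}\alpha\bar\beta + C_{ij}\bar\alpha\beta \\ \overline{B_{ij}\alpha\bar\beta + C_{ij}\bar\alpha\beta} & A_{ji}|\alpha|^2+A_{jj}|\beta|^2 \end{bmatrix} \succeq 0,
\end{equation*}
i.e.\ $(A_{ii}|\alpha|^2+A_{ij}|\beta|^2)(A_{ji}|\alpha|^2+A_{jj}|\beta|^2)\ge |B_{ij}\alpha\bar\beta+C_{ij}\bar\alpha\beta|^2$ for all $\alpha,\beta$. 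The main obstacle is the optimization: I must choose the phases of $\alpha,\beta$ to maximize the right-hand side and then choose $|\alpha|,|\beta|$ to make the inequality tightest. Writing $s=|\alpha|^2$, $t=|\beta|^2$ and optimizing the phase, $\sup_{\text{phases}}|B_{ij}\alpha\bar\beta+C_{ij}\bar\alpha\beta|=(|B_{ij}|+|C_{ij}|)\sqrt{st}$ (the two terms can be aligned since they have independent phase freedom through $\arg\alpha-\arg\beta$), so the condition becomes $(A_{ii}s+A_{ij}t)(A_{ji}s+A_{jj}t)\ge (|B_{ij}|+|C_{ij}|)^2 st$ for all $s,t\ge 0$. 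Expanding the left side as $A_{ii}A_{ji}s^2+(A_{ii}A_{jj}+A_{ij}A_{ji})st+A_{ij}A_{jj}t^2$ and applying AM--GM to the $s^2$ and $t^2$ terms, $A_{ii}A_{ji}s^2+A_{ij}A_{jj}t^2\ge 2\sqrt{A_{ii}A_{ji}A_{ij}A_{jj}}\,st$ with equality attainable at the right ratio $s/t$, the sharp form of the constraint is $A_{ii}A_{jj}+A_{ij}A_{ji}+2\sqrt{A_{ii}A_{jj}A_{ij}A_{ji}}\ge(|B_{ij}|+|C_{ij}|)^2$, that is $\left(\sqrt{A_{ii}A_{jj}}+\sqrt{A_{ij}A_{ji}}\right)^2\ge(|B_{ij}|+|C_{ij}|)^2$. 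Taking square roots (all quantities being non-negative, using $A\in\EWP_d$ from the first part) gives exactly $\sqrt{A_{ii}A_{jj}}+\sqrt{A_{ij}A_{ji}}-|B_{ij}|-|C_{ij}|\ge 0$, completing the proof. I should double-check that the phase alignment genuinely uses two independent degrees of freedom — it does, since only $\arg\alpha-\arg\beta$ enters $B_{ij}\alpha\bar\beta$ while $\arg\bar\alpha-\arg(-\beta)$... more carefully, $B_{ij}\alpha\bar\beta$ and $C_{ij}\bar\alpha\beta$ are complex conjugates up to the $B,C$ coefficients, and their relative phase is $2(\arg\beta-\arg\alpha)+\arg B_{ij}-\arg C_{ij}$, a free parameter, so they can indeed be made to add in modulus.
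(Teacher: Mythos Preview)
Your proof is correct. Both your argument and the paper's rest on the same idea: positivity is tested on rank-one inputs supported on a pair $\{i,j\}$, and the resulting two-parameter inequality is then optimized. The paper routes this through block-positivity of the Choi matrix $X^{(3)}_{(A,B,C)}$ (via Theorem~\ref{theorem:DOC-pos-decomp}), expands $f(\zeta,\eta)=\langle\zeta\otimes\eta|X^{(3)}|\zeta\otimes\eta\rangle$, completes squares, and then hand-picks specific values of $|\zeta_k|,|\eta_k|,|\zeta_l|,|\eta_l|$ (plus a limiting argument when entries of $A$ vanish) to extract the stated bound. You instead work directly with the map action from Eq.~\eqref{eq:DOC-action}, reduce to a single $2\times 2$ determinant condition, and then optimize cleanly in two stages (phase, then magnitude via AM--GM); your approach is slightly more streamlined and avoids the limiting argument, while the paper's makes the connection to the extremal-TCP characterization more explicit. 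Your closing worry about phase alignment is fine: with $\theta=\arg\alpha-\arg\beta$ one has $|B_{ij}e^{i\theta}+C_{ij}e^{-i\theta}|=|B_{ij}e^{2i\theta}+C_{ij}|$, which attains $|B_{ij}|+|C_{ij}|$ for a suitable $\theta$.
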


\begin{proof}
Since $\Phi^{(3)}_{(A,B,C)}$ is positive, it is hermiticity preserving as well. Lemma~\ref{lemma:DOC-prop} then tells us that $A$ is real and $B,C$ are self-adjoint. Now, consider an (arbitrary) extremal TCP ray $(D,E,F)\in \MLDOI{d}$ of the form $D = |\zeta\odot\bar{\zeta}\rangle\langle \eta\odot\bar{\eta}|$, $E = |\zeta\odot\eta\rangle\langle \zeta\odot\eta| $ and $ F = |\zeta\odot\bar{\eta}\rangle\langle \zeta\odot\bar{\eta}| $, for some non-zero vectors $\ket{\zeta}, \ket{\eta}\in \mathbb{C}^d$, see Theorem~\ref{theorem:extremal-PCPTCP}. By invoking Theorem~\ref{theorem:DOC-pos-decomp}, we can write: 
\begin{equation*}
    f(\zeta, \eta) = \operatorname{Tr}[AD^\top + \widetilde{B}\widetilde{E} + \widetilde{C}\widetilde{F}] = \sum_{i,j=1}^d A_{ij}|\zeta_i |^2 |\eta_j |^2 + \sum_{1\leq i\neq j \leq d} \left(B_{ij}\overbar{\zeta_i \eta_i}\zeta_j \eta_j + C_{ij} \overbar{\zeta_i}\eta_i \zeta_j \overbar{\eta_j}\right) \geq 0.
\end{equation*}  
If we fix $k,l \in [d]$ and choose $\ket{\zeta}, \ket{\eta} \in \mathbb{C}^d$ such that $\zeta_k = \eta_l = 1$ and $\zeta_i = \eta_j = 0$ if $(i,j)\neq (k,l)$, then $f(\zeta, \eta) \geq 0 \iff A_{kl}\geq 0$ for all $k,l\in [d]$, i.e., $A$ is entrywise non-negative. Now, let us try to further expand the above expression:
\begin{align*}
    f(\zeta,\eta) &= \sum_{i,j=1}^d A_{ij}|\zeta_i |^2 |\eta_j |^2 + \sum_{1\leq i\neq j \leq d} \left(B_{ij}\overbar{\zeta_i \eta_i}\zeta_j \eta_j + C_{ij} \overbar{\zeta_i}\eta_i \zeta_j \overbar{\eta_j}\right) \\
    &= \sum_{i=1}^d A_{ii}|\zeta_i |^2 |\eta_i |^2 + \sum_{1\leq i<j\leq d} \left(A_{ij}|\zeta_i |^2 |\eta_j |^2 + A_{ji}|\zeta_j |^2 |\eta_i |^2 \right) \\
     &\qquad + \sum_{1\leq i<j\leq d} \left(B_{ij}\overbar{\zeta_i \eta_i}\zeta_j \eta_j + B_{ji}\overbar{\zeta_j \eta_j}\zeta_i \eta_i + C_{ij} \overbar{\zeta_i}\eta_i \zeta_j \overbar{\eta_j} + C_{ji} \overbar{\zeta_j}\eta_j \zeta_i \overbar{\eta_i}   \right)  \\
     &= \sum_{i=1}^d A_{ii}|\zeta_i |^2 |\eta_i |^2 + \sum_{1\leq i<j\leq d} \left( \sqrt{A_{ij}}|\zeta_i | |\eta_j | - \sqrt{A_{ji}}|\zeta_j | |\eta_i | \right)^2  \\
     &\qquad + \sum_{1\leq i<j\leq d} \left\{    2\sqrt{A_{ij}A_{ji}} |\zeta_i\eta_i\zeta_j\eta_j | + 2\operatorname{Re}(B_{ij}\overbar{\zeta_i\eta_i}\zeta_j\eta_j) +  2\operatorname{Re}(C_{ij}\overbar{\zeta_i}\eta_i\zeta_j\overbar{\eta_j}) \right\} \geq 0 .
\end{align*}
If we impose the constraint that $k< l$ and choose $\ket{\zeta}, \ket{\eta} \in \mathbb{C}^d$ such that $\zeta_i = \eta_i = 0$ for all $i \notin \{k,l\}$, it is clear that
\begin{align}
    f(\zeta, \eta) &= A_{kk}|\zeta_k |^2 |\eta_k |^2 + A_{ll}|\zeta_l |^2 |\eta_l |^2 + \left( \sqrt{A_{kl}}|\zeta_k | |\eta_l | - \sqrt{A_{lk}}|\zeta_l | |\eta_k | \right)^2  \nonumber \\     
    &\quad + 2\sqrt{A_{kl}A_{lk}} |\zeta_k\eta_k\zeta_l\eta_l | + 2\operatorname{Re}(B_{kl}\overbar{\zeta_k\eta_k}\zeta_l\eta_l) +  2\operatorname{Re}(C_{kl}\overbar{\zeta_k}\eta_k\zeta_l\overbar{\eta_l}) \nonumber \\
    &= \left( \sqrt{A_{kk}}|\zeta_k | |\eta_k | - \sqrt{A_{ll}}|\zeta_l | |\eta_l | \right)^2 + \left( \sqrt{A_{kl}}|\zeta_k | |\eta_l | - \sqrt{A_{lk}}|\zeta_l | |\eta_k | \right)^2 \nonumber \\
    &\quad + 2 \left( \sqrt{A_{kk}A_{ll}} + \sqrt{A_{kl}A_{lk}} \right) |\zeta_k\eta_k\zeta_l\eta_l | + 2\operatorname{Re}(B_{kl}\overbar{\zeta_k\eta_k}\zeta_l\eta_l) +  2\operatorname{Re}(C_{kl}\overbar{\zeta_k}\eta_k\zeta_l\overbar{\eta_l}) \geq 0. \nonumber
\end{align}
It is now possible to choose $\{ \zeta_i, \eta_i \}_{i=k,l}$ in such a way that $f(\zeta, \eta) \geq 0$ implies the following inequality, which holds for all $x_k,y_k, x_l, y_l \geq 0$ (recall that $f(\zeta,\eta)\geq 0$ holds for all $\ket{\zeta},\ket{\eta}\in \C{d}$):
\begin{align}
     \left( \sqrt{A_{kk}} x_k y_k  - \sqrt{A_{ll}} x_l y_l  \right)^2 &+ \left( \sqrt{A_{kl}} x_k y_l  - \sqrt{A_{lk}} x_l y_k  \right)^2 \nonumber \\
     &+2 \left\{ \sqrt{A_{kk}A_{ll}} + \sqrt{A_{kl}A_{lk}}  - |B_{kl}| - |C_{kl}| \right\} x_k y_k x_l y_l \geq 0. \nonumber
\end{align}
Assuming that $A_{kk},A_{ll},A_{kl}$ and $A_{lk}$ are non-zero, we can set $x_l = \displaystyle\left(\frac{A_{kl}A_{kk}}{A_{lk}A_{ll}} \right)^{1/4}$, $y_k = \displaystyle\left(\frac{A_{kl}A_{ll}}{A_{lk}A_{kk}} \right)^{1/4}$ and $x_k = y_l = 1$ to obtain:
\begin{equation}
    2\left\{ \sqrt{A_{kk}A_{ll}} + \sqrt{A_{kl}A_{lk}}  - |B_{kl}| - |C_{kl}| \right\} \sqrt{\frac{A_{kl}}{A_{lk}}} \geq 0, \nonumber
\end{equation}
which is equivalent to the desired inequality as stated in the Lemma:
\begin{equation}
    \sqrt{A_{kk}A_{ll}} + \sqrt{A_{kl}A_{lk}}  - |B_{kl}| - |C_{kl}| \geq 0. \nonumber
\end{equation}
If one or more entries of $A$ are zero, we can set them equal to an arbitrarily small non-zero value to ensure that the above inequalities hold, and the final result will then follow by taking limits.  
\end{proof}

It turns out that only a slight modification in the necessary condition on $(A,B,C)\in \MLDOI{d}$ in Proposition~\ref{prop:necessary-DOCpos} suffices to guarantee positivity as well as decomposability of $\Phi^{(3)}_{(A,B,C)}\in \DOC_d$.

\begin{proposition} \label{prop:sufficient-DOCpos}
Let $(A,B,C)\in \MLDOI{d}$ be such that
\begin{itemize}
    \item $A$ is entrywise non-negative and $B,C$ are self-adjoint,
    \item $\displaystyle\frac{\sqrt{A_{ii}A_{jj}}}{d-1} + \sqrt{A_{ij}A_{ji}} - |B_{ij}| - |C_{ij}| \geq 0$ for all $i\neq j$.
\end{itemize}
Then $\Phi^{(3)}_{(A,B,C)}\in \DOC_d$ is decomposable and thus positive.
\end{proposition}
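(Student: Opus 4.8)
The plan is to show that the Choi matrix $X^{(3)}_{(A,B,C)}$ lies in the decomposable cone $\mathsf{DEC}_d=\{P+Q^\Gamma:P,Q\in\PSD_{d^2}\}$; by the definition of decomposability together with the Choi--Jamio\l kowski isomorphism (Lemma~\ref{lemma:CJiso}) this is equivalent to $\Phi^{(3)}_{(A,B,C)}$ being decomposable, and decomposability implies positivity since completely positive and completely copositive maps are positive. As $\mathsf{DEC}_d$ is a convex cone and $X^{(3)}$ is linear, it suffices to write $(A,B,C)$ as a sum of simpler triples each of whose associated LDOI matrices is decomposable. Concretely, for each unordered pair $\{i,j\}$ define $(A^{ij},B^{ij},C^{ij})\in\MLDOI{d}$ supported on rows/columns $i,j$ only, with diagonal entries $A^{ij}_{ii}=B^{ij}_{ii}=C^{ij}_{ii}=A_{ii}/(d-1)$ and $A^{ij}_{jj}=B^{ij}_{jj}=C^{ij}_{jj}=A_{jj}/(d-1)$, and off-diagonal entries copied from $A,B,C$: $A^{ij}_{ij}=A_{ij}$, $B^{ij}_{ij}=B_{ij}$, $C^{ij}_{ij}=C_{ij}$ (together with the complex conjugates). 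Since each index belongs to exactly $d-1$ pairs, a direct computation gives $\sum_{i<j}(A^{ij},B^{ij},C^{ij})=(A,B,C)$, hence $X^{(3)}_{(A,B,C)}=\sum_{i<j}X^{(3)}_{(A^{ij},B^{ij},C^{ij})}$; this is where the constant $d-1$ in the hypothesis comes from.

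Now fix $i<j$ and look for a splitting $X^{(3)}_{(A^{ij},B^{ij},C^{ij})}=X^{(3)}_{(A_1,B_1,C_1)}+\bigl(X^{(3)}_{(A_2,B_2,C_2)}\bigr)^\Gamma=X^{(3)}_{(A_1,B_1,C_1)}+X^{(3)}_{(A_2,C_2,B_2)}$ with $X^{(3)}_{(A_1,B_1,C_1)},X^{(3)}_{(A_2,B_2,C_2)}\in\PSD_{d^2}$ (the summand triples again supported on $\{i,j\}$), where I have used the partial-transpose rule $(X^{(3)}_{(A,B,C)})^\Gamma=X^{(3)}_{(A,C,B)}$ of Proposition~\ref{prop:leg-permutations}. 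Matching entries and applying the positivity criterion of Lemma~\ref{lemma:LDOI-psd-ppt}, this reduces to a purely $2\times2$ problem: writing $\alpha=A_{ii}/(d-1)$, $\beta=A_{jj}/(d-1)$, $p=A_{ij}$, $q=A_{ji}$, $b=B_{ij}$, $c=C_{ij}$, one must find non-negative reals with $\alpha_1+\alpha_2=\alpha$, $\beta_1+\beta_2=\beta$, $p_1+p_2=p$, $q_1+q_2=q$ and complex numbers with $b_1+c_2=b$, $c_1+b_2=c$, such that $\alpha_1\beta_1\ge|b_1|^2$, $\alpha_2\beta_2\ge|b_2|^2$, $p_1q_1\ge|c_1|^2$, $p_2q_2\ge|c_2|^2$. (Here $\alpha,\beta,p,q\ge0$ because $A$ is entrywise non-negative, and $|b|=|B_{ij}|$, $|c|=|C_{ij}|$ because $B,C$ are self-adjoint.)

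Splitting the entries of $A$ proportionally, $(\alpha_1,\beta_1,p_1,q_1)=(\lambda\alpha,\lambda\beta,\mu p,\mu q)$ with $\lambda,\mu\in[0,1]$, and taking $b_1,c_2$ (resp.\ $c_1,b_2$) with the common phase of $b$ (resp.\ of $c$), the triangle inequality turns the above into the feasibility of
\[
|b|\le\lambda u+(1-\mu)v \qquad\text{and}\qquad |c|\le(1-\lambda)u+\mu v
\]
for some $\lambda,\mu\in[0,1]$, where $u=\sqrt{\alpha\beta}=\sqrt{A_{ii}A_{jj}}/(d-1)$ and $v=\sqrt{pq}=\sqrt{A_{ij}A_{ji}}$. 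An elementary interval-intersection argument shows such $\lambda,\mu$ exist if and only if $|b|+|c|\le u+v$, which is exactly the hypothesis $\sqrt{A_{ii}A_{jj}}/(d-1)+\sqrt{A_{ij}A_{ji}}\ge|B_{ij}|+|C_{ij}|$; the degenerate cases $u=0$ or $v=0$ are covered by the same inequalities (alternatively, one may perturb $\operatorname{diag}A$ by $\varepsilon\mathbb{I}_d$ and let $\varepsilon\to0$, using that $\mathsf{DEC}_d$ is closed). Each elementary Choi matrix therefore lies in $\mathsf{DEC}_d$, and summing over $i<j$ gives $X^{(3)}_{(A,B,C)}\in\mathsf{DEC}_d$, completing the argument. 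The main technical point is the $2\times2$ analysis: one must keep careful track of the fact that the partial transpose interchanges the off-diagonal slots $B_{ij}\leftrightarrow C_{ij}$ while fixing the diagonal, and verify that the two linear constraints on $(\lambda,\mu)$ can be met simultaneously precisely under the stated bound --- it is the constant $d-1$, dictated by the number of pairs through a fixed index, that makes the diagonal budget add up correctly.
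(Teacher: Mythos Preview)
Your proof is correct and takes a genuinely different route from the paper. The paper argues by duality: invoking Theorem~\ref{theorem:DOC-pos-decomp}, it shows $\operatorname{Tr}(AD^\top+\widetilde B\widetilde E+\widetilde C\widetilde F)\geq 0$ for every PPT triple $(D,E,F)$ by distributing the diagonal sum $\sum_i A_{ii}D_{ii}$ over the $\binom{d}{2}$ pairs (this is where $d-1$ appears), then applying AM--GM together with the PPT inequalities $D_{ij}D_{ji}\geq\max(|E_{ij}|^2,|F_{ij}|^2)$ and $D_{ii}D_{jj}\geq\max(|E_{ij}|^2,|F_{ij}|^2)$. Your argument is instead \emph{constructive}: you exhibit an explicit decomposition $X^{(3)}_{(A,B,C)}=P+Q^\Gamma$ by first splitting the triple into $2\times 2$--supported pieces (again accounting for the $d-1$) and then solving a genuine $2\times 2$ feasibility problem for each piece. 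The paper's approach is shorter and leverages the machinery already set up in Theorem~\ref{theorem:DOC-pos-decomp}, while yours is self-contained (it does not need the duality characterization) and yields more: an explicit witness of decomposability rather than a certificate via the dual cone. Both arguments make the role of the constant $d-1$ transparent as the number of pairs through a fixed index.
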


\begin{proof}
    Since positivity trivially follows from decomposability, it is sufficient to show that $\Phi^{(3)}_{(A,B,C)}$ is decomposable. From Theorem~\ref{theorem:DOC-pos-decomp}, this is equivalent to showing that $\operatorname{Tr}[AD^\top + \widetilde{B}\widetilde{E}+ \widetilde{C}\widetilde{F}] \geq 0$ for all PPT matrices $X^{(3)}_{(D,E,F)}\in \LDOI_d$. Recall from Lemma~\ref{lemma:LDOI-psd-ppt} that $$X^{(3)}_{(D,E,F)} \text{ is PPT } \iff D\in\EWP_d, E,F\in\PSD_d \text{ and } D_{ij}D_{ji} \geq |E_{ij}|^2, |F_{ij}|^2 \,\, \forall i,j\in [d].$$
    Moreover, since $\operatorname{diag}D=\operatorname{diag}E = \operatorname{diag}F$ and $E,F\in \PSD_d$, the $i,j-$minors of $E,F$ are non-negative: $D_{ii}D_{jj} \geq |E_{ij}|^2, |F_{ij}|^2 \,\, \forall i,j\in [d]$. We now prove that the desired trace expression is non-negative:
    \begin{align}
        \operatorname{Tr}[AD^\top + \widetilde{B}\widetilde{E}+ \widetilde{C}\widetilde{F}] &= \sum_{i,j=1}^d A_{ij}D_{ij} + \sum_{1\leq i<j \leq d} \left\{ 2\operatorname{Re}(B_{ij}E_{ji}) + 2\operatorname{Re}(C_{ij}F_{ji}) \right\} \nonumber \\
        &= \sum_{i=1}^d A_{ii}D_{ii} + \sum_{1\leq i<j\leq d} \left\{ A_{ij}D_{ij} + A_{ji}D_{ji} + 2\operatorname{Re}(B_{ij}E_{ji}) + 2\operatorname{Re}(C_{ij}F_{ji})  \right\} \nonumber \\
        &= \sum_{1\leq i<j \leq d} \left\{ \frac{A_{ii}D_{ii} + A_{jj}D_{jj}}{d-1} + A_{ij}D_{ij} + A_{ji}D_{ji} + 2\operatorname{Re}(B_{ij}E_{ji}) + 2\operatorname{Re}(C_{ij}F_{ji}) \right\} \nonumber \\
        &\geq 2\sum_{1\leq i<j\leq d} \left\{ \frac{\sqrt{A_{ii}D_{ii}A_{jj}D_{jj}}}{d-1} + \sqrt{A_{ij}A_{ji}D_{ij}D_{ji}} -|B_{ij}E_{ij}| - |C_{ij}F_{ij}|  \right\} \nonumber \\ 
        &\geq 2\sum_{1\leq i<j \leq d} \operatorname{max}\{|E_{ij}|,|F_{ij}|\} \left\{ \frac{\sqrt{A_{ii}A_{jj}}}{d-1} + \sqrt{A_{ij}A_{ji}} - |B_{ij}| - |C_{ij}|  \right\} \geq 0, \nonumber
    \end{align}
where the ante-penultimate inequality follows from the arithmetic-geometric mean inequality, the penultimate inequality follows from the fact that $X^{(3)}_{(D,E,F)}\in \LDOI_d$ is PPT, and the final inequality follows from the hypothesis of the Proposition. This completes the proof.
\end{proof}

It is clear that the conditions in Propositions~\ref{prop:necessary-DOCpos} and \ref{prop:sufficient-DOCpos} are equivalent for $d=2$. This leads us to the following complete characterization of positivity for maps in $\DOC_2$, which generalizes similar results in \cite{Li1997diagonal, Kye1995diagonal} --- these were obtained for the restricted class of positive maps which preserve diagonals (see Example~\ref{eg:maps-diag-preserve}).
\begin{corollary}
For $(A,B,C)\in \MLDOI{2}$, the map $\Phi^{(3)}_{(A,B,C)}\in \DOC_2$ is positive if and only if 
\begin{itemize}
    \item $A$ is entrywise non-negative and $B,C$ are self-adjoint,
    \item $\sqrt{A_{11}A_{22}} + \sqrt{A_{12}A_{21}} - |B_{12}| - |C_{12}| \geq 0$.
\end{itemize}

\end{corollary}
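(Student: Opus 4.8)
The plan is to obtain this as an immediate consequence of the two preceding propositions, with essentially no further computation. The key remark is that for $d=2$ the quantity $\sqrt{A_{ii}A_{jj}}/(d-1)$ appearing in the second bullet of Proposition~\ref{prop:sufficient-DOCpos} equals $\sqrt{A_{ii}A_{jj}}$, so the sufficient condition stated there becomes literally identical to the necessary condition in Proposition~\ref{prop:necessary-DOCpos}.

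First I would observe that in the $2\times 2$ case the quantifier ``for all $i\neq j$'' in both propositions collapses to a single inequality, namely the one indexed by $(i,j)=(1,2)$: the pair $(2,1)$ yields the same relation, since $B,C$ self-adjoint forces $|B_{21}|=|B_{12}|$ and $|C_{21}|=|C_{12}|$, while $\sqrt{A_{11}A_{22}}$ and $\sqrt{A_{12}A_{21}}$ are manifestly symmetric under swapping the two indices. Hence both propositions, when specialized to $d=2$, concern exactly the two bullet points listed in the corollary.

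The conclusion then follows by combining the two directions. If $\Phi^{(3)}_{(A,B,C)}\in\DOC_2$ is positive, Proposition~\ref{prop:necessary-DOCpos} yields precisely that $A$ is entrywise non-negative, $B,C$ are self-adjoint, and $\sqrt{A_{11}A_{22}}+\sqrt{A_{12}A_{21}}-|B_{12}|-|C_{12}|\geq 0$. Conversely, if these two conditions hold, then (using $d-1=1$) they are exactly the hypotheses of Proposition~\ref{prop:sufficient-DOCpos}, so $\Phi^{(3)}_{(A,B,C)}$ is decomposable and in particular positive. This establishes the equivalence.

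There is no real obstacle here; the only point requiring care is the bookkeeping that the ``for all $i\neq j$'' quantifier reduces to one inequality when $d=2$ and that the factor $1/(d-1)$ disappears. Once this is noted, the corollary follows verbatim from Propositions~\ref{prop:necessary-DOCpos} and \ref{prop:sufficient-DOCpos}. (Alternatively, one could argue directly by restricting the function $f(\zeta,\eta)$ from the proof of Proposition~\ref{prop:necessary-DOCpos} to $\mathbb{C}^2$ and completing squares to see that nonnegativity of $f$ is exactly equivalent to the displayed inequality, but citing the two propositions is cleaner and avoids repeating that computation.)
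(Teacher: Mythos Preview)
Your proposal is correct and matches the paper's own reasoning exactly: the paper simply observes that the conditions in Propositions~\ref{prop:necessary-DOCpos} and \ref{prop:sufficient-DOCpos} coincide when $d=2$ (since $d-1=1$), and states the corollary as an immediate consequence.
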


Equipped with the isomorphisms from Eqs.~\eqref{eq:DUCiso}, \eqref{eq:CDUCiso} and \eqref{eq:DOCiso}, we now begin to cast several important properties of linear maps in $\DUC_d$, $\CDUC_d$ and $\DOC_d$ into appropriate constraints on the associated matrix pairs/triples. This forms the content of the next three Lemmas, which are straightforward consequences of the results from Theorem~\ref{theorem:LDOI-sep}, Lemma~\ref{lemma:LDOI-psd-ppt}, Lemma~\ref{lemma:CJiso} and Theorem~\ref{theorem:DUC/CDUC/DOC-LDUI/CLDUI/LDOI}. We leave the proofs of these to the incisive sense of the reader.

\begin{lemma} \label{lemma:DUC-prop}
Consider $(A,B)\in \MLDUI{d}$. Then, the associated map $\Phi^{(1)}_{(A,B)}\in \DUC_d$ is 
\begin{enumerate}
    \item hermiticity preserving $\iff A\in \Mreal{d}$ and $B\in \Msa{d}$,
    \item completely positive $\iff A \in \EWP_d$, $B\in \Msa{d}$ and $A_{ij}A_{ji} \geq \vert B_{ij} \vert^2 \,\, \forall i,j\in [d]$,
    \item completely copositive $\iff A \in \EWP_d$ and $B \in \PSD_d$,
    \item entanglement breaking $\iff (A,B)$ is \emph{PCP}.
\end{enumerate}
\end{lemma}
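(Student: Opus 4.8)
The plan is to establish each equivalence by translating the corresponding property of the map $\Phi^{(1)}_{(A,B)}$ into a property of its Choi matrix $J(\Phi^{(1)}_{(A,B)}) = X^{(1)}_{(A,B)} \in \LDUI_d$ via the Choi--Jamio{\l}kowski isomorphism (Lemma~\ref{lemma:CJiso} and Remark~\ref{remark:choi(DOC)}), and then to read off the resulting condition on the pair $(A,B)$ from the already-established structure theory of LDUI matrices. The work is thus entirely bookkeeping: matching up the six parts of Lemma~\ref{lemma:CJiso} with the description of LDUI matrices obtained from Lemma~\ref{lemma:LDOI-psd-ppt} together with Remark~\ref{remark:(C)LDUIsubspaceLDOI}.

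In detail, I would proceed as follows. For (1): by Lemma~\ref{lemma:CJiso}(1), $\Phi^{(1)}_{(A,B)}$ is hermiticity preserving iff $X^{(1)}_{(A,B)}$ is self-adjoint; by the self-adjointness characterization of LDOI matrices (specialized to the LDUI case via Remark~\ref{remark:(C)LDUIsubspaceLDOI}, where $X^{(1)}_{(A,B)} = X^{(3)}_{(A,\operatorname{diag}A,B)}$), this holds iff $A \in \Mreal{d}$ and $B \in \Msa{d}$ (the condition $\operatorname{diag}A \in \Msa{d}$ being automatic once $A$ is real). For (2): by Lemma~\ref{lemma:CJiso}(3), complete positivity is equivalent to $X^{(1)}_{(A,B)} \in \PSD_{d^2}$; applying part~(1) of Lemma~\ref{lemma:LDOI-psd-ppt} to the triple $(A,\operatorname{diag}A, B)$ gives $A \in \EWP_d$, $\operatorname{diag}A \in \PSD_d$ (automatic), $B \in \Msa{d}$, and $A_{ij}A_{ji} \geq |B_{ij}|^2$ for all $i,j$. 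For (3): by Lemma~\ref{lemma:CJiso}(4) and the fact that $[X^{(1)}_{(A,B)}]^\Gamma = X^{(2)}_{(A,B)} = X^{(3)}_{(A,B,\operatorname{diag}A)}$ (Remark~\ref{remark:(C)LDUIsubspaceLDOI}), complete copositivity is equivalent to $X^{(2)}_{(A,B)} \in \PSD_{d^2}$; part~(1) of Lemma~\ref{lemma:LDOI-psd-ppt} applied to $(A,B,\operatorname{diag}A)$ yields $A \in \EWP_d$, $B \in \PSD_d$, and the constraint $A_{ij}A_{ji} \geq |\operatorname{diag}(A)_{ij}|^2$, which is vacuous for $i \neq j$ and reads $A_{ii}^2 \geq A_{ii}^2$ for $i = j$ --- hence only $A \in \EWP_d$ and $B \in \PSD_d$ survive. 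Finally, for (4): by Lemma~\ref{lemma:CJiso}(6), $\Phi^{(1)}_{(A,B)}$ is entanglement breaking iff $X^{(1)}_{(A,B)}$ is separable, and by Theorem~\ref{theorem:LDOI-sep} this is exactly the statement that $(A,B)$ is PCP.

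There is really no substantial obstacle here, since every ingredient has been prepared in the earlier sections; the only point requiring a moment's care is the bookkeeping in parts~(2) and~(3), namely correctly identifying which of the two matrices ($B$ or $\operatorname{diag}A$) plays the role of the ``$B$-slot'' versus the ``$C$-slot'' in the LDOI triple under partial transposition, and verifying that the seemingly extra positivity and minor conditions coming from Lemma~\ref{lemma:LDOI-psd-ppt} collapse to triviality because the relevant matrix is diagonal. This is exactly the kind of routine verification the statement invites the reader to carry out, which is why the paper leaves the proof to ``the incisive sense of the reader.''
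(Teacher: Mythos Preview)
Your proposal is correct and is exactly the argument the paper has in mind: the paper explicitly states that this lemma (together with the two that follow) is a ``straightforward consequence'' of Theorem~\ref{theorem:LDOI-sep}, Lemma~\ref{lemma:LDOI-psd-ppt}, Lemma~\ref{lemma:CJiso}, and Theorem~\ref{theorem:DUC/CDUC/DOC-LDUI/CLDUI/LDOI}, and leaves the details to the reader. Your bookkeeping via Remark~\ref{remark:(C)LDUIsubspaceLDOI} to pass between the LDUI pair $(A,B)$ and the LDOI triple $(A,\operatorname{diag}A,B)$ (and its partial transpose $(A,B,\operatorname{diag}A)$) is precisely the intended route.
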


\begin{lemma} \label{lemma:CDUC-prop}
Consider $(A,B)\in \MLDUI{d}$. Then, the associated map $\Phi^{(2)}_{(A,B)}\in \CDUC_d$ is 
\begin{enumerate}
    \item hermiticity preserving $\iff A\in \Mreal{d}$ and $B\in \Msa{d}$,
    \item completely positive $\iff A \in \EWP_d$ and $B \in \PSD_d$,
    \item completely copositive $\iff A \in \EWP_d$, $B\in \Msa{d}$, and $A_{ij}A_{ji} \geq \vert B_{ij} \vert^2 \,\, \forall i,j\in [d]$,
    \item entanglement breaking $\iff (A,B)$ is \emph{PCP}.
\end{enumerate}
\end{lemma}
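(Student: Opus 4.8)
The strategy is to transport everything through the Choi-Jamio{\l}kowski isomorphism $J$ and the bijection $X^{(2)}$, using Remark~\ref{remark:choi(DOC)} which says $J(\Phi^{(2)}_{(A,B)}) = X^{(2)}_{(A,B)}$. Then each of the four properties of $\Phi^{(2)}_{(A,B)}$ translates, via Lemma~\ref{lemma:CJiso}, into a statement about $X^{(2)}_{(A,B)}$, which can be decided by the known structure of CLDUI matrices. Concretely: $\Phi$ hermiticity preserving $\iff X^{(2)}_{(A,B)}$ self-adjoint; $\Phi$ completely positive $\iff X^{(2)}_{(A,B)} \in \PSD_{d^2}$; $\Phi$ completely copositive $\iff (X^{(2)}_{(A,B)})^\Gamma \in \PSD_{d^2}$; and $\Phi$ entanglement breaking $\iff X^{(2)}_{(A,B)}$ is separable.

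For item (1), I would invoke the self-adjointness characterization for LDOI matrices (the proposition stating $X^{(3)}_{(A,B,C)}$ self-adjoint $\iff A \in \Mreal{d}$, $B,C \in \Msa{d}$) specialized via $X^{(2)}_{(A,B)} = X^{(3)}_{(A,B,\operatorname{diag}A)}$ from Remark~\ref{remark:(C)LDUIsubspaceLDOI}; since $\operatorname{diag}A \in \Msa{d}$ is automatic when $A \in \Mreal{d}$, this collapses to $A \in \Mreal{d}$ and $B \in \Msa{d}$. For item (2), I use $X^{(2)}_{(A,B)} \in \PSD_{d^2} \iff X^{(3)}_{(A,B,\operatorname{diag}A)} \in \PSD_{d^2}$ and apply part (1) of Lemma~\ref{lemma:LDOI-psd-ppt}: this requires $A \in \EWP_d$, $B \in \PSD_d$, $\operatorname{diag}A \in \Msa{d}$ (automatic), and $A_{ij}A_{ji} \geq |(\operatorname{diag}A)_{ij}|^2$, which for $i \neq j$ reads $A_{ij}A_{ji} \geq 0$ (implied by $A \in \EWP_d$) and for $i=j$ is $A_{ii}^2 \geq A_{ii}^2$ (trivial); hence the condition is exactly $A \in \EWP_d$ and $B \in \PSD_d$. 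For item (3), I similarly use $(X^{(2)}_{(A,B)})^\Gamma \in \PSD_{d^2}$, which by Remark~\ref{remark:(C)LDUIsubspaceLDOI} ($[X^{(1)}_{(A,B)}]^\Gamma = X^{(2)}_{(A,B)}$, equivalently $[X^{(2)}_{(A,B)}]^\Gamma = X^{(1)}_{(A,B)} = X^{(3)}_{(A,\operatorname{diag}A,B)}$) and part (1) of Lemma~\ref{lemma:LDOI-psd-ppt} applied to the triple $(A,\operatorname{diag}A,B)$, gives $A \in \EWP_d$, $\operatorname{diag}A \in \PSD_d$ (automatic), $B \in \Msa{d}$, and $A_{ij}A_{ji} \geq |B_{ij}|^2$ for all $i,j$. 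For item (4), I cite Theorem~\ref{theorem:LDOI-sep} directly: $X^{(2)}_{(A,B)} \in \CLDUI_d$ separable $\iff (A,B)$ is PCP, and by Lemma~\ref{lemma:CJiso}(6) separability of $J(\Phi)$ is equivalent to $\Phi$ being entanglement breaking.

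The proof is essentially a dictionary translation, so there is no real obstacle — the only mild subtlety is bookkeeping the diagonal-constraint identities from Remark~\ref{remark:(C)LDUIsubspaceLDOI} correctly and checking that the ``minor'' conditions $A_{ij}A_{ji} \ge |(\operatorname{diag}A)_{ij}|^2$ are vacuous, which they are. Since the lemma statement itself says these are straightforward consequences left to the reader, I would present the argument compactly, perhaps only spelling out one item (say item (2) or item (3)) in detail and remarking that the others follow identically by the same CJ-isomorphism plus Lemma~\ref{lemma:LDOI-psd-ppt} bookkeeping. Note also the symmetry with Lemma~\ref{lemma:DUC-prop}: items (2) and (3) are swapped between the DUC and CDUC cases, reflecting the relation $[X^{(1)}_{(A,B)}]^\Gamma = X^{(2)}_{(A,B)}$, and this can be pointed out as a sanity check.
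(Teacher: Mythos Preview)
Your proposal is correct and follows exactly the route the paper indicates: the text explicitly states that Lemmas~\ref{lemma:DUC-prop}--\ref{lemma:DOC-prop} are ``straightforward consequences of the results from Theorem~\ref{theorem:LDOI-sep}, Lemma~\ref{lemma:LDOI-psd-ppt}, Lemma~\ref{lemma:CJiso} and Theorem~\ref{theorem:DUC/CDUC/DOC-LDUI/CLDUI/LDOI}'' and leaves the proofs to the reader. Your dictionary translation via $J(\Phi^{(2)}_{(A,B)}) = X^{(2)}_{(A,B)} = X^{(3)}_{(A,B,\operatorname{diag}A)}$ is precisely what is intended, and your bookkeeping of the vacuous minor conditions is correct.
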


\begin{lemma} \label{lemma:DOC-prop}
Consider $(A,B,C)\in \MLDOI{d}$. Then, the associated map $\Phi^{(3)}_{(A,B,C)}\in \DOC_d$ is 
\begin{enumerate}
    \item hermiticity preserving $\iff A\in \Mreal{d}$ and $B,C\in \Msa{d}$,
    \item completely positive $\iff A \in \EWP_d$, $B \in \PSD_d$, $C\in \mathcal{M}^{sa}_d(\mathbb{C})$, and $A_{ij}A_{ji} \geq \vert C_{ij} \vert^2 \,\, \forall i,j\in~[d]$,
    \item completely copositive $\iff A \in \EWP_d$, $B\in \Msa{d}$, $C \in \PSD_d$, and $A_{ij}A_{ji} \geq \vert B_{ij} \vert^2 \,\, \forall i,j\in~[d]$,
    \item entanglement breaking $\iff (A,B,C)$ is \emph{TCP}.
\end{enumerate}
\end{lemma}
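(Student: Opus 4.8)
The plan is to move every property of $\Phi^{(3)}_{(A,B,C)}$ across the Choi--Jamio{\l}kowski isomorphism and then quote the corresponding structural result already proved for LDOI matrices. The single fact that makes this work is Remark~\ref{remark:choi(DOC)}: the Choi matrix of $\Phi^{(3)}_{(A,B,C)}$ is exactly $X^{(3)}_{(A,B,C)} \in \LDOI_d$. Combined with the dictionary of Lemma~\ref{lemma:CJiso}, each of the four claims becomes a one-line translation. For part (1): by Lemma~\ref{lemma:CJiso}(1), $\Phi^{(3)}_{(A,B,C)}$ is hermiticity preserving iff $X^{(3)}_{(A,B,C)}$ is self-adjoint; and using the relation $(X^{(3)}_{(A,B,C)})^* = X^{(3)}_{(\bar A, B^*, C^*)}$ established in Section~\ref{sec:linear-structure}, self-adjointness is equivalent to $A \in \Mreal{d}$ and $B, C \in \Msa{d}$.

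For part (2): Lemma~\ref{lemma:CJiso}(3) says $\Phi^{(3)}_{(A,B,C)}$ is completely positive iff $X^{(3)}_{(A,B,C)} \in \PSD_{d^2}$, and part~(1) of Lemma~\ref{lemma:LDOI-psd-ppt} identifies the latter with the conditions $A \in \EWP_d$, $B \in \PSD_d$, $C \in \Msa{d}$ and $A_{ij}A_{ji} \geq |C_{ij}|^2$ for all $i,j \in [d]$. For part (3): Lemma~\ref{lemma:CJiso}(4) says $\Phi^{(3)}_{(A,B,C)}$ is completely copositive iff $X^{(3)\,\Gamma}_{(A,B,C)} \in \PSD_{d^2}$; one may either quote part~(2) of Lemma~\ref{lemma:LDOI-psd-ppt} directly, or use $X^{(3)\,\Gamma}_{(A,B,C)} = X^{(3)}_{(A,C,B)}$ from Proposition~\ref{prop:leg-permutations} and then apply part~(1) of Lemma~\ref{lemma:LDOI-psd-ppt} with the roles of $B$ and $C$ interchanged. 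Either route gives $A \in \EWP_d$, $B \in \Msa{d}$, $C \in \PSD_d$ and $A_{ij}A_{ji} \geq |B_{ij}|^2$. For part (4): Lemma~\ref{lemma:CJiso}(6) says $\Phi^{(3)}_{(A,B,C)}$ is entanglement breaking iff $X^{(3)}_{(A,B,C)}$ is separable, and Theorem~\ref{theorem:LDOI-sep} identifies this with $(A,B,C)$ being TCP.

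There is no real obstacle; the argument is purely a matter of bookkeeping. The only point that deserves a moment of attention is the partial transpose in part (3): one must keep track of whether $B$ or $C$ is the matrix forced to be positive semidefinite, and the symmetry $(X^{(3)}_{(A,B,C)})^\Gamma = X^{(3)}_{(A,C,B)}$ makes this transparent, confirming that part~(3) is obtained from part~(2) by exchanging the roles of $B$ and $C$, exactly as the statement predicts. The companion Lemmas~\ref{lemma:DUC-prop} and \ref{lemma:CDUC-prop} are handled identically, replacing $X^{(3)}$ by $X^{(1)}$ and $X^{(2)}$ and invoking Remark~\ref{remark:(C)LDUIsubspaceLDOI} to pull the positivity and separability characterizations from the LDOI setting back to the LDUI/CLDUI ones.
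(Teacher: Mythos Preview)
Your proposal is correct and matches the paper's own approach: the paper explicitly leaves the proof to the reader, indicating that the lemma is a straightforward consequence of Lemma~\ref{lemma:CJiso}, Lemma~\ref{lemma:LDOI-psd-ppt}, Theorem~\ref{theorem:LDOI-sep}, and Theorem~\ref{theorem:DUC/CDUC/DOC-LDUI/CLDUI/LDOI} (equivalently Remark~\ref{remark:choi(DOC)}), which is precisely the route you spell out.
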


The next result presents conditions on matrix pairs/triples which ensure that the corresponding DUC/CDUC/DOC maps are unital and trace preserving. This will be of immediate application in describing the class of quantum DOC channels, which forms the content of Lemma~\ref{lemma:DOCquantum} below.

\begin{lemma}\label{lemma:DUC/CDUC/DOC-unital/tr}
Consider $(A,B,C) \in \MLDOI{d}$. Then, the maps $\Phi^{(1)}_{(A,B)}, \Phi^{(2)}_{(A,B)}$ and $\Phi^{(3)}_{(A,B,C)}$ in $\DUC_d$, $\CDUC_d$ and $\DOC_d$ respectively, are
\begin{enumerate}
    \item unital if and only if $\sum_{j}A_{ij}=1 \,\, \forall i\in [d]$,
    \item trace preserving if and only if $\sum_{i}A_{ij}=1 \,\, \forall j\in [d]$.
\end{enumerate}
\end{lemma}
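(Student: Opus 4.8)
The plan is to reduce the entire statement to the case of $\Phi^{(3)}_{(A,B,C)}$ and then read off the two conditions directly from its explicit action in Eq.~\eqref{eq:DOC-action}. The first step is the reduction: comparing Eqs.~\eqref{eq:DUC-action}, \eqref{eq:CDUC-action} and \eqref{eq:DOC-action} and using $\widetilde{\operatorname{diag}A}=0$, one sees that $\Phi^{(1)}_{(A,B)}=\Phi^{(3)}_{(A,\operatorname{diag}A,B)}$ and $\Phi^{(2)}_{(A,B)}=\Phi^{(3)}_{(A,B,\operatorname{diag}A)}$. Since the conditions claimed in parts (1) and (2) involve only the matrix $A$, it suffices to establish them for $\Phi^{(3)}_{(A,B,C)}$.

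For unitality, I would evaluate $\Phi^{(3)}_{(A,B,C)}$ at $Z=\mathbb{I}_d$. Here $\ket{\operatorname{diag}\mathbb{I}_d}$ is the all-ones vector, so $A\ket{\operatorname{diag}\mathbb{I}_d}$ has $i$-th coordinate equal to the row sum $\sum_j A_{ij}$, and the first term of Eq.~\eqref{eq:DOC-action} becomes $\sum_i\big(\sum_j A_{ij}\big)\ketbra{i}{i}$; the two Hadamard terms $\widetilde{B}\odot\mathbb{I}_d$ and $\widetilde{C}\odot\mathbb{I}_d$ vanish because $\widetilde{B},\widetilde{C}$ have zero diagonal. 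Hence $\Phi^{(3)}_{(A,B,C)}(\mathbb{I}_d)=\mathbb{I}_d$ if and only if $\sum_j A_{ij}=1$ for every $i$. For trace preservation, take an arbitrary $Z\in\M{d}$ and compute: the first term contributes $\sum_i(A\ket{\operatorname{diag}Z})_i=\sum_j\big(\sum_i A_{ij}\big)Z_{jj}$, while $\operatorname{Tr}(\widetilde{B}\odot Z)=\sum_i\widetilde{B}_{ii}Z_{ii}=0$ and likewise $\operatorname{Tr}(\widetilde{C}\odot Z^\top)=0$. Thus $\operatorname{Tr}\Phi^{(3)}_{(A,B,C)}(Z)=\sum_j\big(\sum_i A_{ij}\big)Z_{jj}$, which equals $\operatorname{Tr}Z=\sum_j Z_{jj}$ for all $Z$ (equivalently, for all $Z=\ketbra{j}{j}$) if and only if $\sum_i A_{ij}=1$ for every $j$.

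An equivalent and perhaps cleaner route avoids the explicit action entirely: by Remark~\ref{remark:choi(DOC)} we have $J(\Phi^{(3)}_{(A,B,C)})=X^{(3)}_{(A,B,C)}$, and Proposition~\ref{prop:conditional-expectations} gives $[\operatorname{id}\otimes\operatorname{Tr}]X^{(3)}_{(A,B,C)}=A^{\mathsf{row}}$ and $[\operatorname{Tr}\otimes\operatorname{id}]X^{(3)}_{(A,B,C)}=A^{\mathsf{col}}$, whose diagonals are precisely the row sums and column sums of $A$. Lemma~\ref{lemma:unital-Tr-map} then translates unitality into $A^{\mathsf{row}}=\mathbb{I}_d$ and trace preservation into $A^{\mathsf{col}}=\mathbb{I}_d$, which are exactly the stated conditions; one then descends to $\Phi^{(1)},\Phi^{(2)}$ via the identifications from the first step.

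I do not expect a real obstacle here. The only points that need a word of care are (i) verifying that the CDUC/DUC conditions genuinely coincide with the DOC one, which is dispatched by the identities $\Phi^{(1)}_{(A,B)}=\Phi^{(3)}_{(A,\operatorname{diag}A,B)}$ and $\Phi^{(2)}_{(A,B)}=\Phi^{(3)}_{(A,B,\operatorname{diag}A)}$ together with the fact that the claimed conditions depend on $A$ only; and (ii) using the universal quantifier over $Z$ in the trace-preservation computation to extract the pointwise condition on column sums, for which testing on the rank-one matrices $\ketbra{j}{j}$ already suffices.
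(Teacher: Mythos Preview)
Your proposal is correct. Your second route---via $J(\Phi^{(3)}_{(A,B,C)})=X^{(3)}_{(A,B,C)}$, Proposition~\ref{prop:conditional-expectations}, and Lemma~\ref{lemma:unital-Tr-map}---is precisely the paper's argument (the paper carries it out diagrammatically for the CDUC case and leaves the others to the reader). Your first route, computing $\Phi^{(3)}_{(A,B,C)}(\mathbb I_d)$ and $\operatorname{Tr}\Phi^{(3)}_{(A,B,C)}(Z)$ directly from the action formula~\eqref{eq:DOC-action}, is a slightly more elementary alternative that bypasses the Choi matrix entirely; it buys you a self-contained computation at the cost of not exhibiting the structural reason (partial traces of $J(\Phi)$) behind the result. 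Both the reduction to $\Phi^{(3)}$ and the pointwise testing on $Z=\ketbra{j}{j}$ are handled correctly.
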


\begin{proof}
    We tackle only the CDUC case here. Other cases can be proven similarly. To this end, first note that Lemma \ref{lemma:unital-Tr-map} expresses the unitality and trace preserving property of a map $\Phi\in \T{d}$ in terms of partial trace conditions on its Choi matrix $J(\Phi)$. The desired result is then a consequence of the fact that for $\Phi^{(2)}_{(A,B)} \in \CDUC_d$, these conditions are equivalent to the two diagrams given in Figure \ref{fig:DUC-1-unital-tr} below (the last terms in both the diagrams cancel because $ \operatorname{diag}\widetilde{B}=0$).
    \begin{figure}[htbp]
        \centering
        \includegraphics{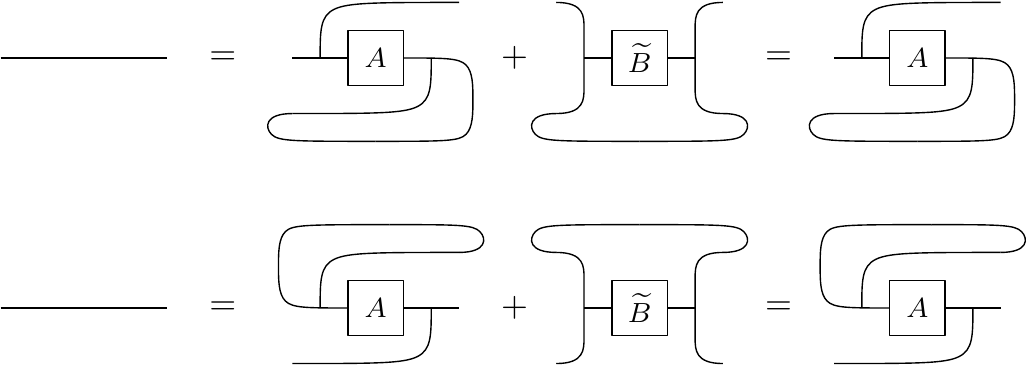}
        \caption{ Unitality (top panel) and trace preserving property (bottom panel) of a CDUC map $\Phi$ expressed in terms of the associated matrix pair $(A,B)$. \qedhere}
        \label{fig:DUC-1-unital-tr}
    \end{figure}
\end{proof}

Recall that a matrix $A$ in $\mathcal{M}_d(\mathbb{C})$ is called \emph{row} (resp.~\emph{column}) \emph{stochastic} if it is entrywise non-negative and the sum of entries in each row (resp.~column) equals one. By combining the results from Lemmas~\ref{lemma:DOC-prop} and \ref{lemma:DUC/CDUC/DOC-unital/tr}, it is straightforward to present a characterization of the class of quantum diagonal orthogonal covariant channels in $\T{d}$.

\begin{lemma} \label{lemma:DOCquantum}
A linear map $\Phi^{(3)}_{(A,B,C)}\in \DOC_d$ is a quantum channel $\iff A$ is column stochastic, $B\in \PSD_d$ and $C\in \Msa{d}$ such that $A_{ij}A_{ji} \geq \vert C_{ij} \vert^2 \,\, \forall i,j\in [d]$.
\end{lemma}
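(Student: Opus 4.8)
The plan is to assemble this characterization directly from the two ingredients we have already established: Lemma~\ref{lemma:DOC-prop}, which tells us exactly when $\Phi^{(3)}_{(A,B,C)}$ is completely positive in terms of conditions on the triple $(A,B,C)$, and Lemma~\ref{lemma:DUC/CDUC/DOC-unital/tr}, which tells us exactly when it is trace preserving. Since a quantum channel is by definition a completely positive, trace preserving map, the claimed equivalence should follow by simply conjoining the two sets of conditions and observing that they combine neatly.

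First I would invoke part~(2) of Lemma~\ref{lemma:DOC-prop}: $\Phi^{(3)}_{(A,B,C)}$ is completely positive if and only if $A\in\EWP_d$, $B\in\PSD_d$, $C\in\Msa{d}$, and $A_{ij}A_{ji}\ge|C_{ij}|^2$ for all $i,j\in[d]$. Next I would invoke part~(2) of Lemma~\ref{lemma:DUC/CDUC/DOC-unital/tr}: $\Phi^{(3)}_{(A,B,C)}$ is trace preserving if and only if $\sum_i A_{ij}=1$ for every $j\in[d]$, i.e.\ every column of $A$ sums to one. Conjoining these, $\Phi^{(3)}_{(A,B,C)}$ is a quantum channel if and only if $A\in\EWP_d$ with all column sums equal to one --- which is precisely the statement that $A$ is column stochastic --- together with $B\in\PSD_d$, $C\in\Msa{d}$, and $A_{ij}A_{ji}\ge|C_{ij}|^2$ for all $i,j$. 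This is exactly the condition in the statement, so the proof concludes.

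There is essentially no obstacle here; the only point requiring a word of care is the bookkeeping of which positivity/diagonal-dominance condition attaches to $B$ versus $C$ in the complete positivity criterion (it is $B\in\PSD_d$ and the minor condition on $C$, not the other way around — this being the asymmetry already built into the $X^{(3)}_{(A,B,C)}$ convention via Lemma~\ref{lemma:LDOI-psd-ppt}), and noting that no additional constraint on $B$ or $C$ is imposed by the trace preserving condition since the latter only involves $A$. Hence the combined list is obtained by a straightforward union of the two criteria, with the entrywise nonnegativity of $A$ and the column-sum condition merging into ``column stochastic''.
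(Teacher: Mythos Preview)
Your proposal is correct and matches the paper's approach exactly: the paper states the lemma as an immediate consequence of combining Lemma~\ref{lemma:DOC-prop}(2) (complete positivity) with Lemma~\ref{lemma:DUC/CDUC/DOC-unital/tr}(2) (trace preservation), which is precisely what you do. There is no additional content in the paper's argument beyond this conjunction.
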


We now discuss the symmetries of the Choi matrices described in Proposition \ref{prop:leg-permutations} in terms of the corresponding maps. 

\begin{proposition}\label{prop:leg-permutations-map}
For any linear map $\Phi^{(3)}_{(A,B,C)}\in \DOC_d$, we have
\begin{align*}
    \left(\Phi^{(3)}_{(A,B,C)}\right)^{*} &= \Phi^{(3)}_{(A^\top,B^\top,C)}, \\
    \Phi^{(3)}_{(A,B,C)} \circ \top &= \Phi^{(3)}_{(A,C,B)}, \\
    \top \circ \Phi^{(3)}_{(A,B,C)}  &= \Phi^{(3)}_{(A,C^{\top},B^{\top})}, 
\end{align*}
where the $*$ denotes the dual map (i.e.~the adjoint of the map with respect to the Hilbert-Schmidt inner product in $\mathcal M_d(\mathbb C)$), and $\top$ is the transposition map. In particular, the composition of a \emph{CDUC} (resp.~\emph{DUC, DOC}) map with the transposition yields a \emph{DUC} (resp.~\emph{CDUC, DOC}) map, while adjoint preserves the three classes \emph{CDUC, DUC, DOC}. 
\end{proposition}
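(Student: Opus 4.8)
The plan is to transport the three operations on maps through the Choi--Jamio{\l}kowski isomorphism, using $J(\Phi^{(3)}_{(A,B,C)}) = X^{(3)}_{(A,B,C)}$ from Remark~\ref{remark:choi(DOC)} together with the catalogue of tensor-leg permutations in Proposition~\ref{prop:leg-permutations}. The first step is to record how post-composition with $\top$, pre-composition with $\top$, and the Hilbert--Schmidt adjoint act on Choi matrices. Straight from $J(\Phi)=\sum_{i,j}\Phi(\ketbra{i}{j})\otimes\ketbra{i}{j}$ one reads off
\begin{equation*}
J(\top\circ\Phi) = (\top\otimes\operatorname{id})J(\Phi) = J(\Phi)^{\text{\reflectbox{$\Gamma$}}}, \qquad J(\Phi\circ\top) = (\operatorname{id}\otimes\top)J(\Phi) = J(\Phi)^{\Gamma},
\end{equation*}
while $J(\Phi^{*}) = \overline{F\,J(\Phi)\,F}$ follows from the coordinate form $[\Phi^{*}(\ketbra{a}{b})]_{cd} = \overline{[\Phi(\ketbra{c}{d})]_{ab}}$ of the defining relation of the adjoint. (Equivalently, and perhaps more transparently, all three identities in the statement can be obtained without leaving $\M{d}$ by manipulating the explicit action~\eqref{eq:DOC-action}: transpose the output, evaluate on $Z^{\top}$, or unfold the pairing $\operatorname{Tr}(W^{*}\Phi^{(3)}_{(A,B,C)}(Z))$, respectively.)

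The second step merely feeds these into Proposition~\ref{prop:leg-permutations}. From $\bigl(X^{(3)}_{(A,B,C)}\bigr)^{\Gamma} = X^{(3)}_{(A,C,B)}$ and $\bigl(X^{(3)}_{(A,B,C)}\bigr)^{\text{\reflectbox{$\Gamma$}}} = X^{(3)}_{(A,C^{\top},B^{\top})}$ one obtains at once $\Phi^{(3)}_{(A,B,C)}\circ\top = \Phi^{(3)}_{(A,C,B)}$ and $\top\circ\Phi^{(3)}_{(A,B,C)} = \Phi^{(3)}_{(A,C^{\top},B^{\top})}$. For the adjoint one uses $F X^{(3)}_{(A,B,C)} F = X^{(3)}_{(A^{\top},B,C^{\top})}$ together with the fact that entrywise conjugation acts componentwise on the triple, so $J(\Phi^{*}) = X^{(3)}_{(\overline{A}^{\top},\overline{B},\overline{C}^{\top})}$; on hermiticity-preserving DOC maps --- for which $A\in\Mreal{d}$ and $B,C\in\Msa{d}$ by Lemma~\ref{lemma:DOC-prop}(1), the regime in which this identity is used --- this collapses to $\Phi^{(3)}_{(A^{\top},B^{\top},C)}$, as stated.

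For the final assertion, apply $J^{-1}$ to Remark~\ref{remark:(C)LDUIsubspaceLDOI} to see that $\Phi^{(3)}_{(A,B,C)}$ is \emph{CDUC} precisely when $C=\operatorname{diag}A$ and \emph{DUC} precisely when $B=\operatorname{diag}A$. Substituting $C=\operatorname{diag}A$ into $\Phi^{(3)}_{(A,B,C)}\circ\top = \Phi^{(3)}_{(A,C,B)}$ produces a map whose second matrix is $\operatorname{diag}A$, hence a \emph{DUC} map; symmetrically $\top$ sends \emph{DUC} to \emph{CDUC} and fixes \emph{DOC}. Likewise $C=\operatorname{diag}A$ forces $\overline{C}^{\top}=\operatorname{diag}\overline{A}=\operatorname{diag}(\overline{A}^{\top})$ (and similarly for $B$), so the adjoint preserves each of \emph{CDUC}, \emph{DUC}, \emph{DOC}. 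The one point that needs genuine care is the adjoint: the sesquilinearity of the Hilbert--Schmidt inner product contributes the complex conjugation that is absent from the purely algebraic leg-permutation formulas of Proposition~\ref{prop:leg-permutations}; it is harmless precisely on hermiticity-preserving maps, which is exactly where $(\Phi^{(3)}_{(A,B,C)})^{*}=\Phi^{(3)}_{(A^{\top},B^{\top},C)}$ is applied. The other two identities and the stability under composition with $\top$ are entirely mechanical.
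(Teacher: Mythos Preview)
Your approach is exactly the paper's: transport everything to Choi matrices and invoke Proposition~\ref{prop:leg-permutations}. The paper's own proof is the single sentence ``The stated equalities are simple consequences of Proposition~\ref{prop:leg-permutations}'', so for the second and third identities (composition with $\top$) and for the closing assertion about how the three classes behave, your argument is a correct and more explicit version of what the paper intends.

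On the adjoint identity you have in fact noticed a genuine imprecision in the statement. Your computation $J(\Phi^{*})=\overline{F\,J(\Phi)\,F}$ is correct for the Hilbert--Schmidt adjoint, and combined with Proposition~\ref{prop:leg-permutations} it gives $\bigl(\Phi^{(3)}_{(A,B,C)}\bigr)^{*}=\Phi^{(3)}_{(A^{*},\,\overline{B},\,C^{*})}$ in general, which collapses to $\Phi^{(3)}_{(A^{\top},B^{\top},C)}$ precisely when $A$ is real and $B,C$ are self-adjoint. The triple $(A^{\top},B^{\top},C)$ is what the \emph{bilinear} trace dual produces (equivalently, it corresponds to the $\times$ operation $F\,J(\Phi)^{\top}F$ of Proposition~\ref{prop:leg-permutations}), and the two notions agree exactly on hermiticity-preserving maps. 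So your caveat is not a gap in your argument but a correct observation that the displayed formula, read literally for the sesquilinear adjoint, requires hermiticity preservation; the paper's one-line proof simply does not flag this. The only thing to tidy is the presentation: rather than saying the identity holds ``in the regime in which it is used'', state the general formula $\Phi^{(3)}_{(A^{*},\overline{B},C^{*})}$ and then note that it specializes to the displayed one under Lemma~\ref{lemma:DOC-prop}(1).
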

\begin{proof}
    The stated equalities are simple consequences of Proposition \ref{prop:leg-permutations}.
\end{proof}

\section{Important classes of DOC maps} \label{sec:DOCexamples}

This section contains a number of examples of (classes of) DOC maps. The examples of LDOI bipartite matrices from Section \ref{sec:LDOIexamples} can be seen, through the Choi-Jamio{\l}kowski isomorphism, as examples of DOC maps. We list some of them, together with some important classes of maps discussed in the literature, below. One of the main achievements of the current work is realizing that all these linear maps fall under the same umbrella, and hence can be studied within a unified framework. A neatly summarized list of all the examples is presented in Table \ref{tbl:DOC} towards the end of the section.

\begin{example}[\emph{Identity and transposition}]\ \\[0.1cm] \label{eg:maps-id-transp}
The identity map $\operatorname{id}\in \T{d}$ corresponds to a CDUC map with $A=\mathbb{I}_d$ and $B=\mathbb{J}_d$, whereas the transposition map $\top\in \T{d}$ corresponds to a DUC map with $A=\mathbb{I}_d$ and $C=\mathbb{J}_d$. While the identity map is clearly completely positive, transposition, on the other hand, is the most common example of a positive but not completely positive map. These maps are special examples of the more general class of diagonal-preserving maps, which is discussed in Example~\ref{eg:maps-diag-preserve}.
\end{example}

\begin{example}[\emph{Classical maps}]\ \\[0.1cm] \label{eg:maps-classical} 
Equations~\eqref{eq:DUC-action} and \eqref{eq:CDUC-action} entail that a map $\Phi\in \DOC_d$ is both DUC and CDUC if and only if both the associated matrices $B$ and $C$ are diagonal. These maps are then parameterized by a single matrix $A\in \M{d}$, and have the following action:
\begin{equation}
    \Phi_A (Z) = \operatorname{diag}(A\ket{\operatorname{diag}(Z)}).
\end{equation}
From the above equation, one understands that these maps completely discard the off-diagonal entries of its input, and act only on the diagonal part through the matrix $A$. In the quantum setting, these maps only change the classical probability distribution $\ket{\operatorname{diag}\rho}$ associated with the input state $\rho$, which earns them their \emph{classical} nature. Notable examples from this class include the \emph{completely depolarizing} and the \emph{completely dephasing} maps, for which $A=\mathbb{J}_d$ (see Example~\ref{eg:maps-dep+schur+transp}) and $A=\mathbb{I}_d$ (see Example~\ref{eg:maps-diag-preserve}). Positivity, complete positivity and entanglement breaking property for maps in this class are all equivalent to the condition that $A\in \EWP_d$, as is clear from the discussion of the corresponding diagonal Choi matrices in Example~\ref{eg:states-diag}.
\end{example}

\begin{example}[\emph{Schur multipliers}] \cite{harris2018schur}\cite[Chapters 3 and 8]{paulsen2002schur}\cite[Section 4.1.3]{watrous2018theory}\ \\[0.1cm] \label{eg:maps-schur}
Given $S \in \mathcal{M}_d(\mathbb{C})$, the Schur multiplier map $\Phi_S \in \mathcal{T}_d(\mathbb{C})$ is defined as 
\begin{equation}
    \Phi_S(X) = S\odot X, \qquad \forall X \in \M{d}.
\end{equation}
Clearly, $\Phi_S = \Phi^{(2)}_{(A,B)} \in \CDUC_d \subset \DOC_d$ for $B = S$ and $A=\operatorname{diag}(S)$. Using Lemma~\ref{lemma:CDUC-prop}, we can easily see that $\Phi_S$ is completely positive if and only if $S\in \PSD_d$. The same lemma tells us that complete copositivity forces $S$ to be diagonal, implying that the map $\Phi_S$ is PPT if and only if $S$ is diagonal and entrywise non-negative, in which case, it is entanglement breaking as well. 

Quantum channels within the class of Schur multipliers are also known as \emph{generalized dephasing} or \emph{Hadamard} channels. From Lemma~\ref{lemma:DOCquantum}, it is clear that $\Phi_S$ is a quantum channel if and only if $S$ is a correlation matrix (Definition~\ref{def:corr}), see also \cite[Section 4.1.3]{watrous2018theory}. These channels are used to describe a decoherence type noise in quantum systems, since their action on a quantum state preserves the diagonal entries (see Example~\ref{eg:maps-diag-preserve}) and reduces the magnitude of the off-diagonal entries. The extreme case occurs when $S=\mathbb{I}_d$, in which case the associated channel entirely discards the off-diagonal part of the input, thus resulting in a \emph{completely dephasing} action.
\end{example}

\begin{example}[\emph{Unitary and conjugate unitary covariant maps}]\ \\[0.1cm] \label{eg:maps-UC-CUC}
We call a map $\Phi\in \T{d}$ Unitary Covariant (UC) and Conjugate Unitary Covariant (CUC) if for all $Z\in \M{d}$ and unitary matrices $U\in \M{d}$, respectively, $\Phi(UZU^*) = \overbar{U}\Phi(Z)U^\top$ and $\Phi(UZU^*) = U\Phi(Z)U^*$. From Definition~\ref{def:DUC-CDUC-DOC}, it is obvious that UC maps are DUC, and CUC maps are CDUC. Mimicking the proof of Lemma~\ref{theorem:DUC/CDUC/DOC-LDUI/CLDUI/LDOI}, we infer that $\Phi$ is UC (resp.~CUC) if and only if its Choi matrix $J(\Phi)$ is a Werner (resp.~isotropic) matrix, see Example~\ref{eg:states-wer-iso}. From the form of these Choi matrices, we can express the action of UC and CUC maps on $\M{d}$ as follows:
\begin{equation}
    \Phi^{\mathsf{uc}}_{a,b}(Z) = a\operatorname{Tr}(Z)\mathbb{I}_d + bZ^\top, \qquad \Phi^{\mathsf{cuc}}_{a,b}(Z) = a\operatorname{Tr}(Z)\mathbb{I}_d + bZ,
\end{equation}
where $a,b$ are complex numbers. Now, following Example~\ref{eg:states-wer-iso}, we define the matrices $A=b\,\mathbb{I}_d + a\mathbb{J}_d$ and $B=a\mathbb{I}_d + b\mathbb{J}_d$. It then becomes evident that $\Phi^{\mathsf{uc}}_{a,b} = \Phi^{(1)}_{(A,B)} \in \DUC_d$ and $\Phi^{\mathsf{cuc}}_{a,b} = \Phi^{(2)}_{(A,B)} \in \CDUC_d$. The result of Proposition~\ref{prop:wer-iso-sep} then translates into the fact that these maps are PPT if and only if they are entanglement breaking if and only if $a\geq 0$ and $-a/d \leq b \leq a$.

It is perhaps worthwhile to mention that quantum channels (Lemma~\ref{lemma:DOCquantum}) within the CUC and UC classes of linear maps are important models of quantum noise and are known as depolarizing \cite{King2003depol} and transpose depolarizing \cite{Datta2006trans-depol} channels, respectively. 
\end{example}

\begin{example}[\emph{Choi-type maps}]\ \\[0.1cm] \label{eg:maps-choi}
Consider the Choi map $\Phi_{\mathsf{ch}} \in \mathcal{T}_3(\mathbb{C})$ defined as:
\begin{equation}
    \Phi_{\mathsf{ch}}(Z) = \left(   \begin{array}{ccc}
        Z_{11}+Z_{33} & -Z_{12} & -Z_{13}  \\
        -Z_{21} & Z_{11}+Z_{22} & -Z_{23} \\
        -Z_{31} & -Z_{32} & Z_{22}+Z_{33}
    \end{array}  \right).
\end{equation}
This was the first example of a positive non-decomposable map between matrix algebras, presented by Choi in the '70s \cite{choi1972positive,choi1975,choi1982positive}. Since then, many generalizations of this map have been proposed. In \cite{cho1992choi}, the authors introduced the family $\{ \Phi_{(a,b,c)}^I \in \mathcal{T}_3(\mathbb{C}) : a,b,c \geq 0 \}$ and studied constraints on the triple $(a,b,c)\in \mathbb{R}^3$ which guarantee that the corresponding map is positive/decomposable:
\begin{equation}
    \Phi_{(a,b,c)}^I(Z) =  \left(   \begin{array}{ccc}
        aZ_{11}+bZ_{22}+cZ_{33} & -Z_{12} & -Z_{13}  \\
        -Z_{21} & cZ_{11}+aZ_{22}+bZ_{33} & -Z_{23} \\
        -Z_{31} & -Z_{32} & bZ_{11}+cZ_{22}+aZ_{33}
    \end{array}  \right).
\end{equation}
A slightly different variant $\Phi^{II}_{(a,c_1,c_2,c_3)}\in \mathcal{T}_3(\mathbb{C})$ of the above maps was introduced in \cite{kye1992choi}:
\begin{equation}
    \Phi^{II}_{(a,c_1,c_2,c_3)}(Z) = \left(   \begin{array}{ccc}
        aZ_{11}+c_1Z_{33} & -Z_{12} & -Z_{13}  \\
        -Z_{21} & c_2Z_{11}+aZ_{22} & -Z_{23} \\
        -Z_{31} & -Z_{32} & c_3Z_{22}+aZ_{33}
    \end{array}  \right).
\end{equation}
where $a,c_1,c_2,c_3 \geq 0$. In higher dimensions, the family $\{ \tau_{d,k}\in \mathcal{T}_d(\mathbb{C}) : d\in \mathbb{N}, k=1,2,\ldots ,d-1\}$ has received considerable attention \cite{tanahashi1988choi, osaka1991choi, yamagami1993choi, Ha1998choi}, where the maps are defined in terms of a cyclic permutation matrix $S \in \mathcal{M}_d(\mathbb{C})$:
\begin{equation}
    \tau_{d,k}(Z) = (d-k)\operatorname{diag}(Z) + \sum_{j=1}^k \operatorname{diag}(S^j Z S^{*j}) - Z \label{eq:choi_d,k}.
\end{equation}
Finally, the most general $d-$dimensional Choi maps (parameterized by an entrywise non-negative matrix $A\in \mathcal{M}_d(\mathbb{C})$) have been analyzed in \cite{Ha2003choi, Chruscinski2007choi, Chruscinski2018choi}: \begin{align}
    \Phi_A(Z) = \operatorname{diag}\left(A\ket{\operatorname{diag}Z} \right) - \widetilde{Z}, \label{eq:choi_A}
\end{align}
where $\widetilde{Z} = Z-\operatorname{diag}Z$. It should be evident from Eq.~\eqref{eq:choi_A} that specific choices of the matrix $A$ can be used to retrieve the action of all the previously discussed generalizations of the Choi map. For instance, by choosing $A = (d-k-1)\mathbb{I}_d + \sum_{j=1}^k S^j$, it is easy to check that $\tau_{d,k} = \Phi_A$. Moreover, by choosing $B = \operatorname{diag}(A) -\widetilde{\mathbb{J}_d}$, we can write $\Phi_A = \Phi^{(2)}_{(A,B)}$ for all $A\in \mathcal{M}_d(\mathbb{C})$. This shows that all the generalized Choi maps lie in $\CDUC_d \subset \DOC_d$. The familiar properties of complete positivity, unitality, etc.~of these maps can be studied through the use of Lemmas~\ref{lemma:CDUC-prop} and \ref{lemma:DUC/CDUC/DOC-unital/tr}. 
\end{example}

\begin{example}[\emph{Mixture of completely depolarizing, Schur multiplier and transposition maps}]\ \\[0.1cm] \label{eg:maps-dep+schur+transp}
The completely depolarizing map in $\T{d}$ is defined as $\Phi_{\mathsf{dep}}(Z) = \operatorname{Tr}(Z)\mathbb{I}_d$. For $A=\mathbb{J}_d$ and $B=\mathbb{I}_d$, it is clear that $\Phi_{\mathsf{dep}} = \Phi^{(1)}_{(A,B)} = \Phi^{(2)}_{(A,B)} \in \DUC_d \cap \CDUC_d$, see Example~\ref{eg:maps-classical}. The general class of DOC maps with $A=\mathbb{J}_d$ have the form 
\begin{equation}
    \Phi^{(3)}_{(\mathbb{J}_d, B,C)}(Z) = \Phi_{\mathsf{dep}}(Z) + \widetilde{B}\odot Z + \widetilde{C}\odot Z^\top,
\end{equation}
for matrices $B,C\in \M{d}$ with $\operatorname{diag}B = \operatorname{diag}C = \mathbb{I}_d$, and correspond precisely to the LDOI Choi matrices from Example~\ref{eg:states-A=J}. The same example informs us that these maps are PPT if and only if $B$ and $C$ are correlation matrices, see Definition~\ref{def:corr}. If we restrict ourselves to the DUC/CDUC maps in this class, which are of the form
\begin{align}
    \Phi^{(1)}_{(\mathbb{J}_d, B)} (Z) &= \Phi_{\mathsf{dep}}(Z) + \widetilde{B}\odot Z^\top,  \\
    \Phi^{(2)}_{(\mathbb{J}_d, B)} (Z) &= \Phi_{\mathsf{dep}}(Z) + \widetilde{B}\odot Z,
\end{align}
then Proposition~\ref{prop:A=J-LDUI/CLDUI-sep} can be immediately applied on the corresponding Choi matrices to deduce the following sequence of equivalences for $i=1,2$:
\begin{align}
    \Phi^{(i)}_{(\mathbb{J}_d, B)} \text{ is completely positive} \iff \Phi^{(i)}_{(\mathbb{J}_d, B)} \text{ is PPT} &\iff \Phi^{(i)}_{(\mathbb{J}_d, B)} \text{ is entanglement breaking} \nonumber \\
    &\iff B\in \mathsf{Corr}_d.
\end{align}
\end{example}

\begin{example}[\emph{Diagonal-preserving maps}]\cite{Kye1995diagonal, Li1997diagonal}\ \\[0.1cm] \label{eg:maps-diag-preserve}
In \cite{Kye1995diagonal, Li1997diagonal}, the authors studied the class of linear maps in $\T{d}$ which fix diagonals. The positive maps in this class were shown to be of the form
\begin{equation}
    \Phi_{\tilde X,\tilde Y}(Z) = \mathbb{I}_d\odot Z + \tilde X\odot Z + \tilde Y\odot Z^\top ,
\end{equation}
where $\tilde X,\tilde Y\in \Msa{d}$ have zero diagonals. A distinguished element of this class is the \emph{completely dephasing map} $Z \mapsto \operatorname{diag}(Z)$, which corresponds to the choice $\tilde X=\tilde Y=0$ and is an element of $\DUC_d \cap \CDUC_d$, see Example~\ref{eg:maps-schur} as well. More generally, for $A=\mathbb{I}_d$, $B=\tilde X+\mathbb{I}_d$ and $C=\tilde Y+\mathbb{I}_d$, it is clear that $\Phi_{\tilde X,\tilde Y} = \Phi^{(3)}_{(A,B,C)}\in \DOC_d$.
We utilize Lemma~\ref{lemma:DOC-prop} to infer that $\Phi_{\tilde X,\tilde Y}$ is completely positive if and only if $\tilde Y=0$ and $B=\tilde X+\mathbb{I}_d$ is a correlation matrix, i.e.~$\Phi_{\tilde X,\tilde Y}$ is a Schur multiplier, see Example~\ref{eg:maps-schur}. Positivity of the maps $\Phi_{\tilde X,\tilde Y}\in \T{d}$ was shown to be equivalent to decomposability if and only if $d\leq 3$. This is clearly not true for positive maps in $\DOC_d$, as the celebrated Choi map in $\CDUC_3 \subset \DOC_3$ is positive and non-decomposable.
\end{example}

\begin{example}\label{eg:maps-lambda}
In \cite{miller2015lambda}, the map $\Lambda_3 \in \mathcal{T}_3(\mathbb{C})$ (defined in Eq.~\eqref{eq:maps-lambda-3}) was shown to be positive and non-decomposable. This was later generalized in \cite{rutkowski2015lambda} to the positive non-decomposable map $\Lambda_d \in \mathcal{T}_d(\mathbb{C})$ for arbitrary $d\in \mathbb{N}$, see Eq.~\eqref{eq:maps-lambda-d}. These maps were introduced in an effort to understand the structure of stable subspaces of extremal \emph{bistochastic} (positive maps which are unital and trace-preserving) maps between matrix algebras, see \cite{miller2015lambda}.
\begin{equation}
    \Lambda_3(Z) = \left(   \begin{array}{ccc}
        \frac{1}{2}(Z_{11}+Z_{22}) & 0 & \frac{1}{\sqrt{2}}Z_{13}  \\
        0 & \frac{1}{2}(Z_{11}+Z_{22})  & \frac{1}{\sqrt{2}}Z_{32} \\
        \frac{1}{\sqrt{2}}Z_{31} & \frac{1}{\sqrt{2}}Z_{23} & Z_{33}
    \end{array}  \right) \label{eq:maps-lambda-3} \\
\end{equation}
\begin{equation}
    \Lambda_d(Z) = \frac{1}{d-1}\left(   \begin{array}{ccccc}
        \displaystyle\sum_{i=1}^{d-1}Z_{ii} & & & 0  & \sqrt{d-1}Z_{1,d} \\
         & \ddots  &  & \vdots  & \vdots \\
         &  &  \displaystyle\sum_{i=1}^{d-1}Z_{ii} & 0  & \sqrt{d-1}Z_{d-2,d} \\
        0 & \ldots & 0  & \displaystyle\sum_{i=1}^{d-1}Z_{ii} & \sqrt{d-1}Z_{d,d-1} \\
        \sqrt{d-1}Z_{d,1} & \ldots & \sqrt{d-1}Z_{d,d-2} & \sqrt{d-1}Z_{d-1,d} & (d-1)Z_{dd}  \label{eq:maps-lambda-d}
    \end{array}  \right)
\end{equation}
By defining the matrices $A,B$ and $C \in \mathcal{M}_d(\mathbb{C})$ entrywise as follows
\begin{equation} \label{eq:maps-lambda-AC}
    A_{ij} = \begin{cases}
    \frac{1}{d-1},& \text{if } 1\leq i,j \leq d-1 \\
    1,  & \text{if } i=j=d \\
    0,  & \text{otherwise} ,
\end{cases} \quad C_{ij} = \begin{cases}
    A_{ij},& \text{if } i=j \\
    \frac{1}{\sqrt{d-1}},  & \text{if } (i,j)=(d,d-1) \text{ or } (i,j)=(d-1,d) \\
    0,  & \text{otherwise} ,
\end{cases}
\end{equation}
\begin{equation} \label{eq:maps-lambda-B}
    B_{ij} = \begin{cases}
    A_{ij},& \text{if } i=j \\
    \frac{1}{\sqrt{d-1}}, & \text{if } (i=d \,\text{ and }\, j\leq d-2) \text{ or } (j=d \,\text{ and }\, i\leq d-2) \\
    0,  & \text{otherwise} ,
\end{cases} 
\end{equation}
we observe that $\Lambda_d = \Phi^{(3)}_{(A,B,C)} \in \DOC_d$ for all $d\in \mathbb{N}$.
\end{example} 

We summarize the ensemble of cases discussed in this section in Table \ref{tbl:DOC}.

\begin{table}[htb]
\begin{tabular}{@{}|l|l|l|l|l|l|@{}}
\toprule
         \cellcolor[HTML]{DCD0F4}\textbf{Ex.}            & \cellcolor[HTML]{DCD0F4}\textbf{Name}                                                                  & \cellcolor[HTML]{DCD0F4}\textbf{Defining Characteristic}                                                                                                             & \cellcolor[HTML]{DCD0F4}\begin{tabular}[c]{@{}l@{}}\textbf{Ambient} \\ \textbf{Space}\end{tabular} & \cellcolor[HTML]{DCD0F4} \begin{tabular}[c]{@{}l@{}} \textbf{Associated} \\ \textbf{$(A,B,C)$} \end{tabular}                                                                                                 & \cellcolor[HTML]{DCD0F4}\textbf{References} \\ \midrule
\ref{eg:maps-id-transp}                         & Identity                                                                                      & $\Phi (Z) = Z$                                                                                                                                              & $\CDUC_d$                                                                        & $A=\mathbb{I}_d, B=\mathbb{J}_d$                                                        & ---                                \\ \midrule
\cellcolor[HTML]{DCD0F4}\ref{eg:maps-id-transp}                          & \cellcolor[HTML]{DCD0F4}Transposition                                                         & \cellcolor[HTML]{DCD0F4}$\Phi(Z) = Z^\top$                                                                                                                  & \cellcolor[HTML]{DCD0F4}$\DUC_d$                                                 & \cellcolor[HTML]{DCD0F4}$A=\mathbb{I}_d, C=\mathbb{J}_d$                                     &     \cellcolor[HTML]{DCD0F4}---       \\ \midrule
\ref{eg:maps-classical}                        & Classical                                                                                     & $\Phi (Z) = \operatorname{diag}(A\ket{\operatorname{diag}(Z)})$                                                                                             & \begin{tabular}[c]{@{}l@{}}$\DUC_d\,\ \cap$\\ $\CDUC_d$\end{tabular}                & \begin{tabular}[c]{@{}l@{}}$A\in \M{d}$\\ $B=\operatorname{diag}A$ \end{tabular}                                    & ---                                    \\ \midrule          \cellcolor[HTML]{DCD0F4}\ref{eg:maps-schur}               & \cellcolor[HTML]{DCD0F4} \begin{tabular}[c]{@{}l@{}} \hspace{-0.1cm}Schur \\ \hspace{-0.1cm}Multipliers \end{tabular}                                                     & \cellcolor[HTML]{DCD0F4}$\Phi(Z) = B\odot Z$                                                                                                                & \cellcolor[HTML]{DCD0F4}$\CDUC_d$                                                & \cellcolor[HTML]{DCD0F4}\begin{tabular}[c]{@{}l@{}}$A=\operatorname{diag}B$\\ $B\in \M{d}$\end{tabular}                                  & \cellcolor[HTML]{DCD0F4}\begin{tabular}[c]{@{}l@{}} \cite{paulsen2002schur}, \\ \cite{harris2018schur} \end{tabular}         \\ \midrule
\ref{eg:maps-UC-CUC}                         & \begin{tabular}[c]{@{}l@{}} Unitary \\ Covariant \end{tabular}                                                                             & \begin{tabular}[c]{@{}l@{}}$\Phi(UZU^*) = \overbar{U}\Phi(Z)U^\top$ for\\ all unitary matrices $U\in \M{d}$\end{tabular}                                                 & $\DUC_d$                                                                         & \begin{tabular}[c]{@{}l@{}}$A=b\,\mathbb{I}_d + a\mathbb{J}_d$\\ $C=a\mathbb{I}_d + b\mathbb{J}_d$\end{tabular}                          & ---                                \\ \midrule
\cellcolor[HTML]{DCD0F4}\ref{eg:maps-UC-CUC}                          & \cellcolor[HTML]{DCD0F4}\begin{tabular}[c]{@{}l@{}l@{}}Conjugate \\ Unitary\\ Covariant\end{tabular} & \cellcolor[HTML]{DCD0F4}\begin{tabular}[c]{@{}l@{}}$\Phi(UZU^*) = U\Phi(Z)U^*$ for\\ all unitary matrices $U\in \M{d}$\end{tabular}                         & \cellcolor[HTML]{DCD0F4}$\CDUC_d$                                                & \cellcolor[HTML]{DCD0F4}\begin{tabular}[c]{@{}l@{}}$A=b\,\mathbb{I}_d + a\mathbb{J}_d$ \\ $B=a\mathbb{I}_d + b\mathbb{J}_d$\end{tabular} & \cellcolor[HTML]{DCD0F4}---        \\  \midrule
\ref{eg:maps-choi}                          & Choi-type                                                             & $\Phi(Z) = \operatorname{diag}(A\ket{\operatorname{diag}Z}) -\widetilde{Z}$                                                         & $\CDUC_d$                                                & \begin{tabular}[c]{@{}l@{}}$A\in \M{d}$ \\ $B=\operatorname{diag}A - \widetilde{\mathbb{J}_d}$\end{tabular}      &   
\begin{tabular}[c]{@{}l@{}} \cite{choi1975}, \\   \cite{Chruscinski2018choi} \end{tabular}     \\ \midrule
\cellcolor[HTML]{DCD0F4}\ref{eg:maps-dep+schur+transp}                          & \cellcolor[HTML]{DCD0F4}DOC $A=\mathbb J$                                                                                       & \cellcolor[HTML]{DCD0F4} \begin{tabular}[c]{@{}l@{}l@{}} Mixture of completely \\ depolarizing, Schur multiplier, \\ and transposition maps.  \end{tabular}                        & \cellcolor[HTML]{DCD0F4} $\DOC_d$                                                                         &\cellcolor[HTML]{DCD0F4}\begin{tabular}[c]{@{}l@{}l@{}l@{}}$A=\mathbb{J}_d$ \\ $B,C\in \M{d}$ \\ $\operatorname{diag}B = \mathbb{I}_d$ \\ $\operatorname{diag}C = \mathbb{I}_d$ \end{tabular}                                                    &            \cellcolor[HTML]{DCD0F4}---                         \\ \midrule
\ref{eg:maps-diag-preserve}                          & \begin{tabular}[c]{@{}l@{}} Diagonal \\ Preserving \end{tabular}                                                                           & \begin{tabular}[c]{@{}l@{}}Positive maps with \\ $\operatorname{diag}\Phi(Z)=\operatorname{diag}Z$\end{tabular}                                             & $\DOC_d$                                                                         & \begin{tabular}[c]{@{}l@{}@{}}$A=\mathbb{I}_d$\\ $B,C\in \Msa{d}$\\ $\operatorname{diag}B=\mathbb{I}_d$ \\ $\operatorname{diag}C=\mathbb{I}_d$\end{tabular}     &    \begin{tabular}[c]{@{}l@{}} \cite{Kye1995diagonal}, \\ \cite{Li1997diagonal}  \end{tabular}                                \\ \midrule
\cellcolor[HTML]{DCD0F4}\ref{eg:maps-lambda} & \cellcolor[HTML]{DCD0F4}Example~\ref{eg:maps-lambda}                                & \cellcolor[HTML]{DCD0F4}\begin{tabular}[c]{@{}l@{}}Related to the characterization \\ of stable subspaces of \\ extremal bistochastic maps.\end{tabular} & \cellcolor[HTML]{DCD0F4}$\DOC_d$                                                 & \cellcolor[HTML]{DCD0F4}Eqs.~\eqref{eq:maps-lambda-AC}, \eqref{eq:maps-lambda-B}                                                                           & \cellcolor[HTML]{DCD0F4}\begin{tabular}[c]{@{}l@{}}\cite{miller2015lambda}, \\ \cite{rutkowski2015lambda}  \end{tabular}                       \\ \bottomrule
\end{tabular}
\medskip
\caption{Important classes of maps in $\DOC_d$. For more details, see the appropriate examples.}
\label{tbl:DOC}
\end{table}

\section{Kraus and Stinespring characterizations of DOC maps}\label{sec:kraus}

This section aims to study maps in $\DOC_d$ in terms of their so-called Kraus and Stinespring representations. We begin with a brief review of these representations for arbitrary linear maps in $\T{d}$, and then proceed to give a general uniqueness result which links different minimal Kraus and Stinespring representations of a given map. This result will then be used to provide necessary and sufficient conditions on the minimal Kraus/Stinespring representation of a given map, in order for it to be DOC.  

Given a linear map $\Phi\in \T{d}$, it admits a representation of the form
\begin{equation}\label{eq:krausrep}
    \Phi (X) = \sum_{i=1}^n P_i X Q^*_i,
\end{equation}
known as a \emph{Kraus} representation, where $\{P_i \}_{i=1}^n, \{ Q_i \}_{i=1}^n \subset \M{d}$ are known as Kraus operators associated with the stated representation of $\Phi$. For a given map $\Phi$, the minimal number $r$ of operators needed for such a representation to exist is known as the \emph{Choi rank} of the map, which can be easily shown to be equal to the rank of its Choi matrix $J(\Phi)$; in the case of invariant matrices, see Corollary \ref{cor:rank-X}. A representation of $\Phi$ which uses the minimal number of Kraus operators $n=r=\operatorname{rank}(J(\Phi))$ is said to be \emph{minimal}.

Given such a representation, we can define tensors $P,Q\in  \C{n}\otimes \M{d}$ (these can be thought of as mappings $P,Q:\C{d}\rightarrow \C{n}\otimes\C{d}$) as follows: 
\begin{equation}\label{eq:stinespringAB}
    P = \sum_{i=1}^n \ket{i} \otimes P_i = \includegraphics[align=c,scale=1.1]{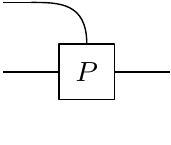}, \qquad Q = \sum_{i=1}^n \ket{i} \otimes Q_i = \includegraphics[align=c,scale=1.1]{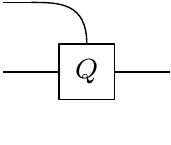},
\end{equation}
where $\C{n}$ acts as an auxiliary space, such that the action of the map can be expressed as follows:
\begin{equation} \label{eq:stinespringrep}
    \Phi(X) = \operatorname{Tr}_{\C{n}}(PXQ^*) = \includegraphics[align=c,scale=1.1]{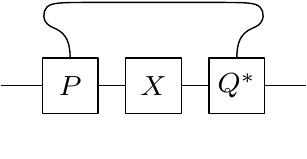} \cdot
\end{equation}
This is known as a \emph{Stinespring} representation of the given map. It is not too difficult to retrace one's steps in order to interchange between the Stinespring and Kraus representations. Graphically, the strings popping out vertically from the boxes correspond to the auxiliary space $\C{n}$. 

If the map $\Phi\in \T{d}$ is completely positive, then the Kraus operators $\{P_i \}_{i=1}^n, \{ Q_i \}_{i=1}^n \subset \M{d}$ are forced to be equal, i.e., $P_i = Q_i \,\, \forall i\in [n]$, and we get a representation of the form
\begin{equation}
    \Phi(X) = \sum_{i=1}^n P_i X P^*_i = \operatorname{Tr}_{\C{n}}(PXP^*).
\end{equation}

With the relevant background in place, we now link different minimal Kraus representations of an arbitrary map $\Phi\in \T{d}$ in the following Lemma.

\begin{lemma} \label{ref:lemma-krausunique}
Consider a linear map $\Phi\in \T{d}$ with $r = \operatorname{rank}(J(\Phi))$, such that it admits the following minimal Kraus and Stinespring representations, with operators $\{P_i,Q_i,R_i,S_i \}_{i=1}^r \subset \M{d}$ and $P,Q,R,S\in \C{r}\otimes \M{d}$ (or $P,Q,R,S: \C{d}\rightarrow \C{r}\otimes \C{d}$):
\begin{equation*}
    \Phi (X) = \sum_{i=1}^r P_i X Q^*_i = \sum_{i=1}^r R_i X S^*_i ,
\end{equation*}
\begin{equation*}
    \Phi(X) = \operatorname{Tr}_{\C{r}}(PXQ^*) = \operatorname{Tr}_{\C{r}}(RXS^*).
\end{equation*}
Then, there exists an invertible matrix $Z\in \M{r}$ such that the following equivalent relations hold:
\begin{align*}
    P_i = \sum_{j=1}^r Z_{ij}R_j \qquad&\text{and}\qquad Q_i = \sum_{j=1}^r (Z^*)^{-1}_{ij} S_j, \\
    P = [Z^\top \otimes \mathbb{I}_d]R \qquad&\text{and}\qquad Q = [\overbar{Z^{-1}}\otimes \mathbb{I}_d]S ,
\end{align*}
where $\mathbb{I}_d\in \M{d}$ is the identity matrix. In case $\Phi\in \T{d}$ is completely positive so that $P_i = Q_i$ and $R_i = S_i \,\, \forall i\in [r]$, the invertible matrix $Z\in \M{r}$ above is also unitary.
\end{lemma}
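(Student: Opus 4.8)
The plan is to exploit the Choi--Jamio\l kowski isomorphism together with the fact that two minimal Stinespring (or Kraus) representations of the \emph{same} Choi matrix differ by an invertible change of basis on the auxiliary space. Concretely, I would first recast the hypothesis in terms of the Choi matrix. Viewing the tensors $P,Q\in\C r\otimes\M d$ as operators $\C d\to\C r\otimes\C d$, the map $X\mapsto\operatorname{Tr}_{\C r}(PXQ^*)$ has Choi matrix obtained from $P$ and $Q$ in the usual way; a direct computation (conveniently done diagrammatically, as in Eq.~\eqref{eq:stinespringrep}) shows that $J(\Phi)=(\langle\!\langle\cdot|)$-type contraction of $P$ against $\overbar Q$, and that the same holds for $R,S$. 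The minimality condition $n=r=\operatorname{rank}J(\Phi)$ means precisely that the ``column spaces'' of $P$ and of $R$ (equivalently, the sets of operators $\{P_i\}$, $\{R_i\}$) are linearly independent families spanning the same $r$-dimensional space attached to the range of $J(\Phi)$.

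Next I would extract the invertible matrix $Z$. Since $\{P_i\}_{i=1}^r$ and $\{R_i\}_{i=1}^r$ are each bases of the same subspace of $\M d$ (the operator-space realization of the range of $J(\Phi)$), there is a unique invertible $Z\in\M r$ with $P_i=\sum_j Z_{ij}R_j$, i.e.\ $P=[Z^\top\otimes\mathbb I_d]R$. Plugging this into $\Phi(X)=\operatorname{Tr}_{\C r}(PXQ^*)=\operatorname{Tr}_{\C r}(RXS^*)$ and using linear independence of $\{R_i\}$ on the left, one obtains $\sum_j Z_{ij}\langle R_j X S_i^*\rangle$-type identities forcing $Q_i=\sum_j (Z^*)^{-1}_{ij}S_j$; equivalently $Q=[\,\overbar{Z^{-1}}\otimes\mathbb I_d]S$. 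The two displayed forms of the relation are visibly equivalent once one recalls the correspondence \eqref{eq:stinespringAB} between $P_i$ and $P$. I expect the main technical obstacle to be the careful bookkeeping that passes from ``$\{P_i\}$ and $\{R_i\}$ span the same space'' to the precise statement that $Q$ is transformed by $\overbar{Z^{-1}}$ rather than by some other conjugate/inverse/transpose of $Z$; the cleanest route is to contract both representations of $\Phi$ against elementary matrices $\ketbra ij$ and compare, or to argue at the level of Choi matrices where $J(\Phi)$ determines the pair $(P,Q)$ up to the simultaneous action $(P,Q)\mapsto([M\otimes\mathbb I_d]P,[\overbar{M^{-*}}\otimes\mathbb I_d]Q)$ for invertible $M$.

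Finally, for the completely positive case, one has $P_i=Q_i$ and $R_i=S_i$, so the two relations collapse to $P_i=\sum_j Z_{ij}R_j$ and simultaneously $P_i=\sum_j(Z^*)^{-1}_{ij}R_j$. Since $\{R_j\}$ is linearly independent, this forces $Z=(Z^*)^{-1}$, i.e.\ $Z^*Z=\mathbb I_r$, so $Z$ is unitary. This last step is immediate once the general case is in hand, so the whole difficulty is concentrated in correctly identifying the change-of-basis data; no genuinely hard estimate or construction is needed beyond basic linear algebra of minimal representations.
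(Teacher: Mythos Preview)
Your proposal is correct and follows essentially the same route as the paper: both arguments pass to the Choi matrix $J(\Phi)=\sum_i\ket{\operatorname{vec}P_i}\!\bra{\operatorname{vec}Q_i}=\sum_i\ket{\operatorname{vec}R_i}\!\bra{\operatorname{vec}S_i}$ and use that minimality makes these two \emph{full rank factorizations} of the same rank-$r$ matrix, which are related by an invertible $Z\in\M{r}$. The only difference is cosmetic: the paper invokes an external reference for the uniqueness of full rank factorizations, whereas you sketch that step by hand (range of $J(\Phi)$ gives $Z$ on the $P,R$ side, then linear independence of $\{R_j\}$ forces the $(Z^*)^{-1}$ relation on the $Q,S$ side).
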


\begin{proof}
    From the given representations of $\Phi$, it is easy to see that the Choi matrix $J(\Phi)$ has the following rank one decompositions:
    \begin{align}
        J(\Phi) &= \sum_{i=1}^r \ketbra{\operatorname{vec}P_i}{\operatorname{vec}Q_i} = \sum_{i=1}^r \ketbra{\operatorname{vec}R_i}{\operatorname{vec}S_i} \label{eq:fullrankchoi} \\
        &= \includegraphics[align=c,scale=1.1]{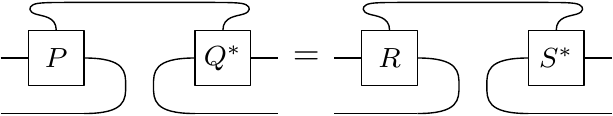}, \nonumber
 \end{align}
 where $\operatorname{vec}:\M{d}\rightarrow \C{d}\otimes \C{d}$ is defined graphically as $$\includegraphics[align=c, scale=1.1]{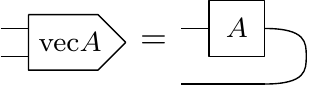} = \sum_{i,j=1}^d A_{ij} \ket{ij}.$$ Now, since $r=\operatorname{rank}(J(\Phi))$, Eq.~\eqref{eq:fullrankchoi} represents two \emph{full rank factorizations} of $J(\Phi)$. Hence, from the uniqueness of full rank factorizations (see \cite[Theorem 2]{Piziak1999fullfact}), there exists an invertible matrix $Z\in \M{r}$ such that the required relations hold: $$ P = [Z^\top \otimes \mathbb{I}_d]R, \qquad Q = [\overbar{Z^{-1}} \otimes \mathbb{I}_d]S. $$  
 Graphically, the above uniqueness result can be visualized by imagining that the wire connecting the $R$ and $S^*$ matrices in Eq.~\eqref{eq:fullrankchoi} is replaced by the identity matrix $\mathbb{I}_r = ZZ^{-1}$. Finally, if $\Phi$ is completely positive, it is trivial to see that $Z\in \M{r}$ must be unitary.
\end{proof}

Equipped with Lemma~\ref{ref:lemma-krausunique}, we now present the two main results of this section.

\begin{theorem} \label{theorem:DOCkraus}
Consider a linear map $\Phi\in \T{d}$ with $r = \operatorname{rank}(J(\Phi))$, such that 
\begin{equation*}
    \Phi (X) = \sum_{i=1}^r P_i X Q^*_i = \operatorname{Tr}_{\C{r}}(PXQ^*),
\end{equation*}
where $\{P_i \}_{i=1}^r , \{Q_i \}_{i=1}^r \subset \M{d}$ and $P,Q:\C{d}\rightarrow \C{r}\otimes \C{d}$ are the respective Kraus and Stinespring operators. Then, $\Phi\in \DOC_d$ if and only if for every diagonal orthogonal matrix $O\in \mathcal{DO}_d$, there exists an invertible matrix $Z_{o}\in \M{r}$ such that the following equations hold $\,\,\forall i\in [r]$:
\begin{align*}
OP_iO &= \sum_{j=1}^r [Z_o]_{ij}P_j \in \operatorname{span}\{ P_1, P_2, \ldots, P_r \}, \\
OQ_iO &= \sum_{j=1}^r [Z_o^*]^{-1}_{ij} Q_j\in \operatorname{span}\{ Q_1, Q_2, \ldots, Q_r \}.
\end{align*}
\end{theorem}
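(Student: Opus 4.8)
The plan is to translate the DOC covariance condition on $\Phi$ into a statement about its Choi matrix using Theorem~\ref{theorem:DUC/CDUC/DOC-LDUI/CLDUI/LDOI}, and then express the LDOI-invariance of $J(\Phi)$ in terms of the Kraus/Stinespring data via the full-rank-factorization uniqueness from Lemma~\ref{ref:lemma-krausunique}. Concretely, first I would recall that $\Phi\in\DOC_d$ if and only if $J(\Phi)\in\LDOI_d$, i.e.~$(O\otimes O)J(\Phi)(O\otimes O)=J(\Phi)$ for all $O\in\mathcal{DO}_d$. Since $O$ is real and orthogonal, $O^*=O^\top=O^{-1}=O$, so conjugation by $O\otimes O$ is an involution. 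The next step is to observe, from the rank-one decomposition $J(\Phi)=\sum_{i=1}^r\ketbra{\operatorname{vec}P_i}{\operatorname{vec}Q_i}$ appearing in the proof of Lemma~\ref{ref:lemma-krausunique}, how the operator $O\otimes O$ acts on the vectorized Kraus operators: one checks that $(O\otimes O)\ket{\operatorname{vec}P_i}=\ket{\operatorname{vec}(OP_iO)}$ (here using $O^\top=O$ to absorb the transpose that the vec-operation introduces on the second leg). Hence $(O\otimes O)J(\Phi)(O\otimes O)=\sum_{i=1}^r\ketbra{\operatorname{vec}(OP_iO)}{\operatorname{vec}(OQ_iO)}$.

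Now the key point: $\Phi\in\DOC_d$ iff for every $O\in\mathcal{DO}_d$ the two decompositions
$$J(\Phi)=\sum_{i=1}^r\ketbra{\operatorname{vec}P_i}{\operatorname{vec}Q_i}=\sum_{i=1}^r\ketbra{\operatorname{vec}(OP_iO)}{\operatorname{vec}(OQ_iO)}$$
coincide. Both sides are full-rank factorizations of $J(\Phi)$ (the first one is, by minimality, and the second because conjugation by the invertible $O\otimes O$ preserves rank, so $\{OP_iO\}$ and $\{OQ_iO\}$ remain linearly independent families spanning the appropriate column/row spaces). Applying Lemma~\ref{ref:lemma-krausunique} — or rather the full-rank-factorization uniqueness it is built on — to the map $\Phi$ viewed through these two minimal representations yields: the two representations give the same $J(\Phi)$ if and only if there is an invertible $Z_o\in\M{r}$ with $P_i=\sum_j [Z_o]_{ij}(OP_jO)$ and $Q_i=\sum_j [Z_o^*]^{-1}_{ij}(OQ_jO)$. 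Conjugating both relations by $O$ (which is an involution) and relabelling $Z_o$ appropriately — replacing $Z_o$ by its inverse, which is harmless since we quantify over all $O$ and $O^{-1}=O$ ranges over the same set — gives exactly the stated equations $OP_iO=\sum_j[Z_o]_{ij}P_j$ and $OQ_iO=\sum_j[Z_o^*]^{-1}_{ij}Q_j$. The span membership assertions are then immediate consequences of these linear relations.

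For the converse direction I would simply run the argument backwards: given the stated relations for every $O$, one reconstructs $(O\otimes O)J(\Phi)(O\otimes O)=\sum_i\ketbra{\operatorname{vec}(OP_iO)}{\operatorname{vec}(OQ_iO)}$ and uses the linear relations together with $Z_oZ_o^{-1}=\mathbb{I}_r$ (the standard ``insert $\mathbb{I}_r=Z_o^{-1}Z_o$ on the connecting wire'' manipulation shown graphically in the proof of Lemma~\ref{ref:lemma-krausunique}) to see that this sum collapses back to $\sum_i\ketbra{\operatorname{vec}P_i}{\operatorname{vec}Q_i}=J(\Phi)$; hence $J(\Phi)\in\LDOI_d$ and $\Phi\in\DOC_d$. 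I expect the main obstacle to be purely bookkeeping rather than conceptual: getting the transposes and conjugates exactly right when pushing $O\otimes O$ through the $\operatorname{vec}$ map (this is where the reality of $O$ is used and where a sign/transpose error is easy to make), and being careful that the invertible matrix produced by the uniqueness lemma is indexed correctly (the $Z_o$ versus $Z_o^{-1}$ ambiguity, resolved by the fact that $O\mapsto O$ is its own inverse on $\mathcal{DO}_d$). Everything else — linear independence of the transformed Kraus families, rank preservation under conjugation by an invertible matrix — is routine.
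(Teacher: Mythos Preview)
Your proposal is correct and is essentially the paper's own argument: both reduce to comparing two minimal Kraus representations and invoking Lemma~\ref{ref:lemma-krausunique}, the only cosmetic difference being that the paper works directly from $\Phi(OXO)=O\Phi(X)O$ (comparing the Kraus families $\{P_iO,Q_iO\}$ against $\{OP_i,OQ_i\}$) rather than detouring through $J(\Phi)\in\LDOI_d$ via Theorem~\ref{theorem:DUC/CDUC/DOC-LDUI/CLDUI/LDOI}. One small simplification: your relabelling $Z_o\mapsto Z_o^{-1}$ is unnecessary, since conjugating $P_i=\sum_j[Z_o]_{ij}\,OP_jO$ by $O$ and using $O^2=\mathbb{I}_d$ already yields $OP_iO=\sum_j[Z_o]_{ij}P_j$ with the same $Z_o$.
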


\begin{proof}
    Let us first assume that $\Phi\in \T{d}$ is DOC. Then, it is clear from Definition~\ref{def:DUC-CDUC-DOC} that for every $O\in \mathcal{DO}_d$, we have $\Phi(OXO) = O\Phi(X)O$, i.e., $$ \sum_{i=1}^r (P_i O)X (Q_i O)^* = \sum_{i=1}^r (OP_i) X (OQ_i)^* .$$ A swift application of Lemma~\ref{ref:lemma-krausunique} then yields us the desired invertible matrix $Z_o\in \M{r}$. Conversely assume that such a $Z_o\in \M{r}$ exists for every diagonal $O\in \mathcal{DO}_d$. Then, a straightforward computation shows that the map $\Phi$ is DOC:
    \begin{align*}
        \Phi(OXO) &= \sum_{i=1}^r P_i OXO Q^*_i \\
        &= \sum_{i=1}^r \left\{ \sum_{j=1}^r [Z_o]_{ij} OP_j \right\} X \left\{ \sum_{k=1}^r [Z_o]^{-1}_{ki} Q_k^* O \right\}  \\
        &= O \left[ \sum_{j,k=1}^r P_j X Q^*_k \left\{ \sum_{i=1}^r [Z_o]^{-1}_{ki}[Z_o]_{ij} \right\} \right] O \\
        &= O \left\{ \sum_{k=1}^r P_k X Q^*_k \right\} O = O\Phi(X)O .
    \end{align*}
    \end{proof}

\begin{theorem} \label{theorem:CP-DOCkraus}
Let $\Phi\in \T{d}$ be completely positive with $r = \operatorname{rank}(J(\Phi))$, such that 
\begin{equation*}
    \Phi (X) = \sum_{i=1}^r P_i X P^*_i = \operatorname{Tr}_{\C{r}}(PXP^*),
\end{equation*}
where $\{P_i \}_{i=1}^r \subset \M{d}$ and $P:\C{d}\rightarrow \C{r}\otimes \C{d}$ are the respective Kraus and Stinespring operators. Then, $\Phi\in \DOC_d$ if and only if for every diagonal orthogonal matrix $O\in \mathcal{DO}_d$, there exists a unitary matrix $U_{o}\in \M{r}$ such that the following equation holds $\, \forall i\in [r]$:
\begin{align*}
OP_iO = \sum_{j=1}^r [U_o]_{ij}P_j \in \operatorname{span}\{ P_1, P_2, \ldots, P_r \} .
\end{align*}
\end{theorem}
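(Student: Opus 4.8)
The plan is to deduce this statement as the completely positive specialization of Theorem~\ref{theorem:DOCkraus}, exactly paralleling how Lemma~\ref{ref:lemma-krausunique} upgrades its invertible intertwiner $Z$ to a unitary one in the completely positive case. First I would observe that a completely positive $\Phi$ with minimal Kraus representation $\Phi(X) = \sum_{i=1}^r P_i X P_i^*$ automatically has $Q_i = P_i$ in the notation of Theorem~\ref{theorem:DOCkraus}. Applying that theorem directly, $\Phi \in \DOC_d$ if and only if for every $O \in \mathcal{DO}_d$ there is an invertible $Z_o \in \M{r}$ with $OP_iO = \sum_j [Z_o]_{ij} P_j$ and simultaneously $OP_iO = OQ_iO = \sum_j [Z_o^*]^{-1}_{ij} P_j$. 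So the only thing left to prove is that the invertible $Z_o$ produced there can be taken to be unitary — equivalently, that the two conditions $[Z_o]_{ij}$ and $[Z_o^*]^{-1}_{ij}$ describing the \emph{same} expansion of $OP_iO$ in the (linearly independent) basis $\{P_1,\dots,P_r\}$ force $Z_o = (Z_o^*)^{-1}$, i.e.~$Z_o Z_o^* = \mathbb{I}_r$.

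The key step is therefore the linear-independence argument. Since the Kraus representation is minimal, $r = \operatorname{rank}(J(\Phi))$, and from the rank-one decomposition $J(\Phi) = \sum_i \ketbra{\operatorname{vec}P_i}{\operatorname{vec}P_i}$ (as in Eq.~\eqref{eq:fullrankchoi}) the vectors $\{\operatorname{vec}P_i\}_{i=1}^r$ are linearly independent, hence so are the matrices $\{P_i\}_{i=1}^r$. Consequently the expansion of any fixed matrix — in particular $OP_iO$ — in terms of $\{P_j\}$ has \emph{unique} coefficients, so the $i$-th rows of $Z_o$ and of $(Z_o^*)^{-1}$ must coincide for every $i$; thus $Z_o = (Z_o^*)^{-1}$, which is precisely the statement that $Z_o$ is unitary. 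Rename $U_o := Z_o$ and the forward direction is done. Alternatively, one can argue directly: from $OP_iO = \sum_j [U_o]_{ij} P_j$ the map $P \mapsto [U_o^\top \otimes \mathbb{I}_d] P$ on the Stinespring tensor must be the unique intertwiner between two minimal Stinespring dilations of the completely positive map $X \mapsto \Phi(OXO) = O\Phi(X)O$, and minimal Stinespring dilations of a completely positive map are unique up to a unitary on the auxiliary space (this is the completely positive half of Lemma~\ref{ref:lemma-krausunique}), giving unitarity of $U_o$ at once.

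For the converse I would simply reuse the computation in the proof of Theorem~\ref{theorem:DOCkraus} verbatim, noting that with $Q_i = P_i$ and $Z_o = U_o$ unitary one has $[U_o^*]^{-1} = U_o$, so the hypothesis $OP_iO = \sum_j [U_o]_{ij} P_j$ supplies both of the required relations, and the chain of equalities there yields $\Phi(OXO) = O\Phi(X)O$ for all $X$ and all $O \in \mathcal{DO}_d$, i.e.~$\Phi \in \DOC_d$. I do not expect any real obstacle here; the only point requiring a little care is making explicit that minimality of the representation is what makes the coefficient expansions unique, and hence forces $Z_o$ to be unitary rather than merely invertible — everything else is bookkeeping inherited from Theorem~\ref{theorem:DOCkraus} and Lemma~\ref{ref:lemma-krausunique}.
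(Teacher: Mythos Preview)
Your proposal is correct and follows essentially the same route as the paper: the paper's proof is literally the one-line remark ``Identical to that of Theorem~\ref{theorem:DOCkraus},'' relying on the fact that Lemma~\ref{ref:lemma-krausunique} already upgrades the invertible intertwiner to a unitary one in the completely positive case. Your write-up just spells out the linear-independence step that makes this upgrade explicit, which is fine and arguably clearer.
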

\begin{proof}
    Identical to that of Theorem~\ref{theorem:DOCkraus}.
\end{proof}

\begin{remark}
The analogues of Theorem~\ref{theorem:DOCkraus} and \ref{theorem:CP-DOCkraus} for maps in $\DUC_d$ and $\CDUC_d$ are identical in structure, with the only difference being that the conjugations with diagonal orthogonal matrices $OP_i O, OQ_i O$ get replaced by conjugations with diagonal unitaries $U\in \mathcal{DU}_d$ instead: $UP_i U, U^* Q_i U^*$ and $U^*P_i U, U^*Q_i U$ (for $\DUC_d$ and $\CDUC_d$ respectively).
\end{remark}

\section{DOC maps and triplewise complete positivity}\label{sec:tcp}

In this section, we provide an alternate characterization of the family of diagonal orthogonal covariant maps in terms of invariant subspaces, which will be used to derive necessary and sufficient conditions for triplewise complete positivity of matrix triples in $\MLDOI{d}$ (or equivalently, for the separability of matrices in $\LDOI_d$). We will employ this characterization to provide an example of a non-TCP triple $(A,B,C)\in \MLDOI{d}$ such that both $(A,B)$ and $(A,C)$ are PCP. Recall that if $B=C$, this is not possible, since a triple of the form $(A,B,B)$ is TCP if and only if the pair $(A,B)$ is PCP, as was shown in Proposition~\ref{prop:AB-ABB}. In this process, we explicitly compute the partial action of a map $\Phi\in \DOC_d$ on a matrix $X\in \LDOI_d$, which is then connected to the operation of map composition in $\DOC_d$. Without further delay, we delve straight into the promised alternate characterization of the set $\DOC_d$.

\begin{proposition} \label{prop:DOC-LDOIinv}
Consider a linear map $\Phi\in \T{d}$. Then $\Phi\in \DOC_d$ if and only if the vector subspace $\LDOI_d \subset \M{d}\otimes \M{d}$ stays invariant under the linear map $\Phi\otimes \operatorname{id}: \M{d}\otimes \M{d} \rightarrow \M{d}\otimes \M{d}$ (or $\operatorname{id}\otimes \Phi: \M{d}\otimes \M{d} \rightarrow \M{d}\otimes \M{d}$).
\end{proposition}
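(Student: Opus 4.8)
The plan is to exploit the projection characterization from Proposition~\ref{prop:LDUI/CLDUI/LDOI-projections}, which identifies $\operatorname{Proj}_{\LDOI}$ with the averaging operation $X \mapsto \mathbb{E}_O[(O\otimes O)X(O\otimes O)]$ over random diagonal orthogonal $O\in\mathcal{DO}_d$. The key algebraic fact I would first record is the commutation relation between this averaging map and $\Phi\otimes\operatorname{id}$: for any $O\in\mathcal{DO}_d$ and any $X\in\M{d}\otimes\M{d}$,
$$(\Phi\otimes\operatorname{id})\bigl[(O\otimes O)X(O\otimes O)\bigr] = (O\otimes O)\bigl[((O\Phi(O\,\cdot\,O)O)\otimes\operatorname{id})(X)\bigr](O\otimes O),$$
so that if $\Phi$ is DOC, i.e.\ $O\Phi(OYO)O = \Phi(Y)$ for all $Y$, then $(\Phi\otimes\operatorname{id})$ commutes with the conjugation $X\mapsto (O\otimes O)X(O\otimes O)$, hence with the expectation $\operatorname{Proj}_{\LDOI}$. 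This immediately gives the forward direction: for $X\in\LDOI_d$ we have $X = \operatorname{Proj}_{\LDOI}(X)$, so $(\Phi\otimes\operatorname{id})(X) = (\Phi\otimes\operatorname{id})(\operatorname{Proj}_{\LDOI}(X)) = \operatorname{Proj}_{\LDOI}((\Phi\otimes\operatorname{id})(X))\in\LDOI_d$. The argument for $\operatorname{id}\otimes\Phi$ is identical, using that conjugation by $O\otimes O$ on the second leg transforms $\Phi$ into $O\Phi(O\,\cdot\,O)O$ as well.

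For the converse, I would argue contrapositively in a constructive way: assume $\LDOI_d$ is invariant under $\Phi\otimes\operatorname{id}$ and deduce the DOC covariance of $\Phi$. The cleanest route is to feed in a well-chosen element of $\LDOI_d$. Take the maximally entangled-type vector, or more concretely note that the (unnormalized) Choi-type matrix $\widehat{X}:=\sum_{i,j}\ketbra{ii}{jj} = \operatorname{vec}(\mathbb{I}_d)\operatorname{vec}(\mathbb{I}_d)^*$ lies in $\CLDUI_d\subset\LDOI_d$ (it is $X^{(2)}_{(\mathbb{J}_d,\mathbb{J}_d)}$). Then $(\Phi\otimes\operatorname{id})(\widehat{X})$ is essentially the Choi matrix $J(\Phi)$ up to the leg ordering convention, so the hypothesis forces $J(\Phi)\in\LDOI_d$; by Theorem~\ref{theorem:DUC/CDUC/DOC-LDUI/CLDUI/LDOI} this is exactly $\Phi\in\DOC_d$. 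I need to be a little careful matching the convention in $J(\Phi)=\sum_{ij}\Phi(\ketbra{i}{j})\otimes\ketbra{i}{j}$ against $(\Phi\otimes\operatorname{id})(\widehat X) = \sum_{ij}\Phi(\ketbra{i}{i})\otimes\ketbra{j}{j}$ — these are not literally the same matrix. So instead I would use the ``resource'' state $\Omega:=\sum_{i,j}\ketbra{ij}{ji}\cdot$(no) — better: apply $\Phi\otimes\operatorname{id}$ to $\omega:=\operatorname{vec}(\mathbb I_d)\operatorname{vec}(\mathbb I_d)^* = \sum_{i,j}\ketbra{ii}{jj}$ after a flip, or simply verify directly from $(\Phi\otimes\operatorname{id})(\sum_{ij}\ketbra{i}{j}\otimes\ketbra{i}{j}) = J(\Phi)$, noting that $\sum_{ij}\ketbra{i}{j}\otimes\ketbra{i}{j}$ is a rank-one LDUI matrix (it is $\operatorname{vec}(\mathbb I_d)\operatorname{vec}(\mathbb I_d)^*$ reshaped), hence lies in $\LDUI_d\subset\LDOI_d$; then invariance gives $J(\Phi)\in\LDOI_d$ directly.

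The main obstacle I anticipate is purely bookkeeping: getting the tensor-leg conventions straight so that the element of $\LDOI_d$ I plug in actually produces $J(\Phi)$ (or $J(\Phi)$ composed with a known leg permutation from Proposition~\ref{prop:leg-permutations}, which is harmless since those permutations preserve $\LDOI_d$) under $\Phi\otimes\operatorname{id}$ or $\operatorname{id}\otimes\Phi$. Once that identification is pinned down, the converse is immediate from Theorem~\ref{theorem:DUC/CDUC/DOC-LDUI/CLDUI/LDOI}. A graphical/diagrammatic rendering of both the commutation identity and the plug-in computation would make the proof transparent, in the spirit of the other proofs in this paper; I would include a small tensor-network diagram for the commutation of $\Phi\otimes\operatorname{id}$ with conjugation by $O\otimes O$.
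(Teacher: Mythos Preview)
Your argument is correct in substance, though there are a couple of harmless misidentifications to clean up. The matrix $\Omega = \sum_{i,j}\ketbra{i}{j}\otimes\ketbra{i}{j} = \sum_{i,j}\ketbra{ii}{jj}$ is \emph{not} LDUI (conjugating by $U\otimes U$ introduces phases $U_{ii}^2$), but it \emph{is} CLDUI, and in the paper's parametrization it is $X^{(2)}_{(\mathbb{I}_d,\mathbb{J}_d)}$, not $X^{(2)}_{(\mathbb{J}_d,\mathbb{J}_d)}$. Since $\CLDUI_d\subset\LDOI_d$, this does not affect your converse: $(\Phi\otimes\operatorname{id})(\Omega)=J(\Phi)$ exactly as you say, so invariance of $\LDOI_d$ forces $J(\Phi)\in\LDOI_d$, and Theorem~\ref{theorem:DUC/CDUC/DOC-LDUI/CLDUI/LDOI} finishes it. Your forward direction via commutation of $\Phi\otimes\operatorname{id}$ with the conjugation $X\mapsto(O\otimes O)X(O\otimes O)$ is clean and correct.

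By way of comparison, the paper's proof handles both directions in a single diagrammatic stroke: it writes out $(O\otimes O)\bigl[(\Phi\otimes\operatorname{id})(X)\bigr](O\otimes O)$ for $X\in\LDOI_d$, slides the $O$'s through using Eq.~\eqref{eq:phi-action} and the invariance of $X$, and observes that the resulting equality (for all $O$ and all $X\in\LDOI_d$) is literally the statement $(O\otimes O)J(\Phi)(O\otimes O)=J(\Phi)$. Your route splits this into a commutation argument plus a specific ``plug in $\Omega$'' step; the latter is arguably more transparent for the converse, since it isolates exactly why $J(\Phi)$ appears, whereas the paper leaves that to the reader via the diagram. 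Either way the content is the same: the Choi matrix mediates the equivalence.
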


\begin{proof}
For an arbitrary $O\in \mathcal{DO}_d, \, \Phi\in \T{d}$ and $X\in \LDOI_d$, it is evident that $(\Phi\otimes \operatorname{id})(X)\in \LDOI_d$ if and only if the equality in the diagram given below holds, which is clearly equivalent to the condition that $J(\Phi)\in \LDOI_d$ or $\Phi\in \DOC_d$. The case with the map $\operatorname{id}\otimes \Phi$ can be proven similarly.
    \begin{figure}[htbp]
        \centering 
        \includegraphics{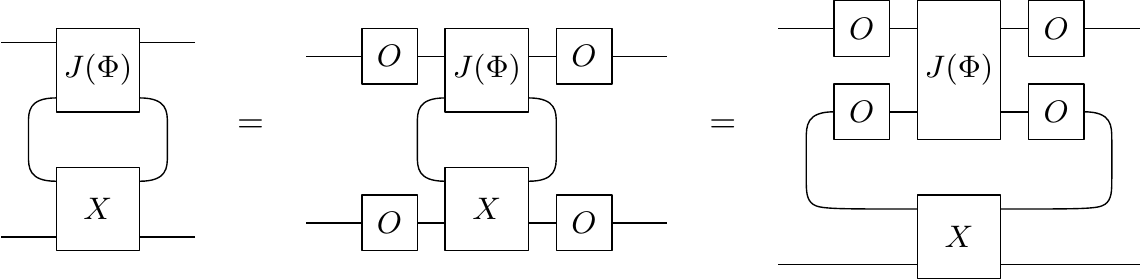}
        \caption{Invariance of the vector subspace $\LDOI_d \subset \M{d}\otimes \M{d}$ under the partial action of a map $\Phi\in \DOC_d$.}
    \end{figure}
\end{proof}

\begin{remark}
The analogue of Proposition~\ref{prop:DOC-LDOIinv} for maps in $\DUC_d$ (resp. $\CDUC_d$) can be derived similarly, with the only difference being a change in the invariant subspace from $\LDOI_d$ to $\LDUI_d$ (resp. $\CLDUI_d$).
\end{remark}

\begin{lemma} \label{lemma:DOC-LDOI-composition}
Consider the bilinear composition $\circ$ defined on the space $\MLDOI{d}$ as follows
\begin{alignat}{2}
\circ : \qquad \MLDOI{d} \,\,\, &\times \,\,\, \MLDOI{d} &&\rightarrow \MLDOI{d} \nonumber \\
\{ (A_1,B_1,C_1) &, (A_2,B_2,C_2) \} &&\mapsto (\mathfrak{A},\mathfrak{B},\mathfrak{C}), \nonumber
\end{alignat}
where 
\begin{align*}
    \mathfrak{A} = A_1 A_2, \qquad \mathfrak{B} &= B_1 \odot B_2 + C_1\odot C_2^\top + \operatorname{diag}(A_1 A_2 - 2A_1\odot A_2), \\
    and \qquad \mathfrak{C} &= B_1\odot C_2 + C_1\odot B_2^\top + \operatorname{diag}(A_1 A_2 - 2A_1\odot A_2).
\end{align*}
Then, for $(A_1,B_1,C_1), (A_2,B_2,C_2)\in \MLDOI{d}$, the following holds true:
\begin{equation}
    [\Phi^{(3)}_{(A_1,B_1,C_1)} \otimes \operatorname{id}](X^{(3)}_{(A_2,B_2,C_2)}) = X^{(3)}_{(\mathfrak{A},\mathfrak{B},\mathfrak{C})}. \nonumber
\end{equation}
\end{lemma}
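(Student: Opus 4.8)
The plan is to compute the matrix $[\Phi^{(3)}_{(A_1,B_1,C_1)} \otimes \operatorname{id}](X^{(3)}_{(A_2,B_2,C_2)})$ directly and read off the associated triple. By Proposition~\ref{prop:DOC-LDOIinv} we already know the result lies in $\LDOI_d$, so it suffices to identify its $(\mathfrak A, \mathfrak B, \mathfrak C)$ triple via the recipe of Proposition~\ref{prop:LDUI/CLDUI/LDOI-projections}, namely $\mathfrak A_{ij} = \langle ij | Y | ij\rangle$, $\mathfrak B_{ij} = \langle ii | Y | jj \rangle$, $\mathfrak C_{ij} = \langle ij | Y | ji \rangle$ for $Y := [\Phi^{(3)}_{(A_1,B_1,C_1)} \otimes \operatorname{id}](X^{(3)}_{(A_2,B_2,C_2)})$. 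First I would write $X^{(3)}_{(A_2,B_2,C_2)} = \sum_{kl} A_{2,kl}\ketbra{kl}{kl} + \sum_{k\neq l} B_{2,kl}\ketbra{kk}{ll} + \sum_{k\neq l} C_{2,kl}\ketbra{kl}{lk}$, apply $\Phi^{(3)}_{(A_1,B_1,C_1)}$ on the first leg of each rank-one term using the action formula~\eqref{eq:DOC-action}, $\Phi^{(3)}_{(A_1,B_1,C_1)}(Z) = \operatorname{diag}(A_1\ket{\operatorname{diag}Z}) + \widetilde{B_1}\odot Z + \widetilde{C_1}\odot Z^\top$, and then extract the three required families of matrix entries.

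Concretely: $\Phi^{(3)}_{(A_1,B_1,C_1)}(\ketbra{k}{k}) = \operatorname{diag}(A_1\ket k) = \sum_m A_{1,mk}\ketbra{m}{m}$, so the $A_2$-part of $X^{(3)}_{(A_2,B_2,C_2)}$ contributes $\sum_{kl}A_{2,kl}\sum_m A_{1,mk}\ketbra{m}{m}\otimes\ketbra{l}{l} = \sum_{ml} (A_1 A_2)_{ml}\ketbra{ml}{ml}$, already giving $\mathfrak A = A_1 A_2$ and contributing $(A_1A_2)_{mm}$ along the would-be diagonal slot of $\mathfrak B$ and $\mathfrak C$. For $k\neq l$, $\Phi^{(3)}_{(A_1,B_1,C_1)}(\ketbra{k}{l}) = \widetilde{B_1}\odot\ketbra{k}{l} + \widetilde{C_1}\odot\ketbra{l}{k} = B_{1,kl}\ketbra{k}{l} + C_{1,lk}\ketbra{l}{k}$ (using $k\neq l$ so the diagonal truncations drop the $\operatorname{diag}$-term, and $(\ketbra kl)^\top = \ketbra lk$). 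Hence the $B_2$-part $\sum_{k\neq l}B_{2,kl}\ketbra{kk}{ll}$ maps to $\sum_{k\neq l}B_{2,kl}(B_{1,kl}\ketbra{k}{l}+C_{1,lk}\ketbra{l}{k})\otimes\ketbra{k}{l}$, whose $\langle ii|\cdot|jj\rangle$ entries give $B_{1,ij}B_{2,ij}$ and whose $\langle ij|\cdot|ji\rangle$ entries give $C_{1,ij}B_{2,ji}$ (relabeling). Similarly the $C_2$-part $\sum_{k\neq l}C_{2,kl}\ketbra{kl}{lk}$ maps to $\sum_{k\neq l}C_{2,kl}(B_{1,kl}\ketbra{k}{l}+C_{1,lk}\ketbra{l}{k})\otimes\ketbra{l}{k}$, contributing $C_{1,ij}C_{2,ij}^\top$-type terms to $\mathfrak B$ and $B_{1,ij}C_{2,ij}$-type terms to $\mathfrak C$. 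Collecting the off-diagonal contributions yields $\mathfrak B_{ij} = B_{1,ij}B_{2,ij} + C_{1,ij}(C_2^\top)_{ij}$ and $\mathfrak C_{ij} = B_{1,ij}C_{2,ij} + C_{1,ij}(B_2^\top)_{ij}$ for $i\neq j$, i.e.\ $\widetilde{\mathfrak B} = \widetilde{B_1}\odot\widetilde{B_2} + \widetilde{C_1}\odot\widetilde{C_2}^\top$ off the diagonal, and $\operatorname{diag}\mathfrak B = \operatorname{diag}\mathfrak C = \operatorname{diag}(A_1A_2)$, which rearranges to the stated $\mathfrak B = B_1\odot B_2 + C_1\odot C_2^\top + \operatorname{diag}(A_1A_2 - 2A_1\odot A_2)$ since $\operatorname{diag}(B_1\odot B_2 + C_1\odot C_2^\top) = 2\operatorname{diag}(A_1\odot A_2)$ (using $\operatorname{diag}B_i = \operatorname{diag}C_i = \operatorname{diag}A_i$).

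I expect the only real bookkeeping hazard to be the diagonal entries of $\mathfrak B$ and $\mathfrak C$: the action formula's $\operatorname{diag}(A_1\ket{\operatorname{diag}Z})$ term interacts with the $k=l$ terms of the input and with the diagonal of the Hadamard products, and one must be careful that the constraint $\operatorname{diag}A_i = \operatorname{diag}B_i = \operatorname{diag}C_i$ is used correctly so that the final triple genuinely lies in $\MLDOI{d}$ and matches the claimed closed form with the $-2A_1\odot A_2$ correction. A clean way to organize this, and the presentation I would adopt, is to use the diagrammatic calculus (as the paper does elsewhere): draw $\Phi^{(3)}_{(A_1,B_1,C_1)}\otimes\operatorname{id}$ as a box acting on the four-leg diagram for $X^{(3)}_{(A_2,B_2,C_2)}$, slide the $A,\widetilde B,\widetilde C$ decorations through, and identify the resulting diagram as $X^{(3)}_{(\mathfrak A,\mathfrak B,\mathfrak C)}$; the diagonal corrections then appear transparently as the terms one must add back when passing between $B,C$ and their tilde'd versions. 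Verifying the bilinear map $\circ$ indeed outputs an element of $\MLDOI{d}$ (i.e.\ $\operatorname{diag}\mathfrak A = \operatorname{diag}\mathfrak B = \operatorname{diag}\mathfrak C$) is then immediate from the formulas, closing the argument.
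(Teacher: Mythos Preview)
Your proposal is correct and follows essentially the same approach as the paper: a direct computation of $[\Phi^{(3)}_{(A_1,B_1,C_1)} \otimes \operatorname{id}](X^{(3)}_{(A_2,B_2,C_2)})$, expanding both objects via their $(A,\widetilde B,\widetilde C)$ decompositions and collecting terms. The only cosmetic difference is that the paper carries this out in the diagrammatic tensor calculus (Figures~\ref{fig:partial-action-1} and~\ref{fig:partial-action-2}), whereas you write it in coordinates---a presentation you yourself flag as the cleaner alternative.
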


\begin{proof}
    We wish to explicitly compute the following action:
\begin{align}
    \Phi^{(3)}_{(A_1,B_1,C_1)} \otimes \operatorname{id} : \qquad \LDOI_d &\rightarrow \LDOI_d \nonumber \\
    X^{(3)}_{(A_2,B_2,C_2)} &\mapsto X^{(3)}_{(\mathfrak{A},\mathfrak{B},\mathfrak{C})}.
\end{align}
Proceeding diagrammatically, it is clear that 
\begin{equation}
    [\Phi^{(3)}_{(A_1,B_1,C_1)} \otimes \operatorname{id}](X^{(3)}_{(A_2,B_2,C_2)}) = \includegraphics[align=c, scale=1.1]{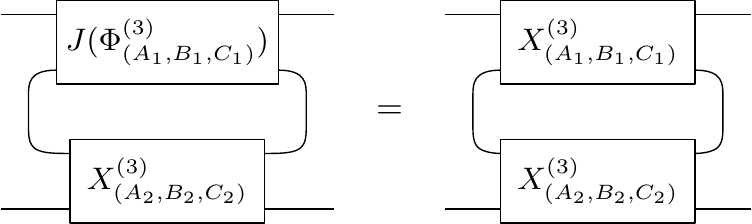}, \label{eq:partial-action}
\end{equation}
where the equalities follow from Eq.~\eqref{eq:phi-action} and Remark~\ref{remark:choi(DOC)}. By exploiting the isomorphism from Proposition~\ref{prop:LDOI-ABC}, we can express the above diagram as in Figure~\ref{fig:partial-action-1}.
\begin{figure}[H]
    \centering
    \includegraphics[scale=1.1]{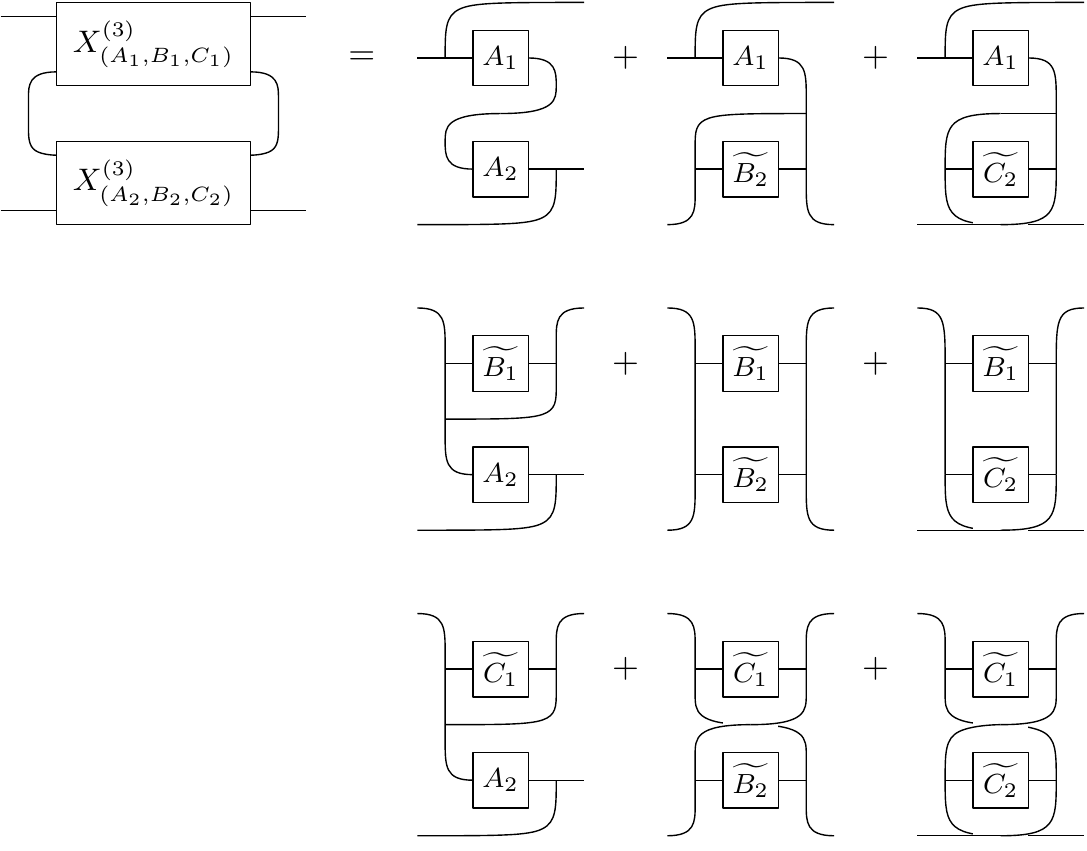}
    \caption{Expansion of Eq.~\eqref{eq:partial-action} using the form of the $X^{(3)}$ matrices from Proposition~\ref{prop:LDOI-ABC}.}
    \label{fig:partial-action-1}
\end{figure}

Notice that, as before, $\widetilde{A_i}, \widetilde{B_i}$ and $\widetilde{C_i}$ are matrices with the same off-diagonal entries as $A_i, B_i$ and $C_i$ respectively, but with $\operatorname{diag}(\widetilde{A_i})=\operatorname{diag}(\widetilde{B_i})= \operatorname{diag}(\widetilde{C_i})= 0$, for $i=1,2$. This leads us to the final expression in Figure~\ref{fig:partial-action-2}, where $\mathfrak{A} = A_1 A_2$, $\mathfrak{B} = \operatorname{diag}(A_1 A_2) + \widetilde{B_1}\odot \widetilde{B_2} + \widetilde{C_1}\odot \widetilde{C_2}^\top$ and $\mathfrak{C} = \operatorname{diag}(A_1 A_2) + \widetilde{B_1}\odot \widetilde{C_2} + \widetilde{C_1}\odot~\widetilde{B_2}^\top$. The proof is now complete.

\begin{figure}[H]
    \centering
    \includegraphics[scale=1.1]{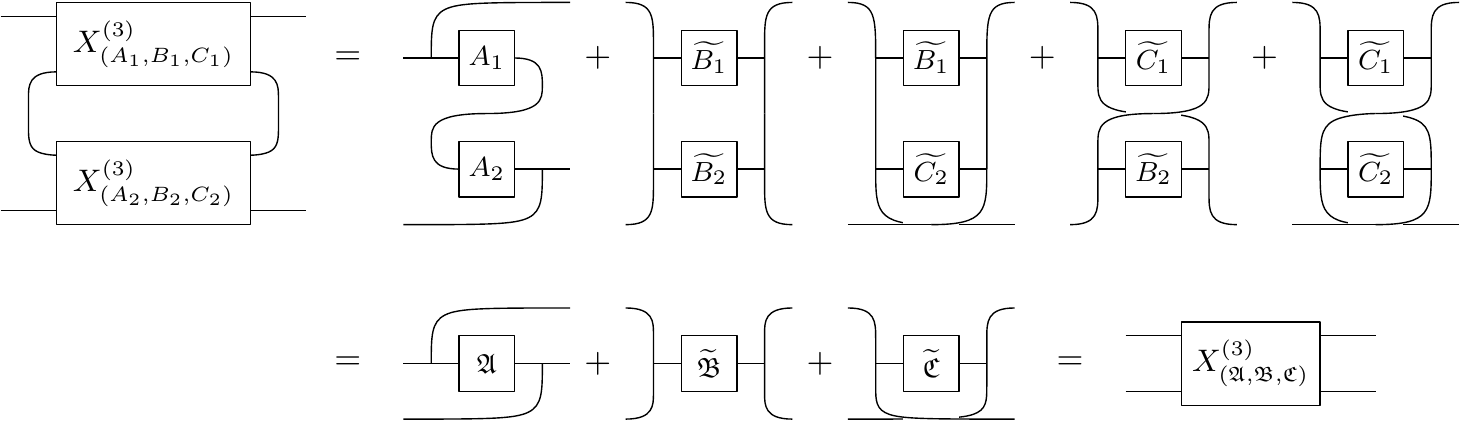}
    \caption{Cancelling diagrams from Figure~\ref{fig:partial-action-1} using $\operatorname{diag}(\widetilde{A_i})=\operatorname{diag}(\widetilde{B_i})= \operatorname{diag}(\widetilde{C_i})= 0$.}
    \label{fig:partial-action-2}
\end{figure}
\end{proof}

With the composition rule from Lemma~\ref{lemma:DOC-LDOI-composition} in hand, let us now consider two particular instances of it, for matrix pairs $(A_1, B_1), (A_2, B_2) \in \MLDUI{d}$:
\begin{equation} \label{eq:DUC-composition}
    (A_1,\operatorname{diag}(A_1), B_1)\circ (A_2,\operatorname{diag}(A_2), B_2) = (A_1 A_2, B_1 \odot B_2^\top + \operatorname{diag}(A_1 A_2 - B_1\odot B_2), \operatorname{diag}(A_1 A_2))
\end{equation}
\begin{equation} \label{eq:CDUC-composition}
    (A_1, B_1, \operatorname{diag}(A_1))\circ (A_2, B_2, \operatorname{diag}(A_2)) = (A_1 A_2, B_1 \odot B_2 + \operatorname{diag}(A_1 A_2 - B_1\odot B_2), \operatorname{diag}(A_1 A_2))
\end{equation}

The following definition formulates these new rules in a more succinct fashion.

\begin{definition}\label{def:DUC-CDUC-composition}
    On $\MLDUI{d}$, define bilinear compositions $\circ_1$ and $\circ_2$ as follows:
\begin{alignat}{2}
\circ_1 : \qquad \MLDUI{d} \,\,\, &\times \,\,\, \MLDUI{d} &&\rightarrow \MLDUI{d} \nonumber \\
\{ (A_1,B_1) &, (A_2,B_2) \} &&\mapsto (\mathfrak{A},\mathfrak{B}) = (A_1 A_2, B_1 \odot B_2^\top + \operatorname{diag}(A_1 A_2 - B_1\odot B_2)), \nonumber
\end{alignat}
\begin{alignat}{2}
\circ_2 : \qquad \MLDUI{d} \,\,\, &\times \,\,\, \MLDUI{d} &&\rightarrow \MLDUI{d} \nonumber \\
\{ (A_1,B_1) &, (A_2,B_2) \} &&\mapsto (\mathfrak{A},\mathfrak{B}) = (A_1 A_2, B_1\odot B_2 + \operatorname{diag}(A_1 A_2 - B_1\odot B_2)). \nonumber
\end{alignat}
\end{definition}

\begin{remark} \label{remark:DUC-CDUC-composition}
It is obvious from the above definition that 
$$(A_1, B_1)\circ_1 (A_2, B_2) = (A_1, B_1)\circ_2 (A_2, B_2^\top) \qquad \forall (A_1, B_1), (A_2, B_2) \in \MLDUI{d}. $$
\end{remark}

Next, we state and prove an important proposition, which connects all the composition rules on matrix pairs/triples introduced so far to the operations of map composition in $\DUC_d, \CDUC_d$ and $\DOC_d$. But first, we need familiarity with the notion of \emph{stability} under composition.
\begin{definition}
    A set $K\subseteq \T{d}$ is said to be \emph{stable} under composition if $\Phi_1\circ \Phi_2 \in K \, \forall \, \Phi_1, \Phi_2 \in K$.
\end{definition}

\begin{lemma} \label{lemma:DUC-CDUC-DOC-composition}
The linear subspaces $\CDUC_d, \DOC_d \subset \T{d}$ are stable under composition, but $\DUC_d \subset \T{d}$ is not. Moreover, for triples $(A_1,B_1,C_1), (A_2,B_2,C_2)\in \MLDOI{d}$, the following composition rules hold (see Eqs.~\eqref{eq:DUC-action},\eqref{eq:CDUC-action},\eqref{eq:DOC-action}): 
\begin{itemize}
    \item $\Phi^{(i)}_{(A_1,B_1)} \circ \Phi^{(j)}_{(A_2,B_2)} = \begin{cases}
    \, \Phi^{(2)}_{(\mathfrak{A},\mathfrak{B})},  &\text{where } (\mathfrak{A},\mathfrak{B}) = (A_1,B_1)\circ_i (A_2,B_2) \quad \text{if } i=j \\
    \, \Phi^{(1)}_{(\mathfrak{A},\mathfrak{B})},  &\text{where } (\mathfrak{A},\mathfrak{B}) = (A_1,B_1)\circ_i (A_2,B_2) \quad \text{if } i\neq j 
    \end{cases} $
    \item $\Phi^{(3)}_{(A_1,B_1,C_1)} \circ \Phi^{(3)}_{(A_2,B_2,C_2)} = \Phi^{(3)}_{(\mathfrak{A},\mathfrak{B},\mathfrak{C})}$, where $(\mathfrak{A},\mathfrak{B},\mathfrak{C}) = (A_1,B_1,C_1)\circ (A_2,B_2,C_2)$.
\end{itemize}
\end{lemma}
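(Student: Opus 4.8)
The plan is to reduce everything to the partial-action computation of Lemma~\ref{lemma:DOC-LDOI-composition} via the Choi--Jamio{\l}kowski isomorphism. The starting observation is that map composition corresponds, on the level of Choi matrices, to a partial action: for $\Phi_1,\Phi_2\in\T{d}$ one has $J(\Phi_1\circ\Phi_2) = (\Phi_1\otimes\operatorname{id})\bigl(J(\Phi_2)\bigr)$, which is an elementary diagrammatic identity (slide the Choi matrix of $\Phi_2$ through the wires and apply $\Phi_1$ on the output leg). Combined with Remark~\ref{remark:choi(DOC)}, which says $J(\Phi^{(3)}_{(A,B,C)}) = X^{(3)}_{(A,B,C)}$, this means $J\bigl(\Phi^{(3)}_{(A_1,B_1,C_1)}\circ\Phi^{(3)}_{(A_2,B_2,C_2)}\bigr) = (\Phi^{(3)}_{(A_1,B_1,C_1)}\otimes\operatorname{id})(X^{(3)}_{(A_2,B_2,C_2)})$, and Lemma~\ref{lemma:DOC-LDOI-composition} identifies the right-hand side as $X^{(3)}_{(\mathfrak A,\mathfrak B,\mathfrak C)}$ with $(\mathfrak A,\mathfrak B,\mathfrak C)=(A_1,B_1,C_1)\circ(A_2,B_2,C_2)$. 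Applying $J^{-1}$ and the definition of $\Phi^{(3)}$ in Eq.~\eqref{eq:DOCiso} yields the DOC composition rule; in particular $\DOC_d$ is stable under composition since $\MLDOI{d}$ is closed under $\circ$. Stability of $\DOC_d$ also follows abstractly from Proposition~\ref{prop:DOC-LDOIinv}: if $\LDOI_d$ is invariant under both $\Phi_1\otimes\operatorname{id}$ and $\Phi_2\otimes\operatorname{id}$, it is invariant under their composite $(\Phi_1\circ\Phi_2)\otimes\operatorname{id}$.

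For the $\CDUC_d$ and $\DUC_d$ cases, the plan is to specialize the DOC rule using Remark~\ref{remark:(C)LDUIsubspaceLDOI}, which embeds pairs into triples via $X^{(1)}_{(A,B)}=X^{(3)}_{(A,\operatorname{diag}A,B)}$ and $X^{(2)}_{(A,B)}=X^{(3)}_{(A,B,\operatorname{diag}A)}$, equivalently $\Phi^{(1)}_{(A,B)}=\Phi^{(3)}_{(A,\operatorname{diag}A,B)}$ and $\Phi^{(2)}_{(A,B)}=\Phi^{(3)}_{(A,B,\operatorname{diag}A)}$. One then plugs the appropriate special triples into the general composition formula of Lemma~\ref{lemma:DOC-LDOI-composition}. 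For $\Phi^{(2)}_{(A_1,B_1)}\circ\Phi^{(2)}_{(A_2,B_2)}$, using $C_i=\operatorname{diag}A_i$, the $\mathfrak C$-component becomes $B_1\odot\operatorname{diag}A_2 + \operatorname{diag}A_1\odot B_2^\top + \operatorname{diag}(A_1A_2-2A_1\odot A_2)$, whose off-diagonal part vanishes and whose diagonal works out to $\operatorname{diag}(A_1A_2)$; so the composite is again of $\Phi^{(2)}$-type, and the $\mathfrak A,\mathfrak B$ components reproduce exactly $\circ_2$ as in Eq.~\eqref{eq:CDUC-composition} and Definition~\ref{def:DUC-CDUC-composition}. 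This is precisely the content already displayed in Eq.~\eqref{eq:CDUC-composition}. The mixed case $\Phi^{(1)}\circ\Phi^{(2)}$ (and $\Phi^{(2)}\circ\Phi^{(1)}$) is handled the same way, feeding one triple of the form $(A,\operatorname{diag}A,B)$ and the other of the form $(A,B,\operatorname{diag}A)$ into Lemma~\ref{lemma:DOC-LDOI-composition}; the bookkeeping shows the result is of $\Phi^{(1)}$-type with pair given by $\circ_1$ (or $\circ_2$ composed with a transpose, per Remark~\ref{remark:DUC-CDUC-composition}), matching the case split in the statement. The $\circ_1$ rule is the one recorded in Eq.~\eqref{eq:DUC-composition}.

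Finally, to see that $\DUC_d$ is \emph{not} stable, it suffices to exhibit two DUC maps whose composite is CDUC but not DUC --- for instance $\operatorname{id}\in\CDUC_d$ is both, but $\top\circ\top=\operatorname{id}$ shows that composing the transposition map (which is DUC, see Example~\ref{eg:maps-id-transp}) with itself lands in $\CDUC_d\setminus\DUC_d$; alternatively one reads it off the formula $\Phi^{(1)}\circ\Phi^{(1)}=\Phi^{(2)}_{(\mathfrak A,\mathfrak B)}$, noting that a generic $\Phi^{(2)}$ with nonzero off-diagonal $\widetilde{\mathfrak B}$ is not in $\DUC_d$ since $\DUC_d\cap\CDUC_d$ consists only of the classical maps (Example~\ref{eg:maps-classical}), i.e.\ those with diagonal $B$. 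The main obstacle in this proof is not conceptual but computational: one must carefully track the diagonal corrections $\operatorname{diag}(A_1A_2-2A_1\odot A_2)$ through the Hadamard products when specializing to the pair cases, and check that the ``extra'' diagonal terms collapse correctly so that the composite genuinely lies in the claimed subspace; the diagrammatic derivation in Lemma~\ref{lemma:DOC-LDOI-composition} (Figures~\ref{fig:partial-action-1} and \ref{fig:partial-action-2}) is exactly what makes this bookkeeping transparent, so the cleanest route is to cite that lemma and perform only the short specialization computations.
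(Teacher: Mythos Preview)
Your proposal is correct and follows essentially the same route as the paper: both reduce the composition formulas to Lemma~\ref{lemma:DOC-LDOI-composition} via the Choi--Jamio{\l}kowski correspondence, and then specialize to the $\DUC$/$\CDUC$ cases through the embeddings of Remark~\ref{remark:(C)LDUIsubspaceLDOI} (exactly as recorded in Eqs.~\eqref{eq:DUC-composition}--\eqref{eq:CDUC-composition}). The only cosmetic difference is that you invoke the identity $J(\Phi_1\circ\Phi_2)=(\Phi_1\otimes\operatorname{id})(J(\Phi_2))$ directly, whereas the paper writes out the same computation via the maximally entangled state $\Omega$; and the paper reads stability of $\CDUC_d$, $\DOC_d$ (and the mixed $\DUC/\CDUC$ behavior) straight off Definition~\ref{def:DUC-CDUC-DOC}, while you obtain it as a byproduct of the formula --- both are fine.
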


\begin{proof}
    The stability results follow directly from Definition~\ref{def:DUC-CDUC-DOC}. It is also trivial to check that if $\Phi_1\in \DUC_d$ and $\Phi_2\in \CDUC_d$ (or vice-versa), then $\Phi_1\circ \Phi_2 \in \DUC_d$, since $\forall \, U\in \mathcal{DU}_d$ and $Z\in \M{d}$, the following equation holds: $[\Phi_1\circ \Phi_2](UZU^*) = \Phi_1 [U \Phi_2(Z) U^*] = U^* [\Phi_1\circ \Phi_2 (Z)] U$.
    
    Now, since $\DOC_d$ is stable under composition, we know that $\Phi_1 \circ \Phi_2 \in \DOC_d \, \forall \, \Phi_1, \Phi_2\in \DOC_d$. Hence, let $\Phi^{(3)}_{(A_1,B_1,C_1)} \circ\, \Phi^{(3)}_{(A_2,B_2,C_2)} =\Phi^{(3)}_{(\widetilde{\mathfrak{A}}, \widetilde{\mathfrak{B}}, \widetilde{\mathfrak{C}})} \in \DOC_d$. Now, Lemma~\ref{lemma:DOC-LDOI-composition} tells us that 
    \begin{align}
        &[\Phi^{(3)}_{(A_1,B_1,C_1)} \otimes \operatorname{id}](X^{(3)}_{(A_2,B_2,C_2)}) = X^{(3)}_{(\mathfrak{A},\mathfrak{B},\mathfrak{C})} \nonumber \\
        \implies &[\Phi^{(3)}_{(A_1,B_1,C_1)} \otimes \operatorname{id}] \circ [\Phi^{(3)}_{(A_2,B_2,C_2)} \otimes \operatorname{id}] (\Omega) = X^{(3)}_{(\mathfrak{A},\mathfrak{B},\mathfrak{C})} \nonumber \\
        \implies &[\Phi^{(3)}_{(A_1,B_1,C_1)} \circ \Phi^{(3)}_{(A_2,B_2,C_2)} \otimes \operatorname{id}](\Omega) = X^{(3)}_{(\mathfrak{A},\mathfrak{B},\mathfrak{C})} \nonumber \\
        \implies &[\Phi^{(3)}_{(\widetilde{\mathfrak{A}}, \widetilde{\mathfrak{B}}, \widetilde{\mathfrak{C}})} \otimes \operatorname{id}](\Omega) = X^{(3)}_{(\mathfrak{A},\mathfrak{B},\mathfrak{C})} \nonumber \\
        \implies &(\widetilde{\mathfrak{A}}, \widetilde{\mathfrak{B}}, \widetilde{\mathfrak{C}}) = (\mathfrak{A},\mathfrak{B},\mathfrak{C}),
    \end{align}
    where $(\mathfrak{A},\mathfrak{B},\mathfrak{C}) = (A_1,B_1,C_1)\circ (A_2,B_2,C_2)$. Notice that the Choi-Jamio{\l}kowski isomorphism for DOC maps was implemented in obtaining the first and last implications above: $$\forall \, (A,B,C)\in \MLDOI{d}: \qquad [\Phi^{(3)}_{(A,B,C)} \otimes \operatorname{id}](\Omega) = X^{(3)}_{(A,B,C)},$$ where $\Omega = |\psi \rangle\langle \psi| \in \M{d}$ and $\ket{\psi} = \sum_{i=1}^d \ket{ii}$ is the maximally entangled vector in $\mathbb{C}^d \otimes \mathbb{C}^d$.
    
    For the remaining results, we first infer from Remark~\ref{remark:(C)LDUIsubspaceLDOI} that  
    $$\forall \, (A,B)\in \MLDUI{d}: \qquad \Phi^{(1)}_{(A,B)} = \Phi^{(3)}_{(A,\operatorname{diag}A,B)}, \qquad \Phi^{(2)}_{(A,B)} = \Phi^{(3)}_{(A,B,\operatorname{diag}A)}.$$ Then, an amalgamation of the recently proved result and Eqs.~\eqref{eq:DUC-composition}, \eqref{eq:CDUC-composition} immediately yields the desired composition rules.
\end{proof}

The composition rule from Lemma~\ref{lemma:DOC-LDOI-composition} allows us to construct necessary and sufficient conditions on a triple $(A,B,C)\in \MLDOI{d}$ which guarantee that it is triplewise completely positive -- these are presented in Theorem~\ref{theorem:DOC-LDOI-sep} below. The reader is advised to keep the discussion from Section~\ref{sec:convex-structure} in mind before proceeding further.

\begin{theorem}\label{theorem:DOC-LDOI-sep}
Consider a matrix $X\in \LDOI_d$. Then, the following equivalent statements hold:
\begin{itemize}
    \item $X$ is separable if and only if $(\Phi\otimes \operatorname{id})(X)\in \LDOI_d$ is positive semi-definite for all positive maps $\Phi\in \DOC_d$.
    \item $(A,B,C)\in \MLDOI{d}$ with $X=X^{(3)}_{(A,B,C)}$ is triplewise completely positive if and only if $(\mathfrak{A},\mathfrak{B},\mathfrak{C}) = (D,E,F)\circ (A,B,C)$ corresponds to a positive semi-definite $X^{(3)}_{(\mathfrak{A},\mathfrak{B},\mathfrak{C})}\in \LDOI_d$ for all $(D,E,F) \in \MLDOI{d}$ such that $\Phi^{(3)}_{(D,E,F)}\in \DOC_d$ is positive.
\end{itemize}
\end{theorem}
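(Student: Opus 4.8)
The plan is to derive the second bullet from the first, and the first from the general entanglement-detection principle recalled in Section~\ref{sec:convex-structure}, restricted to the LDOI class via the projection identity~\eqref{eq:proj-duality}. First I would recall that a positive semi-definite $X\in\M{d}\otimes\M{d}$ is separable if and only if $[\Phi\otimes\operatorname{id}](X)\in\PSD_{d^2}$ for all positive maps $\Phi\in\T{d}$ (Eq.~\eqref{eq:pos-sep}). The nontrivial point is that for $X\in\LDOI_d$ one may restrict the test to positive maps lying in $\DOC_d$. This follows by the usual ``twirling'' argument: given any positive $\Phi\in\T{d}$, the averaged map $\widehat\Phi(Z):=\mathbb E_O[O^*\Phi(OZO)O^*]$ (the projection of $\Phi$ onto $\DOC_d$, compatible with the Choi-level projection $\operatorname{Proj}_{\LDOI}$ via Theorem~\ref{theorem:DUC/CDUC/DOC-LDUI/CLDUI/LDOI}) is again positive, lies in $\DOC_d$, and for $X\in\LDOI_d$ satisfies $[\widehat\Phi\otimes\operatorname{id}](X) = \mathbb E_O[(O^*\otimes O)([\Phi\otimes\operatorname{id}]((O\otimes O)X(O\otimes O)))(O^*\otimes O)] = \mathbb E_O[(O^*\otimes O)([\Phi\otimes\operatorname{id}](X))(O^*\otimes O)]$ using the $\LDOI$ invariance of $X$. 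Hence $[\widehat\Phi\otimes\operatorname{id}](X)\in\PSD_{d^2}$ whenever $[\Phi\otimes\operatorname{id}](X)\in\PSD_{d^2}$, and conversely a failure of positivity for some positive $\Phi$ yields, after averaging, a failure for a positive map in $\DOC_d$ — so the restricted family still detects all entanglement in $\LDOI_d$. One also notes from Proposition~\ref{prop:DOC-LDOIinv} that $[\Phi\otimes\operatorname{id}](X)$ indeed stays inside $\LDOI_d$ when $\Phi\in\DOC_d$, which is why the statement is phrased with the codomain $\LDOI_d$.

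Next I would translate the first bullet into the language of triples. Write $X=X^{(3)}_{(A,B,C)}$, which is positive semi-definite (equivalently, by Lemma~\ref{lemma:LDOI-psd-ppt}, satisfies the stated conditions on $(A,B,C)$) — this is the standing hypothesis for $X$ being a candidate separable matrix. By Theorem~\ref{theorem:LDOI-sep}, $X$ separable $\iff (A,B,C)\in\TCP_d$. On the right-hand side, any positive map in $\DOC_d$ is of the form $\Phi^{(3)}_{(D,E,F)}$ for a suitable triple $(D,E,F)\in\MLDOI{d}$ via the isomorphism~\eqref{eq:DOCiso}, and by Lemma~\ref{lemma:DOC-LDOI-composition} the partial action is exactly $[\Phi^{(3)}_{(D,E,F)}\otimes\operatorname{id}](X^{(3)}_{(A,B,C)}) = X^{(3)}_{(\mathfrak A,\mathfrak B,\mathfrak C)}$ where $(\mathfrak A,\mathfrak B,\mathfrak C)=(D,E,F)\circ(A,B,C)$. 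Finally, $X^{(3)}_{(\mathfrak A,\mathfrak B,\mathfrak C)}\in\PSD_{d^2}$ is precisely the positive-semi-definiteness condition appearing in the statement. Plugging these three translations into the first bullet gives the second bullet verbatim.

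The main obstacle — the part that is not purely bookkeeping — is justifying the restriction of the positive-map test to $\DOC_d$, i.e.~the twirling step. The delicate points are: (i) checking that the $\DOC$-projection $\widehat\Phi$ of a positive map is still positive (immediate, since it is an average over conjugations $Z\mapsto O^*\Phi(OZO)O^*$, each of which is positive, and the positive cone is closed under sums and limits); (ii) checking the intertwining identity $[\widehat\Phi\otimes\operatorname{id}](X)=\mathbb E_O[(O^*\otimes O)([\Phi\otimes\operatorname{id}](X))(O^*\otimes O)]$ for $X\in\LDOI_d$, which uses both $(O\otimes O)X(O\otimes O)=X$ and the fact that $O$ is orthogonal so transposes may be dropped; and (iii) observing that $\mathbb E_O[(O^*\otimes O)(\cdot)(O^*\otimes O)]$ preserves the cone $\PSD_{d^2}$, so that positivity of $[\Phi\otimes\operatorname{id}](X)$ transfers to positivity of $[\widehat\Phi\otimes\operatorname{id}](X)$. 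Once this is in place, the equivalence of the two bullets is a direct substitution using Lemma~\ref{lemma:DOC-LDOI-composition}, Theorem~\ref{theorem:LDOI-sep}, Proposition~\ref{prop:DOC-LDOIinv}, and the isomorphism~\eqref{eq:DOCiso}, and I would present only that chain of substitutions explicitly, leaving the twirling verification as the one genuine computation.
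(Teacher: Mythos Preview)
Your twirling argument has a logical gap in the nontrivial direction. You correctly establish the identity
\[
[\widehat\Phi\otimes\operatorname{id}](X)=\mathbb E_O\big[(O\otimes O)\,[\Phi\otimes\operatorname{id}](X)\,(O\otimes O)\big]=\operatorname{Proj}_{\LDOI}\big([\Phi\otimes\operatorname{id}](X)\big),
\]
and you correctly note in (iii) that this averaging preserves $\PSD_{d^2}$. But what you need is the \emph{opposite} implication: starting from the hypothesis that $[\Psi\otimes\operatorname{id}](X)\in\PSD_{d^2}$ for every positive $\Psi\in\DOC_d$, you must conclude that $[\Phi\otimes\operatorname{id}](X)\in\PSD_{d^2}$ for an arbitrary positive $\Phi\in\T{d}$. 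Your identity only gives you $[\widehat\Phi\otimes\operatorname{id}](X)\in\PSD_{d^2}$, and from $\operatorname{Proj}_{\LDOI}(Y)\in\PSD_{d^2}$ one cannot infer $Y\in\PSD_{d^2}$; the projection can easily send a non-PSD matrix to a PSD one. So the sentence ``a failure of positivity for some positive $\Phi$ yields, after averaging, a failure for a positive map in $\DOC_d$'' is unsupported and in general false.

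The paper sidesteps this by working on the Choi/witness side rather than trying to recover full positive semidefiniteness of $[\Phi\otimes\operatorname{id}](X)$. From the hypothesis it extracts only the scalar inequality $\operatorname{Tr}\big[(\Phi\otimes\operatorname{id})(X)\,\Omega\big]\geq 0$ (testing in the single direction of the maximally entangled state $\Omega$), rewrites this as $\operatorname{Tr}[X\,J(\Phi^*)]\geq 0$, and then observes that as $\Phi$ ranges over positive maps in $\DOC_d$ the Choi matrices $J(\Phi^*)$ sweep out all of $\LDOI_d^{\mathsf{BP}}$; separability then follows from the duality $\LDOI_d^{\mathsf{SEP}}=(\LDOI_d^{\mathsf{BP}})^*$ (which is exactly where Eq.~\eqref{eq:proj-duality} enters). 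Your twirling can be repaired along the same lines: if $X$ were entangled, choose $\Phi$ positive with $\langle\psi|[\Phi\otimes\operatorname{id}](X)|\psi\rangle<0$ for $|\psi\rangle=\sum_i|ii\rangle$; since $(O\otimes O)|\psi\rangle=|\psi\rangle$, your averaging identity gives $\langle\psi|[\widehat\Phi\otimes\operatorname{id}](X)|\psi\rangle=\langle\psi|[\Phi\otimes\operatorname{id}](X)|\psi\rangle<0$, contradicting the hypothesis for the positive DOC map $\widehat\Phi$. The second bullet then follows from the first exactly as you outline.
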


\begin{proof}
    For the fist part, we observe that the following isomorphisms can be established from the discussion in Section~\ref{sec:convex-structure} and the Choi-Jamio{\l}kowski isomorphism (Lemma~\ref{lemma:CJiso}(2)):
    \begin{align}
        \LDOI_d^{\mathsf{BP}} &\simeq \{ (A,B,C)\in \MLDOI{d} \, \big| \, X^{(3)}_{(A,B,C)}\in \LDOI_d \text{ is block positive} \} \nonumber \\ 
        &\simeq \{ \Phi\in \DOC_d \, \big| \, \Phi \text{ is positive} \}.
    \end{align}
    Now, assume that $(\Phi \otimes \operatorname{id})(X)$ is positive semi-definite for all positive $\Phi\in \DOC_d$. Let $\ket{\psi}\coloneqq \sum_{i=1}^d \ket{ii}$ be the maximally entangled vector in $\mathbb{C}^d \otimes \mathbb{C}^d$ and let $\Omega = |\psi \rangle\langle \psi | \in \M{d}$.
    So, we have
    \begin{alignat}{2}
        &\operatorname{Tr}[(\Phi\otimes \operatorname{id})(X) \Omega] \geq 0 \qquad &&\forall \text{ positive }\Phi\in \DOC_d \\
        \implies &\operatorname{Tr}[X (\Phi^* \otimes \operatorname{id})(\Omega)] \geq 0 \qquad &&\forall \text{ positive }\Phi\in \DOC_d \\
        \implies &\operatorname{Tr}[X J(\Phi^*)] \geq 0 \qquad &&\forall \text{ positive }\Phi\in \DOC_d \\
        \implies &\operatorname{Tr}[XY] \geq 0 \qquad &&\forall \, Y\in \LDOI_d^{\mathsf{BP}},
    \end{alignat}
    which shows that $X$ is separable. The other direction of the proof is trivial.
    
    The second part follows directly from the first, since we know that $X^{(3)}_{(A,B,C)}\in \LDOI_d$ is separable if and only if $(A,B,C)\in \MLDOI{d}$ is TCP (see Theorem~\ref{theorem:LDOI-sep}), and for all $(A,B,C), (D,E,F)$ in $\MLDOI{d}$, we have $[\Phi^{(3)}_{D,E,F}\otimes \operatorname{id}](X^{(3)}_{(A,B,C)}) = X^{(3)}_{(\mathfrak{A},\mathfrak{B},\mathfrak{C})}$, where 
    $ (\mathfrak{A},\mathfrak{B},\mathfrak{C}) = (D,E,F)\circ (A,B,C)$ (see Lemma~\ref{lemma:DOC-LDOI-composition}).
\end{proof}

The simple necessary conditions that follow from triplewise completele positivity of $(A,B,C)\in \MLDOI{d}$ (see Lemma~\ref{lemma:tcp-properties}) can be easily derived with the help of the previous Theorem.

\begin{corollary} \label{corollary-tcp-psd-ppt}
If $(A,B,C)\in \MLDOI{d}$ is \emph{TCP}, then $A\in \EWP_d$, $B,C\in \PSD_d$ and $A_{ij}A_{ji} \geq \vert B_{ij} \vert^2, \, A_{ij}A_{ji} \geq \vert C_{ij} \vert^2 \,\, \forall i,j \in [d]$. Equivalently, $X^{(3)}_{(A,B,C)} \in \LDOI_3 $ is \emph{PPT}.
\end{corollary}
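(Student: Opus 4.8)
The plan is to avoid manipulating TCP decompositions directly and instead run everything through the dictionary between TCP triples, separability, and positive semidefiniteness that is already in place. The key point is that, by Theorem~\ref{theorem:LDOI-sep}, $(A,B,C)$ being TCP is the same as $X^{(3)}_{(A,B,C)}$ being separable, and a separable bipartite matrix is automatically PPT: both $X$ and its partial transpose are finite sums of tensor products of positive semidefinite matrices, hence positive semidefinite. So the corollary reduces to showing that TCP forces $X^{(3)}_{(A,B,C)}$ to be PPT, and that this in turn is equivalent to the stated list of conditions on $A,B,C$.

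First I would establish the PPT property via the route suggested by the placement of this corollary right after Theorem~\ref{theorem:DOC-LDOI-sep}: feed two explicit positive maps from $\DOC_d$ into the second part of that theorem. The identity map has triple $(\mathbb I_d,\mathbb J_d,\mathbb I_d)$ (see Example~\ref{eg:maps-id-transp}), and the composition rule of Lemma~\ref{lemma:DOC-LDOI-composition} gives $(\mathbb I_d,\mathbb J_d,\mathbb I_d)\circ(A,B,C)=(A,B,C)$, so Theorem~\ref{theorem:DOC-LDOI-sep} applied to a TCP triple $(A,B,C)$ yields $X^{(3)}_{(A,B,C)}\in\PSD_{d^2}$. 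The transposition map has triple $(\mathbb I_d,\mathbb I_d,\mathbb J_d)$ and is positive though not completely positive; a short computation with the same composition rule gives $(\mathbb I_d,\mathbb I_d,\mathbb J_d)\circ(A,B,C)=(A,C^\top,B^\top)$, so $X^{(3)}_{(A,C^\top,B^\top)}\in\PSD_{d^2}$ as well. Together these say $X^{(3)}_{(A,B,C)}$ is PPT. (Alternatively, one can skip Theorem~\ref{theorem:DOC-LDOI-sep} and invoke separability $\Rightarrow$ PPT directly.)

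The remaining step is pure bookkeeping with Lemma~\ref{lemma:LDOI-psd-ppt}. Part (1) applied to $X^{(3)}_{(A,B,C)}\in\PSD_{d^2}$ gives $A\in\EWP_d$, $B\in\PSD_d$, $C\in\Msa{d}$ and $A_{ij}A_{ji}\geq|C_{ij}|^2$; part (1) applied to $X^{(3)}_{(A,C^\top,B^\top)}\in\PSD_{d^2}$ gives $A\in\EWP_d$, $C^\top\in\PSD_d$ (equivalently $C\in\PSD_d$, since $C^\top=\overbar{C}$ once $C$ is self-adjoint) and $A_{ij}A_{ji}\geq|(B^\top)_{ij}|^2=|B_{ji}|^2$ for all $i,j$, which is the same as $A_{ij}A_{ji}\geq|B_{ij}|^2$ for all $i,j$. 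The conjunction of the two conclusions is exactly $A\in\EWP_d$, $B,C\in\PSD_d$ and $A_{ij}A_{ji}\geq\max(|B_{ij}|^2,|C_{ij}|^2)$. The ``equivalently, $X^{(3)}_{(A,B,C)}$ is PPT'' clause then needs nothing extra: this list is precisely the conjunction of criteria (1) and (2) of Lemma~\ref{lemma:LDOI-psd-ppt}, which characterize $X^{(3)}_{(A,B,C)}\in\PSD_{d^2}$ and $X^{(3)\,\Gamma}_{(A,B,C)}\in\PSD_{d^2}$ respectively.

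There is essentially no obstacle here; the content is entirely a matter of invoking the right black boxes. The only spots that deserve a line of care are checking that transposition really is a positive element of $\DOC_d$ with the stated triple, and verifying the two small composition computations with Lemma~\ref{lemma:DOC-LDOI-composition} (both hinge on $\operatorname{diag}A=\operatorname{diag}B=\operatorname{diag}C$ making the diagonal correction terms collapse). If a fully self-contained proof is preferred, the inequalities can instead be read straight off a TCP decomposition $A=(V\odot\overbar{V})(W\odot\overbar{W})^*$, $B=(V\odot W)(V\odot W)^*$, $C=(V\odot\overbar{W})(V\odot\overbar{W})^*$ from Definition~\ref{def:tcp}: entrywise non-negativity of $A$ and positive semidefiniteness of the Gram matrices $B,C$ are immediate, and writing $B_{ij}=\sum_k(V_{ik}\overbar{W_{jk}})(W_{ik}\overbar{V_{jk}})$ and $C_{ij}=\sum_k(V_{ik}W_{jk})(\overbar{W_{ik}}\,\overbar{V_{jk}})$ and applying the Cauchy--Schwarz inequality gives $|B_{ij}|^2\leq A_{ij}A_{ji}$ and $|C_{ij}|^2\leq A_{ij}A_{ji}$.
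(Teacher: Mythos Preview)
Your proof is correct and takes essentially the same approach as the paper: invoke Theorem~\ref{theorem:DOC-LDOI-sep} with the identity and transposition maps from Example~\ref{eg:maps-id-transp}, then read off the conditions via Lemma~\ref{lemma:LDOI-psd-ppt}. You have simply spelled out the composition computations and bookkeeping that the paper leaves implicit, and your alternative Cauchy--Schwarz argument from the TCP decomposition is a valid self-contained shortcut as well.
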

\begin{proof}
    Choose the (obviously positive) identity and transposition maps in $\DOC_d$ from Example~\ref{eg:maps-id-transp} and apply Theorem~\ref{theorem:DOC-LDOI-sep} to obtain the desired result. 
\end{proof}

Although the conditions in Corollary \ref{corollary-tcp-psd-ppt} are necessary for triplewise complete positivity, the following example elucidates that they are not sufficient, which is equivalent to the fact that PPT entangled matrices exist in $\LDOI_d$.
\begin{example} 
We consider the one-parameter family of matrices in $\M{3}$ from \cite{stormer1982}:
\begin{equation}
A(\mu) = \left(   \begin{array}{ccc}
        2\mu & 4\mu^2 & 1  \\
        1 & 2\mu & 4\mu^2 \\
        4\mu^2 & 1 & 2\mu
    \end{array}  \right), \qquad B(\mu) = \left(   \begin{array}{rrr}
        2\mu & 2\mu & 2\mu  \\
        2\mu & 2\mu & 2\mu \\
        2\mu & 2\mu & 2\mu
    \end{array}  \right).
\end{equation}
It can be easily checked that $\{(A(\mu), B(\mu), \operatorname{diag}A(\mu) ) : \mu\geq 1\}\subset \MLDOI{3}$ satisfies the conditions of Corollary~\ref{corollary-tcp-psd-ppt}. Now, consider the family of Choi-type maps $\Phi^{(2)}_{(X(\mu),Y(\mu))} = \Phi_{(1,\mu,0)}^I \in \mathcal{T}_3(\mathbb{C})$ from Example~\ref{eg:maps-choi}, where
\begin{equation}
    X(\mu) = \left(   \begin{array}{rrr}
        1 & \mu & 0  \\
        0 & 1 & \mu \\
        \mu & 0 & 1
    \end{array}  \right), \qquad Y(\mu) = \left(   \begin{array}{rrr}
        1 & -1 & -1  \\
        -1 & 1 & -1 \\
        -1 & -1 & 1
    \end{array}  \right).
\end{equation}
It was shown in \cite{cho1992choi, choi1980pos} that these maps are positive for $\mu \geq 1$. If we let 
$$(\mathfrak{A}(\mu),\mathfrak{B}(\mu),\mathfrak{C}(\mu)) = (X(\mu), Y(\mu), \operatorname{diag}X(\mu))\circ(A(\mu), B(\mu), \operatorname{diag}A(\mu)),$$ then it is fairly easy to see that
\begin{equation}
    \mathfrak{B}(\mu) = \left(  \begin{array}{rrr}
        3\mu & -2\mu & -2\mu  \\
        -2\mu & 3\mu & -2\mu \\
        -2\mu & -2\mu & 3\mu
    \end{array}  \right)
\end{equation}
is not positive semi-definite for $\mu\geq 1$. Hence, $\{( A(\mu), B(\mu), \operatorname{diag}A(\mu) ) : \mu\geq 1\}$ is a non-TCP family of matrix triples in $\MLDOI{3}$, see Theorem~\ref{theorem:DOC-LDOI-sep}. Speaking in terms of bipartite matrices,
\begin{equation}
    [\Phi^{(2)}_{(X(\mu),Y(\mu))} \otimes \operatorname{id}](X^{(2)}_{(A(\mu), B(\mu))}) \notin \PSD_{d^2}
\end{equation}
and hence $X^{(2)}_{(A(\mu), B(\mu))} \in \CLDUI_d\subset \LDOI_d$ is PPT entangled, with the positive non-decomposable map $\Phi^{(2)}_{(X(\mu), Y(\mu))}\in \CDUC_d \subset \DOC_d$ detecting it for all $\mu \geq 1$.
\end{example}

While discussing PT invariant LDOI matrices in Example~\ref{eg:states-PTinv}, we saw that a matrix triple of the form $(A,B,B) \in \MLDOI{d}$ is TCP if and only if the pair $(A,B)\in \MLDUI{d}$ is PCP. We now show that this does not generalize to the case of arbitrary triples $(A,B,C)$ with $B\neq C$: $\exists (A,B,C) \in \MLDOI{d}$ such that both $(A,B)$ and $(A,C)$ are PCP but $(A,B,C)$ is still not TCP.

\begin{example}
Consider the matrix triple $(A,B,C) \in \MLDOI{3}$:
\begin{equation}
    A = \left(  \begin{array}{ccc}
        1 & 0 & 1  \\
        0 & 1 & 1 \\
        1 & 1 & 1
    \end{array}  \right), \quad B = \left(  \begin{array}{rrr}
        1 & 0 & -1  \\
        0 & 1 & 0 \\
        -1 & 0 & 1
    \end{array}  \right), \quad C = \left(  \begin{array}{rrr}
        1 & 0 & 0  \\
        0 & 1 & -1 \\
        0 & -1 & 1
    \end{array}  \right).
\end{equation}
It is straightforward to verify that $(A,B,C)$ satisfies the conditions of Corollary \ref{corollary-tcp-psd-ppt}. Moreover, since $B$ and $C$ are diagonally dominant, Lemma~\ref{lemma:PCP-comparison} shows that $(A,B)$ and $(A,C)$ are PCP. Now, consider the triple $(X,Y,Z) \in \MLDOI{3}$ from Example ~\ref{eg:maps-lambda} associated with the positive map $\Phi^{(3)}_{(X,Y,Z)} = \Lambda_3 \in \DOC_3$ and let $(\mathfrak{A},\mathfrak{B},\mathfrak{C}) = (X, Y, Z)\circ(A, B, C)$. Then,
\begin{equation}
    \mathfrak{B} = \frac{1}{2}\left(  \begin{array}{rrr}
        1 & 0 & -\sqrt{2}  \\
        0 & 1 & -\sqrt{2} \\
        -\sqrt{2} & -\sqrt{2} & 2
    \end{array}  \right)
\end{equation}
is not positive semi-definite, thus proving that $(A,B,C)$ is not TCP, see Theorem \ref{theorem:DOC-LDOI-sep}. It is also not too difficult to verify that while the triple $(A,B,C)$ does meet the conditions stated in part (4) of Lemma~\ref{lemma:tcp-properties}, it violates the condition in part (5) of the same Lemma. In the realm of bipartite matrices, the above discussion translates to the fact that $X^{(i)}_{(A,B)}, X^{(i)}_{(A,C)}\in \LDUI_d$ (resp.~$\CLDUI_d$) are separable for $i=1$ (resp.~$i=2$) but $X^{(3)}_{(A,B,C)}\in \LDOI_d$ is PPT entangled.
\end{example}

\section{Conclusions and future directions}
We have presented an elaborate study of the family of local diagonal unitary and orthogonal invariant bipartite matrices ($\LDUI_d$, $\CLDUI_d$, and $\LDOI_d$), along with the accompanying class of diagonal unitary and orthogonal covariant maps between matrix algebras ($\DUC_d$, $\CDUC_d$, and $\DOC_d$). By easing the analysis of several important properties of objects in these classes, the isomorphisms with the family of matrix pairs and triples with equal diagonals ($\MLDUI{d}$ and $\MLDOI{d}$) play an instrumental role in our endeavors. In particular, we show that the cone of separable LDOI matrices admits an equivalent description in terms of the cone of triplewise completely positive matrices, which generalizes the well-studied cone of completely positive matrices. We entirely determine the extreme rays of these cones, along with the cone of positive semi-definite LDOI matrices. We also spend considerable time on describing the linear structure of the vector space $\LDOI_d$. For linear maps in $\DOC_d$, several equivalent characterizations are presented based on their Choi-Jamio{\l}kowski, Kraus, and Stinespring representations. The familiar properties of positivity, decomposability, complete positivity, PPT, and the like, are dealt with in detail. Our investigations into the invariant subspaces of these maps reveal key connections between the cones of positive DOC maps and separable LDOI matrices. Finally, comprehensive lists of important examples --- both of LDOI matrices and DOC maps --- are exhibited and discussed at length; these cover the existent literature, as well as many new important examples.

Ever since Choi discovered the first example of a positive non-decomposable map in the '70s, there has been immense interest in studying its generalizations, especially after the relatively recent Entanglement Theory associations were unraveled. We have seen that all Choi-type maps are a particular example of a much broader class defined by a unique covariance property. One of the merits of our work is to provide a unifying framework for the study of these maps, leveraging tools from linear and multi-linear algebra, and convex geometry, to obtain powerful characterizations of the relevant properties of these maps. 

Several research prospects stem from our work. The membership problem in the cone of triplewise completely positive matrices (or equivalently, in the cone of separable LDOI matrices) is the most glaring one. Simple and easily verifiable sufficient conditions to guarantee that a matrix triple is or is not TCP are desirable. A significant attempt in this direction is made in \cite{Singh2020entanglement}, where crucial graph-theoretic techniques are implemented to explore a new variety of entanglement in both LDOI and arbitrary bipartite matrices. Other entanglement-theoretic properties of positive semi-definite LDOI matrices (like entanglement of formation, distillation, cost, and concurrence, to name a few) deserve further scrutiny. The cones of positive/decomposable linear maps between matrix algebras have evaded simple characterizations for quite some time now, which translates into similar difficulties while dealing with the intersection of these cones with the $\DOC_d$ subspace. Characterization of these cones' convex structure has the potential to provide new insights into the theory of entanglement.

PPT square conjecture \cite{PPTsq} posits that the composition of a PPT map in $\T{d}$ with itself is entanglement breaking. In \cite{singh2020ppt2}, we prove that this conjecture holds for DUC and CDUC maps, thus establishing its validity for a very large class and generalizing many known results from the literature. The analysis in \cite{singh2020ppt2} is based on the tools developed in the current paper, such as the composition formulas from Lemmas \ref{lemma:DOC-LDOI-composition} and \ref{lemma:DUC-CDUC-DOC-composition}, as well as the separability results for LDOI matrices from Theorem~\ref{theorem:LDOI-sep} and Lemma \ref{lemma:PCP-comparison}. 

\bigskip

\noindent\textit{Acknowledgements.} We thank M\={a}ris Ozols and an anonymous referee for the careful reading of our paper and for the many comments and remarks which helped improve the quality of the presentation. 

\bibliographystyle{plainurl}
\bibliography{references}

\begin{thebibliography}{10}

\bibitem{berman2003completely}
Abraham Berman and Naomi Shaked-Monderer.
\newblock {\em Completely positive matrices}.
\newblock World Scientific, 2003.
\newblock \href {https://doi.org/10.1142/5273} {\path{doi:10.1142/5273}}.

\bibitem{bhatia2015positive}
R.~Bhatia.
\newblock {\em Positive Definite Matrices}.
\newblock Princeton Series in Applied Mathematics. Princeton University Press,
  2015.
\newblock URL:
  \url{https://press.princeton.edu/books/paperback/9780691168258/positive-definite-matrices}.

\bibitem{bluhm2020compatibility}
Andreas Bluhm and Ion Nechita.
\newblock Compatibility of quantum measurements and inclusion constants for the
  matrix jewel.
\newblock {\em {SIAM} Journal on Applied Algebra and Geometry}, 4(2):255--296,
  January 2020.
\newblock \href {https://doi.org/10.1137/19m123837x}
  {\path{doi:10.1137/19m123837x}}.

\bibitem{bremner1997complexity}
David~D. Bremner.
\newblock On the complexity of vertex and facet enumeration for complex
  polytopes.
\newblock {\em Ph.D. thesis, School of Computer Science, McGill University,
  Monr{\'e}al, Canada}, 1997.
\newblock URL: \url{https://dl.acm.org/doi/book/10.5555/930378}.

\bibitem{Kai2002realignment}
K.~Chen and L.-A. Wu.
\newblock A matrix realignment method for recognizing entanglement.
\newblock {\em Quantum Information and Computation}, 3(3):193--202, May 2003.
\newblock \href {https://doi.org/10.26421/qic3.3-1}
  {\path{doi:10.26421/qic3.3-1}}.

\bibitem{cho1992choi}
Sung~Je Cho, Seung-Hyeok Kye, and Sa~Ge Lee.
\newblock Generalized choi maps in three-dimensional matrix algebra.
\newblock {\em Linear Algebra and its Applications}, 171:213--224, July 1992.
\newblock \href {https://doi.org/10.1016/0024-3795(92)90260-h}
  {\path{doi:10.1016/0024-3795(92)90260-h}}.

\bibitem{choi1972positive}
Man-Duen Choi.
\newblock Positive linear maps on {$C^*$}-algebras.
\newblock {\em Canadian Journal of Mathematics}, 24(3):520--529, June 1972.
\newblock \href {https://doi.org/10.4153/cjm-1972-044-5}
  {\path{doi:10.4153/cjm-1972-044-5}}.

\bibitem{Choi1975iso}
Man-Duen Choi.
\newblock Completely positive linear maps on complex matrices.
\newblock {\em Linear Algebra and its Applications}, 10(3):285--290, June 1975.
\newblock \href {https://doi.org/10.1016/0024-3795(75)90075-0}
  {\path{doi:10.1016/0024-3795(75)90075-0}}.

\bibitem{choi1975}
Man-Duen Choi.
\newblock Positive semidefinite biquadratic forms.
\newblock {\em Linear Algebra and its Applications}, 12(2):95--100, 1975.
\newblock \href {https://doi.org/10.1016/0024-3795(75)90058-0}
  {\path{doi:10.1016/0024-3795(75)90058-0}}.

\bibitem{choi1980pos}
Man-Duen Choi.
\newblock Some assorted inequalities for positive linear maps on
  {$C^*$}-algebras.
\newblock {\em Journal of Operator Theory}, 4, 01 1980.
\newblock URL:
  \url{https://www.theta.ro/jot/archive/1980-004-002/1980-004-002-006.html}.

\bibitem{choi1982positive}
Man~Duen Choi.
\newblock Positive linear maps.
\newblock In {\em Operator Algebras and Applications, Volume 38 - Part 2,
  Proceedings of Symposia in Pure Mathematics}, pages 583--590. American
  Mathematical Society, 1982.
\newblock \href {https://doi.org/10.1090/pspum/038.2}
  {\path{doi:10.1090/pspum/038.2}}.

\bibitem{PPTsq}
M.~Christandl.
\newblock P{PT} square conjecture.
\newblock {\em Banff International Research Station Workshop: \emph{Operator
  Structures in Quantum Information Theory}}, 2012.
\newblock URL: \url{https://www.birs.ca/events/2012/5-day-workshops/12w5084}.

\bibitem{Christensen1979corr}
J.~P.~R. Christensen and J.~Vesterstrom.
\newblock A note on extreme positive definite matrices.
\newblock {\em Mathematische Annalen}, 244(1):65--68, February 1979.
\newblock \href {https://doi.org/10.1007/bf01420337}
  {\path{doi:10.1007/bf01420337}}.

\bibitem{chruscinski2006class}
Dariusz Chru{\'{s}}ci{\'{n}}ski and Andrzej Kossakowski.
\newblock Class of positive partial transposition states.
\newblock {\em Physical Review A}, 74(2), August 2006.
\newblock \href {https://doi.org/10.1103/PhysRevA.74.022308}
  {\path{doi:10.1103/PhysRevA.74.022308}}.

\bibitem{Chruscinski2018choi}
Dariusz Chru{\'{s}}ci{\'{n}}ski, Marcin Marciniak, and Adam Rutkowski.
\newblock Generalizing choi-like maps.
\newblock {\em Acta Mathematica Vietnamica}, 43(4):661--674, June 2018.
\newblock \href {https://doi.org/10.1007/s40306-018-0272-1}
  {\path{doi:10.1007/s40306-018-0272-1}}.

\bibitem{Chruciski2014positive}
Dariusz Chru{\'{s}}ci{\'{n}}ski and Gniewomir Sarbicki.
\newblock Entanglement witnesses: construction, analysis and classification.
\newblock {\em Journal of Physics A: Mathematical and Theoretical},
  47(48):483001, November 2014.
\newblock \href {https://doi.org/10.1088/1751-8113/47/48/483001}
  {\path{doi:10.1088/1751-8113/47/48/483001}}.

\bibitem{Chruscinski2007choi}
Dariusz Chruściński and Andrzej Kossakowski.
\newblock On the structure of entanglement witnesses and new class of positive
  indecomposable maps.
\newblock {\em Open Systems \& Information Dynamics}, 14(03):275--294, 2007.
\newblock \href {https://doi.org/10.1007/s11080-007-9052-4}
  {\path{doi:10.1007/s11080-007-9052-4}}.

\bibitem{Datta2006trans-depol}
Nilanjana Datta, Alexander~S. Holevo, and Yuri Suhov.
\newblock Additivity for transpose depolarizing channels.
\newblock {\em International Journal of Quantum Information}, 04(01):85--98,
  February 2006.
\newblock \href {https://doi.org/10.1142/s0219749906001633}
  {\path{doi:10.1142/s0219749906001633}}.

\bibitem{dePillis1967linear}
John de~Pillis.
\newblock Linear transformations which preserve hermitian and positive
  semidefinite operators.
\newblock {\em Pacific Journal of Mathematics}, 23(1):129--137, October 1967.
\newblock \href {https://doi.org/10.2140/pjm.1967.23.129}
  {\path{doi:10.2140/pjm.1967.23.129}}.

\bibitem{Shor2000NPT}
David~P. DiVincenzo, Peter~W. Shor, John~A. Smolin, Barbara~M. Terhal, and
  Ashish~V. Thapliyal.
\newblock Evidence for bound entangled states with negative partial transpose.
\newblock {\em Physical Review A}, 61(6), May 2000.
\newblock \href {https://doi.org/10.1103/PhysRevA.61.062312}
  {\path{doi:10.1103/PhysRevA.61.062312}}.

\bibitem{girard2021convex}
Mark Girard, Seung-Hyeok Kye, and Erling St{\o}rmer.
\newblock Convex cones in mapping spaces between matrix algebras.
\newblock {\em Linear Algebra and its Applications}, 608:248--269, January
  2021.
\newblock \href {https://doi.org/10.1016/j.laa.2020.09.008}
  {\path{doi:10.1016/j.laa.2020.09.008}}.

\bibitem{Grone1990corr}
Robert Grone, Stephen Pierce, and William Watkins.
\newblock Extremal correlation matrices.
\newblock {\em Linear Algebra and its Applications}, 134:63--70, June 1990.
\newblock \href {https://doi.org/10.1016/0024-3795(90)90006-x}
  {\path{doi:10.1016/0024-3795(90)90006-x}}.

\bibitem{gurvits2003classical}
Leonid Gurvits.
\newblock Classical deterministic complexity of edmonds{\textquotesingle}
  problem and quantum entanglement.
\newblock In {\em Proceedings of the thirty-fifth {ACM} symposium on Theory of
  computing - {STOC} {\textquotesingle}03}. {ACM} Press, 2003.
\newblock \href {https://doi.org/10.1145/780542.780545}
  {\path{doi:10.1145/780542.780545}}.

\bibitem{gurvits2002largest}
Leonid Gurvits and Howard Barnum.
\newblock Largest separable balls around the maximally mixed bipartite quantum
  state.
\newblock {\em Physical Review A}, 66(6), December 2002.
\newblock \href {https://doi.org/10.1103/PhysRevA.66.062311}
  {\path{doi:10.1103/PhysRevA.66.062311}}.

\bibitem{Ha1998choi}
Kil-Chan Ha.
\newblock Atomic positive linear maps in matrix algebras.
\newblock {\em Publications of the Research Institute for Mathematical
  Sciences}, 134, 12 1998.
\newblock \href {https://doi.org/10.2977/prims/1195144425}
  {\path{doi:10.2977/prims/1195144425}}.

\bibitem{Ha2003choi}
Kil-Chan Ha.
\newblock A class of atomic positive linear maps in matrix algebras.
\newblock {\em Linear Algebra and its Applications}, 359(1-3):277--290, January
  2003.
\newblock \href {https://doi.org/10.1016/s0024-3795(02)00415-9}
  {\path{doi:10.1016/s0024-3795(02)00415-9}}.

\bibitem{harris2018schur}
Samuel~J. Harris, Rupert~H. Levene, Vern~I. Paulsen, Sarah Plosker, and Mizanur
  Rahaman.
\newblock Schur multipliers and mixed unitary maps.
\newblock {\em Journal of Mathematical Physics}, 59(11):112201, November 2018.
\newblock \href {https://doi.org/10.1063/1.5066242}
  {\path{doi:10.1063/1.5066242}}.

\bibitem{Holevo2012channels}
A~S Holevo and V~Giovannetti.
\newblock Quantum channels and their entropic characteristics.
\newblock {\em Reports on Progress in Physics}, 75(4):046001, March 2012.
\newblock \href {https://doi.org/10.1088/0034-4885/75/4/046001}
  {\path{doi:10.1088/0034-4885/75/4/046001}}.

\bibitem{holevo2019channels}
Alexander~S. Holevo.
\newblock {\em Quantum Systems, Channels, Information}.
\newblock De Gruyter, July 2019.
\newblock \href {https://doi.org/10.1515/9783110642490}
  {\path{doi:10.1515/9783110642490}}.

\bibitem{horodecki2003entanglement}
Michael Horodecki, Peter~W. Shor, and Mary~Beth Ruskai.
\newblock Entanglement breaking channels.
\newblock {\em Reviews in Mathematical Physics}, 15(06):629--641, August 2003.
\newblock \href {https://doi.org/10.1142/s0129055x03001709}
  {\path{doi:10.1142/s0129055x03001709}}.

\bibitem{Horodecki1999iso}
Micha{\l} Horodecki and Pawe{\l} Horodecki.
\newblock Reduction criterion of separability and limits for a class of
  distillation protocols.
\newblock {\em Physical Review A}, 59(6):4206--4216, June 1999.
\newblock \href {https://doi.org/10.1103/PhysRevA.59.4206}
  {\path{doi:10.1103/PhysRevA.59.4206}}.

\bibitem{horodecki1996separability}
Micha{\l} Horodecki, Pawe{\l} Horodecki, and Ryszard Horodecki.
\newblock Separability of mixed states: necessary and sufficient conditions.
\newblock {\em Physics Letters A}, 223(1):1--8, 1996.
\newblock \href {https://doi.org/10.1016/S0375-9601(96)00706-2}
  {\path{doi:10.1016/S0375-9601(96)00706-2}}.

\bibitem{Horodecki2001distillation}
Pawel Horodecki and Ryszard Horodecki.
\newblock Distillation and bound entanglement.
\newblock {\em Quantum Information and Computation}, 1(1):45--75, July 2001.
\newblock \href {https://doi.org/10.26421/qic1.1-4}
  {\path{doi:10.26421/qic1.1-4}}.

\bibitem{horodecki2009quantum}
Ryszard Horodecki, Pawe{\l} Horodecki, Micha{\l} Horodecki, and Karol
  Horodecki.
\newblock Quantum entanglement.
\newblock {\em Reviews of Modern Physics}, 81(2):865, 2009.
\newblock \href {https://doi.org/10.1103/RevModPhys.81.865}
  {\path{doi:10.1103/RevModPhys.81.865}}.

\bibitem{Jamiokowski1972iso}
A.~Jamio{\l}kowski.
\newblock Linear transformations which preserve trace and positive
  semidefiniteness of operators.
\newblock {\em Reports on Mathematical Physics}, 3(4):275--278, December 1972.
\newblock \href {https://doi.org/10.1016/0034-4877(72)90011-0}
  {\path{doi:10.1016/0034-4877(72)90011-0}}.

\bibitem{johnston2019pairwise}
Nathaniel Johnston and Olivia MacLean.
\newblock Pairwise completely positive matrices and conjugate local diagonal
  unitary invariant quantum states.
\newblock {\em The Electronic Journal of Linear Algebra}, 35:156--180, February
  2019.
\newblock \href {https://doi.org/10.13001/1081-3810.3842}
  {\path{doi:10.13001/1081-3810.3842}}.

\bibitem{Kiem2011edge}
Young-Hoon Kiem, Seung-Hyeok Kye, and Jungseob Lee.
\newblock Existence of product vectors and their partial conjugates in a pair
  of spaces.
\newblock {\em Journal of Mathematical Physics}, 52(12):122201, December 2011.
\newblock \href {https://doi.org/10.1063/1.3663835}
  {\path{doi:10.1063/1.3663835}}.

\bibitem{King2003depol}
C.~King.
\newblock The capacity of the quantum depolarizing channel.
\newblock {\em IEEE Transactions on Information Theory}, 49(1):221--229, 2003.
\newblock \href {https://doi.org/10.1109/TIT.2002.806153}
  {\path{doi:10.1109/TIT.2002.806153}}.

\bibitem{kye1992choi}
Seung-Hyeok Kye.
\newblock A class of atomic positive linear maps in 3-dimensional matrix
  algebras.
\newblock In {\em Elementary operators and applications}, pages 205--209. World
  Scientific, 1992.
\newblock \href {https://doi.org/10.1142/9789814537834}
  {\path{doi:10.1142/9789814537834}}.

\bibitem{Kye1995diagonal}
Seung-Hyeok Kye.
\newblock Positive linear maps between matrix algebras which fix diagonals.
\newblock {\em Linear Algebra and its Applications}, 216:239 -- 256, 1995.
\newblock \href {https://doi.org/10.1016/0024-3795(93)00140-u}
  {\path{doi:10.1016/0024-3795(93)00140-u}}.

\bibitem{Kye2012edge}
Seung-Hyeok Kye and Hiroyuki Osaka.
\newblock Classification of bi-qutrit positive partial transpose entangled edge
  states by their ranks.
\newblock {\em Journal of Mathematical Physics}, 53(5):052201, 2012.
\newblock \href {https://doi.org/10.1063/1.4712302}
  {\path{doi:10.1063/1.4712302}}.

\bibitem{Lewenstein2000distillation}
M.~Lewenstein, D.~Bruß, J.~I. Cirac, B.~Kraus, M.~Kuś, J.~Samsonowicz,
  A.~Sanpera, and R.~Tarrach.
\newblock Separability and distillability in composite quantum systems-a
  primer.
\newblock {\em Journal of Modern Optics}, 47(14-15):2481--2499, 2000.
\newblock \href {https://doi.org/10.1080/09500340008232176}
  {\path{doi:10.1080/09500340008232176}}.

\bibitem{Lewenstein2000edge}
M.~Lewenstein, B.~Kraus, J.~I. Cirac, and P.~Horodecki.
\newblock Optimization of entanglement witnesses.
\newblock {\em Phys. Rev. A}, 62:052310, Oct 2000.
\newblock \href {https://doi.org/10.1103/PhysRevA.62.052310}
  {\path{doi:10.1103/PhysRevA.62.052310}}.

\bibitem{Lewenstein2001edge}
M.~Lewenstein, B.~Kraus, P.~Horodecki, and J.~I. Cirac.
\newblock Characterization of separable states and entanglement witnesses.
\newblock {\em Phys. Rev. A}, 63:044304, Mar 2001.
\newblock \href {https://doi.org/10.1103/PhysRevA.63.044304}
  {\path{doi:10.1103/PhysRevA.63.044304}}.

\bibitem{Li1994corr}
Chi-Kwong Li and Bit-Shun Tam.
\newblock A note on extreme correlation matrices.
\newblock {\em {SIAM} Journal on Matrix Analysis and Applications},
  15(3):903--908, July 1994.
\newblock \href {https://doi.org/10.1137/s0895479892240683}
  {\path{doi:10.1137/s0895479892240683}}.

\bibitem{Li1997diagonal}
Chi-Kwong Li and Hugo~J. Woerdeman.
\newblock Special classes of positive and completely positive maps.
\newblock {\em Linear Algebra and its Applications}, 255(1):247 -- 258, 1997.
\newblock \href {https://doi.org/10.1016/S0024-3795(96)00776-8}
  {\path{doi:10.1016/S0024-3795(96)00776-8}}.

\bibitem{liu2015unitary}
Chaobin Liu.
\newblock Unitary conjugation channels with continuous random phases.
\newblock {\em Quantum Studies: Mathematics and Foundations}, 2(2):177--181,
  November 2014.
\newblock \href {https://doi.org/10.1007/s40509-014-0025-3}
  {\path{doi:10.1007/s40509-014-0025-3}}.

\bibitem{Loewy1980corr}
R.~Loewy.
\newblock Extreme points of a convex subset of the cone of positive
  semidefinite matrices.
\newblock {\em Mathematische Annalen}, 253(3):227--232, November 1980.
\newblock \href {https://doi.org/10.1007/bf03220000}
  {\path{doi:10.1007/bf03220000}}.

\bibitem{lopes2015generic}
Artur~O. Lopes and Marcos Sebastiani.
\newblock Generic properties for random repeated quantum iterations.
\newblock {\em Quantum Studies: Mathematics and Foundations}, 2(4):389--402,
  July 2015.
\newblock \href {https://doi.org/10.1007/s40509-015-0050-x}
  {\path{doi:10.1007/s40509-015-0050-x}}.

\bibitem{miller2015lambda}
Marek Miller and Robert Olkiewicz.
\newblock Stable subspaces of positive maps of matrix algebras.
\newblock {\em Open Systems \& Information Dynamics}, 22(02):1550011, 2015.
\newblock \href {https://doi.org/10.1142/S1230161215500110}
  {\path{doi:10.1142/S1230161215500110}}.

\bibitem{nechita2018separability}
Ion Nechita.
\newblock On the separability of unitarily invariant random quantum states: The
  unbalanced regime.
\newblock {\em Advances in Mathematical Physics}, 2018:1--13, 2018.
\newblock \href {https://doi.org/10.1155/2018/7105074}
  {\path{doi:10.1155/2018/7105074}}.

\bibitem{nechita2021graphical}
Ion Nechita and Satvik Singh.
\newblock A graphical calculus for integration over random diagonal unitary
  matrices.
\newblock {\em Linear Algebra and its Applications}, 613:46 -- 86, 2021.
\newblock \href {https://doi.org/https://doi.org/10.1016/j.laa.2020.12.014}
  {\path{doi:https://doi.org/10.1016/j.laa.2020.12.014}}.

\bibitem{osaka1991choi}
H.~Osaka.
\newblock Indecomposable positive maps in low dimensional matrix algebras.
\newblock {\em Linear Algebra and its Applications}, 153:73 -- 83, 1991.
\newblock \href {https://doi.org/10.1016/0024-3795(91)90211-E}
  {\path{doi:10.1016/0024-3795(91)90211-E}}.

\bibitem{paulsen2002schur}
Vern Paulsen.
\newblock {\em Completely Bounded Maps and Operator Algebras}.
\newblock Cambridge University Press, February 2003.
\newblock \href {https://doi.org/10.1017/cbo9780511546631}
  {\path{doi:10.1017/cbo9780511546631}}.

\bibitem{peres1996separability}
Asher Peres.
\newblock Separability criterion for density matrices.
\newblock {\em Physical Review Letters}, 77(8):1413, 1996.
\newblock \href {https://doi.org/10.1103/physrevlett.77.1413}
  {\path{doi:10.1103/physrevlett.77.1413}}.

\bibitem{Piziak1999fullfact}
R.~Piziak and P.~L. Odell.
\newblock Full rank factorization of matrices.
\newblock {\em Mathematics Magazine}, 72(3):193--201, June 1999.
\newblock \href {https://doi.org/10.1080/0025570x.1999.11996730}
  {\path{doi:10.1080/0025570x.1999.11996730}}.

\bibitem{Rudolph2000realignment}
Oliver Rudolph.
\newblock A separability criterion for density operators.
\newblock {\em Journal of Physics A: Mathematical and General},
  33(21):3951--3955, may 2000.
\newblock \href {https://doi.org/10.1088/0305-4470/33/21/308}
  {\path{doi:10.1088/0305-4470/33/21/308}}.

\bibitem{rutkowski2015lambda}
Adam Rutkowski, Gniewomir Sarbicki, and Dariusz Chruściński.
\newblock A class of bistochastic positive optimal maps in {$M_d(\mathbb C)$}.
\newblock {\em Open Systems \& Information Dynamics}, 22(03):1550016, 2015.
\newblock \href {https://doi.org/10.1142/S123016121550016X}
  {\path{doi:10.1142/S123016121550016X}}.

\bibitem{Singh2020entanglement}
Satvik Singh.
\newblock Entanglement detection in triangle-free quantum states.
\newblock {\em Phys. Rev. A}, 103:032436, Mar 2021.
\newblock \href {https://doi.org/10.1103/PhysRevA.103.032436}
  {\path{doi:10.1103/PhysRevA.103.032436}}.

\bibitem{singh2020ppt2}
Satvik Singh and Ion Nechita.
\newblock The {PPT}$^2$ conjecture holds for all {C}hoi-type maps.
\newblock 2020.
\newblock \href {http://arxiv.org/abs/2011.03809} {\path{arXiv:2011.03809}}.

\bibitem{stormer1982}
Erling St{\o}rmer.
\newblock Decomposable positive maps on {$C\sp{\ast} $}-algebras.
\newblock {\em Proceedings of the American Mathematical Society},
  86(3):402--402, March 1982.
\newblock \href {https://doi.org/10.1090/s0002-9939-1982-0671203-5}
  {\path{doi:10.1090/s0002-9939-1982-0671203-5}}.

\bibitem{tanahashi1988choi}
Kôtarô Tanahashi and Jun Tomiyama.
\newblock Indecomposable positive maps in matrix algebras.
\newblock {\em Canadian Mathematical Bulletin}, 31(3):308–317, 1988.
\newblock \href {https://doi.org/10.4153/CMB-1988-044-4}
  {\path{doi:10.4153/CMB-1988-044-4}}.

\bibitem{tura2018separability}
Jordi Tura, Albert Aloy, Ruben Quesada, Maciej Lewenstein, and Anna Sanpera.
\newblock Separability of diagonal symmetric states: a quadratic conic
  optimization problem.
\newblock {\em Quantum}, 2:45, January 2018.
\newblock \href {https://doi.org/10.22331/q-2018-01-12-45}
  {\path{doi:10.22331/q-2018-01-12-45}}.

\bibitem{watrous2018theory}
John Watrous.
\newblock {\em The Theory of Quantum Information}.
\newblock Cambridge University Press, April 2018.
\newblock \href {https://doi.org/10.1017/9781316848142}
  {\path{doi:10.1017/9781316848142}}.

\bibitem{Werner1989}
Reinhard~F. Werner.
\newblock Quantum states with {E}instein-{P}odolsky-{R}osen correlations
  admitting a hidden-variable model.
\newblock {\em Phys. Rev. A}, 40:4277--4281, Oct 1989.
\newblock \href {https://doi.org/10.1103/PhysRevA.40.4277}
  {\path{doi:10.1103/PhysRevA.40.4277}}.

\bibitem{yamagami1993choi}
Shigeru Yamagami.
\newblock Cyclic inequalities.
\newblock {\em Proceedings of the American Mathematical Society},
  118(2):521--521, February 1993.
\newblock \href {https://doi.org/10.1090/s0002-9939-1993-1128732-7}
  {\path{doi:10.1090/s0002-9939-1993-1128732-7}}.

\bibitem{yu2016separability}
Nengkun Yu.
\newblock Separability of a mixture of dicke states.
\newblock {\em Physical Review A}, 94(6), December 2016.
\newblock \href {https://doi.org/10.1103/PhysRevA.94.060101}
  {\path{doi:10.1103/PhysRevA.94.060101}}.

\end{thebibliography}

\bigskip

\hrule
\bigskip

\end{document}